\documentclass[11pt]{article}

\usepackage[margin=1in]{geometry}
\RequirePackage{amsthm,amsmath,amsfonts,amssymb}
\RequirePackage[authoryear]{natbib}
\RequirePackage[colorlinks,citecolor=blue,urlcolor=blue,linkcolor=blue]{hyperref}
\RequirePackage{graphicx}

\usepackage{authblk}
\usepackage{bbm}
\usepackage{bm}
\usepackage{enumitem}
\usepackage{booktabs}
\usepackage{float}
\usepackage{multirow}
\usepackage{threeparttable}
\usepackage{array}
\usepackage{comment}
\usepackage{mathtools}
\usepackage{hhline}
\usepackage{xcolor}
\usepackage{subcaption}
\usepackage{xparse}
\usepackage{dsfont}
\usepackage{cleveref}
\usepackage{changepage}

\providecommand{\mathbbm}[1]{\mathds{#1}}

\theoremstyle{plain}
\newtheorem{theorem}{Theorem}

\newtheorem{corollary}[theorem]{Corollary}
\newtheorem{proposition}{Proposition}

\theoremstyle{definition}

\newtheorem{assumption}{Assumption}

\newcommand{\indep}{\perp \!\!\! \perp}

\newcommand{\g}{\mathbf{g}}
\newcommand{\gstar}{\mathbf{g}^{*}}
\newcommand{\rma}{\mathrm{a}}

\newcommand{\E}{\mathbb{E}}

\newcommand\mystrut{\rule{0pt}{10pt}}

\newcolumntype{L}[1]{>{\raggedright\let\newline\\arraybackslash\hspace{0pt}}m{#1}}
\newcolumntype{C}[1]{>{\centering\let\newline\\arraybackslash\hspace{0pt}}m{#1}}
\newcolumntype{R}[1]{>{\raggedleft\let\newline\\arraybackslash\hspace{0pt}}m{#1}}

\allowdisplaybreaks

\title{\bf Higher-order Spillover Effects Under Partial Interference}

\author[1]{Qixiang Xu}
\author[1]{Laura Forastiere}
\affil[1]{Department of Biostatistics, Yale University}
\date{}

\begin{document}

\maketitle

\begin{abstract}
Interference, under which a unit's outcome is affected by the treatment of other units through network connections, is often present when units interact on a network.
When the network of interactions is measured, researchers are often interested in the spillover effect from first-order neighbors. When this is the case,
the prevailing approach often involves the neighborhood interference assumption, which is oftentimes overly restrictive. In this paper, we instead rely on a  generalized interference assumption, which allows one's potential outcomes to be influenced by the treatment of units from a wider area of the network, referred to as the `interference set'. For instance, this can be a community detected through a community detection algorithm, or the set of units that can be reached through a finite network path. Under this assumption, we define new causal estimands to quantify spillover effects from first-order neighbors and, in general, from units at a specific network distance $h$. We employ two hypothetical Bernoulli distributions with different probabilities for the $h$-order neighborhood and for the rest of the units in the interference set. We first derive the bias of an approach that relies on a wrong interference set or incorrect exposure mapping function. We then develop new Horvitz-Thompson and Hajek estimators and corresponding weighted regression estimators under the generalized interference assumption. We conduct a series of simulations to assess the bias of OLS estimators---which rely on restrictive interference assumptions and an exposure mapping function---, and the performance of our estimators in different interference scenarios and random graphs. We then apply our estimators to a two-stage randomized trial implemented in Honduras to assess a maternal and child health intervention.
\end{abstract}

\noindent\textit{Keywords:} Causal inference; Interference; Spillover effects; Network; Horvitz--Thompson estimator; Partial interference

\medskip

\section{Introduction}

\subsection{Motivation and Recent Literature}
    Causal inference methods have been extensively developed based on the fundamental assumption of no interference between units. This assumption ensures that the potential outcomes of one unit are not influenced by the treatments assigned to others. Within the potential outcome framework, this is often formalized as part of the Stable Unit Treatment Value Assumption (SUTVA) \citep{rubin1980randomization}. However, in practice, this assumption is often not met due to real-world complexities. For instance, in the spread of infectious diseases, one's likelihood of infection depends not only on their own vaccination status but also on that of their contacts. Similarly, in educational interventions, a student's learning outcomes may be influenced not only by their own participation in a program, but also by their peers' participation, as collaborative learning or peer interactions play a role in shaping individual performance. This phenomenon, known as interference, has become a central focus of recent causal inference studies  \citep[e.g.,][]{benjamin2018spillover, Kim2015SocialNT}. When designing and evaluating interventions, it is essential to consider the potential for interference, as it can significantly impact the effectiveness and efficiency of the intervention. For instance, several studies have documented both direct and spillover effects in economics, health, and education interventions \citep{cai2015social,
    egger2022general, Olese060784, Airoldi2024InductionOS}.

    However, measuring interference can be challenging, since potential outcomes can, in principle, depend on the whole treatment vector in the sample.
    To allow for estimation, the number of potential outcomes must be reduced by making simplifying assumptions on the extent and mechanism of interference.
    One assumption that is commonly made when  units are partitioned into groups, such as schools or villages, is the \textit{partial interference} assumption, which allows for interference within groups but assumes that there is no interference between groups \citep{rosenbaum2007interference}. Under this assumption, \citet{hudgens2008toward} proposed causal estimands and randomization-based estimators for a two-stage randomized experiments, where in the first stage clusters are assigned to a treatment allocation strategy and in the second stage units are assigned to treatment according to their cluster's assignment.
    \citet{liu2014large} derive large-sample properties of randomization-based estimators for causal effects and introduced a sensitivity analysis framework to assess the robustness of these methods.
    In addition to randomized experiments, some literature also considers observational studies under the partial interference assumption. \citet{tchetgen2012causal} and then \citet{liu2016inverse} developed inverse probability weighted (IPW) estimators for causal effects under a hypothetical Bernoulli treatment allocation strategy, and  \citet{perez2014assessing} showed the application of these methods in evaluating the effect of the Cholera vaccine.
    Further extensions include incorporating unit-level covariates and treatment assignment dependencies  \citep[e.g.,][]{papadogeorgou2019causal}, and addressing non-compliance
    \citep[e.g.,][]{ditraglia2023identifying, Imai03042021}.

    However, these methods under the partial interference assumption do not consider the structures of individuals' relationships through which interference may occur.
    When interference is due to outcome diffusion, as is the case with transmission of infectious diseases or peer influence in behaviors, spillover effects can be explained by social interactions in a network,
    represented as graphs, where nodes symbolize units and edges depict connections. As \cite{tchetgen2021auto} shows, interference can occur between any units connected by a long-range path in the network, rather than being limited to directly connected units.
    With networks, it is common to assume that potential outcomes depend on a function of the treatments received by other units \citep{manski2013identification}, referred to as  \textit{exposure mapping function}
    \citep{aronow2017estimating}.
    Such a function serves two purposes: i) restrict the set of units whose treatment may affect one's outcome, known as the \textit{interference set}, and
    ii) specify the mechanism of interference through a summarizing function of the treatments in the interference set--e.g., the proportion or number of treated units.
    A common assumption in network settings is
    the so-called neighborhood interference assumption, which restricts the scope of interference to an individual's direct friends (network neighbors)
    \citep{sussman2017elements, forastiere2021identification, verbitsky2012causal, cortez2022exploiting,leung2022causal, leung2020treatment, ogburn2014causal}.
    In observational studies, under the neighborhood interference assumption, \citet{forastiere2021identification} proposed a parametric generalized propensity score-based estimator, while \citet{lee2023} extended the IPW estimator, with and without the application of an exposure mapping function.

    However, the neighborhood interference assumption is often considered too restrictive, especially with outcome diffusion, where disease transmission or behavioral influence can pass through connections beyond direct neighbors. There is also some literature relaxing the neighborhood interference assumption.
    \citet{leung2022causal}, for instance, defines causal estimands under a K-neighborhood interference assumption, allowing for interference to occur from units up to network distance $K$, but inference is conducted under a weaker approximate neighborhood interference (ANI) assumption, under which interference effects from more distant alters diminish with increasing distance from the ego. A related concept is explored in \citet{fang2025inwardoutwardspillovereffects}, who define estimands based on neighbors’ spillover effects while assuming partial interference.
    \citet{belloni2022neighborhood} propose a data-driven method to identify the interference set provided by a sub-network within a specific radius around each unit,
    and then use a doubly robust estimator to estimate the spillover effect within the identified radius; however, their method requires prior specification of the exposure mapping function and does not estimate spillover effects from different network distances.

Most causal inference methods under interference have focused on the estimation of direct effects of the treatment when fixing the treatment vector or a summary of it among the units in the interference set, and of the spillover effects of altering the exposure to these units' treatments by changing the value of the exposure mapping function or by modifying the allocation mechanism assigning the treatment to the interference set. However, we may be interested in disentangling the spillover effect from different subsets defined by the network distance from the ego. Assessing such effects is crucial for policymakers wishing to understand the extent of interference and potentially leverage its heterogeneous nature across a network by altering the network structure or the treatment allocation.
Applied literature has estimated such spillover effects from higher-order network neighbors using simple parametric estimators and assuming that interference extends up to the network distance of interest \citep[e.g.,][]{cai2015social,banerjee2013diffusion}. Thus, the definition of the interference sets has been driven by the effects of interest, rather than by a formal assessment driven by substantive-matter knowledge or statistical tests.

The misspecification of the exposure mapping function, in general, and of interference sets, in particular, has been the focus of recent work.
When the interference set is defined based on a network, the network structure can be mismeasured due to potentially different types of networks driving interference, or to issues with data collection, including missing links.
Recent advances in this area include methods for recovering the true underlying network structure \citep{bhattacharya2020causal, li2021causal, hardy2019estimating}, approaches for bias correction in egocentric network-based studies when the true network is observed for a subsample \citep{chao2025estimation}, and methods for sensitivity analysis that account for potential interference transmitted through unobserved networks in addition to the observed one \citep{egami2021spillover, Onnela2011SpreadingPI}.
A recent paper by \citet{weinstein2023causal} derives the bias arising from network misspecification, demonstrating that it depends on the divergence between the true and misspecified networks, quantified by the extent of disagreement in induced exposure values across units, and further proposes unbiased estimators that accommodate settings with multiple networks.
\citet{savje2024causal} make a significant contribution by analyzing the implications of misspecifying the exposure mapping, irrespective of the underlying source of misspecification. They emphasize that exposure mappings need not be correctly specified, as their primary role is to define estimands that capture the aspects of treatment and interference deemed substantively relevant, rather than to represent the full causal structure. Under such misspecification, they show that, given relatively mild conditions on the specification errors, standard estimators remain consistent for a generalized estimand—specifically, the contrast in expected potential outcomes conditional on the exposure values of interest.

Current methods for estimating spillover effects in networks present several limitations. First, spillover effects are typically defined and estimated under a pre-specified exposure mapping function, limiting the extent of interference within groups or sets of units at specific network distances from the ego, and assuming homogeneous or heterogeneous mechanisms of interference, through scalar or multivariate summarizing functions.
This approach may be subject to misspecification in multiple ways, including in the underlying network structure, but also in the assumptions related to the interference sets and the mechanisms of interference.
Second, the misspecification of both the interference structure and mechanisms  remains largely unexplored, with methods dealing with network mismeasurement commonly relying on the correct specification of  the exposure mapping function.
    Third, while higher-order interference through network paths is common in practice, as with disease transmission or information diffusion \citep{cai2015social}, no causal inference methods to disentangle the spillover effects from higher-order neighbors have been proposed.

\subsection{Contributions}

In this work, we focus on the estimation of spillover effects from the treatment of units at different network distances under a super-population perspective. We address the issue of potentially misspecified exposure mapping functions. We first derive bias formulas, distinguishing between the misspecification of interference sets and that of interference mechanisms, specified by a summarizing function. We then develop weighted estimators under conservative interference sets and without relying on exposure mapping functions.

Concerning the bias, our approach is to show that the quantity that would be estimated under wrong exposure mapping functions identifies specific weighted averages of potential outcomes under the correct exposure mapping function.
\citet{savje2024causal} shows similar results under a design-based framework and under a general misspecification of the exposure mapping function, without distinguishing between the misspecification sources. In addition, they view the identified quantity under misspecification as a meaningful quantity, rather than as a biased quantity, claiming that it can be interpreted as the effect of intervening to alter the value of the exposure defined on a network and interference set of interest, and an assumed mechanism of interference, which may not be the true one but is considered relevant and possible to intervene on. On the contrary,
similarly to recent works \citep[e.g.,][]{weinstein2023causal, chao2025estimation},
we are interested in estimating the true extent of interference, that is, the effect of altering the value of the exposure defined on the true network and interference set driving interference, as well as by the true interference mechanism, even if these may be unknown.
As opposed to recent literature on the misspecification of networks  \citep[e.g.,][]{weinstein2023causal, hardy2019estimating}, that rely on the correct specification of the interference set and the mechanism of interference, we address the misspecification of both and rely on more conservative assumptions.

We first introduce a generalized interference assumption allowing us to index potential outcomes by the individual treatment and the treatment vector in a conservative interference set, without specifying a summarizing function defining the mechanism of interference.  The common partial interference assumption and neighborhood interference assumption are special cases of this generalized assumption, where interference sets are groups or network neighbors, respectively.
      Under this assumption, we define spillover effects from higher-order neighbors' treatments,
      by hypothesizing a Bernoulli assignment with two probabilities, one for the $h$-order neighborhood and one for remaining units within the interference set, and comparing average potential outcomes when $h$-order neighborhood is assigned to treatment with different probabilities.
      This definition allows us to avoid making assumptions on the mechanism of interference and on a restrictive interference set.

      Under a two-stage design, we develop the nonparametric Horvitz-Thompson and Hajek estimators, and a weighted least squares estimator for marginal structure models. A key advantage of these estimators is that they do not require exposure mapping specifications. For each estimator, under a super-population perspective, we establish theoretical properties including unbiasedness and consistency and derive their variance forms.

      A simulation study is conducted to evaluate estimators' performance across different interference scenarios, including settings with no interference, different orders of interference, and heterogeneous interference patterns. We compare our approaches to ``naive" regression estimators that rely on potentially misspecified exposure mappings. The simulation results validate our theoretical findings regarding unbiasedness and variance, showing the importance of avoiding exposure mappings given the bias introduced by a ``naive" specification of the interference set and interference function.

      Finally, we apply our estimators to a two-stage randomized trial conducted in Honduras, where a maternal and child health intervention was assigned at the household level under varying village dosages. Using baseline social network data, we estimate first- and second-order spillover effects under multiple interference-set specifications, including (i) all individuals reachable through any finite network path, (ii) distance-restricted $h$-neighborhoods, and (iii) communities identified through a network-based community detection algorithm. This application illustrates the empirical patterns of spillover effects at different network distances and different hypothetical scenarios.

    The rest of the paper is organized as follows: Section \ref{sec:Preliminaries} introduces preliminaries, including notations, assumptions, treatment assignment mechanisms, our $h$-order potential outcome, and the causal estimands of interest. Section \ref{sec:Bias} introduces the naive estimator and derives its estimate and bias under different cases of incorrect interference sets or wrong exposure mapping functions, emphasizing the importance of accurate interference set specifications. Section \ref{sec:method} introduces our Horvitz-Thompson estimators, Hajek estimators, and weighted regression estimators for $h$-order potential outcomes and spillover effects. In Section \ref{sec:simulation}, we run simulation studies and evaluate the performance and variance of our estimators across different interference scenarios through simulation. Section \ref{sec:application} performs the real data application for our estimators in Honduras study. Finally, Section \ref{sec:conclusion} concludes this article with some closing remarks. Additional theoretical details, proofs, simulation results, and empirical results are provided in the Appendix.

\section{Preliminaries}
\label{sec:Preliminaries}
  \subsection{Notation}
We consider a sample $\mathcal{N}$ of $I$ non-overlapping clusters, denoted by $i=1, \ldots, I$,
each representing a group of $n_i$ individuals denoted by $j=1, \ldots, n_i$, forming the sub-sample $\mathcal{N}_i=\{ij\}_{j=1}^{n_i}\subset\mathcal{N}$,
with a total of $N=\sum^{I}_{i=1} n_i$ units across all clusters.
Let us consider a binary treatment, such as the maternal and child health intervention assigned to treated households in our motivating application.
We denote by
$A_{i j}$ the treatment of unit $j$ in cluster $i$, with $A_{i j}=1$ if the unit is treated and 0 otherwise, where we assume perfect compliance.
The treatment vector for each cluster $i$ is denoted as $\mathbf{A}_i=[{A}_{i1},\cdots,{A}_{in_{i}}]^T$, and we use $\mathcal{A}(n_i)$ to denote the set of all possible treatment assignment vectors of length $n_i$ in cluster $i$.
The treatment vector for all units in the sample is denoted as $\mathbf{A}=\left \{ \mathbf{A}_{1},\cdots,\mathbf{A}_{I} \right \}$, and we use $\mathcal{A}(N)$ to denote the set of all possible treatment assignments across all clusters. For each unit $j$ in cluster $i$ we also denote by $Y_{ij}$ the observed outcome, and by $\mathbf{Y}_i=\left \{ Y_{i1},\cdots,Y_{in_i} \right \}$ the vector of the observed outcomes in cluster $i$.

Within each cluster $i$, we assume we observe a network of connections among all individuals (e.g., the friendship network).
The network of each cluster $i$ is represented by a graph
$\left(\mathcal{N}_{i}, \mathcal{E}_i\right)$, where $\mathcal{N}_{i}$ denotes the set of nodes (individuals) and $\mathcal{E}_i$ denotes the set of edges (connections). An edge $e_{i j, i k} \in \mathcal{E}_i$ signifies the presence of a connection between individual $j$ and $k$ within cluster $i$. Here, for simplicity, we assume an undirected network such that $e_{i j, i k}=e_{i k, i j}$. An extension of our methods to directed networks is straightforward.
Let $d_{ij,ik}$ denote the shortest distance between unit $j$ and unit $k$ in cluster $i$, which corresponds to the length of the shortest path connecting them. These are standard definitions in graph theory and network analysis \citep{wasserman1994social, deo2016graph}. Note that the distance from a unit to itself is zero, i.e., $d_{i j, i j}=0$. We define the $h^{th}$-order neighborhood of unit $j$ in cluster $i$ as the set $\mathcal{N}_{i j}^{h}=\left\{ik \in \mathcal{N}_i: d_{i j, i k} = h\right\}$, which comprises all units in cluster $i$ whose shortest distance to unit $j$ is equal to $h$, and we denote the size of this set as $|\mathcal{N}^h_{ij}|$. Specifically, the first-order neighborhood is simply the set of direct neighbors of unit $ij$, i.e., $\mathcal{N}_{i j}^{1}=\left\{ik \in \mathcal{N}_i: d_{i j, i k} = 1\right\}=\left\{ik \in \mathcal{N}_i: e_{i j, i k} \in \mathcal{E}_i\right\}$. Conversely, $\mathcal{N}_{i\backslash(j,\mathcal{N}_{i j}^{h})} = \mathcal{N}_{i}\backslash \left\{ij, \mathcal{N}_{i j}^{h}\right\}$ includes all units in cluster $i$ excluding unit $j$ and  its $h^{th}$-order neighborhoods. Therefore, for each unit and each distance $h$, we can partition the set of nodes $\mathcal{N}_{i}$ into three parts: $\left(ij, \mathcal{N}^{h}_{i j}, \mathcal{N}_{i\backslash(j,\mathcal{N}_{i j}^{h})}\right)$.

\subsection{Randomized Experiment}
\label{sec:design}
We consider a randomized experiment with an experimental design parametrized by the parameter or vector of parameters $\Delta$. We assume that the experimental design is such that the following positivity assumption holds.
\begin{assumption}[Positivity] \label{assump:positivity} For each cluster $i=1,\dots, I$ we have
$$
0<P_{\Delta}(\mathbf{A}_i=\mathbf{a}_i)<1 \quad \forall \mathbf{a}_i\in \mathcal{A}(n_i)
\footnote{We can also consider designs where Assumption \ref{assump:positivity} holds only for a subset of $\mathcal{A}(n_i)$.}
$$
\end{assumption}

Throughout the paper, and in particular in the simulation study and the application,
we consider the setting of a two-stage treatment allocation design  \citep{hudgens2008toward}, with a completely randomized design at the cluster level in the first stage and then a second-stage randomization at the individual level. In particular, in the first stage, clusters are assigned to a probability $\delta_m$, with $m=1,\dots,M$. Let $\Delta=\{\delta_1,\cdots, \delta_M \}$, and let $S_i\in \Delta$ be the cluster assignment indicator.
Under a completely randomized assignment for the first stage, $I_m=I/M$ clusters are assigned to each $\delta_m$, such that $P(S_i=\delta_m)=1/M$ \footnote{A different first-stage randomization could be considered.}.

In the second stage, units within cluster $i$ are assigned to treatment according to the strategy $S_i$, that is, the treatment vector $\mathbf{A}_i$ is generated according to some probability distribution
$
P(\mathbf{A}_i=\mathbf{a}_i \mid S_i=\delta_m)
$. The assignment mechanism can operate at the individual level or at the level of subgroups within clusters (e.g., households), depending on the design of the study.

A commonly studied special case, which we employ in our simulation study, is the individual-level Bernoulli design, where conditional on $S_i=\delta_m$, each individual is independently assigned treatment with probability $\delta_m$.
Under this two-stage randomized experiment, the probability of observing  a treatment vector $\mathbf{a}_{i}$ in cluster $i$ can be written as \begin{equation}
\label{eq:twostage}P_{\Delta}\left(\mathbf{A}_{i}=\mathbf{a}_{i}\right)=\sum^{M}_{m=1} P(S_i=\delta_m)P(\mathbf{A}_i=\mathbf{a}_i| S_i=\delta_m)=\frac{1}{M}\sum^{M}_{m=1} \delta_m^{\sum_{j} \rma_{ij}}(1-\delta_m)^{n_i-\sum_{j} \rma_{ij}}
\end{equation}
Other experimental designs can also be considered, and our methods can also be extended to observational studies.

  \subsection{Potential Outcomes and Interference Assumptions}

We denote by $Y_{ij}(\mathbf{a})$ the potential outcome of unit $j$ in cluster $i$ under the treatment vector $\mathbf{A}=\mathbf{a}$. Here, we take a super-population perspective, assuming that potential outcomes for each unit are fixed and their distribution in the sample is due to the sampling mechanism \citep{imbens2015causal}.

Under a randomized assignment of the treatment, the following treatment unconfoundedness assumption holds by design.
\begin{assumption}[Treatment Unconfoundedness] \label{assump:randomization}
$$
Y_{ij}(\mathbf{a})\indep \mathbf{A} \quad \forall ij\in \mathcal{N}, \mathbf{a}\in \mathcal{A}(N)
$$
\end{assumption}

\noindent We also make the following consistency assumption.
\begin{assumption}[Consistency] \label{assump:consistency}
$\forall ij\in \mathcal{N}, \mathbf{a}\in \mathcal{A}(N)$,
if $\mathbf{A}=\mathbf{a}$, then
$Y_{ij}=Y_{ij}(\mathbf{a})$

\end{assumption}
 Assumption \ref{assump:consistency}  ensures that the observed outcome is consistent with the potential outcome determined by the observed treatment vector $\mathbf{A}$.
 The consistency assumption is commonly stated together with the no-interference assumption in the stable unit treatment value assumption (SUTVA) \citep{rubin1980}.
However, the no-interference assumption, which posits that one's potential outcome does not depend on other units' treatment, can be overly restrictive in many real-world scenarios.

Here, we relax the no-interference assumption by allowing a unit's potential outcome to depend on the treatment received by other units belonging to a subset $\mathcal{I}_{ij} \subseteq  \mathcal{N}_i \backslash  \left\{ij\right\}$ that we refer to as
\textit{interference set}. Formally, we make the following assumption.

\begin{assumption}[Generalized Partial Interference]
\label{assump:Generalized}
Let $\mathcal{I}_{ij} \subseteq  \mathcal{N}_i \backslash  \left\{ij\right\} $ be a subset of the sub-sample of cluster $i$.
Then, we have that
$$
\forall \mathbf{a}, \mathbf{a}^{\prime} \in \mathcal{A}(N) \text { s.t. } \mathrm{a}_{ij}=\mathrm{a}_{ij}^{\prime},\mathbf{a}_{\mathcal{I}_{ij}}=\mathbf{a}_{\mathcal{I}_{ij}}^{\prime},\, \text{then} \,\, Y_{i j}(\mathbf{a})=Y_{i j}\left(\mathbf{a}^{\prime}\right)
$$
\end{assumption}
 Assumption \ref{assump:Generalized} rules out interference between clusters, and allows a unit's potential outcome to depend only on its own treatment and the treatments of units within its interference set $\mathcal{I}_{ij} \subseteq  \mathcal{N}_i \backslash  \left\{ij\right\} $. Note that we assume the interference set to be a subset of the unit's cluster sub-sample, ruling out interference between clusters.
 Under Assumption \ref{assump:consistency} and Assumption \ref{assump:Generalized}, we express the potential outcome $Y_{ij}(\mathbf{a})$ as $Y_{ij}(\mathbf{a}_{ij}, \mathbf{a}_{\mathcal{I}_{ij}})$.

 Common interference assumptions in the causal inference literature, such as the partial interference assumption \citep[e.g.,][]{hudgens2008toward} or the neighborhood interference assumption \citep[e.g.,][]{forastiere2021identification}, are special cases of Assumption \ref{assump:Generalized} that depend on the specification of the interference set $\mathcal{I}_{ij}$. We formalize here three special cases of Assumption~\ref{assump:Generalized}.
\begin{enumerate}[label={\textbf{Assumption \ref{assump:Generalized}.\Alph*}}, ref={~\!\!\ref{assump:Generalized}.\Alph*}, leftmargin=3.6cm]
\item \label{assump:partial} (Partial Interference) Assumption \ref{assump:Generalized} holds with $\mathcal{I}_{ij}=\mathcal{N}_i\backslash \{ij\}$.
\item \label{assump:neighborhood} (Neighborhood Interference) Assumption \ref{assump:Generalized} holds with $\mathcal{I}_{ij}=\mathcal{N}^1_{ij}$.
\item \label{assump:h-neighborhood} ($k^{th}$-order Neighborhood Interference) Assumption \ref{assump:Generalized} holds with $\mathcal{I}_{ij}=\left\{i l: d_{i j, i l}\leq k\right\}= \cup_{h=1}^{k}\mathcal{N}^{h}_{ij}$.
\end{enumerate}
Assumption \ref{assump:partial} rules out interference between clusters and assumes that a unit's potential outcome depends at most on the treatments of units within its own cluster. This assumption is typically made in cluster settings when we do not have further information on geographical distances or social interactions. It is plausible when clusters are well separated; however, as cluster size increases, relying on the partial interference assumption may be too conservative. On the contrary, Assumption \ref{assump:neighborhood} limits the scope of interference to one's direct network neighbors. This assumption may be too restrictive, especially when outcome diffusion across the network is involved.
Assumption \ref{assump:h-neighborhood} extends  Assumption \ref{assump:neighborhood} to the $k^{th}$-order neighborhood.
The interference set can also be defined using network community detection algorithms to partition large networks into smaller, more manageable sub-networks, where nodes are densely connected but share few connections with the rest of the network \citep{newman2006finding,clauset2004finding,blondel2008fast}.

Selecting the appropriate interference set is crucial, as incorrect interference sets will lead to biased estimates, as will be shown in the next section.

\section{Misspecified Interference Assumptions}
\label{sec:Bias}
\subsection{Exposure Mapping Function}
Assumptions such as partial interference (Assumption \ref{assump:partial}), neighborhood interference (Assumption \ref{assump:neighborhood}), and $h$-order interference (Assumption \ref{assump:h-neighborhood}) are often introduced to limit the scope of interference. In addition,
to further reduce complexity, in the presence of interference,  a function of the treatment vector in the interference set, $\mathbf{g}(\cdot): \mathcal{A}(|\mathcal{I}_{ij}|) \rightarrow \mathbb{R}\mystrut^{ p}$ is often employed. Formally, the following assumption is put forth.

\begin{assumption}[Generalized Partial Interference and Exposure Mapping]
\label{assump:GeneralizedEM}
Let $\mathcal{I}_{ij} \subseteq  \mathcal{N}_i \backslash  \left\{ij\right\} $ be a subset of the sub-sample of cluster $i$, and let $\mathbf{g}(\cdot): \mathcal{A}(|\mathcal{I}_{ij}|) \rightarrow \mathbb{R}\mystrut^p$.
Then, we have that
$$
\forall \mathbf{a}, \mathbf{a}^{\prime} \in \mathcal{A}(N) \text { s.t. } \mathrm{a}_{ij}=\mathrm{a}_{ij}^{\prime},\mathbf{g}(\mathbf{a}_{\mathcal{I}_{ij}})=\mathbf{g}(\mathbf{a}_{\mathcal{I}_{ij}}^{\prime}),\, \text{then } \,\, Y_{i j}(\mathbf{a})=Y_{i j}\left(\mathbf{a}^{\prime}\right)
$$
\end{assumption}
Under Assumption \ref{assump:GeneralizedEM}, a unit's potential outcome depends only on the individual treatment $A_{ij}$ and a function $\mathbf{g}(\cdot)$ of the treatment vector in the unit's interference set. We refer to $\mathbf{g}(\cdot)$ as the \textit{exposure mapping function} \footnote{Here we distinguish between the definition of the interference set and a function $\mathbf{g}(\cdot)$ applied to the treatment vector in such a set. We call $\mathbf{g}(\cdot)$ the exposure mapping function, although in the literature such term is used for a general function applied to the whole treatment vector $\mathbf{a}$, which can be seen a composite function of one defining the interference set and our function  $\mathbf{g}(\cdot)$ specifying the interference mechanism \citep{aronow2017estimating, savje2024causal}.}.
Define $\mathbf{G}_{ij}=\mathbf{g}(\mathbf{A}_{\mathcal{I}_{ij}})$. Then, under Assumption \ref{assump:GeneralizedEM}, without loss of information, the potential outcome $Y_{ij}(\mathbf{a})$ can instead be written as $Y_{ij}(\rma, \mathbf{g})$, which represents the potential outcome of unit $j$ in cluster $i$ under individual treatment $A_{ij}=\rma$ and treatments among units in the interference set $\mathcal{I}_{ij}$ such that $\mathbf{G}_{ij}=\mathbf{g}(\mathbf{A}_{\mathcal{I}_{ij}})=\mathbf{g}$.
In this way, the unit's potential outcome does not depend on the full treatment vector $\mathbf{A}_{\mathcal{I}_{ij}}$, but only on a summary of it as defined by the exposure mapping function.
$\mathbf{G}_{ij}$ can be, for example, the proportion of treated units in the interference set, or a weighted average of $\mathbf{A}_{\mathcal{I}_{ij}}$, with weights that depend on the network distance or on units' characteristics.
$\mathbf{G}_{ij}$ can also be a multidimensional variable in $\mathbb{R}\mystrut^p$, with $p>1$.
For example, if we assume that interference is limited to the second-order neighborhood, that is, $\mathcal{I}_{ij}=\mathcal{N}^1_{ij}\cup \mathcal{N}^2_{ij}$, then we can have  $\mathbf{G}_{ij}=\mathbf{g}(\mathbf{A}_{\mathcal{I}_{ij}})=[\mathbf{G}_{ij,1},\mathbf{G}_{ij,2}]^T=[g_1(\mathbf{A}_{\mathcal{N}^1_{ij}}),g_2(\mathbf{A}_{\mathcal{N}^2_{ij}})]^T$, where $g_h(\mathbf{A}_{\mathcal{N}^h_{ij}})$ takes the proportion of treated units in the $h^{th}$-order neighborhood. Note that Assumption \ref{assump:Generalized} can be seen as a special case of Assumption \ref{assump:GeneralizedEM}, with $\mathbf{g}(\cdot)$ being the identity function and $\mathbf{G}_{ij}=\mathbf{A}_{\mathcal{I}_{ij}}$.

Given an exposure mapping function, under Assumption \ref{assump:GeneralizedEM}, it is common to define spillover effects by comparing average potential outcomes under two different values of $\mathbf{G}_{ij}$, i.e., $Y_{ij}(\rma,\mathbf{g})$ compared to $Y_{ij}(\rma,\mathbf{g}')$ \citep[e.g.,][]{aronow2017estimating}. For instance, under the previous setting with second-order neighborhood interference and $G_{ij,h}=\frac{1}{|\mathcal{N}^h_{ij}|}\sum_{ik\in \mathcal{N}^h_{ij}} A_{ik}$ with $h=1,2$, the spillover effect from second-order neighbors can be defined by comparing
average potential outcomes changing the proportion of treated second-order neighbors from $G_{ij,2}=\mathbf{g}_2$ to $G_{ij,2}=\mathbf{g}_2'$, while fixing the individual treatment $A_{ij}=\rma$ and the proportion of treated first-order neighbors $G_{ij,1}=\mathbf{g}_1$, i.e., comparing $Y_{ij}(\rma, \mathbf{g}_1, \mathbf{g}_2)$ vs $Y_{ij}(\rma, \mathbf{g}_1, \mathbf{g}_2')$
\citep{cai2015social}.

In order to identify such spillover effects, under Assumption \ref{assump:GeneralizedEM}, even in the presence of a randomized experiment, the exposure mapping function $g(\cdot)$ must satisfy an additional unconfoundedness assumption.
\begin{assumption}[Unconfoundedness of the Exposure Mapping]
\label{assump:unc}
Let $\mathcal{I}_{ij} \subseteq  \mathcal{N}_i \backslash  \left\{ij\right\} $ be a subset of the sub-sample of cluster $i$, and let $\mathbf{g}(\cdot): \mathcal{A}(|\mathcal{I}_{ij}|) \rightarrow \mathbb{R}\mystrut^p$ be an  exposure mapping function. $\mathbf{g}(\cdot)$ is said to be an unconfounded exposure mapping function with respect to $Y_{ij}(\mathbf{a})$, if
$$
Y_{ij}(\mathbf{a}) \indep  \mathbf{g}(\mathbf{A}_{\mathcal{I}_{ij} })|\mathbf{A}=\mathbf{a}, \quad \forall ij\in \mathcal{N}, \mathbf{a}\in \mathcal{A}(N)
$$
\end{assumption}
\noindent If Assumption \ref{assump:unc} holds, $\mathbf{g}(\cdot)$ with $\mathcal{I}_{ij}$ is said to be an unconfounded exposure mapping function with respect to $Y_{ij}(\mathbf{a})$.
Assumption \ref{assump:unc} is a property of the function $\g(\cdot)$ applied to a set $\mathcal{I}_{ij}$ and with respect to a specific potential outcome $Y_{ij}(\mathbf{a})$. If the exposure $\mathbf{G}_{ij}=\mathbf{g}(\mathbf{A}_{\mathcal{I}_{ij}})$ depends on some variable also affecting the outcome, then Assumption \ref{assump:unc} does not hold.
For instance, given $\mathcal{I}_{ij}=\mathcal{N}^1_{ij}$ and $g(\mathbf{A}_{\mathcal{N}^1_{ij}})=\sum_{il\in \mathcal{N}^1_{ij}} A_{il}$, the number of treated first-order neighbors $g(\mathbf{A}_{\mathcal{N}^1_{ij}})$ would not be an unconfounded exposure mapping function if unit $ij$'s number of first-order neighbors (degree) $|\mathcal{N}^1_{ij}|$ was associated with its potential outcome $Y_{ij}(\mathbf{a})$.
Note that Assumption \ref{assump:unc} does not assume the set $\mathcal{I}_{ij}$ and the function $g(\cdot)$ to be such that Assumption \ref{assump:GeneralizedEM} holds.

\begin{proposition}[Identification under Correct Interference Assumptions]
\label{prop:identification}
Let $\mathbf{A}$ be determined by a randomized design such that
Assumptions
\ref{assump:randomization} and \ref{assump:consistency} hold. Let $\mathcal{I}_{ij} \subseteq  \mathcal{N}_i \backslash  \left\{ij\right\} $ be a subset of the sub-sample of cluster $i$, and let $\mathbf{g}(\cdot): \mathcal{A}(|\mathcal{I}_{ij}|) \rightarrow \mathbb{R}\mystrut^p$ such that Assumptions \ref{assump:GeneralizedEM} and \ref{assump:unc} hold, and, given values $a\in\{0,1\}$ and $\mathbf{g}\in \mathbb{R}\mystrut^p$, $0<Pr(A_{ij}=\rma, \mathbf{G}=\mathbf{g})<1$ for all $ij\in \mathcal{N}$. Then, we have the following identification result:
$$
\E[Y_{ij}(\rma,\mathbf{g})]=\mathbb{E}[Y_{ij} \mid A_{ij}=\rma, \mathbf{G}_{ij}=\mathbf{g}]
\footnote{When we have $0<Pr(A_{ij}=\rma, \mathbf{G}=\mathbf{g})<1$ only for a subset of units $V_{\rma,\mathbf{g}}\subset \mathcal{N}$, the expectation on the right hand side should be considered conditional on this subset.}
$$
\end{proposition}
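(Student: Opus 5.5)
Since $\E[Y_{ij}(\rma,\mathbf{g})]$ is a fixed target, the plan is to start from the right-hand side, $\E[Y_{ij}\mid A_{ij}=\rma,\mathbf{G}_{ij}=\mathbf{g}]$, and rewrite it in three moves: consistency, the exposure mapping restriction, and then unconfoundedness. The positivity condition $0<Pr(A_{ij}=\rma,\mathbf{G}_{ij}=\mathbf{g})$ guarantees that the conditioning event has positive probability, so the conditional expectation is well defined.

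\textbf{Step 1 (consistency and the exposure mapping).} On the event $\{A_{ij}=\rma,\mathbf{G}_{ij}=\mathbf{g}\}$, decompose according to the realized full treatment vector $\mathbf{A}=\mathbf{a}$, where $\mathbf{a}$ ranges over vectors with $\rma_{ij}=\rma$ and $\mathbf{g}(\mathbf{a}_{\mathcal{I}_{ij}})=\mathbf{g}$. By Assumption~\ref{assump:consistency}, $Y_{ij}=Y_{ij}(\mathbf{a})$ on $\{\mathbf{A}=\mathbf{a}\}$. By Assumption~\ref{assump:GeneralizedEM}, all such $\mathbf{a}$ give the same potential outcome $Y_{ij}(\mathbf{a})=Y_{ij}(\rma,\mathbf{g})$. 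Hence $Y_{ij}=Y_{ij}(\rma,\mathbf{g})$ almost surely on the conditioning event, and
\[
\E[Y_{ij}\mid A_{ij}=\rma,\mathbf{G}_{ij}=\mathbf{g}]=\E[Y_{ij}(\rma,\mathbf{g})\mid A_{ij}=\rma,\mathbf{G}_{ij}=\mathbf{g}].
\]

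\textbf{Step 2 (removing the conditioning).} Fix one representative $\mathbf{a}^*$ with $\rma^*_{ij}=\rma$ and $\mathbf{g}(\mathbf{a}^*_{\mathcal{I}_{ij}})=\mathbf{g}$, so that $Y_{ij}(\rma,\mathbf{g})=Y_{ij}(\mathbf{a}^*)$ is a single random variable. Assumption~\ref{assump:randomization} gives $Y_{ij}(\mathbf{a}^*)\indep \mathbf{A}$. The pair $(A_{ij},\mathbf{G}_{ij})=(A_{ij},\mathbf{g}(\mathbf{A}_{\mathcal{I}_{ij}}))$ is a measurable function of $\mathbf{A}$, so $Y_{ij}(\mathbf{a}^*)\indep(A_{ij},\mathbf{G}_{ij})$. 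The conditioning can therefore be dropped, which yields $\E[Y_{ij}(\rma,\mathbf{g})]$. Assumption~\ref{assump:unc} is the stated safeguard for exactly this step: it asserts independence of the potential outcome and $\mathbf{G}_{ij}$, which covers settings where $\mathbf{g}$ might be tied to unit features such as degree. I would invoke it alongside Assumption~\ref{assump:randomization} to justify the independence of $Y_{ij}(\mathbf{a}^*)$ from $\mathbf{G}_{ij}$ jointly with $A_{ij}$.

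\textbf{Main obstacle.} The delicate point is interpreting the super-population randomness. Degree, network structure and similar features may vary across sampled units, so $\mathbf{G}_{ij}$ could be correlated with $Y_{ij}(\cdot)$ through sampling even though $\mathbf{A}$ is randomized given the network. I would handle this by treating Assumption~\ref{assump:unc} as the condition that licenses dropping the conditioning on $\mathbf{G}_{ij}$. I would use Assumption~\ref{assump:randomization} for $A_{ij}$, and combine the two into joint independence by conditioning first on $\mathbf{G}_{ij}$ and then noting that, given the design, $A_{ij}$ carries no information about $Y_{ij}(\mathbf{a}^*)$. If only a subset $V_{\rma,\mathbf{g}}$ of units satisfies positivity, the same argument goes through with every expectation taken conditional on membership in $V_{\rma,\mathbf{g}}$, matching the footnote.
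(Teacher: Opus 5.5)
Your proposal is correct and is essentially the paper's proof run in the opposite direction. The paper also derives $Y_{ij}(\rma,\mathbf{g})\indep(A_{ij},\mathbf{G}_{ij})$ from Assumptions~\ref{assump:randomization} and~\ref{assump:unc} to pass between $\E[Y_{ij}(\rma,\mathbf{g})]$ and its conditional version. It then uses consistency together with Assumption~\ref{assump:GeneralizedEM} to replace $Y_{ij}(\rma,\mathbf{g})$ by $Y_{ij}$ on $\{A_{ij}=\rma,\mathbf{G}_{ij}=\mathbf{g}\}$.

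One refinement for the random-network case you flag: combine the two assumptions via the contraction property, since $Y_{ij}(\mathbf{a}^*)\indep\mathbf{A}$ together with $Y_{ij}(\mathbf{a}^*)\indep\mathbf{G}_{ij}\mid\mathbf{A}$ gives $Y_{ij}(\mathbf{a}^*)\indep(\mathbf{A},\mathbf{G}_{ij})$. Do not condition on $\mathbf{G}_{ij}$ first, because $Y_{ij}(\mathbf{a}^*)\indep A_{ij}\mid\mathbf{G}_{ij}$ does not follow from Assumption~\ref{assump:randomization} alone once $\mathbf{G}_{ij}$ depends on the sampled network.
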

\noindent Proposition \ref{prop:identification} shows that under a randomized experiment and consistency,
given the correct definition of the interference set $\mathcal{I}_{ij}$ and of the exposure mapping function $\mathbf{g}(\cdot)$, such that Assumption \ref{assump:GeneralizedEM} holds, and also such that the exposure mapping function $\mathbf{g}(\cdot)$ is unconfounded (Assumption \ref{assump:unc})
$\E[Y_{ij}(\rma,\mathbf{g})]$ would  be identified by $ \mathbb{E}[Y_{ij} \mid A_{ij}=\rma, \mathbf{G}_{ij}=\mathbf{g}]$. Estimators of this identifying quantity $\mathbb{E}[Y_{ij} \mid A_{ij}=\mathrm{a}, \mathbf{G}_{ij}=\mathbf{g}]$ include the Horvitz-Thompson estimator \citep[e.g.,][]{aronow2017estimating}, i.e., a weighted average of the observed outcomes among those with $ A_{ij}=\mathrm{a}, \mathbf{G}_{ij}=\mathbf{g}$, or parametric estimators, regressing the observed outcome on $A_{ij}$ and $\mathbf{G}_{ij}$ \citep[e.g.,][]{cai2015social}, whose unbiasedness depend on the correct specification of the parametric model.

It is worth noting that when the exposure mapping function is unconfounded only after conditioning on some other variables, such as degree in the case of the number of treated neighbors, identification of  $\E[Y_{ij}(\rma,\mathbf{g})]$ would still be possible after controlling for those variables \citep{forastiere2021identification}. Here, for simplicity, we consider the simpler case where Assumption \ref{assump:unc} holds.

\subsection{Misspecified Exposure Mapping or Interference Set}
\label{sec:bias}
Proposition \ref{prop:identification} outlines the identification of mean potential outcomes when both the
interference set $\mathcal{I}_{ij}$ and of the exposure mapping function $\mathbf{g}(\cdot)$ are correctly specified.

Nevertheless, either the interference set $\mathcal{I}_{ij}$ or the exposure mapping function $\mathbf{g}(\cdot)$, or both, may be misspecified. In fact, the specification of the interference set is usually approximate and it often depends either on the data structure or on the causal effect of interest. For instance, when spillover effects from first-order neighbors are of interest, it is common to limit the extent of interference to the first-order neighborhood \citep[e.g.,][]{sussman2017elements, forastiere2021identification,leung2022causal, leung2020treatment, ogburn2014causal}.
As for the exposure mapping function, specifying the mechanism through which interference occurs and to what it depends on is not straightforward. Oftentimes, the choice of the exposure mapping is made more for simplification purposes than to reflect the true interference mechanism.

We will now investigate the consequences of misspecifying the interference set and the exposure mapping function.
We will focus on the identifying quantity $ \mathbb{E}[Y_{ij} \mid A_{ij}=\mathrm{a}, \mathbf{G}^*_{ij}=\mathbf{g}]$ that would be estimated if we incorrectly assumed that Assumption \ref{assump:GeneralizedEM} were true with $\mathbf{G}^*_{ij}=\mathbf{g}^*(\mathbf{A}_{\mathcal{I}^*_{ij}})$, where $\mathcal{I}^*_{ij}$ and  $\mathbf{g}^*(\cdot)$ are the misspecified interference set and exposure mapping function. In Proposition \ref{prop:incorrect_interference}, we will show what this quantity actually identifies when Assumption \ref{assump:GeneralizedEM} does hold but under a different interference set $\mathcal{I}_{ij}$ and exposure mapping function $\mathbf{g}(\cdot)$. Because typically we tend to make strong assumptions on the interference set, we consider the case where $\mathcal{I}^*_{ij}\subseteq \mathcal{I}_{ij}$.

\begin{proposition}[Incorrect Interference Assumptions]
\label{prop:incorrect_interference}
Let $\mathbf{A}$ be determined by a randomized design such that
Assumptions \ref{assump:randomization} and \ref{assump:consistency} hold.
Let $\mathcal{I}_{ij} \subseteq  \mathcal{N}_i \backslash  \left\{ij\right\} $ be a subset of the sub-sample of cluster $i$, and let $\mathbf{g}(\cdot): \mathcal{A}(|\mathcal{I}_{ij}|) \rightarrow \mathbb{R}\mystrut^p$ such that Assumptions \ref{assump:GeneralizedEM} and \ref{assump:unc} hold.
Let $\mathcal{I}^{*}_{ij}$ be a second   set such that $\mathcal{I}^{*}_{ij} \subseteq \mathcal{I}_{ij}$, and let $\mathbf{g}^{*}(\cdot): \mathcal{A}(|\mathcal{I}^{*}_{ij}|) \rightarrow \mathbb{R}\mystrut^d$ be a second unconfounded exposure mapping function for $Y_{ij}(\mathbf{a})$ (Assumption \ref{assump:unc} holds), and $\mathbf{G}^*_{ij}=\mathbf{g}^*(\mathbf{A}_{\mathcal{I}^*_{ij}})$.
 Then, the following equality holds:
$$
\begin{aligned}
\E[Y_{ij} \mid A_{ij}
=\mathrm{a}, \mathbf{G}^*_{ij}=\mathbf{g}]=
\!\!\!\!\!\!\!\!\!\sum_{\substack{ \mathbf{a}_{\mathcal{I}^{*}_{ij} }
:  \gstar (\mathbf{a}_{\mathcal{I}^{*}_{ij} })=\mathbf{g}
}}
\,\,\,\sum_{\mathbf{a}_{\mathcal{I}_{ij} \setminus \mathcal{I}^{*}_{ij} }
}
&\E\bigg[Y_{ij} \bigg(A_{ij}=\mathrm{a}, \g(\mathbf{A}_{\mathcal{I}_{ij}^{*}}=\mathbf{a}_{\mathcal{I}_{ij}^{*} }, \mathbf{A}_{\mathcal{I}_{ij} \setminus \mathcal{I}_{ij}^{*} }=\mathbf{a}_{\mathcal{I}_{ij} \setminus \mathcal{I}_{ij}^{*} })\bigg)\bigg] \\
&\times P(A_{\mathcal{I}_{ij}^{*}}=\mathbf{a}_{\mathcal{I}_{ij}^{*} }, A_{\mathcal{I}_{ij} \setminus \mathcal{I}_{ij}^{*} }=\mathbf{a}_{\mathcal{I}_{ij} \setminus \mathcal{I}^{*}_{ij} }\mid A_{ij}=\mathrm{a},  \mathbf{G}^*_{ij}=\mathbf{g})\\
\end{aligned}
$$
\end{proposition}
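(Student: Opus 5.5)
We prove the identity by conditioning on the full treatment vector in the true interference set and then swapping observed outcomes for potential outcomes. Three tools are needed: the law of total expectation, consistency (Assumption \ref{assump:consistency}) together with the true exposure structure (Assumption \ref{assump:GeneralizedEM}), and randomization (Assumption \ref{assump:randomization}) to drop the conditioning.

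\textbf{Step 1: expand the conditional mean.} Since $\mathcal{I}^*_{ij}\subseteq\mathcal{I}_{ij}$, we write $\mathbf{A}_{\mathcal{I}_{ij}}=(\mathbf{A}_{\mathcal{I}^*_{ij}},\mathbf{A}_{\mathcal{I}_{ij}\setminus\mathcal{I}^*_{ij}})$. The event $\{\mathbf{G}^*_{ij}=\mathbf{g}\}$ is a function of $\mathbf{A}_{\mathcal{I}^*_{ij}}$ alone. Apply the law of total expectation to $\E[Y_{ij}\mid A_{ij}=\rma,\mathbf{G}^*_{ij}=\mathbf{g}]$, summing over all values of $\mathbf{A}_{\mathcal{I}_{ij}}$.

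The conditional probability $P(\mathbf{A}_{\mathcal{I}^*_{ij}}=\mathbf{a}_{\mathcal{I}^*_{ij}},\mathbf{A}_{\mathcal{I}_{ij}\setminus\mathcal{I}^*_{ij}}=\cdot\mid A_{ij}=\rma,\mathbf{G}^*_{ij}=\mathbf{g})$ is zero unless $\gstar(\mathbf{a}_{\mathcal{I}^*_{ij}})=\mathbf{g}$. This restricts the outer sum to the configurations appearing in the statement. The inner sum runs freely over $\mathbf{a}_{\mathcal{I}_{ij}\setminus\mathcal{I}^*_{ij}}$, and these weights are exactly the probability factor in the display.

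Each summand also carries $\E[Y_{ij}\mid A_{ij}=\rma,\mathbf{A}_{\mathcal{I}_{ij}}=\mathbf{a}_{\mathcal{I}_{ij}},\mathbf{G}^*_{ij}=\mathbf{g}]$. On the support, the event $\mathbf{G}^*_{ij}=\mathbf{g}$ is implied by $\mathbf{A}_{\mathcal{I}^*_{ij}}=\mathbf{a}_{\mathcal{I}^*_{ij}}$, so that conditioning is redundant and can be dropped.

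\textbf{Step 2: replace observed by potential outcomes.} Under Assumption \ref{assump:GeneralizedEM} with the true $(\mathcal{I}_{ij},\g)$, on the event $\{A_{ij}=\rma,\mathbf{A}_{\mathcal{I}_{ij}}=\mathbf{a}_{\mathcal{I}_{ij}}\}$, consistency gives $Y_{ij}=Y_{ij}(\mathbf{A})=Y_{ij}(\rma,\g(\mathbf{a}_{\mathcal{I}_{ij}}))$. This holds for every value of the treatments outside $\{ij\}\cup\mathcal{I}_{ij}$, because the potential outcome does not depend on them. Hence
\[
\E[Y_{ij}\mid A_{ij}=\rma,\mathbf{A}_{\mathcal{I}_{ij}}=\mathbf{a}_{\mathcal{I}_{ij}}]=\E[Y_{ij}(\rma,\g(\mathbf{a}_{\mathcal{I}_{ij}}))\mid A_{ij}=\rma,\mathbf{A}_{\mathcal{I}_{ij}}=\mathbf{a}_{\mathcal{I}_{ij}}].
\]

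\textbf{Step 3: drop the conditioning.} The potential outcome $Y_{ij}(\rma,\g(\mathbf{a}_{\mathcal{I}_{ij}}))$ equals $Y_{ij}(\mathbf{a})$ for any full vector $\mathbf{a}$ extending $(\rma,\mathbf{a}_{\mathcal{I}_{ij}})$. By Assumption \ref{assump:randomization} it is independent of $\mathbf{A}$, hence of any function of $\mathbf{A}$. The conditional expectation therefore reduces to the unconditional $\E[Y_{ij}(\rma,\g(\mathbf{a}_{\mathcal{I}^*_{ij}},\mathbf{a}_{\mathcal{I}_{ij}\setminus\mathcal{I}^*_{ij}}))]$. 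Assumption \ref{assump:unc} can be cited here as well, but randomization suffices. Substituting back into Step 1 yields the claimed formula.

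\textbf{Main obstacle.} The delicate point is Step 3, specifically being careful that the summands are well-defined. Some configurations $(\rma,\mathbf{a}_{\mathcal{I}_{ij}})$ may have zero probability given $(A_{ij}=\rma,\mathbf{G}^*_{ij}=\mathbf{g})$. We handle this by noting such terms carry zero weight and can be omitted, or by invoking Assumption \ref{assump:positivity}. We also need to state explicitly that the independence in Assumption \ref{assump:randomization} is joint over the whole vector $\mathbf{A}$, so conditioning on $A_{ij}$ and $\mathbf{A}_{\mathcal{I}_{ij}}$ simultaneously is legitimate.
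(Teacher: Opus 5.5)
Your argument is correct and is essentially the paper's proof. Both apply the law of total expectation over $\mathbf{A}_{\mathcal{I}_{ij}}$, use $\mathbf{G}^*_{ij}=\gstar(\mathbf{A}_{\mathcal{I}^*_{ij}})$ to restrict the sum, and then use consistency plus an unconfoundedness assumption to replace $\E[Y_{ij}\mid A_{ij}=\rma,\mathbf{A}_{\mathcal{I}_{ij}}=\mathbf{a}_{\mathcal{I}_{ij}}]$ by the marginal mean of the potential outcome. The only difference is that you make the role of Assumption~\ref{assump:GeneralizedEM} explicit and drop the conditioning via Assumption~\ref{assump:randomization}, whereas the paper cites Assumptions~\ref{assump:consistency} and~\ref{assump:unc} for that step; your justification is, if anything, the more direct one.
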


\noindent
\textit{Proof of Proposition \ref{prop:incorrect_interference} is provided in Appendix A.1.}

\noindent Proposition \ref{prop:incorrect_interference} shows that under incorrect assumptions on the interference set and the exposure mapping function, the quantity $\E[Y_{ij} \mid A_{ij}
=\mathrm{a}, \mathbf{G}^*_{ij}=\mathbf{g}]$ that we would target is a weighted average of mean potential outcomes under individual treatment $A_{ij}=\mathrm{a}$ and values of the true exposure mapping function $\g(\cdot)$ applied to the treatment of the true interference set $\mathcal{I}_{ij}$, i.e., $g(\mathbf{A}_{\mathcal{I}_{ij}})$, with $\mathbf{A}_{\mathcal{I}_{ij}}$ drawn from its distribution conditional on $A_{ij}=\mathrm{a}$ with values in the subset $\mathcal{I}_{ij}^{*}$ such that $g^*(\mathbf{A}_{\mathcal{I}^*_{ij}})=\mathbf{g}$.

To facilitate the understanding of Proposition \ref{prop:incorrect_interference}, we present two corollaries when either the interference set is incorrect but the exposure mapping function is well specified, or the exposure mapping function is incorrect but the interference set is well specified.

\begin{corollary}[Misspecified Interference Set]
\label{cor:3A}
Given the conditions of Proposition ~\ref{prop:incorrect_interference},
if function $\g^{*}(\cdot): \mathcal{A}(|\mathcal{I}^{*}_{ij}|) \rightarrow \mathbb{R}\mystrut^d$ is the restriction of function $\g(\cdot)$ to the subvector $\mathbf{A}_{\mathcal{I}^{*}_{ij}}$, with $d<p$, such that $\g(\mathbf{A}_{\mathcal{I}_{ij}})=(\g^*(\mathbf{A}_{\mathcal{I}^*_{ij}}),\g^{\backslash *}(\mathbf{A}_{\mathcal{I}_{ij}\setminus\mathcal{I}^*_{ij}}) )$,
 then we have:
$$
\begin{aligned}
\E[Y_{ij} \mid A_{ij}
=\mathrm{a}, \mathbf{G}^*_{ij}=\mathbf{g}]
=\!\!\!\!\!
\sum_{\substack{ \mathbf{a}_{\mathcal{I}_{ij}^{*}}:
\g^*( \mathbf{a}_{\mathcal{I}_{ij}^{*}})=\mathbf{g}}} \,\,\sum_{\mathbf{a}_{\mathcal{I}_{ij} \setminus \mathcal{I}_{ij}^{*} }
}
&\E\bigg[Y_{ij} \bigg(A_{ij}=\mathrm{a},
\g,
\g^{\backslash *}(A_{\mathcal{I}_{ij} \setminus \mathcal{I}_{ij}^{*} }=\mathbf{a}_{\mathcal{I}_{ij} \setminus \mathcal{I}_{ij}^{*} })\bigg) \bigg]\\
&\times \!P(A_{\mathcal{I}_{ij}^{*}}=\mathbf{a}_{\mathcal{I}_{ij}^{*}}, A_{\mathcal{I}_{ij} \setminus \mathcal{I}_{ij}^{*} }=\mathbf{a}_{\mathcal{I}_{ij} \setminus \mathcal{I}_{ij}^{*} }\mid \!\!A_{ij}=\mathrm{a},  \mathbf{G}^*_{ij}=\mathbf{g})
\end{aligned}
$$
\end{corollary}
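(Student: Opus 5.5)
The plan is to derive Corollary~\ref{cor:3A} directly from Proposition~\ref{prop:incorrect_interference} by substitution. Since all hypotheses of that proposition are in force, I start from its displayed identity. The only change is inside the potential outcome: the true exposure $\g(\mathbf{A}_{\mathcal{I}_{ij}^{*}}=\mathbf{a}_{\mathcal{I}_{ij}^{*}}, \mathbf{A}_{\mathcal{I}_{ij}\setminus\mathcal{I}_{ij}^{*}}=\mathbf{a}_{\mathcal{I}_{ij}\setminus\mathcal{I}_{ij}^{*}})$ is replaced by its value under the assumed decomposition. That decomposition is $\g(\mathbf{A}_{\mathcal{I}_{ij}})=(\g^*(\mathbf{A}_{\mathcal{I}^*_{ij}}),\g^{\backslash *}(\mathbf{A}_{\mathcal{I}_{ij}\setminus\mathcal{I}^*_{ij}}))$.

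First, I note that every term of the outer sum is indexed by some $\mathbf{a}_{\mathcal{I}^*_{ij}}$ with $\g^*(\mathbf{a}_{\mathcal{I}^*_{ij}})=\mathbf{g}$. On each such term, the first block of the true exposure is therefore exactly $\mathbf{g}$. The potential outcome then becomes
$$Y_{ij}\big(\mathrm{a},\mathbf{g},\g^{\backslash *}(\mathbf{a}_{\mathcal{I}_{ij}\setminus\mathcal{I}^*_{ij}})\big).$$
By Assumption~\ref{assump:GeneralizedEM}, this notation is well defined: the potential outcome depends on $\mathbf{A}_{\mathcal{I}_{ij}}$ only through $\g$, so any two configurations with the same $\g$-value give the same outcome. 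Taking expectations term by term then yields the expectation that appears in the corollary.

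Second, the weights $P(\cdot\mid A_{ij}=\mathrm{a},\mathbf{G}^*_{ij}=\mathbf{g})$ are unaffected by this substitution. The summation ranges are also unchanged, so the identity follows immediately.

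There is no real obstacle here. The only points to check are that the restriction $\g^*$ is the same function used to define $\mathbf{G}^*_{ij}$, so the conditioning event matches the summation constraint, and that writing the potential outcome as a function of $\g$-values is legitimate, which is exactly what Assumption~\ref{assump:GeneralizedEM} guarantees.
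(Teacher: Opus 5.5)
Your proof is correct and matches the paper, which gives no separate proof and treats Corollary~\ref{cor:3A} as a direct special case of Proposition~\ref{prop:incorrect_interference}. On each summand, the constraint $\g^*(\mathbf{a}_{\mathcal{I}^*_{ij}})=\mathbf{g}$ and the decomposition of $\g$ turn the true exposure into $(\mathbf{g},\g^{\backslash *}(\mathbf{a}_{\mathcal{I}_{ij}\setminus\mathcal{I}^*_{ij}}))$, the weights and summation ranges are unchanged, and Assumption~\ref{assump:GeneralizedEM} justifies indexing potential outcomes by exposure values, as you note.
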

A common setting corresponding to that in Corollary~\ref{cor:3A} is when a neighborhood interference assumption (Assumption~\ref{assump:neighborhood}) is made, while instead interference spans the whole cluster (Assumption~\ref{assump:partial}), with $\g(\cdot)$ being the identity function.
Another example is that  when Assumption~\ref{assump:h-neighborhood} holds with $k=2$, that is, the correct interference set is
$\mathcal{I}_{ij}=\mathcal{N}^{1}_{ij}\cup\mathcal{N}^{2}_{ij}$, and $\mathbf{g}_h(\mathbf{A}_{\mathcal{N}^h_{ij}})$ takes the proportion of treated units in the $h^{th}$-order neighborhood with $\mathbf{G}_{ij}=\mathbf{g}(\mathbf{A}_{\mathcal{I}_{ij}})=[G_{ij,1},G_{ij,2}]^T=[\mathbf{g}_1(\mathbf{A}_{\mathcal{N}^1_{ij}}),\mathbf{g}_2(\mathbf{A}_{\mathcal{N}^2_{ij}})]^T$. When the neighborhood interference assumption (Assumption~\ref{assump:neighborhood}) is mistakenly assumed, with the incorrect interference set being $\mathcal{I}^{*}_{ij}=\mathcal{N}^{1}_{ij}$, and the exposure mapping function $\mathbf{g}^{*}(\cdot)$ is restricted to $\mathbf{g}_{1}(\cdot)$ with $\mathbf{G}^{*}_{ij}=G_{ij,1}$, then we have:
$$
\begin{aligned}
&\E[Y_{ij} \mid A_{ij}=\mathrm{a}, G_{ij,1}={g}_1]=
\sum_{g_2=0}^1
\E[Y_{ij}(\mathrm{a},G_{ij,1}={g}_1,G_{ij,2}={g}_{2})]
P(G_{ij,2}={g}_{2}\mid A_{ij}=\mathrm{a},G_{ij,1}={g}_1)
\end{aligned}
$$
If potential outcomes followed a linear model, such as $\E[Y_{ij}(\mathrm{a},G_{ij,1}={g}_1,G_{ij,2}={g}_{2})]=\beta_0+\beta_a \mathrm{a}+\beta_{g_1} g_1+\beta_{g
_2} g_2$, and an OLS estimator was used to estimate the first-order spillover effect $\beta_{g_1}$ omitting the term $g_2$, we would have the common
omitted variable bias
\citep{mccallum1972relative, clarke2005phantom}.
For instance, if we wanted to estimate the spillover effect for the untreated of having $g_1$ treated first-order neighbors vs $g_1'$, by omitting the second-order term we would estimate $\sum_{g_2=0}^1
\E[Y_{ij}(\mathrm{a},G_{ij,1}={g}_1,G_{ij,2}={g}_{2})]P(G_{ij,2}={g}_{2}\mid A_{ij}=\mathrm{a},G_{ij,1}={g}_1)- \sum_{g_2=0}^1
\E[Y_{ij}(\mathrm{a},G_{ij,1}={g}_1',G_{ij,2}={g}_{2})]P(G_{ij,2}={g}_{2}\mid A_{ij}=\mathrm{a},G_{ij,1}={g}_1')$. In principle, if $G_{ij,1}$ and $G_{ij,2}$ were independent given the individual treatment, we would estimate a weighted first-order spillover effect $\sum_{g_2=0}^1
\big(\E[Y_{ij}(\mathrm{a},G_{ij,1}={g}_1,G_{ij,2}={g}_{2})]-
\E[Y_{ij}(\mathrm{a},G_{ij,1}={g}_1',G_{ij,2}={g}_{2})]\big)P(G_{ij,2}={g}_{2}\mid A_{ij}=\mathrm{a})$.
However, typically $G_{ij,1}$ and $G_{ij,2}$ depend on $\mathcal{N}^1_{ij}$ and $\mathcal{N}^2_{ij}$, which are correlated. Even if $G_{ij,1}$ and $G_{ij,2}$ were  uncorrelated by design, in finite samples these two quantities may show empirical correlation
\citep{simon1954spurious}, and will lead to biased estimates.  We will show details of this in Section \ref{sec:simulation} through simulations.

\begin{corollary}[Misspecified Exposure Mapping]\label{cor:3B}
Given the conditions of Proposition \ref{prop:incorrect_interference}, if $\mathcal{I}^{*}_{ij}=\mathcal{I}_{ij}$, then
$$
\begin{aligned}
\E[Y_{ij} \mid A_{ij}
=\mathrm{a}, \mathbf{G}^{*}_{ij}=\mathbf{g}]=\!\!\!\!\!\!\!\!\!\!\!\!\sum_{\substack{ \mathbf{a}_{\mathcal{I}_{ij}} \in  \mathcal{A}_{\mathcal{I}_{ij}}: \g^{*}(\mathbf{a}_{\mathcal{I}_{ij}})=\mathbf{g}}}
\!\!\!\!\!\!\!\!\!\!\!&\E[Y_{ij} (A_{ij}=\mathrm{a}, \g(A_{\mathcal{I}_{ij}}=\mathbf{a}_{\mathcal{I}_{ij}}))] \\
&\times P(A_{\mathcal{I}_{ij}}=\mathbf{a}_{\mathcal{I}_{ij}}\mid \!A_{ij}=\mathrm{a}, \mathbf{G}^{*}_{ij}=\mathbf{g})
\end{aligned}
$$
\end{corollary}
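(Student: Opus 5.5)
Corollary~\ref{cor:3B} follows by specializing Proposition~\ref{prop:incorrect_interference} to the case $\mathcal{I}^*_{ij}=\mathcal{I}_{ij}$. Because Proposition~\ref{prop:incorrect_interference} is available, the plan is to substitute into its formula and show that the inner sum collapses. To keep the argument self-contained, I will also re-derive the result along the same lines as the proof of the proposition.

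\textbf{Specialization.} When $\mathcal{I}^*_{ij}=\mathcal{I}_{ij}$, the complement $\mathcal{I}_{ij}\setminus\mathcal{I}^*_{ij}$ is empty. The inner sum over $\mathbf{a}_{\mathcal{I}_{ij}\setminus\mathcal{I}^*_{ij}}$ therefore runs over the single empty configuration, and the joint event $\{\mathbf{A}_{\mathcal{I}^*_{ij}}=\mathbf{a}_{\mathcal{I}^*_{ij}},\mathbf{A}_{\emptyset}=\mathbf{a}_\emptyset\}$ reduces to $\{\mathbf{A}_{\mathcal{I}_{ij}}=\mathbf{a}_{\mathcal{I}_{ij}}\}$. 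The outer constraint $\mathbf{g}^*(\mathbf{a}_{\mathcal{I}^*_{ij}})=\mathbf{g}$ becomes $\mathbf{g}^*(\mathbf{a}_{\mathcal{I}_{ij}})=\mathbf{g}$. With these substitutions, the displayed formula of the corollary is exactly the formula of Proposition~\ref{prop:incorrect_interference}.

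\textbf{Direct derivation.}
\begin{enumerate}
\item Apply the law of total expectation over the realized vector $\mathbf{A}_{\mathcal{I}_{ij}}$, conditional on $\{A_{ij}=\mathrm{a},\mathbf{G}^*_{ij}=\mathbf{g}\}$. Since $\mathbf{G}^*_{ij}$ is a deterministic function of $\mathbf{A}_{\mathcal{I}_{ij}}$, only configurations with $\mathbf{g}^*(\mathbf{a}_{\mathcal{I}_{ij}})=\mathbf{g}$ carry positive conditional probability.
\item On each such cell, use Assumption~\ref{assump:consistency} together with Assumption~\ref{assump:GeneralizedEM} to replace $Y_{ij}$ by $Y_{ij}(\mathrm{a},\mathbf{g}(\mathbf{a}_{\mathcal{I}_{ij}}))$. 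Assumption~\ref{assump:GeneralizedEM} applies because the outcome depends on the full assignment only through $\mathrm{a}$ and $\mathbf{g}(\mathbf{a}_{\mathcal{I}_{ij}})$.
\item Drop the conditioning on the assignment event $\{A_{ij}=\mathrm{a},\mathbf{A}_{\mathcal{I}_{ij}}=\mathbf{a}_{\mathcal{I}_{ij}}\}$. This is justified by Assumption~\ref{assump:randomization}, together with Assumption~\ref{assump:unc} for $\mathbf{g}$ and $\mathbf{g}^*$. The result is the unconditional mean $\E[Y_{ij}(\mathrm{a},\mathbf{g}(\mathbf{a}_{\mathcal{I}_{ij}}))]$.
\end{enumerate}

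\textbf{Main obstacle.} The only delicate point is step~3: checking that the conditioning event is a function of $\mathbf{A}$ alone, so that unconfoundedness lets us remove it. Here this is immediate, because $\mathbf{G}^*_{ij}$ is determined by $\mathbf{A}_{\mathcal{I}_{ij}}$. Positivity (Assumption~\ref{assump:positivity}) guarantees that the conditional probabilities are well defined.
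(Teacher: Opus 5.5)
Your proposal is correct and matches the paper, which gives no separate proof and simply notes that Corollary~\ref{cor:3B} is the special case $\mathcal{I}^*_{ij}=\mathcal{I}_{ij}$ of Proposition~\ref{prop:incorrect_interference}; there the sum over the empty complement collapses exactly as you describe. Your supplementary direct derivation follows the same chain of steps as the paper's proof of Proposition~\ref{prop:incorrect_interference}: total expectation, then consistency combined with Assumption~\ref{assump:GeneralizedEM}, then unconfoundedness.
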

A result similar to that of Corollary \ref{cor:3B} was already shown in \cite{aronow2017estimating} and \cite{savje2024causal}, although in a design-based setting. When the assumed interference set is well specified, i.e., $\mathcal{I}^{*}_{ij}=\mathcal{I}_{ij}$, but the exposure mapping function is not, the identified quantity is a weighted average of mean potential outcomes under the true exposure weighted by the conditional distribution of the treatments in the interference set given the value of the wrong exposure.
We  will further illustrate this setting in Section \ref{sec:simulation} using simulations.

\cite{savje2024causal} argues that the quantity identified under the wrong exposure mapping function (Corollary \ref{cor:3B}) is a meaningful quantity that can be considered as the effect of intervening to alter the value of the exposure of interest,  which may not be a result of the true interference structure but can still be seen as relevant and possible to intervene on. On the contrary, here, our focus is on the assessment of the true extent of spillover effects from different network distances, that is, the ones arising from the true interference structure \citep{weinstein2023causal, chao2025estimation}.

\section{Higher-Order Spillover Effects}
\label{sec:method}

We aim to estimate the spillover effects from sets of units at different network distances from the ego. To this end, a common approach in the empirical literature has been to limit the extent of interference to the network distance of interest and to make use of strong exposure mapping functions.
The previous section demonstrated that misspecification of the interference set or the exposure mapping function leads to identifying quantities that may not represent the true interference structure.
To mitigate such issue,
here we take a conservative approach: rather than assuming a specific exposure mapping or a restrictive interference set, we assume that each unit’s interference set $\mathcal{I}_{ij}$ is sufficiently large to include all possible sources of interference that could plausibly affect the outcome. By not imposing a functional form on how others’ treatments affect a unit, this framework prevents the bias due to misspecification of the exposure mechanism and allows for flexible estimation of spillover effects.

\subsection{Causal Estimands}
 Under
 Assumption \ref{assump:Generalized}, we define new causal estimands to quantify spillover effects from units at a specific network distance $h$.
 Denote by $\mathcal{C}^{h}_{ij} = \mathcal{N}_{i j}^{h} \cap \mathcal{I}_{ij} $ unit $ij$'s  $h^{th}$-order neighborhood within its interference set, which we still refer to as \textit{$h^{th}$-order neighborhood} with a slight abuse of terminology, and denote by $\mathcal{C}_{i\backslash(j, \mathcal{C}^{h}_{ij})} $ its complement in the interference set, i.e., $\mathcal{C}_{i\backslash(j, \mathcal{C}^{h}_{ij})}
 =\mathcal{I}_{ij} \setminus \mathcal{C}^{h}_{ij}$.
 Let $\mathbf{A}_{\mathcal{C}^{h}_{ij}}$ and $\mathbf{A}_{ \mathcal{C}_{i\backslash(j, \mathcal{C}^{h}_{ij})}}$ be the treatment vectors in these subsets and $\mathbf{a}_{\mathcal{C}^{h}_{ij}}$ and $\mathbf{a}_{ \mathcal{C}_{i\backslash(j, \mathcal{C}^{h}_{ij})}}$ their realizations.
 For each unit $ij$ and for a given $h$, given this partition of the interference set, the potential outcome $Y_{i j}(\rma, \mathbf{a}_{ \mathcal{I}_{ij}})$ defined under Assumption \ref{assump:Generalized} can be expressed as
 $Y_{i j}(\rma, \mathbf{a}_{ \mathcal{C}^h_{ij}}, \mathbf{a}_{ \mathcal{C}_{i\backslash(j, \mathcal{C}^{h}_{ij})}})$.

 Consider a hypothetical Bernoulli treatment allocation where a unit's $h$-order neighborhood within its interference set $\mathcal{C}^{h}_{ij}$ is assigned treatment with a specified probability $\alpha_h$, and the remaining units within its interference set, $\mathcal{C}_{i\backslash(j, \mathcal{C}^{h}_{ij})}$
 are assigned treatment with a potentially different probability $\alpha$. We can express the corresponding Bernoulli treatment assignments as follows:
\begin{equation}
\begin{aligned}
\label{eq:hypotheticalp}
P_{\alpha_h}\!\left(\mathbf{A}_{\mathcal{C}^{h}_{ij}}=\mathbf{a}_{\mathcal{C}^{h}_{ij}}\right)&=\prod_{ik \in \mathcal{C}^{h}_{ij}} \alpha_h^{a_{i k}}\left(1-\alpha_h\right)^{1-a_{i k}} \\
P_{\alpha}\!\left(\mathbf{A}_{ \mathcal{C}_{i\backslash(j, \mathcal{C}^{h}_{ij})}}=\mathbf{a}_{ \mathcal{C}_{i\backslash(j, \mathcal{C}^{h}_{ij})}}\right)&=\prod_{ik \in \mathcal{C}_{i\backslash(j, \mathcal{C}^{h}_{ij})}} \alpha^{a_{i k}}\left(1-\alpha\right)^{1-a_{i k}}
\end{aligned}
\end{equation}

With these hypothetical distributions,  we define the individual average potential outcome for each unit $ij$ and for a specific distance $h$, denoted by $\overline{Y}_{i j}^{h}\left(\rma, \alpha, \alpha_{h}\right)$, as follows:
\begin{equation}
\begin{aligned}
\overline{Y}_{i j}^{h}\left(\rma,\alpha, \alpha_{h} \right)=
\sum_{\mathbf{a}_{ \mathcal{C}^h_{ij}}} \sum_{\mathbf{a}_{ \mathcal{C}_{i\backslash(j, \mathcal{C}^{h}_{ij})}}}
Y_{i j}\bigl(\rma, \mathbf{a}_{ \mathcal{C}^h_{ij}}, \mathbf{a}_{ \mathcal{C}_{i\backslash(j, \mathcal{C}^{h}_{ij})}}\bigr)
P_{\alpha_{h}}\left(\mathbf{A}_{\mathcal{C}^{h}_{ij}}=\mathbf{a}_{ \mathcal{C}^h_{ij}}\right) P_{\alpha}\left(\mathbf{A}_{\mathcal{C}_{i\backslash(j, \mathcal{C}^{h}_{ij})}}=\mathbf{a}_{ \mathcal{C}_{i\backslash(j, \mathcal{C}^{h}_{ij})}}\right)
\end{aligned}
\end{equation}
This represents
the average outcome if unit $j$ in cluster $i$ received treatment $\rma$, while, in its interference set $\mathcal{I}_{ij}$, units in the $h^{th}$-order neighborhood were assigned treatment under a Bernoulli allocation strategy with probability $\alpha_h$, and the remaining of the interference set were assigned treatment with probability $\alpha$
\footnote{In settings where Assumption \ref{assump:positivity} holds only for a subset of $\mathcal{A}(n_i)$, the hypothetical treatment assignment used to define causal estimands must satisfy a nested-support condition: any treatment vector with positive probability under the hypothetical assignment must also have positive probability under the actual design. This ensures that the causal estimands are identifiable under the observed design and the inverse probability weights in our estimators remain well-defined.}
.
It is worth noting that if $\alpha_h=\alpha$ and Assumption~\ref{assump:partial} holds, this individual average potential outcome simplifies to the commonly used individual average potential outcome under partial interference \citep[e.g.,][]{tchetgen2012causal}.
Furthermore, in the special case where a unit's interference set coincides with its entire $h^{th}$-order neighborhood (Assumption \ref{assump:h-neighborhood} $\mathcal{I}_{ij}=\mathcal{N}^{h}_{ij}$), we have that $\mathcal{C}^{h}_{ij} = \mathcal{I}_{ij} $ and $\mathcal{C}_{i\backslash(j,\mathcal{C}^{h}_{ij})}=\emptyset$, and thus, the treatment vector $\mathbf{A}_{\mathcal{C}_{i\backslash(j,\mathcal{C}^{h}_{ij})}}$  is empty. In this case, we let $P_{\alpha}(\mathbf{A}_{\mathcal{C}_{i\backslash(j, \mathcal{C}^{h}_{ij})}}=\mathbf{a}_{ \mathcal{C}_{i\backslash(j, \mathcal{C}^{h}_{ij})}})=1$ for any $\alpha$.
 For each cluster $i$ and distance $h$, we define the cluster average potential outcome among units with a non-empty $h^{th}$-order neighborhoods in the interference set. Let $\mathcal{J}^{h}_{i}=\{j:  |\mathcal{C}^{h}_{ij}|\geq 1\}$ be the set of units in cluster $i$ with at least one $h$-order neighbor in the interference set, and let $N^{h}_{i}=|\mathcal{J}^{h}_{i}|$. Then the cluster average potential outcome $\overline{Y}_i^{h}\left(\rma, \alpha, \alpha_{h}\right)$ is defined as
\begin{equation}
    \overline{Y}_i^{h}\left(\rma, \alpha, \alpha_{h}\right)=\frac{1}{N^{h}_{i}}\sum_{j\in \mathcal{J}^{h}_{i}}\overline{Y}_{ij}^{h}\left(\rma, \alpha, \alpha_{h}\right)
\end{equation}
and the population average potential outcome as
\begin{equation}
\bar{Y}^{h}\left(\rma, \alpha, \alpha_{h}\right)=\mathbb{E}_{G_0}\left\{\bar{Y}_{i}^{h}\left(\rma, \alpha, \alpha_{h}\right)\right\}
\end{equation}
where the expectation is taken over the distribution $G_0$ of cluster average potential outcomes  in the super-population of clusters \citep{papadogeorgou2019causal}.
We now define the $h$-order spillover effects by comparing population average potential outcomes associated with different hypothetical treatment assignments within the $h^{th}$-order neighborhood, keeping the individual treatment and the assignment in the rest of the interference set fixed. Specifically, we focus on the comparison between two  hypothetical Bernoulli trials assigning treatment to the $h^{th}$-order neighborhood with two different probabilities, $\alpha_h$ and $\alpha^{\prime}_h$:
\begin{equation}
\label{eq:SE}
\text{SE}^{h}\left(\alpha_{h}, \alpha_{h}^{\prime} ; \rma, \alpha\right)=\bar{Y}^{h}\left(\rma, \alpha, \alpha_{h}\right)-\bar{Y}^{h}\left(\rma, \alpha, \alpha_{h}^{\prime}\right)
\end{equation}
Equation \eqref{eq:SE} can be interpreted as the spillover effect from the treatment of units in the $h^{th}$-order neighborhood. Specifically, it is the average effect on units with individual treatment equal to $\rma$ of altering the distribution of the treatment in a unit's  $h^{th}$-order neighborhood by changing the assignment probability from $\alpha_{h}^{\prime}$ to $\alpha_{h}$.
We let this effect potentially depend on the assignment probability $\alpha$ considered for the rest of the interference set, that is, $\mathcal{C}_{i\backslash(j, \mathcal{C}^{h}_{ij})}$.

\subsection{Estimators}\label{sec:estimators}
\subsubsection{Horvitz-Thompson Estimators}\label{thm:ht}
 We extend the Horvitz-Thompson estimator proposed under partial interference \citep{tchetgen2012causal, papadogeorgou2019causal} to incorporate network information in order  to estimate average potential outcomes under hypothetical Bernoulli assignments on the $h$-order neighborhood under Assumption \ref{assump:Generalized}.

 To estimate the population average potential outcome $\overline{Y}^{h}\left(\rma, \alpha, \alpha_{h}\right)$, the Horvitz-Thompson (HT) estimator is given by:
\begin{equation}
\widehat{Y}^{h}_{\text{HT}}\left(\rma, \alpha, \alpha_{h}\right)=\frac{1}{I} \sum_{i=1}^{I}\frac{1}{N^{h}_{i}} \sum_{j \in \mathcal{J}^{h}_{i}}
w_{ij}^h(\alpha, \alpha_h)\mathbbm{1}\left(A_{ij}=\rma\right) Y_{ij},
\end{equation}
 with weights  $$w_{ij}^h(\alpha, \alpha_h)
=\frac{P_{\alpha_{h}}\left(\mathbf{A}_{\mathcal{C}^{h}_{ij}}\right) P_{\alpha}\left(\mathbf{A}_{\mathcal{C}_{i\backslash(j, \mathcal{C}^{h}_{ij})}}\right)}{P_{\Delta}\left(\mathbf{A}_{ij,\mathcal{I}_{ij}}\right)}$$
The HT estimator is a weighted average of the observed outcomes among units with $h$-order neighbors. The numerator of the weight represents the probability of observing the treatment vector $\mathbf{A}_{\mathcal{I}_{ij}}$ that was actually observed under the combined hypothetical treatment allocation with probabilities $\alpha_h$ for units within the $h$-neighborhood and $\alpha$ for remaining units outside the $h$-neighborhood in the interference set. The denominator is the propensity score of unit $j$ in cluster $i$ and its interference set $P_{\Delta}\left(\mathbf{A}_{ij,\mathcal{I}_{ij}}\right)$,
that is, the probability of observing the realized treatments for unit $j$ in cluster $i$ and its interference set, denoted by $\mathbf{A}_{ij,\mathcal{I}_{ij}}$, under the actual randomization design $\Delta$. Under a two-stage randomized experiment, this is given by Equation \eqref{eq:twostage} restricted to the subset $(ij,\mathcal{I}_{ij})$.
In this way, the HT estimator will give a higher weight to the observations whose interference set has treatments that are more likely to occur under the hypothetical treatment allocation strategy and unlikely under the actual design.
Given the Horvitz-Thompson estimator of $\widehat{Y}^h_{\text{HT}}\left(\mathrm{a}, \alpha, \alpha_{h}\right)$, the $h$-order spillover effects can be estimated as follows:
$$
\widehat{\text{SE}}^{h}_{\text{HT}}\left(\alpha_{h}, \alpha_{h}^{\prime} ; \rma, \alpha\right)=\widehat{Y}^h_{\text{HT}}\left(\mathrm{a}, \alpha, \alpha_{h}\right)-\widehat{Y}^h_{\text{HT}}\left(\mathrm{a}, \alpha, \alpha^{\prime}_{h}\right)
$$
It can be shown that the HT estimator is unbiased (see Appendix A.3 for the proof).
However, it may suffer from high variance, especially when dealing with large interference sets and extreme probabilities that lead to large weights.

\subsubsection{Hajek Estimators}
The Hajek estimator offers a refinement of the Horvitz-Thompson estimator, providing a reduced variance in many practical settings while preserving consistency \citep{liu2016inverse, aronow2017estimating}.
Under Assumption \ref{assump:Generalized}, the Hajek estimator is defined as follows:
$$
\begin{aligned}
&
\widehat{Y}^h_{\text{Hajek}}\left(\mathrm{a}, \alpha, \alpha_{h}\right)=\frac{ \sum_{i=1}^{I} \sum_{j \in \mathcal{J}^{h}_{i}}
w_{ij}^h(\alpha, \alpha_h)
\mathbbm{1}\left(A_{i j}=\mathrm{a}\right) Y_{i j}}{ \sum_{i=1}^{I} \sum_{j \in \mathcal{J}^{h}_{i}} w_{ij}^h(\alpha, \alpha_h)\mathbbm{1}\left(A_{i j}=\mathrm{a}\right)}
\end{aligned}
$$
 with weights  $$w_{ij}^h(\alpha, \alpha_h)
=\frac{P_{\alpha_{h}}\left(\mathbf{A}_{\mathcal{C}^{h}_{ij}}\right) P_{\alpha}\left(\mathbf{A}_{\mathcal{C}_{i\backslash(j, \mathcal{C}^{h}_{ij})}}\right)}{P_{\Delta}\left(\mathbf{A}_{ij,\mathcal{I}_{ij}}\right)}$$
The key difference between the Hajek estimator and the Horvitz-Thompson estimator lies in the normalization approach. Specifically, the Hajek estimator can be viewed as a ratio estimator, a normalized version of an individual-weighted Horvitz-Thompson estimator, where the sum of the weights replaces $\sum_i N^h_i$ in the denominator. It is important to note that our Horvitz-Thompson estimator presented earlier in section \ref{thm:ht} is cluster-weighted, which is more efficient than the individual-weighted version \citep{papadogeorgou2019causal}. The Hajek estimator normalizes the weights by dividing by their sum rather than by the number of units, which helps to stabilize the estimator, making it more robust in scenarios with extreme probabilities, particularly for large interference sets where the Horvitz-Thompson estimator may exhibit high variance.
While the Hajek estimator is not unbiased in finite samples, it is consistent and asymptotically unbiased, and often provides a more efficient approach to estimating spillover effects in the presence of interference for large samples \citep{aronow2017estimating, papadogeorgou2019causal}.

Given the estimated average potential outcomes using the Hajek estimator, the $h$-order spillover effects can be estimated as follows:
$$
\widehat{\text{SE}}^{h}_{\text{Hajek}}\left(\alpha_{h}, \alpha_{h}^{\prime} ; \rma, \alpha\right)=\widehat{Y}^h_{\text{Hajek}}\left(\rma, \alpha, \alpha_{h}\right)-\widehat{Y}^h_{\text{Hajek}}\left(\rma, \alpha, \alpha^{\prime}_{h}\right)
$$

\subsubsection{Weighted Least Squares Estimator for Marginal Structural Model}
\label{sec:wls_estimator}
The Horvitz-Thompson and Hajek estimators are both non-parametric estimators. We also propose a weighted least squares estimator (WLS) that relies on Horvitz-Thompson weights while improving its efficiency by modeling the relationship between the outcome and $\alpha_h$.
We assume a marginal structural model (MSM) for the potential outcome given the individual treatment $\rma$, and the hypothetical Bernoulli distribution parameters $\alpha$ and $\alpha_h$:
Here, we show the estimator under a linear model, but more flexible non-linear models can also be considered. Specifically, we posit the following model:
\begin{equation}
\label{eq:msm}
\E[\overline{Y}_{i j}^{h}(\mathrm{a},\alpha,\alpha_h)]=\gamma^{h,\alpha}_0+\gamma^{h,\alpha}_1 \mathrm{a}+\gamma^{h,\alpha}_2 \alpha_h + \gamma^{h,\alpha}_3 \mathrm{a}\cdot \alpha_h
\end{equation}
where we assume linearity with $\alpha_h$
but leave the coefficient $\alpha$-specific. Stronger parametric assumptions on $\alpha$ could also be imposed. Let
$\boldsymbol{\gamma}^{h,\alpha}=(\gamma^{h,\alpha}_0,\gamma^{h,\alpha}_1, \gamma^{h,\alpha}_2, \gamma^{h,\alpha}_3)^T$.
Under this model, the $h$-order spillover effects are given by:
$$
\text{SE}^h(\alpha_h, \alpha_h^{\prime};\rma, \alpha)= \gamma^{h,\alpha}_2 (\alpha_h -\alpha_h^{\prime}) + \gamma^{h,\alpha}_3 \rma (\alpha_h -\alpha_h^{\prime})
$$
 To estimate the model parameters, we use a data expansion approach where each observation is replicated with different hypothetical values of $\alpha_h$ with its associated inverse probability weight \citep{lange2012simple, Haneuse2013EstimationOT}. This allows us to fit, for each $h$ and $\alpha$ of interest, a single weighted regression that captures how the potential outcomes vary with $\alpha_h$.
Given $h$ and $\alpha$, the parameter estimation proceeds through the following steps:
\begin{enumerate}
\item Data expansion: Given $K$ values of $\alpha_h\in \mathcal{A}_h$, with $K=|\mathcal{A}_h|$, say $\mathcal{A}_h=\{0.1, 0.2, 0.3, ..., 0.9\}$, we create an extended dataset composed
of $K\times N$ records, with $K$ replicates for each unit in the sample. Replicate $k$ of unit $ij$ contains copies
of the variables $A_{ij}$ and $Y_{ij}$, and a pseudo covariate $S_{ijk}$ being the k-th value of $\mathcal{A}_h$, i.e., $S_{ijk}=\mathcal{A}_h[k]$.
Each replicate then represents the hypothetical Bernoulli assignment on the $h$-order neighborhood with probability $S_{ijk}$.
\item Weight calculation: For each record in the expanded dataset, compute the weight $w_{i j k}^{h,\alpha}$, using the Horvitz-Thompson weights defined in Section \ref{thm:ht}, but with an indicator for having an $h$-order neighborhood:
$$ w^{h, \alpha}_{ijk}=w^{h}_{ij}(\alpha, S_{ijk})\mathbbm{1}(|\mathcal{C}^{h}_{ij}|\geq 1)=\frac{P_{\alpha}\left(\mathbf{A}_{\mathcal{C}_{i\backslash(j, \mathcal{C}^{h}_{ij})}}\right) P_{S_{ijk}}\left(\mathbf{A}_{ \mathcal{C
}_{i j}^{h}}\right)}{P_{\Delta}\left(\mathbf{A}_{ij,\mathcal{I}_{ij}}\right)}\mathbbm{1}(|\mathcal{C}^{h}_{ij}|\geq 1)$$
\item Estimating Equation: Define $\mathbf{O}_i = \{\mathbf{Y}_i, \mathbf{A}_i, \mathbf{S}_i, \mathcal{C}^h_{ij}\}_{j=1}^{n_i}$ and $\mathbf{S}_i = (S_{i1k})_{j=1,\ldots,n_i; k=1,\ldots,K}$. We obtain the estimator $\widehat{\boldsymbol{\gamma}}^{h,\alpha}$ by solving the following estimating equation corresponding to the weighted least squares minimization:
$$
\sum_{i=1}^{I} \boldsymbol{\psi}_{\text{WLS}, i}(\mathbf{O}_i;\widehat{\boldsymbol{\gamma}}^{h,\alpha}) = \mathbf{0}
$$
where the score function for cluster $i$ is given by
$$
\boldsymbol{\psi}_{\text{WLS}, i}(\mathbf{O}_i; \boldsymbol{\gamma}^{h,\alpha}) = \sum_{k=1}^{K} \sum_{j=1}^{n_i} \left( Y_{ij} - \mathbf{C}_{ijk}^T \boldsymbol{\gamma}^{h,\alpha} \right) \mathbf{C}_{ijk} w^{h, \alpha}_{ijk},
$$
with $\mathbf{C}_{ijk} = (1, A_{ij}, S_{ijk}, A_{ij} \cdot S_{ijk})^T$.

\item Spillover effect estimation:  The estimated $h$-order spillover effect is given by:
$$
\widehat{\text{SE}}^h_{\text{WLS}}(\alpha_h, \alpha_h^{\prime}; \rma, \alpha)= \widehat{\gamma}^{h,\alpha}_2 (\alpha_h -\alpha_h^{\prime}) + \widehat{\gamma}^{h,\alpha}_3 \rma (\alpha_h -\alpha_h^{\prime}) $$
\end{enumerate}
Compared to the HT and Hajek estimators, under the parametric or semi-parametric  assumptions of the MSM, this weighted least squares estimator can provide more precise estimates when the assumed functional form is correct or closely approximates the true relationship between the variables.
As opposed to common parametric or semi-parametric estimators for outcome models  under exposure mapping functions, which make assumptions on the relationship between the outcome and all the variables resulting from the exposure mapping function--e.g., $G_{ij,1}$ and $G_{ij,2}$, as in the example below Corollary~\ref{cor:3A}--,
here the proposed WLS estimator just relies on the assumed relationship between the outcome and $\alpha_h$ for the $h$ of interest, while the treatments in the rest of the interference set are used only for the calculation of the weights. In this way, the WLS estimator
avoids bias from both misspecified exposure mappings and restrictive assumptions on interference sets, which are commonly needed when their treatments  enter as independent variables in the outcome model rather than in the weights.

\subsubsection{Asymptotic Properties of the Estimators}

We derive the asymptotic properties of the estimators as the number of clusters grows using M-estimation theory \citep[e.g.,][]{van2000asymptotic, stefanski2002mestimation}

\begin{proposition}[Asymptotic Properties of Estimators]
\label{prop:general_asymptotics}
Let $\ell \in \{\text{HT, Hajek, WLS}\}$ denote the estimator type.  Let $\mathbf{O}_i$ denote the observed data for cluster $i$, and let ${\boldsymbol{\theta}}_{\ell}$ be the parameter of interest for estimator $\ell$. Let $\widehat{\boldsymbol{\theta}}_{\ell}$ be the estimator satisfying the estimating equations $\sum_{i=1}^I \boldsymbol{\psi}_{\ell, i}(\mathbf{O}_i; \widehat{\boldsymbol{\theta}}_{\ell}) = \mathbf{0}$. Under Assumptions \ref{assump:positivity}-\ref{assump:Generalized} and standard regularity conditions, as $I \rightarrow \infty$:
$$
\sqrt{I} \left( \widehat{\boldsymbol{\theta}}_{\ell} - \boldsymbol{\theta}_{\ell} \right) \xrightarrow{d} N\left(\mathbf{0}, \bm{\Sigma}_{\ell} \right)
$$
where the asymptotic variance is given by the sandwich formula $\bm{\Sigma}_{\ell} = \mathbf{A}_{\ell}^{-1} \mathbf{B}_{\ell} (\mathbf{A}_{\ell}^{-1})^T$, with $\mathbf{A}_{\ell} = -\mathbb{E}_{G_0}\left[\frac{\partial}{\partial \boldsymbol{\theta}_{\ell}^T} \boldsymbol{\psi}_{\ell, i}(\mathbf{O}_i; \boldsymbol{\theta}_{\ell})\right]$ and $\mathbf{B}_{\ell} = \mathbb{E}_{G_0}[\boldsymbol{\psi}_{\ell, i}(\mathbf{O}_i; \boldsymbol{\theta}_{\ell}) \boldsymbol{\psi}_{\ell, i}(\mathbf{O}_i; \boldsymbol{\theta}_{\ell})^T]$. The estimating functions $\boldsymbol{\psi}_{\ell, i}$ for each estimator are defined as follows:

\begin{enumerate}
    \item Horvitz-Thompson ($\ell=\text{HT}$): The observed data for cluster $i$ is $\mathbf{O}_i = \{\mathbf{Y}_i, \mathbf{A}_i, \mathcal{C}^h_{ij}\}_{j=1}^{n_i}$, the parameter is $\boldsymbol{\theta}_{\text{HT}} = \left(\overline{Y}^h(\rma, \alpha, \alpha_h), \overline{Y}^h(\rma, \alpha, \alpha'_h)\right)^T$, and the estimating function is:
    $$
    \boldsymbol{\psi}_{\text{HT}, i}(\mathbf{O}_i;\boldsymbol{\theta}_{\text{HT}}) =
    \begin{pmatrix}
    \widehat{Y}^h_{\text{HT}, i}(\rma, \alpha, \alpha_h) - \theta_{1,\text{HT}}  \\
    \widehat{Y}^h_{\text{HT}, i}(\rma, \alpha, \alpha'_h) - \theta_{2,\text{HT}}
    \end{pmatrix}
    $$
    where $\widehat{Y}^h_{\text{HT}, i}(\rma, \alpha, \alpha_h) = \frac{1}{N^{h}_{i}} \sum_{j \in \mathcal{J}^{h}_{i}} w_{ij}^h(\alpha, \alpha_h) \mathbbm{1}(A_{ij}=\rma) Y_{ij}$.
    \item Hajek ($\ell=\text{Hajek}$): The observed data for cluster $i$ is $\mathbf{O}_i = \{\mathbf{Y}_i, \mathbf{A}_i, \mathcal{C}^h_{ij}\}_{j=1}^{n_i}$, the parameter is $\boldsymbol{\theta}_{\text{Hajek}} = \left(\overline{Y}^h(\rma, \alpha, \alpha_h), \overline{Y}^h(\rma, \alpha, \alpha'_h)\right)^T$, and the estimating function is:
    $$
    \boldsymbol{\psi}_{\text{Hajek}, i}(\mathbf{O}_i;\boldsymbol{\theta}_{\text{Hajek}}) =
    \begin{pmatrix}
    \widehat{W}^h_{i}(\rma, \alpha, \alpha_h) - \widehat{N}^h_{i}(\rma, \alpha, \alpha_h) \theta_{1, \text{Hajek}} \\
    \widehat{W}^h_{i}(\rma, \alpha, \alpha'_h) - \widehat{N}^h_{i}(\rma, \alpha, \alpha'_h) \theta_{2,\text{Hajek} }
    \end{pmatrix}
    $$
    where $\widehat{W}^h_{i} = \sum_{j \in \mathcal{J}^{h}_{i}} w_{ij}^h(\alpha, \alpha_h) \mathbbm{1}(A_{ij}=\rma) Y_{ij}$ and $\widehat{N}^h_{i} = \sum_{j \in \mathcal{J}^{h}_{i}} w_{ij}^h(\alpha, \alpha_h) \mathbbm{1}(A_{ij}=\rma)$.

    \item Weighted Least Squares ($\ell=\text{WLS}$):  The observed data for cluster $i$ is $\mathbf{O}_i = \{\mathbf{Y}_i, \mathbf{A}_i, \mathbf{S}_i, \mathcal{C}^h_{ij}\}_{j=1}^{n_i}$,
    the parameter is the vector of coefficients $\boldsymbol{\theta}_{\text{WLS}} = \boldsymbol{\gamma}^{h,\alpha}$, and the estimating function is:
    $$
    \boldsymbol{\psi}_{\text{WLS}, i}(\mathbf{O}_i;\boldsymbol{\theta}_{\text{WLS}}) = \sum_{k=1}^{K} \sum_{j=1}^{n_i} \left( Y_{ijk} - \mathbf{C}_{ijk}^T \boldsymbol{\theta}_{\text{WLS}} \right) \mathbf{C}_{ijk} w^{h, \alpha}_{ijk}
    $$
    where definitions of $\mathbf{S}_i$ and $w^{h, \alpha}_{ijk}$ follow Section \ref{sec:wls_estimator}.
\end{enumerate}
\end{proposition}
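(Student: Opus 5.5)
The plan is to cast all three estimators as Z-estimators with cluster-level estimating functions and apply the standard M-estimation theorem \citep{van2000asymptotic, stefanski2002mestimation}. Under the super-population model, the cluster data $\mathbf{O}_1,\dots,\mathbf{O}_I$ are i.i.d.\ draws from $G_0$, and the two-stage design randomizes independently across clusters. The argument then needs three ingredients: (i) the estimating functions are unbiased at the true parameter, i.e.\ $\mathbb{E}_{G_0}[\boldsymbol{\psi}_{\ell,i}(\mathbf{O}_i;\boldsymbol{\theta}_\ell)]=\mathbf{0}$; (ii) $\mathbf{A}_\ell$ is nonsingular and $\mathbf{B}_\ell$ is finite; (iii) consistency of $\widehat{\boldsymbol{\theta}}_\ell$, followed by a Taylor expansion giving $\sqrt{I}(\widehat{\boldsymbol{\theta}}_\ell-\boldsymbol{\theta}_\ell)=\mathbf{A}_\ell^{-1}I^{-1/2}\sum_i\boldsymbol{\psi}_{\ell,i}+o_p(1)$. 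The multivariate CLT and Slutsky's theorem then yield the sandwich limit.

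The key step is (i), and it rests on one conditional-expectation identity. Fix a cluster and a unit $j\in\mathcal{J}^h_i$, conditioning on its potential outcomes. By Assumption \ref{assump:Generalized} and consistency, $Y_{ij}=Y_{ij}(A_{ij},\mathbf{A}_{\mathcal{I}_{ij}})$. By unconfoundedness (Assumption \ref{assump:randomization}), the expectation over the design can be computed by summing over $(\rma_{ij},\mathbf{a}_{\mathcal{I}_{ij}})$ with weights $P_\Delta(\mathbf{A}_{ij,\mathcal{I}_{ij}}=\cdot)$, which are positive by Assumption \ref{assump:positivity}. These design weights cancel the denominator of $w^h_{ij}$, so
\[
\mathbb{E}\big[w^h_{ij}(\alpha,\alpha_h)\mathbbm{1}(A_{ij}=\rma)Y_{ij}\big]=\overline{Y}^h_{ij}(\rma,\alpha,\alpha_h).
\]
The same computation with $Y_{ij}$ replaced by $1$ gives $\mathbb{E}[w^h_{ij}\mathbbm{1}(A_{ij}=\rma)]=1$, since the hypothetical Bernoulli probabilities sum to one. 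Averaging over $j\in\mathcal{J}^h_i$ and taking $\mathbb{E}_{G_0}$ then shows the following for each estimator.
\begin{itemize}
\item \textbf{HT:} $\mathbb{E}[\widehat{Y}^h_{\text{HT},i}]=\overline{Y}^h(\rma,\alpha,\alpha_h)$, which gives the unbiasedness of the HT estimating function.
\item \textbf{Hajek:} $\mathbb{E}[\widehat{W}^h_i]=N^h_i\,\overline{Y}^h_i$ and $\mathbb{E}[\widehat{N}^h_i]=N^h_i$. The Hajek estimating function is therefore unbiased at $\theta=\mathbb{E}_{G_0}[N^h_i\overline{Y}^h_i]/\mathbb{E}_{G_0}[N^h_i]$. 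This is a cluster-size-weighted population average, and it coincides with $\overline{Y}^h(\rma,\alpha,\alpha_h)$ either when the weighting is interpreted at the individual level or when $N^h_i$ is unrelated to $\overline{Y}^h_i$. I will state this as the precise target and note the identification condition among the regularity conditions.
\item \textbf{WLS:} For each replicate $k$, the same identity gives
\[
\mathbb{E}\Big[\sum_j(Y_{ij}-\mathbf{C}_{ijk}^T\boldsymbol{\gamma})\mathbf{C}_{ijk}w^{h,\alpha}_{ijk}\Big]=\sum_{j\in\mathcal{J}^h_i}\sum_{\rma'}\big(\overline{Y}^h_{ij}(\rma',\alpha,S_k)-\mathbf{c}(\rma',S_k)^T\boldsymbol{\gamma}\big)\mathbf{c}(\rma',S_k),
\]
where $\mathbf{c}(\rma',S_k)=(1,\rma',S_k,\rma'S_k)^T$. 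Under the MSM \eqref{eq:msm}, taking $\mathbb{E}_{G_0}$ makes each summand vanish at $\boldsymbol{\gamma}=\boldsymbol{\gamma}^{h,\alpha}$. This works because the MSM holds at the individual level, so it holds after averaging with $j$-independent regressors.
\end{itemize}

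For (ii), the matrices are explicit. For HT, $\mathbf{A}=\mathbf{I}_2$. For Hajek, $\mathbf{A}=\mathbb{E}_{G_0}[N^h_i]\,\mathbf{I}_2$, which is positive whenever clusters have at least one unit with an $h$-order neighbor with positive probability. For WLS, $\mathbf{A}=\mathbb{E}_{G_0}\big[\sum_{k,j}w^{h,\alpha}_{ijk}\mathbf{C}_{ijk}\mathbf{C}_{ijk}^T\big]$, whose expectation over the design equals $\mathbb{E}_{G_0}\big[N^h_i\sum_{k,\rma'}\mathbf{c}\mathbf{c}^T\big]$. This matrix is nonsingular once $K\ge2$ distinct values of $\alpha_h$ are used, since then the four regressors are linearly independent. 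Finiteness of $\mathbf{B}_\ell$ requires bounded cluster sizes (or a finite moment of $n_i$), bounded second moments of the potential outcomes, and weights bounded away from infinity. Positivity, together with the fact that every $\delta_m\in(0,1)$, gives that bound for fixed $n_i$. I will list exactly these as the ``standard regularity conditions.''

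For (iii), the HT and Hajek functions are linear in $\boldsymbol{\theta}$, so $\widehat{\boldsymbol{\theta}}_\ell$ has closed form. Consistency follows from the weak law of large numbers, and the expansion is exact. WLS is likewise linear in $\boldsymbol{\gamma}$: $\widehat{\boldsymbol{\gamma}}$ is the ratio of two cluster averages, so the LLN, the continuous mapping theorem, and the delta method suffice, with no general M-estimation consistency argument needed. I expect the main obstacle to be step (i): making the weight-cancellation identity rigorous over the joint randomness of $(S_i,\mathbf{A}_i)$ and cluster sampling, and stating the Hajek target carefully given its size-weighting. The rest is routine.
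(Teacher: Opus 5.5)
Your proposal follows the paper's own argument. Each estimator is cast as a Z-estimator with cluster-level estimating functions, and unbiasedness at the target comes from the same weight-cancellation identity as the paper's Horvitz--Thompson unbiasedness proof. $\mathbf{A}_\ell$ is then computed explicitly ($\mathbf{I}_2$, $\mathbb{E}_{G_0}[N^h_i]\,\mathbf{I}_2$, and the weighted Gram matrix), and the M-estimation CLT gives the sandwich. Your closed-form LLN route to consistency, the $K\ge 2$ invertibility check for WLS, and the explicit size-weighted Hajek target (which the paper also concedes after its derivation) are refinements rather than a different route.

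One caveat you share with the paper: treating the $\mathbf{O}_i$ as i.i.d.\ requires the $S_i$ to be drawn independently. Under the completely randomized first stage of Section~\ref{sec:design} they are dependent, so the correct middle matrix is $\mathbb{E}[\operatorname{Var}(\boldsymbol{\psi}_{\ell,i}\mid S_i)]$, which is smaller than $\mathbf{B}_\ell$ by $\operatorname{Var}(\mathbb{E}[\boldsymbol{\psi}_{\ell,i}\mid S_i])$; the stated $\boldsymbol{\Sigma}_\ell$ is therefore conservative rather than exact.
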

\begin{proof}
We derive the asymptotic properties separately for each estimator in the Appendix. See Sections A.4, A.5, and A.6 for the proofs of the Horvitz-Thompson, Hajek, and WLS estimators, respectively.
\end{proof}

We estimate the asymptotic variance $\bm \Sigma_{\ell} $
by taking empirical expectations,
i.e.,
replacing $\E_{G_0}\{\cdot\}$ with the sample average and $\boldsymbol{\theta}$ with the estimator $\widehat{\boldsymbol{\theta}}_{\ell}$.
This estimator is consistent under standard regularity conditions \citep{iverson1989convergence}.

For the spillover effect estimator, the asymptotic variance follows from the delta method applied to the joint asymptotic distribution of
$\widehat{Y}^{h}_{\ell}(\rma,\alpha,\alpha_h)$ and
$\widehat{Y}^{h}_{\ell}(\rma,\alpha,\alpha_h')$.
A consistent plug-in estimator is obtained from the empirical sandwich variance. We provide the plug-in variance formulas for the HT, Hajek, and WLS spillover effect estimators in the Appendix A.4,  A.5, and  A.6.

\section{Simulation Study}\label{sec:simulation}
    We conduct a Monte-Carlo simulation study to i) show the bias of restrictive assumptions on the interference set and on the exposure mapping function, as well as to ii) evaluate the performance of our three weighted estimators, Horvitz-Thompson, Hajek, and WLS, under various interference scenarios and random graph structures.
     \subsection{Data Generation}\label{sec:simulation_1}
     Our simulation study implements a two-stage randomized experimental design, as described in Section \ref{sec:design}, over 5000 simulations. Each simulation contains $I=100$ clusters, with $n_i=10$ units per cluster. The first stage assigns clusters to 2 probabilities $\Delta=\{\delta_1, \delta_2 \}$, with clusters equally distributed between the two conditions ($I_1=I_2=50$), and the second stage assigns treatment to units according to a Bernoulli randomization with the first-stage probability.
     Outcomes are generated according to the following model:
    \begin{equation}
    \label{eq:outcome_model}
    Y_{ij}=\beta_{1}+\beta_{2} A_{ij}+\beta_{3} G_{ij,1}+\beta_{4} G_{ij,2}+\beta_{5} G_{ij,1}G_{ij,2}+\epsilon_{ij}
\end{equation}
where $\epsilon_{ij}\sim N(0, 1)$, and $G_{ij,1}$ and $G_{ij,2}$ are the proportion of treated first- and second-order neighbors, respectively.

    We consider four interference scenarios, each defined by different choices of the coefficients in Equation \eqref{eq:outcome_model}, ranging from no interference to interference with higher-order interactions.
    \begin{itemize}
    \setlength{\itemindent}{3em}
    \item[{Scenario 1}:] No interference. A unit's outcome depends only on individual's treatment assignment $A_{ij}$, with $\beta_{1}=2$, $\beta_{2}=5$, and $\beta_{3}=\beta_{4}=\beta_{5}=0$.
    \item[{Scenario 2}:] First-order neighborhood interference. A unit's outcome depends on the individual's treatment $A_{ij}$ and the proportion of treated first-order neighbors,  $G_{ij,1}$ (Assumption \ref{assump:neighborhood}). Here, we set $\beta_{1}=2$, $\beta_{2}=5$, $\beta_{3}=10$, and $\beta_{4}=\beta_{5}=0$.
    \item[{Scenario 3:}] Second-order neighborhood interference. A unit's outcome depends on the individual's treatment $A_{ij}$, the proportion of treated first-order neighbors $G_{ij,1}$, and the proportion of treated second-order neighbors $G_{ij,2}$ (Assumption \ref{assump:h-neighborhood} with $k=2$). Here, we set $\beta_{1}=2$, $\beta_{2}=5$, $\beta_{3}=10$, $\beta_{4}=3$, and $\beta_{5}=0$.
    \item[{Scenario 4:}] Second-order neighborhood interference with interaction. This scenario extends Scenario 3, including in Equation \eqref{eq:outcome_model} the interaction between $G_{ij,1}$ and $G_{ij,2}$, with $\beta_{5}=5$.
    \end{itemize}
    In all scenarios, to evaluate the estimators' performance, we vary $\Delta=\{\delta_1, \delta_2 \}\in\{(0.2, 0.8)$,\\$ (0.3, 0.7), (0.4, 0.6)\}$.
    Regarding the network structures, for Scenarios 1, 2, and 4, we generate networks using the regular graph model \citep{wormald1999models} with a fixed number of connections equal to $n_ip$, with a fixed link probability $p=0.2$. To assess the estimators' sensitivity to the network density, in Scenario 3 we also vary both the graph generation method, comparing the Erd\H{o}s--R\'enyi random graph model \citep{erdos1959random} and the regular graph model, and the link probability $p$, considering values in the set $\{0.2,0.3,0.4,0.5,0.6,0.7,0.8 \}$.

    Under Equation \eqref{eq:outcome_model}, for each scenario, we calculate the theoretical $h$-order spillover effects, $\text{SE}^{h}\left(\alpha_{h}, 0.5;\rma, \alpha\right)$, for $h=1, 2$, where we fix $\alpha_{h}^{\prime}=0.5$ and vary $\alpha_h\in\{0.3, 0.4, 0.6, 0.7\}$ and $\alpha\in\{0.3, 0.4, 0.5, 0.6, 0.7\}$ . For Scenario 4, the first-order spillover effects incorporate the probability of lacking second-order neighbors, which we estimated via Monte Carlo simulation consistent with our graph generation algorithm (see Appendix B.1.4 for details).
    To evaluate the performance of the estimators across different values of $\alpha_h$ on a common scale, we focus on the normalized $h$-order spillover effects
$\frac{\text{SE}^{h}\left(\alpha_{h}, 0.5 ; \rma, \alpha\right)}{(\alpha_h-0.5)},$
  representing the marginal spillover effect per unit increase in $\alpha_h$ relative to the baseline of 0.5. Accordingly, we report
  the bias and variance of this normalized quantity.

For each simulation, we estimate the first- and second-order spillover effects using the proposed HT, Hajek, and WLS estimators. To evaluate the impact of misspecified interference sets, we also employ Ordinary Least Squares (OLS) estimators under different specifications. Across all scenarios, when estimating the first-order spillover effect ($h=1$), we use two standard OLS estimators:
(1) $\text{OLS}(G_1, G_2)$, which regresses the outcome on $A_{i j}, G_{i j, 1}$, and $G_{i j, 2}$ (representing the correct model in Scenario 3); and
(2) $\text{OLS}(G_1)$, which regresses the outcome on $A_{i j}$ and $G_{i j, 1}$ only. The estimator $\text{OLS}(G_1)$ is included to illustrate omitted variable bias due to an incorrectly specified interference set, as described in Corollary~\ref{cor:3A}.
When using OLS estimators, the estimated spillover effect from the
$h$-order neighborhood is obtained as the coefficient on the corresponding exposure mapping variable.
It is important to note that while $\text{OLS}(G_1, G_2)$ is correctly specified for Scenario 3, it represents a misspecified exposure mapping function for Scenario 4 (due to interaction terms).
In Scenario 3, we further illustrate bias due to misspecified exposure mapping (Corollary \ref{cor:3B}) by introducing a naive binary exposure estimator, denoted by $\text{OLS}(G^*_1, G_2)$, which assumes a binary exposure mapping $G^*_{i j, 1}=\mathbbm{1}\left(G_{i j, 1} \geq 0.5\right)$ and regresses the outcome on $A_{ij}, G^*_{i j, 1}$ and $G_{i j, 2}$. This estimator is only used when targeting first-order spillover effects ($h=1$).

We conduct 5,000 Monte Carlo simulations per scenario to obtain
Monte Carlo estimates of the normalized bias and variance across these simulations.
For each
simulation, we also draw 500 bootstrap samples and compute the bootstrap variance of each estimator.
\begin{figure*}[t]
    \centering
    \begin{subfigure}{0.5\linewidth}
        \includegraphics[height=5.5cm]{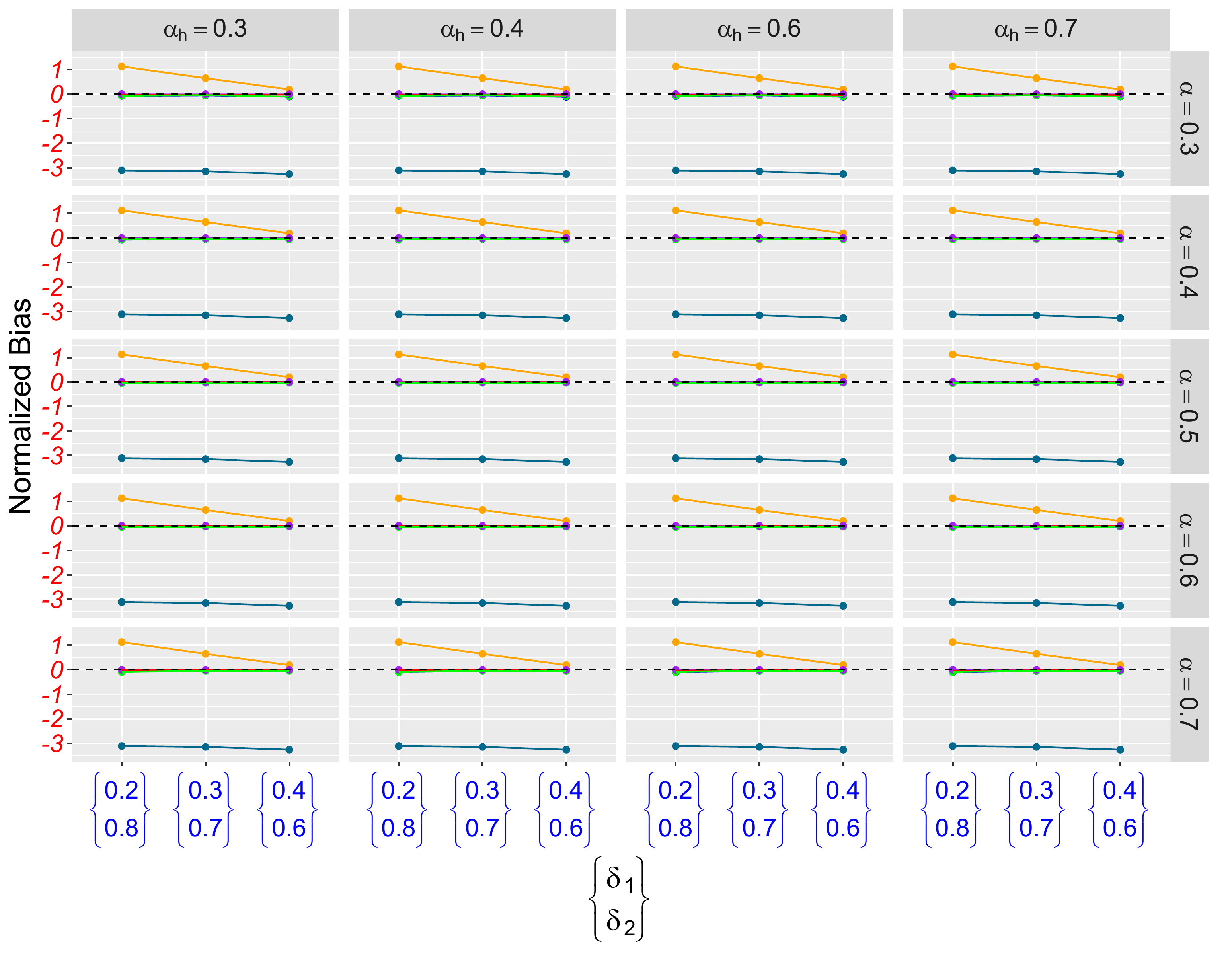}
         \caption{$\rma=0$, $h=1$}
     \label{img1}
    \end{subfigure}
    \begin{subfigure}{0.45\linewidth}
        \includegraphics[height=5.5cm]{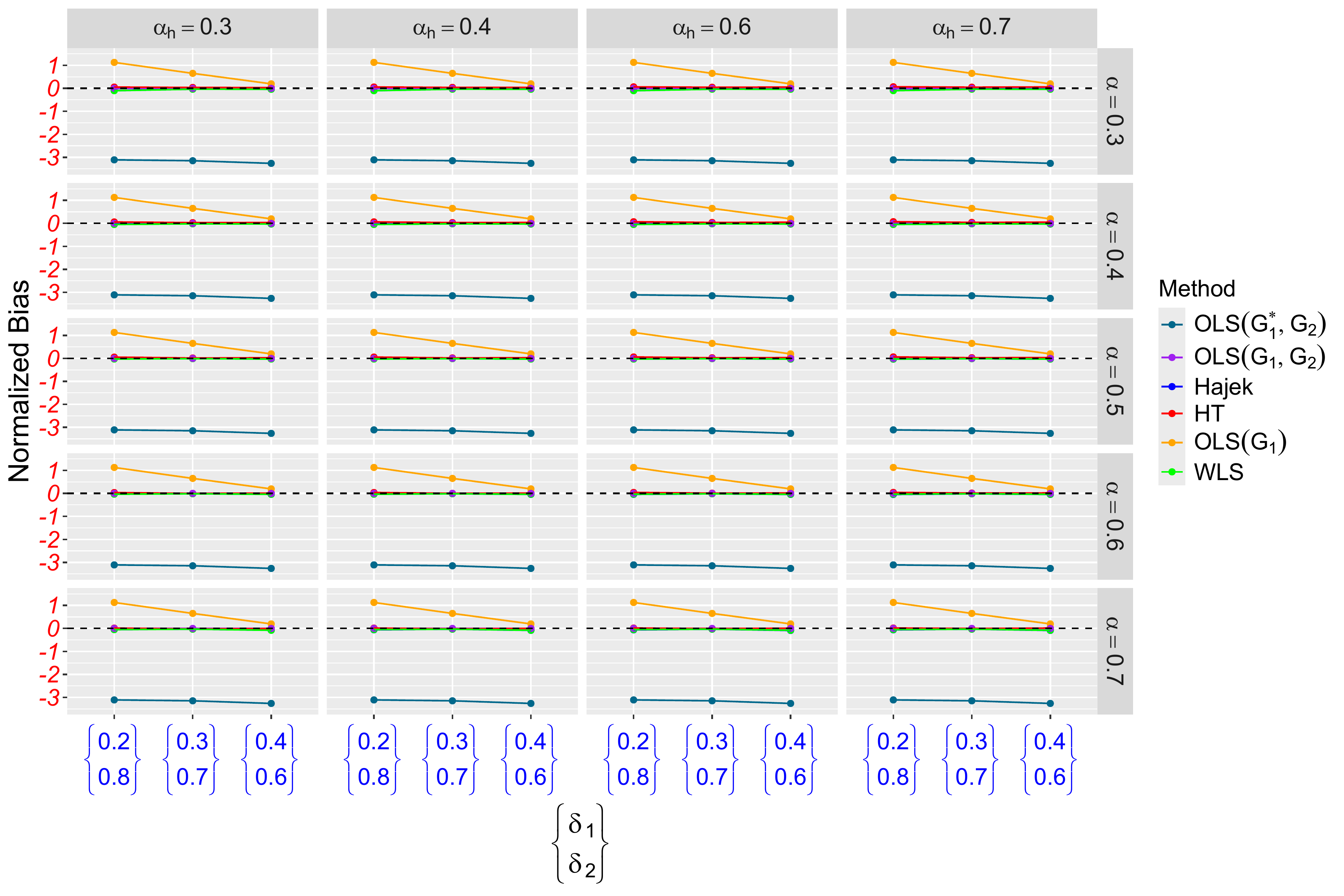}
        \caption{$\rma=1$, $h=1$}
    \label{img2}
    \end{subfigure}
\caption{Scenario 3 (second-order neighborhood interference). Normalized bias of the HT (red), Hajek (blue), and the WLS (green) estimators, along with three OLS estimators: $\text{OLS}(G_1)$ (orange),  $\text{OLS}(G_1, G_2)$ (purple), and $\text{OLS}(G^*_1, G_2)$ (deep blue), where $G^*_{ij,1}=\mathbbm{1}(G_{ij,1}\geq 0.5)$, for the first-order spillover effects $\text{SE}^{1}\left(\alpha_{h}, 0.5;\rma, \alpha\right)$, with $\alpha_h\in\{0.3, 0.4, 0.6, 0.7\}$ and $\alpha\in\{0.3, 0.4, 0.5, 0.6, 0.7\}$, under Scenario 3, a two-stage assignment with $\{\delta_1, \delta_2 \}\in\{(0.2, 0.8), (0.3, 0.7), (0.4, 0.6)\}$, and regular networks with $p=0.2$. }
\label{fig:scenario3}
\end{figure*}
\subsection{Results}
\subsubsection{Bias}
Here, we study the bias of all the estimators when one's outcome depends on the treatments of both first- and second-order neighborhoods. Specifically, we report the normalized bias for the first-order spillover effects under Scenario 3 (Figure \ref{fig:scenario3}), and second-order spillover effects under Scenario 4 (Figure \ref{fig:scenario4}), both using a regular graph model with $p=0.2$.
Bias results for other scenarios are detailed in the Appendix B.

\begin{figure*}[t]
    \centering
    \begin{subfigure}{0.5\linewidth}
        \includegraphics[height=5.5cm]{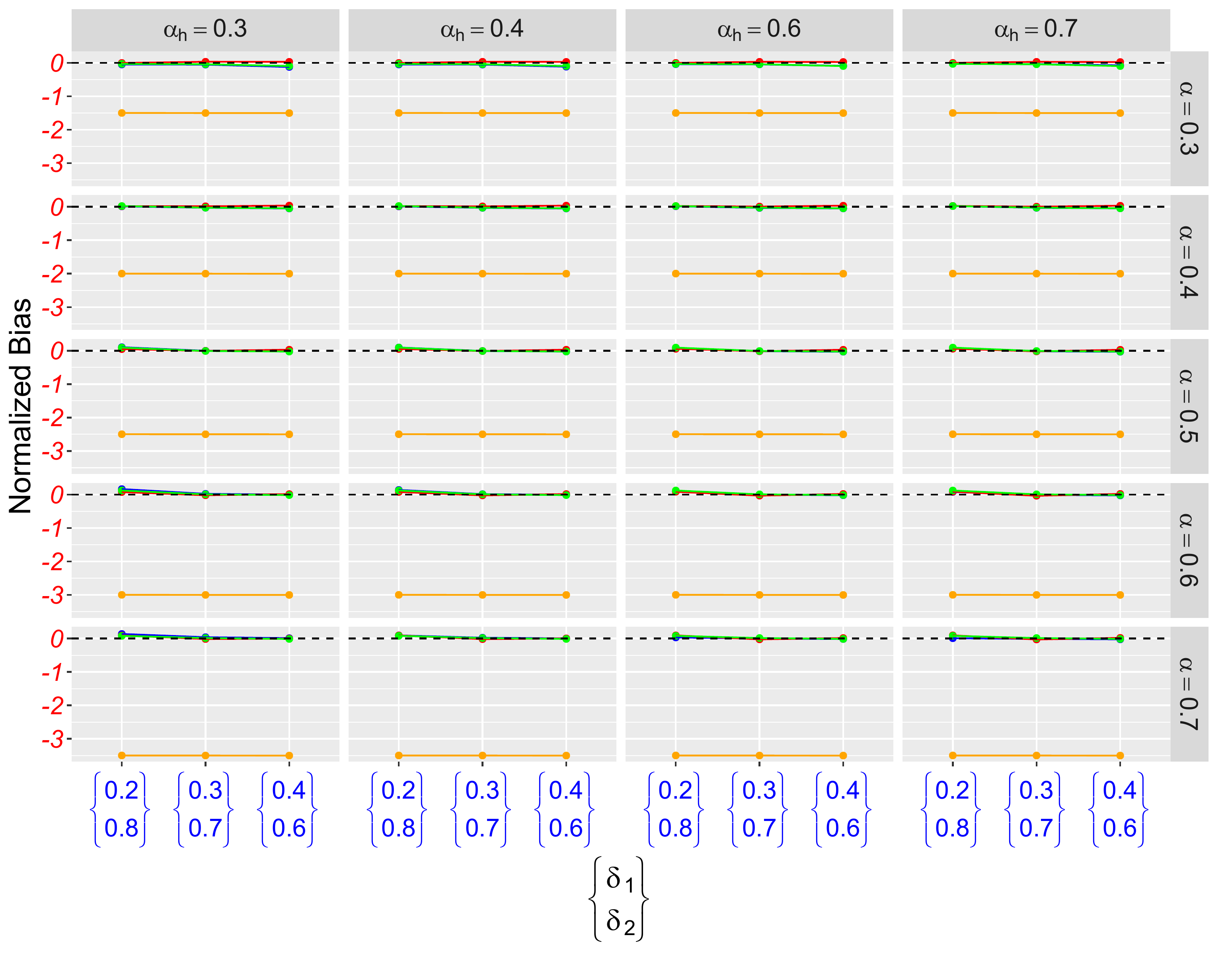}
         \caption{$\rma=0$, $h=2$}
    \label{img3}
    \end{subfigure}
    \begin{subfigure}{0.45\linewidth}
        \includegraphics[height=5.5cm]{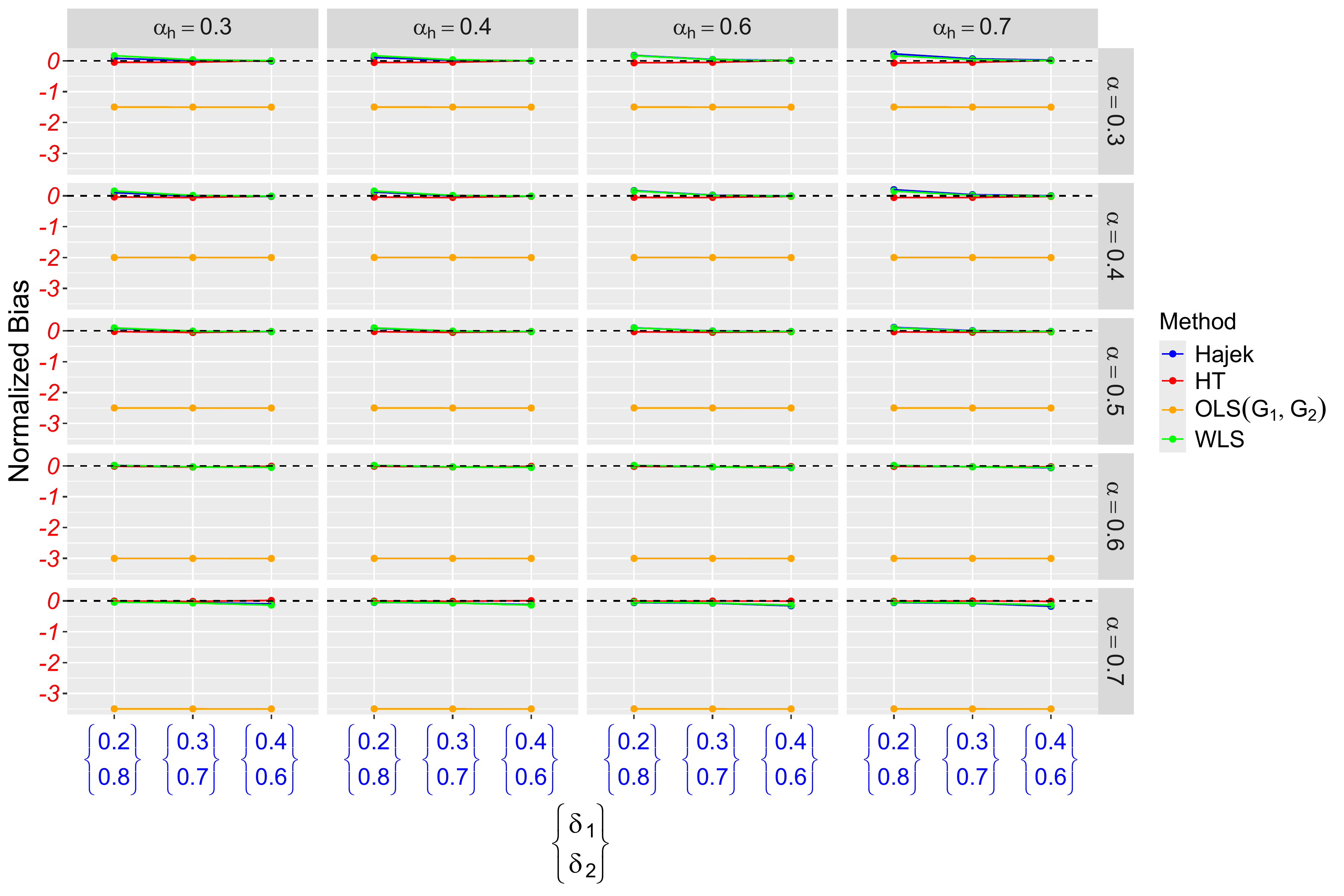}
        \caption{$\rma=1$, $h=2$}
       \label{img4}
    \end{subfigure}
\caption{Scenario 4 (second-order neighborhood interference with interaction). Normalized bias of the HT (red), Hajek (blue), and WLS (green) estimators, along with the OLS estimator with both $G_{ij,1}$ and $G_{ij,2}$, i.e., $\text{OLS}(G_1, G_2)$ (orange), for the second-order spillover effects $\text{SE}^{2}\left(\alpha_{h}, 0.5;\rma, \alpha\right)$ among untreated units ($\rma=0$, left) and treated units ($\rma=1$, right). Parameters include $\alpha_h\in\{0.3, 0.4, 0.6, 0.7\}$ and $\alpha\in\{0.3, 0.4, 0.5, 0.6, 0.7\}$, under Scenario 4, with a two-stage assignment $\{\delta_1, \delta_2 \}\in\{(0.2, 0.8), (0.3, 0.7), (0.4, 0.6)\}$ and regular networks ($p=0.2$).}
\label{fig:scenario4}
\end{figure*}

In Figure \ref{fig:scenario3} we see that when estimating first-order spillover effects ($h=1$), the estimator $\text{OLS}(G_1)$ exhibits bias. Although the treatment is assigned independently in the second stage, making  $G_{ij,1}$ and $G_{ij,2}$ independent within clusters conditional on the cluster's assigned probability, the first-stage randomization induces a correlation between them across clusters, as the distribution of $G_{ij,1}$ and $G_{ij,2}$ depends on the first-stage assignment.
 Specifically, clusters assigned a higher first-stage probability ($\delta_2$) systematically exhibit higher expected values for both $G_{ij,1}$ and $G_{ij,2}$, while those assigned $\delta_1$ exhibit lower values for both. When data is pooled across these distinct clusters, a positive association emerges, a phenomenon akin to Simpson's Paradox (see Appendix B.3).
Therefore, when $G_{ij,2}$ is omitted from the regression, the coefficient on $G_{ij,1}$ absorbs part of the effect of the omitted variable $G_{ij,2}$, leading to omitted variable bias as described in Corollary \ref{cor:3A}.
The magnitude of this bias decreases as the first-stage probabilities $(\delta_1, \delta_2)$ become more similar, because the correlation weakens (see Appendix B.3).
Additionally, the estimator $\text{OLS}(G^*_1, G_2)$ shows significant bias. While it includes the correct interference set through $G_{ij,2}$, it uses a misspecified binary exposure mapping $G^*_{ij,1}=\mathbbm{1}(G_{ij,1}\ge 0.5)$ instead of the true continuous exposure $G_{ij,1}$, leading to bias as described in Corollary~\ref{cor:3B}.
In contrast, the $\text{OLS}(G_1, G_2)$ estimator
shows a negligible bias, because in Scenario 3 both the interference set and the exposure mapping functions are correctly specified.

 However, in Scenario 4, where the outcome depends on the interaction between the proportions of treated first- and second-order neighbors ($G_{ij,1} G_{ij,2}$), the $\text{OLS}(G_1, G_2)$ estimator becomes misspecified. By regressing $Y_{i j}$ only on the main effects $A_{i j}, G_{ij,1}$, and $G_{ij,2}$ without the interaction term, this estimator relies on an incorrect exposure mapping function, as discussed in Corollary \ref{cor:3B}, and leads to the bias observed in Figure \ref{fig:scenario4}.

In contrast, our proposed Horvitz-Thompson, Hajek, and WLS estimators remain unbiased across all scenarios and settings
because they do not rely on exposure mapping specifications or restrictions on the interference set.

\subsubsection{Variance estimators}
To validate our analytical variance estimators derived in Section~\ref{sec:estimators}, we compare them with bootstrap and Monte Carlo variance estimates. The results, presented in Appendix B.2.1, confirm that our analytical variance estimates are consistent with both the bootstrap and Monte Carlo estimates across all scenarios.

\subsubsection{Comparison of the variance of different estimators}
Here, we compare the analytical variance of Horvitz-Thompson, Hajek, and WLS estimators, along with the OLS estimators, when estimating the normalized spillover effects for $h=1$ and $h=2$ for all values of $\alpha_h$ (against 0.5) and all values of $\alpha$, under all combinations of $\delta_1$ and $\delta_2$. Figure \ref{fig:ana_variance} shows the results for untreated units ($\rma=0$)  under Scenario 3 with regular graphs ($p=0.2$).
For other scenarios, the analytical variance of the estimators is reported in the Appendix B.2.2.
\begin{figure*}[htbp]
    \centering
    \begin{subfigure}{0.5\linewidth}
        \includegraphics[height=5.5cm]{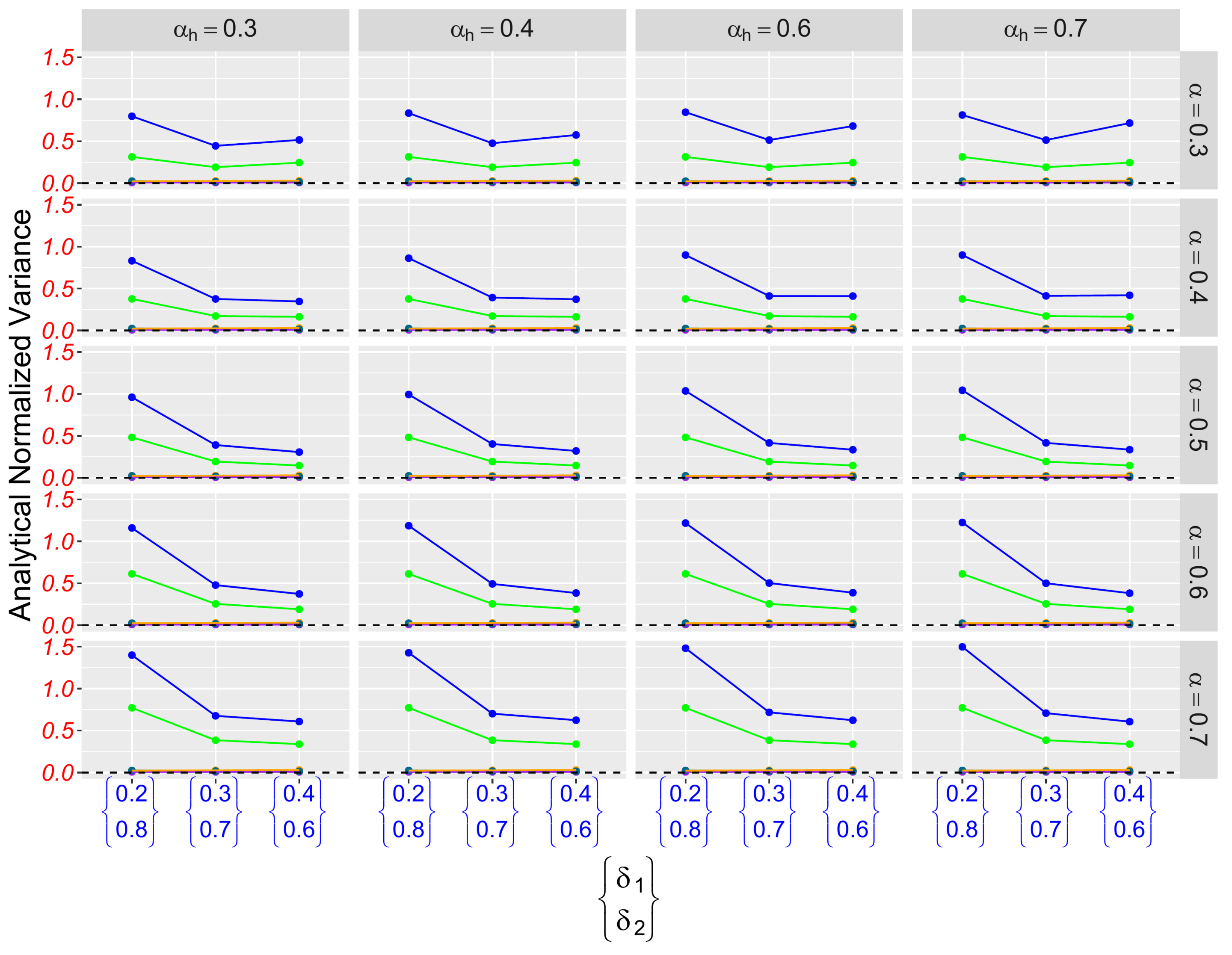}
        \caption{Without HT, $\rma=0$, $h=1$}
        \label{img7a}
    \end{subfigure}
    \begin{subfigure}{0.45\linewidth}
        \includegraphics[height=5.5cm]{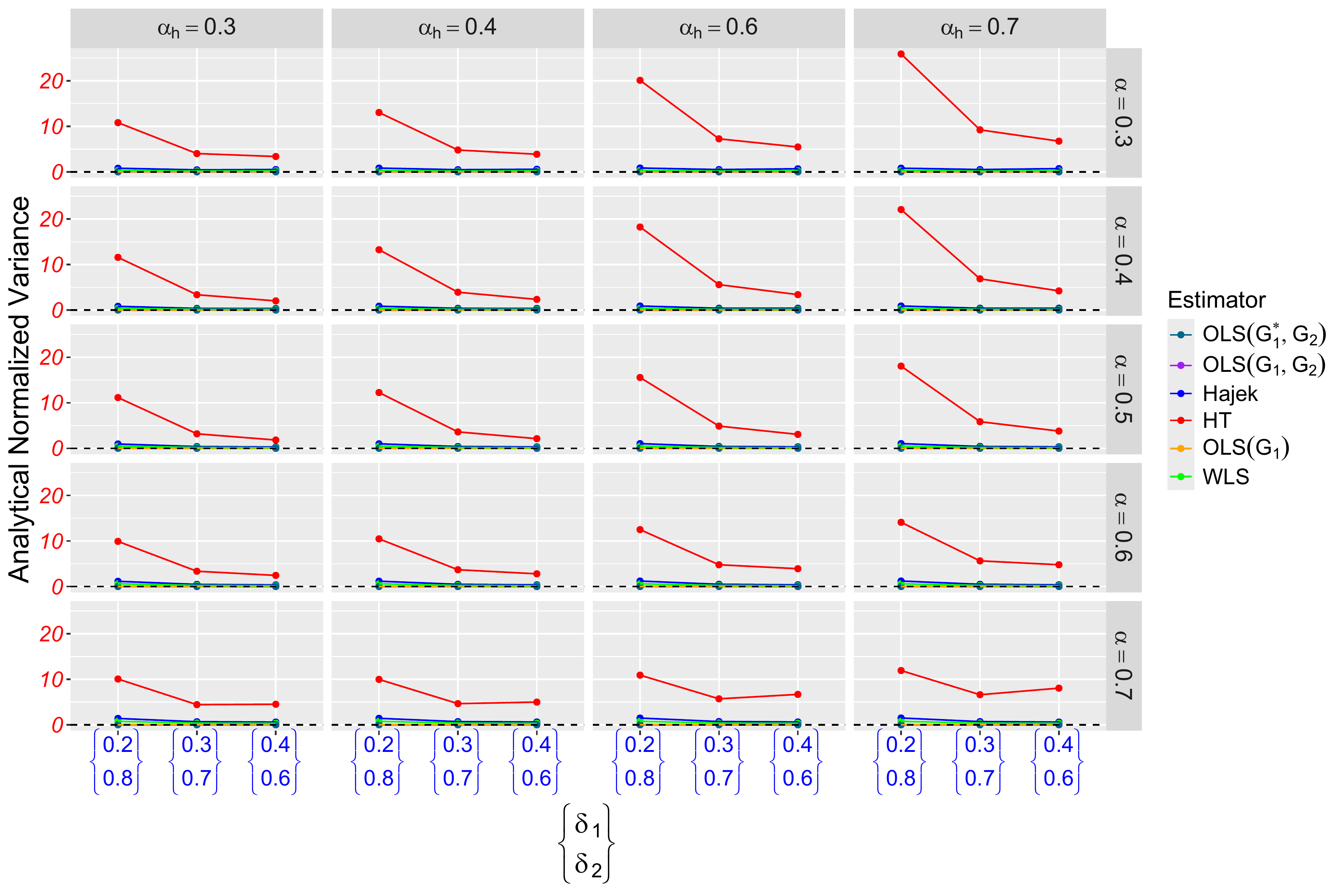}
         \caption{With HT, $\rma=0$, $h=1$}
       \label{img7b}
    \end{subfigure}
    \vspace{0.5cm}
    \begin{subfigure}{0.5\linewidth}
        \includegraphics[height=5.5cm]{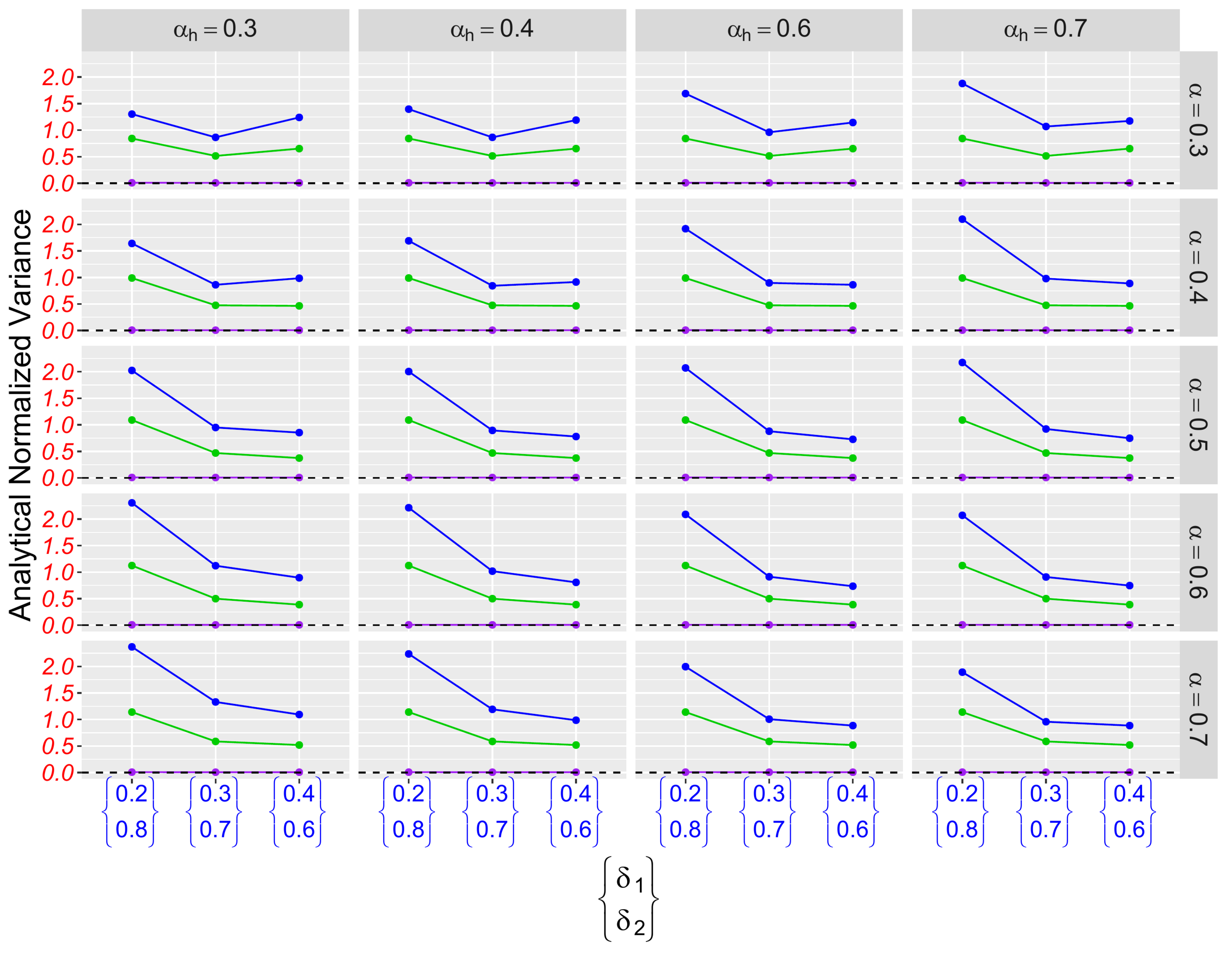}
        \caption{Without HT, $\rma=0$, $h=2$}
        \label{img8a}
    \end{subfigure}
    \begin{subfigure}{0.45\linewidth}
        \includegraphics[height=5.5cm]{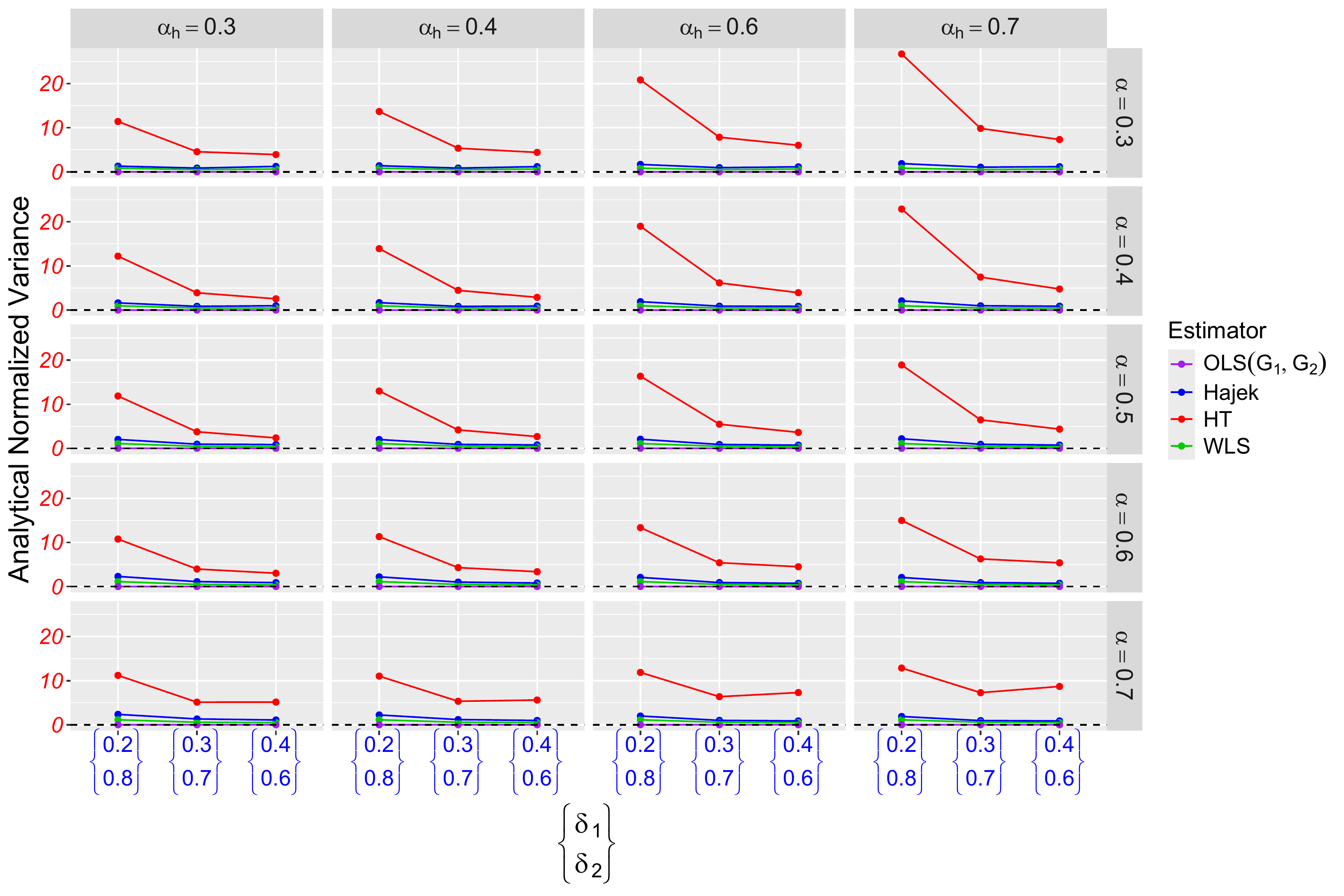}
         \caption{With HT, $\rma=0$, $h=2$}
       \label{img8b}
    \end{subfigure}
     \caption{Scenario 3 (second-order neighborhood interference). Normalized variance of the HT estimator (red), Hajek estimator (blue), WLS estimator (green), $\text{OLS}(G_1)$ estimator (orange),$\text{OLS}(G^*_1, G_2)$ (deep blue), and $\text{OLS}(G_1, G_2)$ estimator (purple), estimated using M-estimation, for the first-order spillover effects $\text{SE}^{1}\left(\alpha_{h}, 0.5;\rma = 0, \alpha\right)$ and second-order spillover effects $\text{SE}^{2}\left(\alpha_{h}, 0.5;\rma = 0, \alpha\right)$ for untreated units, with $\alpha_h\in\{0.3, 0.4, 0.5, 0.6, 0.7\}$ and $\alpha\in\{0.3, 0.4, 0.5, 0.6, 0.7\}$, under Scenario 3, a two-stage assignment with $\{\delta_1, \delta_2 \}\in\{(0.2, 0.8), (0.3, 0.7), (0.4, 0.6)\}$, and regular networks with $p=0.2$. The left panel shows the results without the HT estimator, to better highlight the differences between the Hajek, WLS, and OLS estimators.}
\label{fig:ana_variance}
\end{figure*}

 The Horvitz-Thompson estimator shows significantly higher variance compared to all other estimators. This can be explained by the extreme weights, particularly when there is a mismatch between the design probabilities and the hypothetical probabilities. In contrast, the Hajek and WLS estimators normalize the weights, resulting in smaller variance.

 Comparing these estimators, the Hajek estimator generally shows slightly higher variance than the WLS estimator, which gains efficiency through model-based smoothing. However, this gap narrows when the hypothetical parameters are closer to the design probabilities $(\delta_1, \delta_2)$, specifically when the design provides a $\delta$ that closely mimics the combination of $\alpha$ and $\alpha_h$. In such scenarios, the improved overlap leads to lower variance for both estimators.

 Regarding the impact of the first-stage allocation probabilities $(\delta_1, \delta_2)$, the variance trends are consistent across both first-order ($h=1$) and second-order ($h=2$), but  differ by estimator type.
 For the Horvitz-Thompson estimator, the variance generally decreases as the design probabilities converge from $(0.2, 0.8)$ to $(0.4, 0.6)$ when $\alpha$ is small to moderate. However, at large values of $\alpha$ (e.g., $\alpha = 0.7$), the patterns diverge. For untreated units ($\rma=0$), the trend follows a U-shape, reaching its minimum at the intermediate design $\{0.3, 0.7\}$ rather than the most balanced design where the two probabilities are closest, while for treated units ($\rma=1$), the variance increases monotonically. In contrast, the Hajek and WLS estimators exhibit greater stability. Their variance generally decreases or remains relatively constant as the design converges across most scenarios.

Finally, we observe that the OLS estimators
exhibit the lowest variance among all methods. This is expected, as parametric least squares estimators generally yield lower variance than weighting-based non-parametric or semi-parametric estimators.
However, lower variance comes at the cost of potentially significant bias due to misspecification of the interference set or the exposure mapping, as shown in Figure \ref{fig:scenario3}.
Note that to show the bias due to such misspecifications, HT or Hajek estimators with exposure mappings could have been used \citep{aronow2017estimating}, and would have shown higher variance compared to the OLS estimators, which were employed here for simplicity of illustration.

 \begin{figure*}[htbp]
    \centering
    \begin{subfigure}{0.5\linewidth}
      \hspace*{-0.5cm}
        \includegraphics[height=5.3cm]{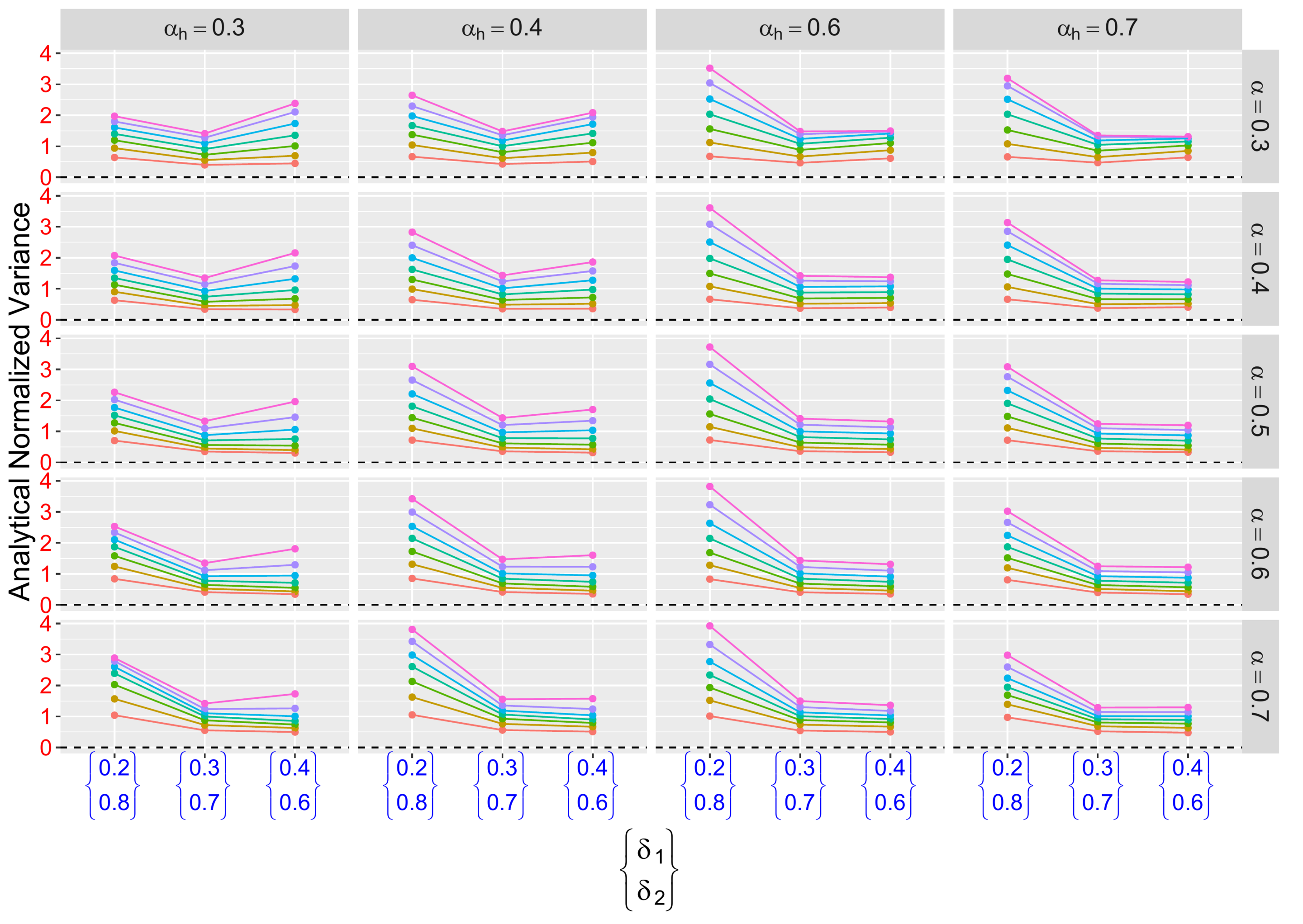}
         \caption{ Erd\H{o}s--R\'enyi random graph, $\rma=0$, $h=1$}
        \label{img9}
    \end{subfigure}
    \begin{subfigure}{0.48\linewidth}
     \hspace*{0.2cm}
        \includegraphics[height=5.3cm]{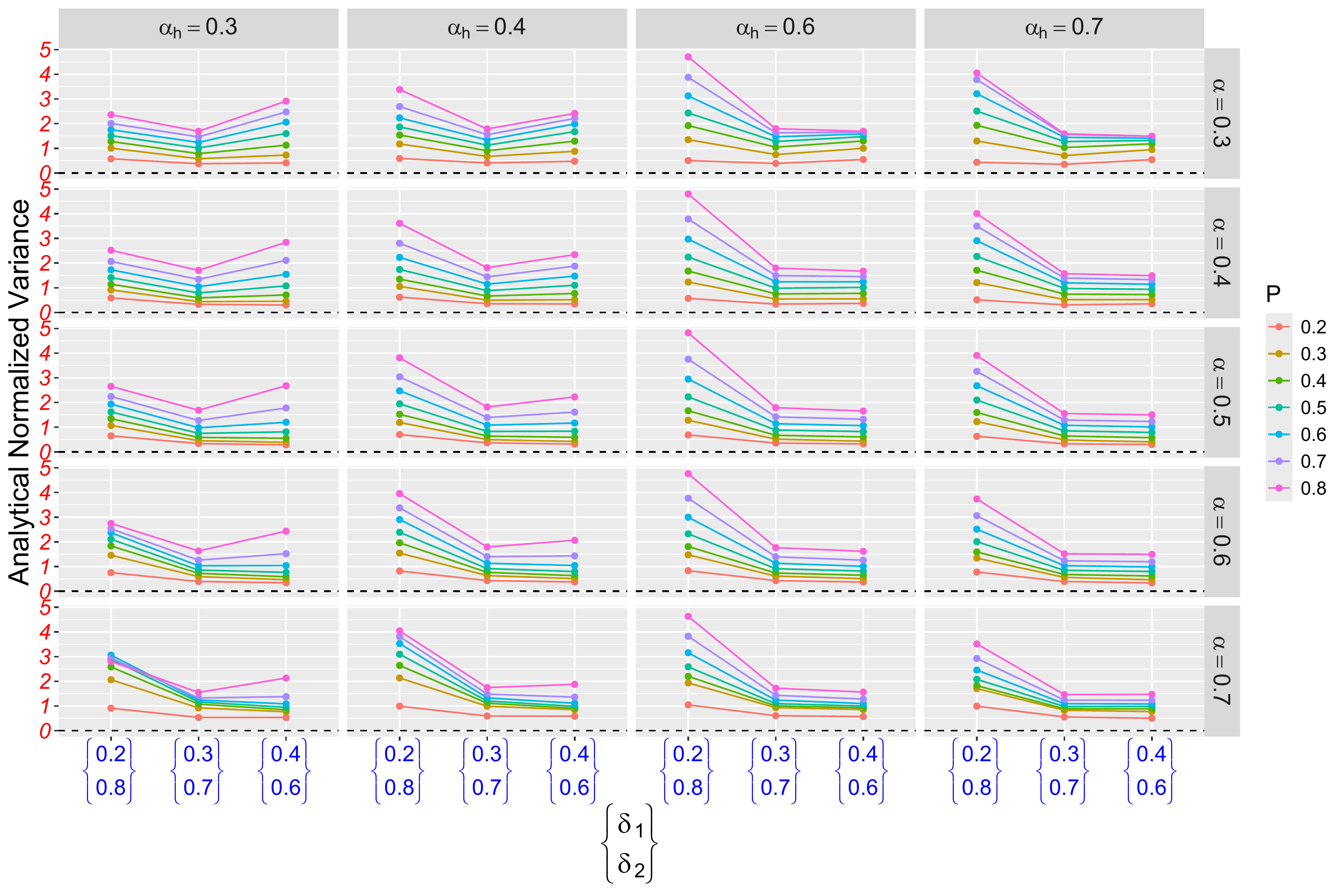}
        \caption{Regular graph, $\rma=0$, $h=1$}
        \label{img10}
    \end{subfigure}

    \begin{subfigure}{0.5\linewidth}
      \hspace*{-0.5cm}
        \includegraphics[height=5.3cm]{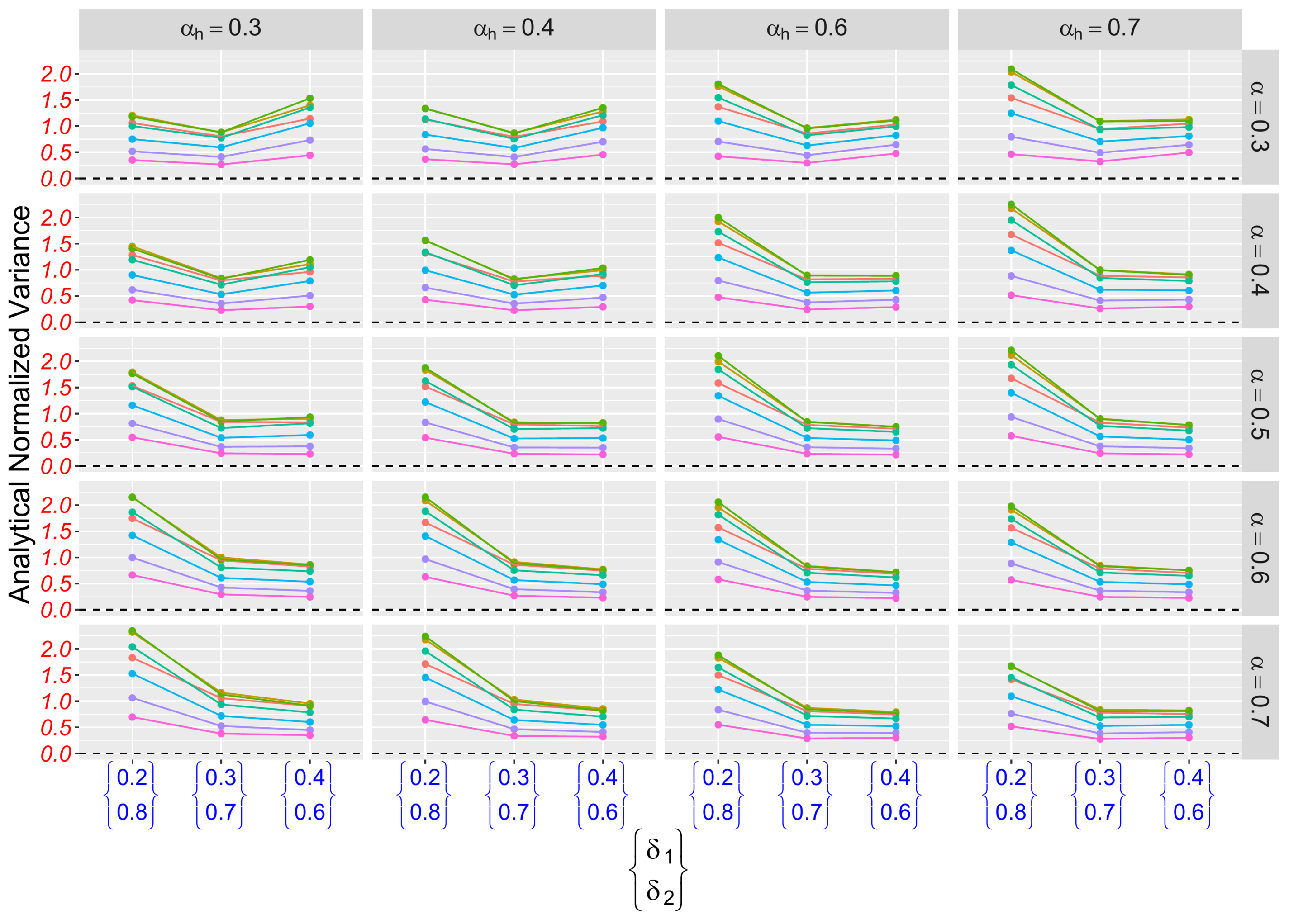}
        \caption{Erd\H{o}s--R\'enyi random graph, $\rma=0$, $h=2$}
        \label{img11}
    \end{subfigure}
    \begin{subfigure}{0.48\linewidth}
    \hspace*{0.2cm}
        \includegraphics[height=5.3cm]{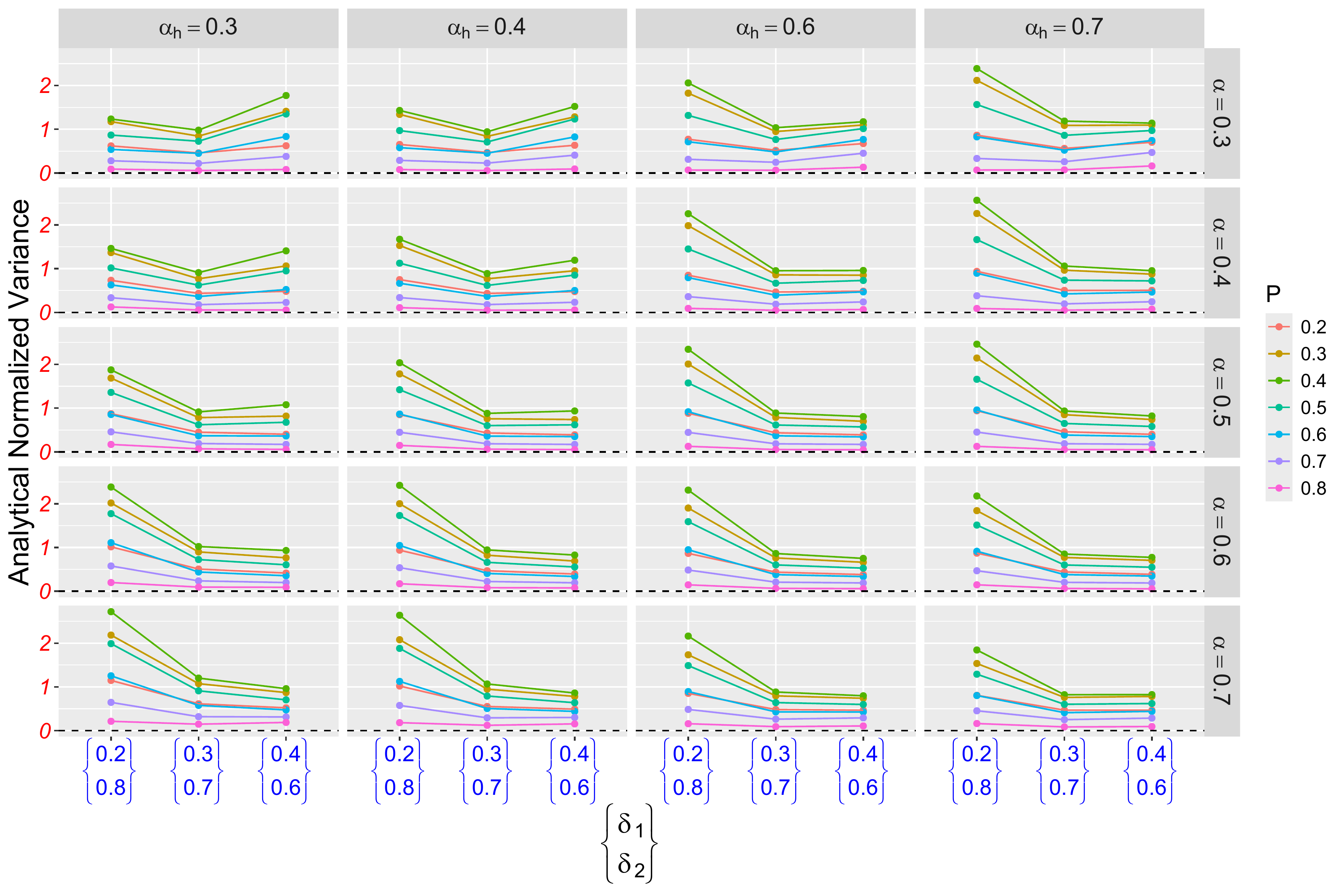}
        \caption{Regular graph, $\rma=0$, $h=2$}
       \label{img12}
    \end{subfigure}
\caption{Scenario 3 (second-order neighborhood interference). Normalized variance of the Hajek estimator estimated using M-estimation,   for the first-order spillover effects $\text{SE}^{1}\left(\alpha_{h}, 0.5;\rma = 0, \alpha\right)$ (top), and second-order spillover effects $\text{SE}^{2}\left(\alpha_{h}, 0.5;\rma =0, \alpha\right)$ (bottom), with $\alpha_h\in\{0.3, 0.4, 0.5, 0.6, 0.7\}$ and $\alpha\in\{0.3, 0.4, 0.5, 0.6, 0.7\}$, under Scenario 3, a two-stage assignment with $\{\delta_1, \delta_2 \}\in\{(0.2, 0.8), (0.3, 0.7), (0.4, 0.6)\}$, and Erd\H{o}s--R\'enyi networks (left) or regular networks (right) with different values of $p$, represented with different colors.}
    \label{fig:images_2x2}
\end{figure*}
\subsubsection{Study of the variance of the Hajek estimator across values of network density}
Figure \ref{fig:images_2x2} depicts the variance of the Hajek estimator across different values of network density parameter $p$ in Erd\H{o}s--R\'enyi random graph and regular graph.
Both Erd\H{o}s--R\'enyi random graphs and regular graphs show similar trends,
with the latter always showing a lower variance:
for $h=1$, a higher $p$ generally leads to higher variance, while for $h=2$, a higher value of $p$ generally leads to lower variance.
This comes from how network density affects the size of the $h$-order neighborhood.
For $h=1$, increasing $p$ expands the size of the first-order neighborhood ($|\mathcal{C}^1_{ij}|$),
resulting in more extreme  weights.
Conversely, for $h=2$, as $p$ increases, the set of second-order neighbors ($|\mathcal{C}^2_{ij}|$) shrinks, leading to a lower variance.

\section{Empirical Application}
\label{sec:application}
\subsection{Study Setting}
We apply our estimators to investigate first and second-order spillover effects of a maternal and child health intervention that was randomly assigned to rural households in Honduras \citep{Shakyae012996, Airoldi2024InductionOS}.
The study was a two-stage randomized experiment, encompassing 176 villages in Honduras. The first stage was a factorial randomization where villages were assigned to two factors: 1) the targeting strategy, random or based on friendship nomination,  and 2) the proportion
of households receiving the intervention, taking values 0\%, 5\%, 10\%, 20\%, 30\%, 50\%, 75\%, and 100\%.
The second stage assigned the intervention to households depending on the targeting strategy and proportion assigned to the village in the first stage. Individuals within the same household received the same treatment.
In the `random targeting' arm,  households were randomly selected to receive the intervention according to the specified proportions.
In the
`friendship nomination' arm, intervention recipients were selected based on the friendship nomination technique.
Our analysis focused on 110 villages from two categories: 88 villages in the `random targeting' arm with proportions 0\% to 100\%, and 22 villages in the `friendship targeting' arm, assigned to proportions 0\% or 100\%.

In the baseline survey, carried out in 2015, in addition to socio-demographic characteristics,  social networks of village residents aged 12 and older were collected. Social network data were collected using a series of sociometric name generators covering diverse relational domains, including friendship, financial exchange,  health advice, and negative ties.
For our analysis, we excluded negative ties and aggregated the remaining positive domains into a single undirected network. Specifically, a link is defined to exist between two individuals if they reported a positive connection in any of the surveyed domains.

Households assigned to the intervention received, on average, 14 household visits where health workers delivered
15 educational modules covering prenatal and postnatal care, newborn health (e.g., breastfeeding, thermal care), and disease prevention (e.g., diarrhea and respiratory illness management) \citep{Shakyae012996, Airoldi2024InductionOS}.

A follow-up survey
to assess the intervention's effectiveness
comprised 83 questions assessing knowledge and attitudes related to maternal and child health. In our analysis, we selected only those individuals who completed both the baseline and follow-up surveys, resulting in 8330 individuals from 4508 households. An aggregate outcome was calculated by taking the proportion of questions answered correctly,
yielding a score between 0 and 1.

We define the treatment $A_{ij}$ based on assignment to the intervention, with $A_{ij}=1$ if the individual's household was assigned to the intervention and $0$ otherwise, resulting in intent-to-treat causal effects\footnote{Some members of the targeted households did not attend all the intervention visits, resulting in non-compliance to the assignment. Here, we do not deal with this issue and focus on spillover effects of the assignment.}.
For each individual, we define the interference set as the subset of individuals that can be reachable through a finite path, i.e., $\mathcal{I}_{ij}=\{ik\in \mathcal{N}:d_{ij,ik}<\infty\}$.
In the Appendix, we also report results under different interference assumptions, including second- and third-order neighborhood interference (Assumption \ref{assump:h-neighborhood}), and a definition of the interference sets based on community detection algorithms.

Given the assignment at the household level,
every individual in the same household receives the same treatment.
Then, the calculation of the interference set's propensity score $P_{\Delta}\left(\mathbf{A}_{ij,\mathcal{I}_{ij}}\right)$ and the probabilities under the hypothetical treatment allocations must take into account the household-level assignment and can be redefined as the joint probability of treatment assignment for unit $ij$'s household and
the unique households represented in the unit's interference set.
Detailed formal definitions of the household-level interference sets and the corresponding probability formulas are provided in the Appendix C.1
\footnote{In our empirical application, the second-stage assignment operates at the household level rather than the individual level, and follows a completely randomized design rather than a Bernoulli design. Specifically, within each village, a fixed number of households are selected to receive the intervention through complete randomization, with all individuals within the same household receiving the same treatment. This design differs from Bernoulli randomization. However, when the number of assignment units (households) is large, the hypergeometric distribution under complete randomization is well approximated by the corresponding binomial distribution with the same marginal treatment probability \citep{imbens2015causal}. Given the large number of households in the villages of our study, for simplicity of calculation, we  approximate the second-stage assignment mechanism as household-level Bernoulli randomization.
}
.

\subsection{Spillover Effects}
We analyze the aggregate outcome using the Horvitz-Thompson, Hajek, and WLS estimators to estimate the spillover effect from households at different network distances.
In particular,
we estimate the spillover effects  $\text{SE}^{h}\left(\alpha_h, \alpha_{h}^{\prime}; \rma, \alpha\right)$, for $h=1,2$, $\alpha=0.2, 0.3, 0.4, 0.5, 0.6,0.7,0.8$, and $\alpha_h=0.2, 0.4, 0.6, 0.8$. We set $\alpha_{h}^{\prime}=0.3$, i.e., the mean of the dosages assigned in the first stage.
Here, we only report the results from the Hajek and WLS estimators. The rest of the results can be found in the Appendix C.
\begin{figure*}[t]
    \centering
    \begin{subfigure}{0.42\linewidth}
      \hspace*{-0.5cm}
        \includegraphics[height=5.5cm]{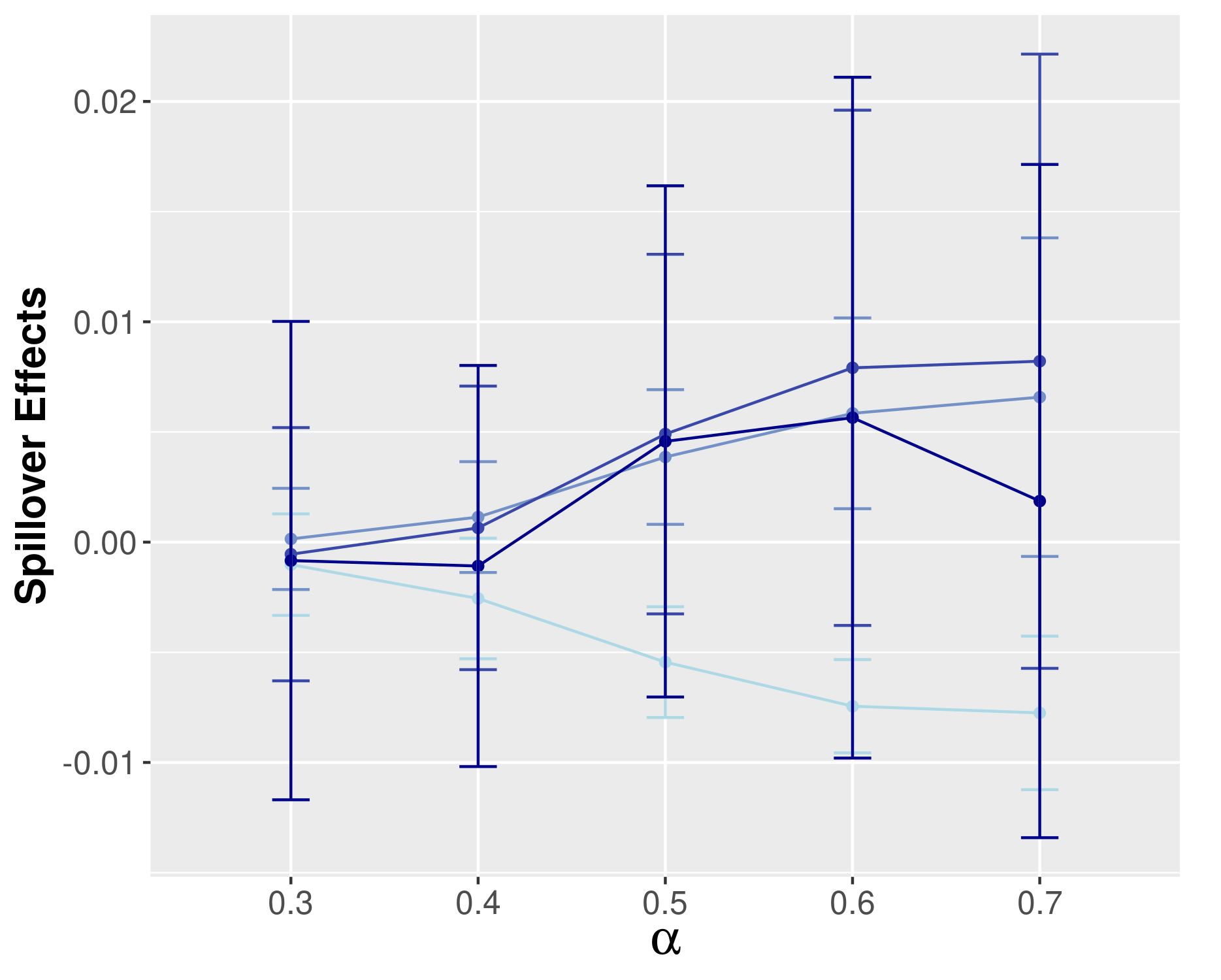}
         \caption{$\rma=0$, $h=1$}
         \label{img13}
    \end{subfigure}
    \hfill
    \begin{subfigure}{0.52\linewidth}
          \includegraphics[height=5.5cm]{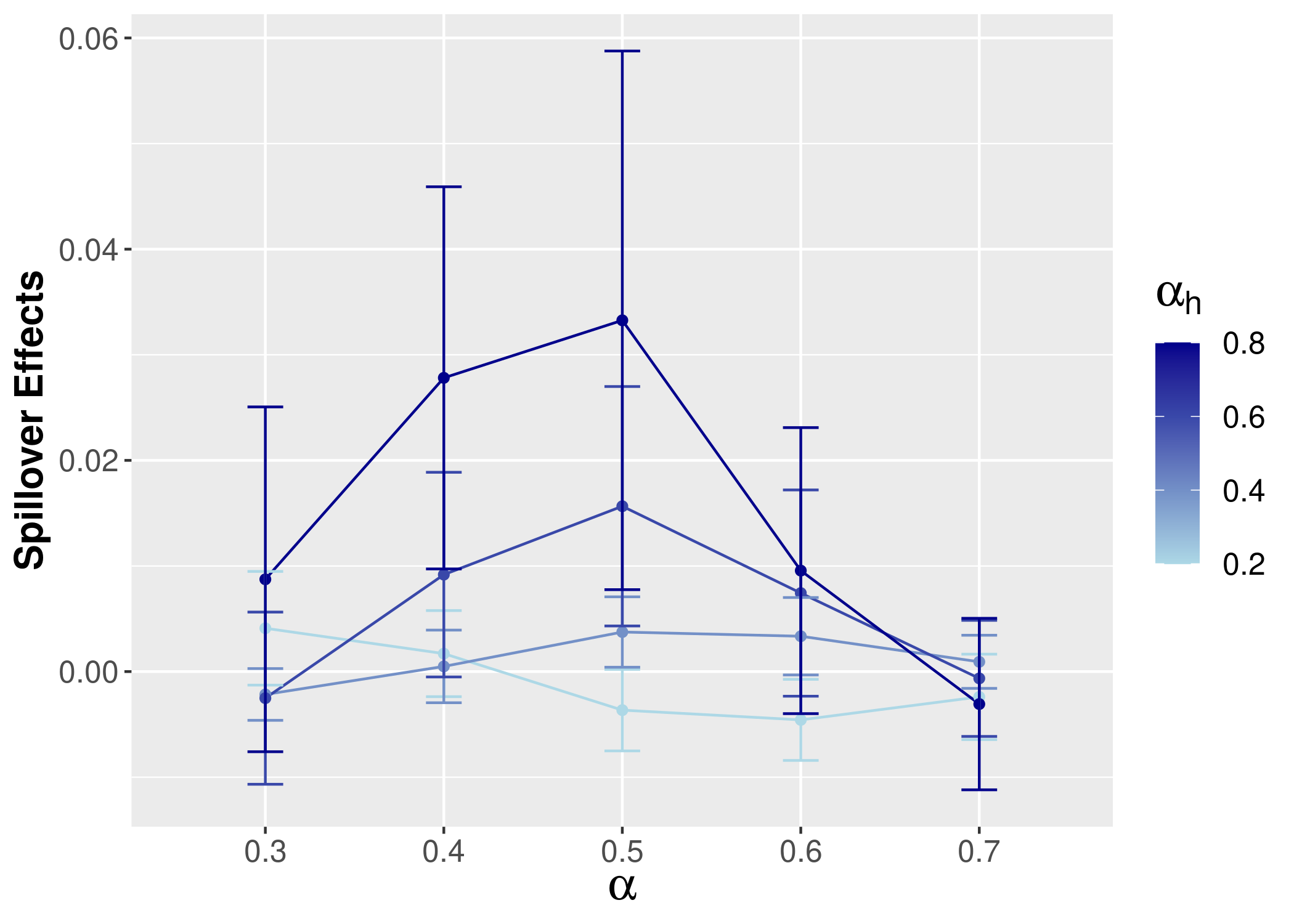}
        \caption{$\rma=1$, $h=1$}
        \label{img14}
    \end{subfigure}
        \caption{First-order spillover effects of the maternal and child health intervention estimated with the Hajek estimator. Point estimates and   95\% confidence intervals are reported for the untreated (left) and for the treated (right), and for different values of $\alpha$ (x-axis) and $\alpha_h$ (colors), assuming the interference set as the subset of individuals reachable through a finite path. }
        \label{fig:app1}
\end{figure*}

Figure \ref{fig:app1}  shows the results for the first-order spillover effects estimated using the Hajek estimator.
For untreated individuals ($\rma=0$), spillover effects from first-order neighbors are consistently small in magnitude and lack statistical significance across all combinations of $\alpha$ and $\alpha_h$. For treated individuals ($\rma=1$), we observe an inverted U-shaped relationship between spillover effects and $\alpha$, with highest effects occurring at $\alpha=0.5$. At the highest $\alpha_h$ value (0.8), the spillover effect $\text{SE}^{1}\left(0.8, 0.3; 1, 0.5 \right)$ reaches approximately 2.7\%
and is statistically significant,  representing a 2.7 percentage point increase in the proportion of correct answers.
This different pattern between treated and untreated individuals can be explained through several interpretations. A plausible explanation is that, treated individuals who receive the health training gain both the foundational knowledge and the cognitive framework necessary to effectively absorb and integrate information.
When these individuals encounter health-related information through discussions with their contacts,
it reinforces and extends their existing knowledge base, producing a positive spillover effect. In contrast, untreated individuals lack this foundational preparation and struggle to extract actionable insights from peer discussions. The knowledge gap between untreated individuals and their trained contacts may further limit effective communication,
as trained individuals might use terminology or assume baseline understanding that untreated individuals do not possess.
Another possible explanation comes from the self-efficacy theory \citep{bandura1977self}. Treated individuals are likely to have higher self-efficacy regarding health topics, leading to greater engagement in related discussions and interactions, thereby enhancing their learning through these activities. Conversely, untreated individuals may lack confidence in the discussion or even feel isolated, resulting in less engagement in these activities. They might even distrust knowledge shared by treated units due to their lack of direct experience, therefore leading to a lower spillover effect.
Finally, the intervention may rewire the network and reduce ties between treated and untreated individuals, contributing to a decrease in spillovers \citep{Papamichalis2025EducationalIR}.

Regarding the pattern with respect to $\alpha$,
the inverted U-shaped relationship observed for the treated can be explained by information saturation: when $\alpha$ is high, i.e., the intervention is assigned with high probability to the whole network excluding an individual's first-order neighbors, frequent collective discussions may arise in the community, and the individual may already receive abundant health information from their broader network, through these collective discussions or even through diffusion of information through direct contacts, treated or untreated (who at higher $\alpha$ tend to be more receptive). Therefore, in this saturated environment, the marginal effect of additional treated first-order contacts becomes smaller and even negligible, as the information they provide is largely redundant with what individuals already know or have heard from other sources.

We also observe distinct patterns in the variance of the first-order spillover effect estimates. Since for $h=1$, the number of individuals in the rest of the interference set significantly exceeds that of the first-order neighborhood, the weights are primarily dominated by $\alpha$.
For untreated individuals, the variance increases monotonically as $\alpha$ increases, because untreated individuals are concentrated in villages with low assignment proportions.
In contrast, for treated individuals ($\rma=1$), the variance peaks around $\alpha=0.5$ and narrows at $\alpha=0.7$. Since treated units are naturally abundant in high-proportion arms, there is substantial data support for estimating effects when $\alpha$ is high, leading to small variance.  When $\alpha$ is small, although the proportion of treated individuals within each cluster is small, the dense concentration of design arms (5\%, 10\%, 20\%, 30\%) ensures that a sufficient number of units can contribute to the estimation, leading to large data support.
However, the design support around $\alpha=0.5$ is relatively sparse,  resulting in a larger variance.

\begin{figure*}[t]
    \centering
    \begin{subfigure}{0.42\linewidth}
      \hspace*{-0.5cm}
        \centering
        \includegraphics[height=5.5cm]{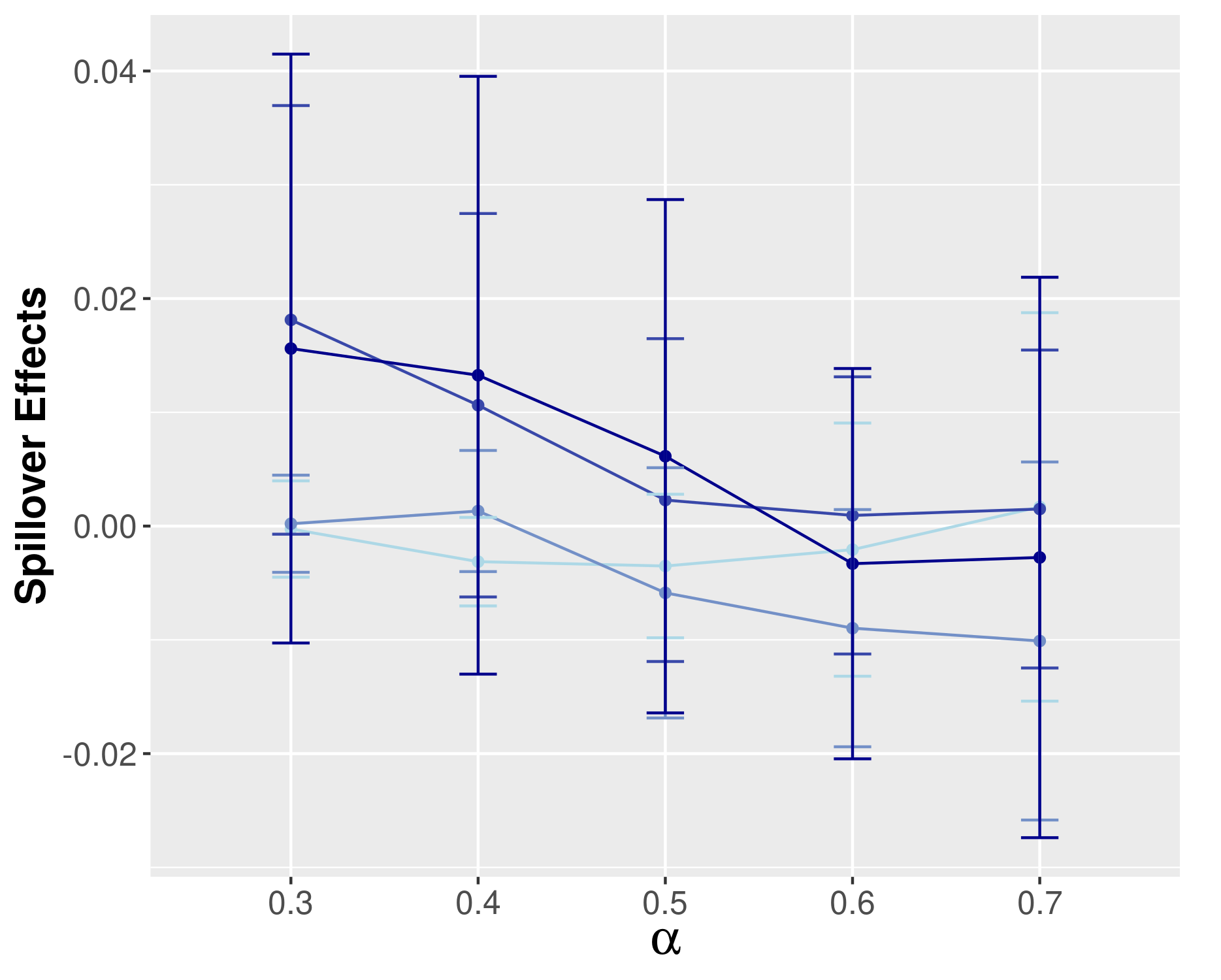}
         \caption{$\rma=0$, $h=2$}
         \label{img15}
    \end{subfigure}
    \hfill
    \begin{subfigure}{0.52\linewidth}
        \centering
          \includegraphics[height=5.5cm]{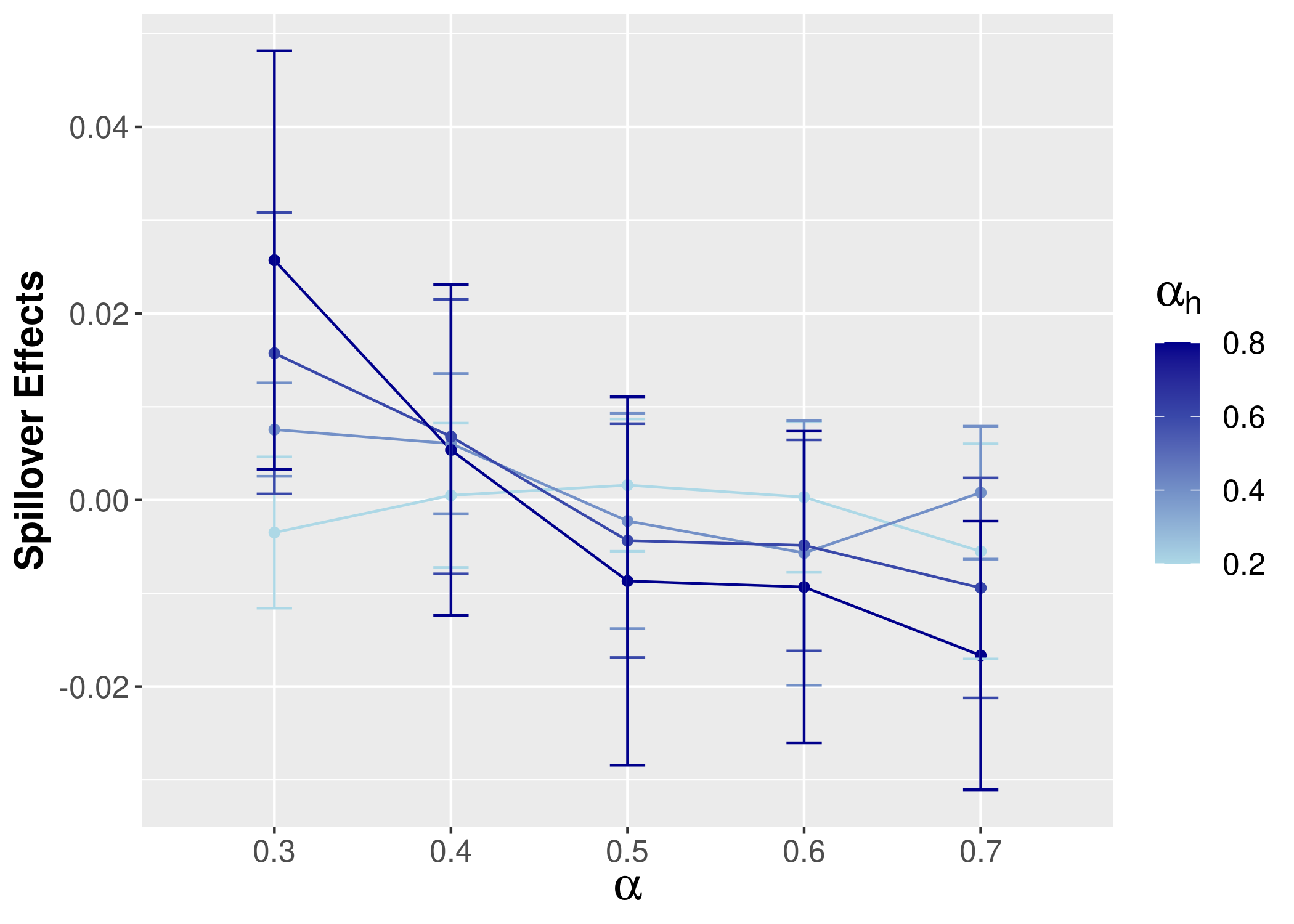}
        \caption{$\rma=1$, $h=2$}
        \label{img16}
    \end{subfigure}
        \caption{Second-order spillover effects of the maternal and child health intervention estimated with the Hajek estimator. Point estimates and   95\% confidence intervals are reported for the untreated (left) and for the treated (right), and for different values of $\alpha$ (x-axis) and $\alpha_h$ (colors), assuming the interference set as the subset of individuals reachable through a finite path. }
        \label{fig:app2}
\end{figure*}
Figure \ref{fig:app2} shows results for second-order spillover effects estimated with the Hajek estimator. Notably, these effects are consistently  larger for treated individuals compared to untreated, maintaining the pattern observed for first-order effects. However, spillover patterns with respect to $\alpha$ differ:
for both treated and untreated individuals, second-order spillover effects are strongest at low $\alpha$ levels and decline as $\alpha$ increases.
 At high $\alpha_h$ (0.8) and low $\alpha$ (0.3), spillover effects reach approximately 2\% for untreated individuals and 2.7\% for treated individuals. However, these effects diminish substantially as $\alpha$ increases, approaching zero for untreated individuals and even becoming negative for treated individuals.
 At lower $\alpha_h$ values (0.2, 0.4), effects remain close to zero across all $\alpha$ levels for both groups.
 This declining pattern can be explained through a substitution mechanism:
 with low $\alpha$
 individuals have few treated first-order contacts, creating a knowledge gap that second-order neighbors can fill.
 However, as $\alpha$ increases, there are more treated first-order neighbors who can provide direct information, as well as more treated individuals in the rest of the community giving rise to collective discussions,  rendering second-order spillovers redundant.

 Surprisingly, for treated individuals at low $\alpha$ levels (0.3-0.4), second-order spillover effects are comparable to or exceed first-order effects, which appears counterintuitive. This counterintuitive finding may be explained by the information transmission process. When direct friends discuss health topics with an individual, they may share comprehensive but relatively unfiltered information, including both essential and non-essential elements. In contrast, knowledge transmitted from second-order friends undergoes an additional filtering stage: the intermediate neighbor processes and distills the information before passing it along, resulting in more condensed and refined knowledge that focuses on key concepts. Under low to medium $\alpha$, where treated individuals are relatively scarce among friends, this filtering mechanism is particularly valuable, and the curated knowledge from second-order connections becomes especially beneficial. This aligns with the cognitive load theory \citep{sweller1988cognitive}, which suggests that by receiving a more streamlined version of the information, people may experience less cognitive overload, allowing them to perform better in the survey. Also, Figure \ref{fig:app2} shows that the confidence intervals are generally widest at low levels of $\alpha$ for second-order spillover effects, indicating greater uncertainty when the intervention is sparse in the rest of the network, and tend to become narrower at higher $\alpha$ levels, particularly for treated individuals ($\rma=1$).

\begin{figure*}[t]
    \centering
    \begin{subfigure}{0.42\linewidth}
      \hspace*{-0.5cm}
        \centering
        \includegraphics[height=5.5cm]{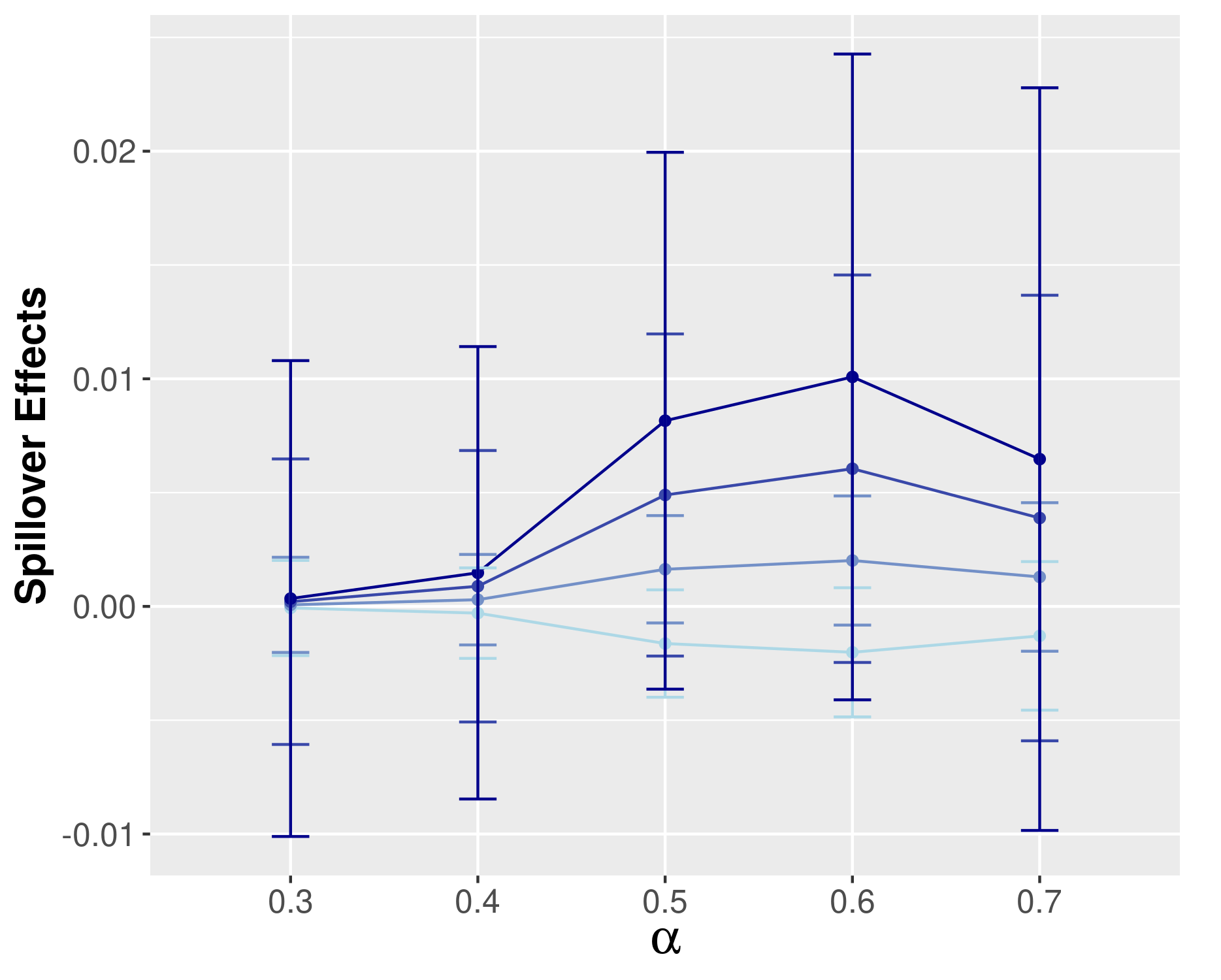}
         \caption{$\rma=0$, $h=1$}
    \end{subfigure}
    \hfill
    \begin{subfigure}{0.52\linewidth}
        \centering
          \includegraphics[height=5.5cm]{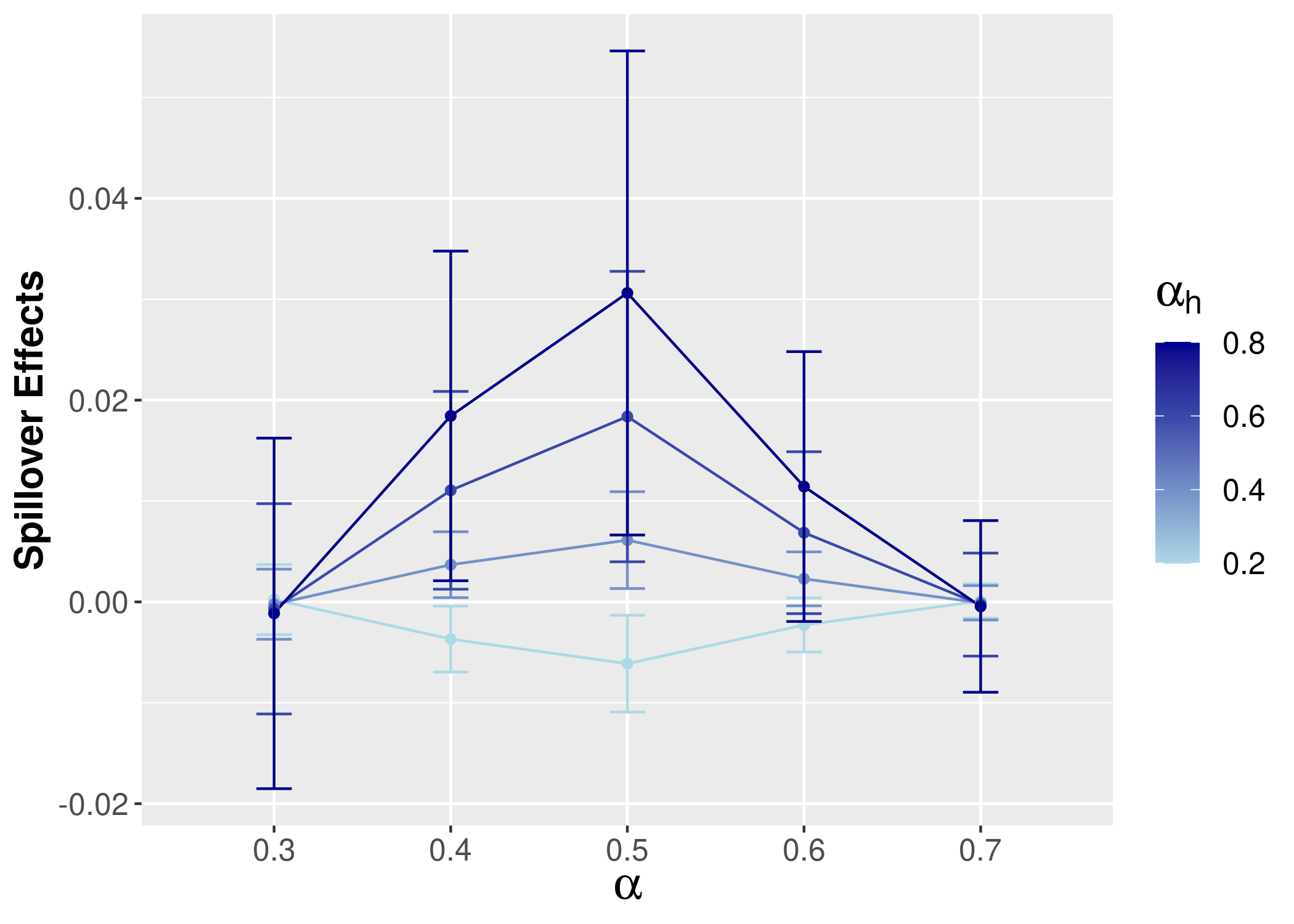}
        \caption{$\rma=1$, $h=1$}
    \end{subfigure}
        \caption{First-order spillover effects of the maternal and child health intervention estimated with the WLS estimator. Point estimates and 95\% confidence intervals are reported for the untreated (left) and for the treated (right), and for different values of $\alpha$ (x-axis) and $\alpha_h$ (colors), assuming the interference set as the subset of individuals reachable through a finite path. }
       \label{fig:WLS1}
\end{figure*}

\begin{figure*}[t]
    \centering
    \begin{subfigure}{0.42\linewidth}
      \hspace*{-0.5cm}
        \centering
        \includegraphics[height=5.5cm]{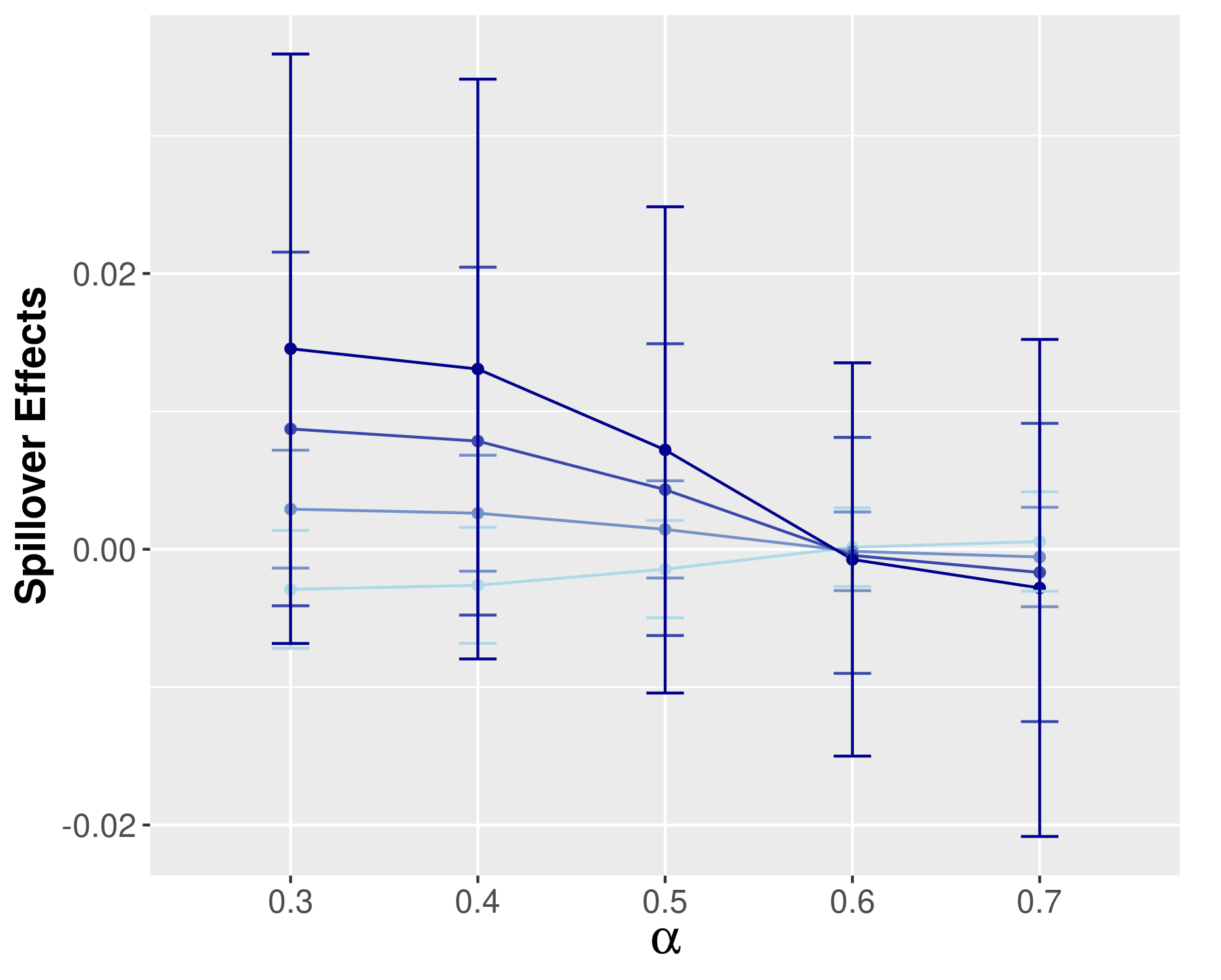}
         \caption{$\rma=0$, $h=2$}
    \end{subfigure}
    \hfill
    \begin{subfigure}{0.52\linewidth}
        \centering
          \includegraphics[height=5.5cm]{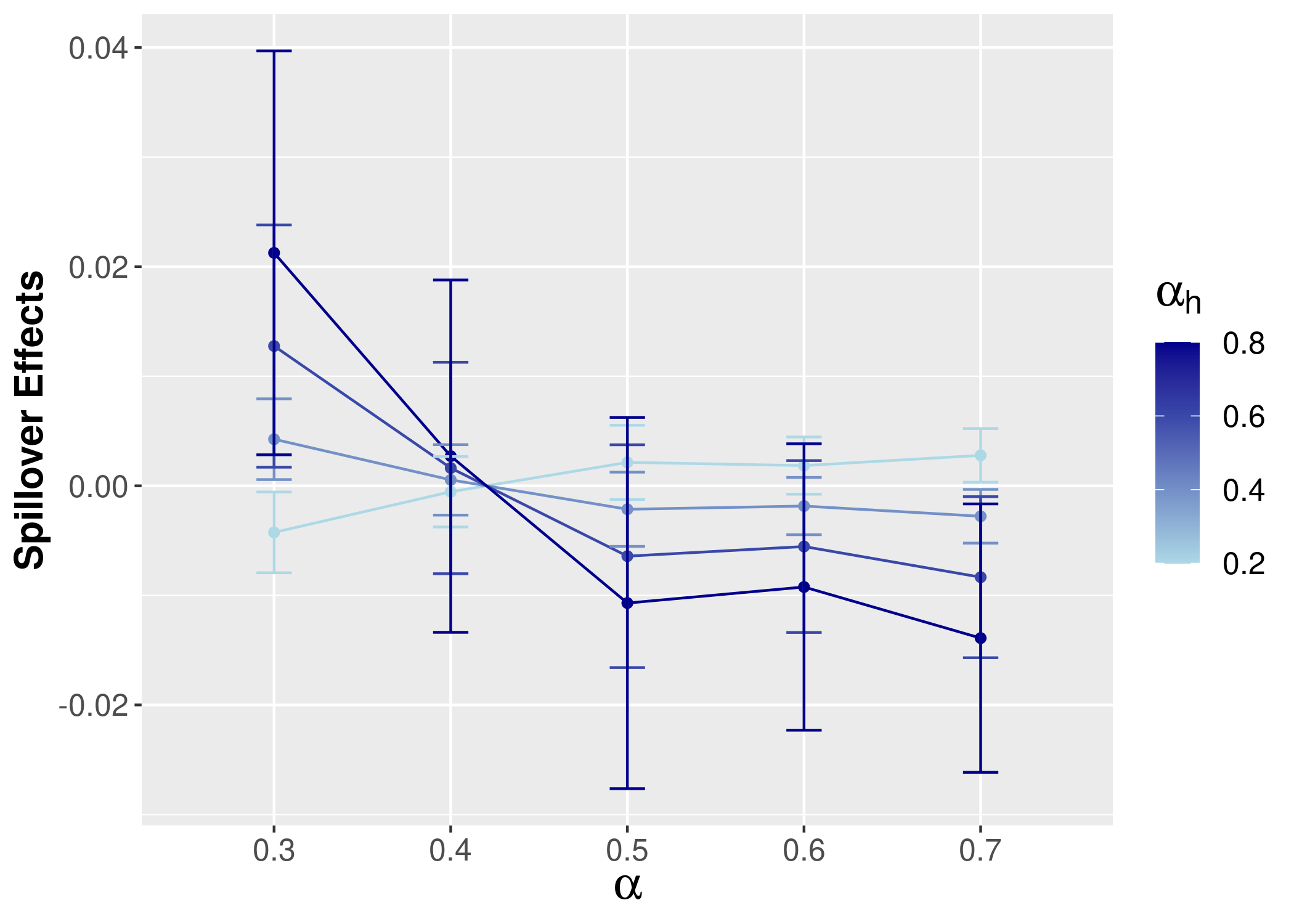}
        \caption{$\rma=1$, $h=2$}
    \end{subfigure}
        \caption{Second-order spillover effects of the maternal and child health intervention estimated with the WLS estimator. Point estimates and   95\% confidence intervals are reported for the untreated (left) and for the treated (right), and for different values of $\alpha$ (x-axis) and $\alpha_h$ (colors),  assuming the interference set as the subset of individuals reachable through a finite path. }
       \label{fig:WLS2}
\end{figure*}

Figure \ref{fig:WLS1} and \ref{fig:WLS2} show results from  the WLS estimator for first- and second-order spillover effects. A key distinction from the Hajek estimator is that the WLS model imposes a linear relationship with $\alpha_h$, forcing spillover effects to scale linearly with the difference $(\alpha_h - \alpha'_h)$ rather than varying freely. Despite this structural constraint, the overall patterns align closely with the Hajek results, showing the robustness of our conclusions.
For first-order spillover effects ($h=1$, Figure \ref{fig:WLS1}),
untreated individuals show relatively negligible effects close to zero across $\alpha$ levels, while treated individuals exhibit an inverted U-shaped pattern, increasing at low $\alpha$ level and then declining at higher levels.
For second-order spillover effects ($h=2$, Figure \ref{fig:WLS2}),
both groups show declining patterns as $\alpha$ increases, with treated individuals declining more sharply.

\section{Conclusion and Discussion}
\label{sec:conclusion}
In this article, we have addressed the problem of estimation of higher-order spillover effects from neighbors at different network distances under a generalized partial interference assumption, where each unit's interference set is a subset of their cluster.
We have introduced a new causal estimand, the $h$-order spillover effects for units at a network distance $h$, which relies on two hypothetical treatment allocations: one for the $h$-order neighborhood, and the other for the rest of the interference set. The $h$-order spillover effect is then defined by changing the parameter of the assignment to the $h$-order neighborhood, while keeping everything else fixed.
This estimand does not depend on the usage of exposure mapping functions and allows a conservative choice of the interference sets.

We have shown that the quantity that is identified under incorrect assumptions on the interference sets or exposure mapping function is the expectation of potential outcomes conditional on the exposure value of interest under the misspecified interference sets or exposure mapping functions.
Because here we are interested in estimating
spillover effects under the true interference structure, even if unknown, we see
the difference between this quantity and the expected potential outcomes under the true interference structure as a bias,
leading to biased estimators,
also confirmed by the simulation study.

To estimate the $h$-order spillover effects, we have developed Horvitz-Thompson, Hajek and weighted regression estimators, and shown their asymptotic properties. For large sample settings, Hajek and weighted regression estimators provide more stable results compared to the Horvitz-Thompson estimator, whose performance can be significantly impacted by extreme weights. In addition, thanks to parametric or semi-parametric assumptions, the weighted regression estimator benefits from smaller variance, although at the cost of potential misspecification of the relationship between the outcome and $\alpha_h$. We have also shown how the variance of these estimators depends on the combination between the estimand parameter $\alpha_h$ and $\alpha$ and the design parameters $\delta_1$ and $\delta_2$.

Our proposed estimators are designed to operate under relatively conservative assumptions on the interference set and do not require specifying an exposure mapping function. This flexibility makes them broadly applicable, particularly in settings where the structure of interference is complex or only partially understood. However, this generality typically comes at a cost: identification regions may be wider, and efficiency may be lower, reflecting the weaker assumptions and the limited structure imposed on potential outcomes. In contrast, estimators that rely on more restrictive assumptions about the interference set and exposure mapping function can often achieve improved efficiency, since they exploit stronger structural information about how treatments propagate and affect outcomes.
Given this bias–variance trade-off, we recommend adopting our conservative approach
when robustness to misspecification is prioritized over efficiency, and when the complexity of real-world social processes makes it imprudent to commit to restrictive exposure mapping or a narrowly defined interference set. This is especially true when spillover effects are due to complex heterogeneous diffusion mechanisms, whose structure is not well understood and difficult to characterize,
and that may affect the scope of interference in various ways, also depending on the timing of data collection.
In addition, when the measured network may not fully capture the channels through which interference operates, researchers can leverage the cluster structure to adopt a conservative specification of the interference set—one that does not depend on a potentially misspecified network—while still defining and estimating well-posed $h$-order spillover effects relative to the network information at hand. When multiple networks are available and it is unclear which one drives interference, the interference set can be formed as the union of these networks, enabling the estimation of $h$-order spillover effects for each layer separately. This approach complements the perspective of \citet{wasserman1994social} by avoiding reliance on a specific exposure mapping function and allowing spillover effects to be studied across multiple network structures that may influence outcomes in different ways.

Our empirical application further demonstrates the practical relevance of the proposed framework by uncovering meaningful first- and second-order spillover effects in the Honduras maternal and child health intervention. Using our proposed estimators, we find that treated individuals benefit substantially more from peer spillovers than untreated individuals, reflecting stronger absorptive capacity and higher baseline preparedness. First-order spillovers exhibit an inverted U-shape, peaking at moderate treatment saturation and declining at higher dosages due to information saturation within the community. Second-order spillovers are strongest when treatment is scarce—where indirect contacts fill informational gaps—and diminish as direct treated neighbors become more prevalent and information flowing from second-order neighbors becomes redundant.
Together, these results highlight the importance of modeling interference flexibly and capturing higher-order network pathways, as both the magnitude and sign of spillover effects depend critically on community-wide treatment intensity and network distance.

Future methodological work can further expand the flexibility and applicability of our framework by developing estimators that more efficiently incorporate structural information about networks while preserving robustness to misspecification. In particular, extending the approach to accommodate network uncertainty, dynamic or time-varying interference, and multivariate or continuous treatments would substantially broaden its relevance for modern field experiments. Advances in semi-parametric and machine-learning–assisted estimation may also offer opportunities to improve efficiency without imposing restrictive exposure mappings. Moreover, identifying optimal strategies for selecting or learning interference sets, especially in large-scale or partially observed networks, remains an open challenge with both theoretical and practical implications.

\section*{Funding}
This work was supported by NIH grants R01MH134715 and R01AG081814.

\appendix
\setcounter{equation}{0}
\setcounter{figure}{0}
\setcounter{table}{0}
\renewcommand{\theequation}{\thesection.\arabic{equation}}
\renewcommand{\thefigure}{\thesection.\arabic{figure}}
\renewcommand{\thetable}{\thesection.\arabic{table}}


\section{Theoretical Derivations and Proofs}
\label{app:theory}

\setcounter{equation}{0}
\setcounter{figure}{0}
\setcounter{table}{0}

\subsection{Identification under correct and incorrect interference assumptions: proofs and additional results}
\label{app:identification_proofs}

\begin{proof}[Proof of Proposition 1]\label{proof:prop_1}
Assumption 2 (treatment unconfoundedness) and Assumption 6 (unconfounded exposure mapping) imply $Y_{ij}(\rma,\mathbf{g}) \indep (A_{ij}, \mathbf{G}_{ij})$.
Then, under this independence, and Assumptions 3 and 5, we have:
\begin{align*}
\E[Y_{ij}(\rma,\mathbf{g})] &= \E[\E[Y_{ij}(\rma,\mathbf{g})|A_{ij}=\rma, \mathbf{G}_{ij}=\mathbf{g}]] \quad (Y_{ij}(\rma,\mathbf{g}) \indep (A_{ij}, \mathbf{G}_{ij}))\\
&= \E[\E[Y_{ij}|A_{ij}=\rma, \mathbf{G}_{ij}=\mathbf{g}]]  \quad \text{(Assumptions 3 and 5)}\\
&= \E[Y_{ij}|A_{ij}=\rma, \mathbf{G}_{ij}=\mathbf{g}]
\end{align*}
\end{proof}

\begin{proof}[Proof of Proposition 2]\label{proof:prop_2}
By the law of total expectation:
{\footnotesize
$$
\begin{aligned}
&E[Y_{ij} \mid A_{ij} =\rma, G^{*}_{ij}=\mathbf{g}]\\
&=\sum_{\mathbf{a}_{\mathcal{I}^{*}_{ij} } \in  \mathcal{A}_{\mathcal{I}^{*}_{ij} }  } \sum_{\mathbf{a}_{\mathcal{I}_{ij} \setminus \mathcal{I}^{*}_{ij} } \in  \mathcal{A}_{\mathcal{I}_{ij} \setminus \mathcal{I}_{ij}^{*} }}E(Y_{ij} \mid  A_{ij}=a, G^{*}_{ij}=\mathbf{g}, A_{\mathcal{I}^{*}_{ij}}=\mathbf{a}_{\mathcal{I}^{*}_{ij} },A_{\mathcal{I}_{ij} \setminus \mathcal{I}_{ij}^{*} }=\mathbf{a}_{\mathcal{I}_{ij} \setminus \mathcal{I}_{ij}^{*} }) \\
&\times P(A_{\mathcal{I}_{ij}^{*}}=\mathbf{a}_{\mathcal{I}_{ij}^{*} }, A_{\mathcal{I}_{ij} \setminus \mathcal{I}_{ij}^{*} }=\mathbf{a}_{\mathcal{I}_{ij} \setminus \mathcal{I}^{*}_{ij} }\mid A_{ij}=\rma, G^{*}_{ij}=\mathbf{g})\\
&=\sum_{\mathbf{a}_{\mathcal{I}^{*}_{ij} } \in  \mathcal{A}_{\mathcal{I}^{*}_{ij} }  } \sum_{\mathbf{a}_{\mathcal{I}_{ij} \setminus \mathcal{I}^{*}_{ij} } \in  \mathcal{A}_{\mathcal{I}_{ij} \setminus \mathcal{I}_{ij}^{*} }}E(Y_{ij} \mid  A_{ij}=a, \mathbf{g}^{*}(\mathbf{a}_{\mathcal{I}^{*}_{ij}})=\mathbf{g}, A_{\mathcal{I}^{*}_{ij}}=\mathbf{a}_{\mathcal{I}^{*}_{ij} },A_{\mathcal{I}_{ij} \setminus \mathcal{I}_{ij}^{*} }=\mathbf{a}_{\mathcal{I}_{ij} \setminus \mathcal{I}_{ij}^{*} }) \\
&\times P(A_{\mathcal{I}_{ij}^{*}}=\mathbf{a}_{\mathcal{I}_{ij}^{*} }, A_{\mathcal{I}_{ij} \setminus \mathcal{I}_{ij}^{*} }=\mathbf{a}_{\mathcal{I}_{ij} \setminus \mathcal{I}^{*}_{ij} }\mid A_{ij}=\rma, \mathbf{g}^{*}(\mathbf{a}_{\mathcal{I}^{*}_{ij}})=\mathbf{g})\quad \text{(using the definition of $G^{*}_{ij}=g^{*}(\cdot)$)}\\
&=
\sum_{\substack{ \mathbf{a}_{\mathcal{I}^{*}_{ij} }
:  \gstar (\mathbf{a}_{\mathcal{I}^{*}_{ij} })=\mathbf{g}
}}
\,\,\,\sum_{\mathbf{a}_{\mathcal{I}_{ij} \setminus \mathcal{I}^{*}_{ij} }
}
\E\bigg[Y_{ij} \bigg(A_{ij}=\rma, \g(\mathbf{A}_{\mathcal{I}_{ij}^{*}}=\mathbf{a}_{\mathcal{I}_{ij}^{*} }, \mathbf{A}_{\mathcal{I}_{ij} \setminus \mathcal{I}_{ij}^{*} }=\mathbf{a}_{\mathcal{I}_{ij} \setminus \mathcal{I}_{ij}^{*} })\bigg)\bigg] \\
&\times P(A_{\mathcal{I}_{ij}^{*}}=\mathbf{a}_{\mathcal{I}_{ij}^{*} }, A_{\mathcal{I}_{ij} \setminus \mathcal{I}_{ij}^{*} }=\mathbf{a}_{\mathcal{I}_{ij} \setminus \mathcal{I}^{*}_{ij} }\mid A_{ij}=\rma,  \mathbf{G}^*_{ij}=\mathbf{g})
\quad \text{(Assumptions 3 and 6)}
\end{aligned}
$$
}
\end{proof}
Corollary 1 and Corollary 2 are special cases of Proposition 2. Therefore, we do not report the specific proofs here.

 \subsection{Misspecified Interference Sets}
  \label{app:overspecification}
We now make the result of Proposition 2 in the main text more general, allowing for the misspecified interference set $\mathcal{I}^{*}_{ij}$ to be larger than the true set $\mathcal{I}_{ij}$ or only partially overlap.
\begin{proposition}[Misspecified Interference Assumptions]
Let $\mathbf{A}$ be determined by a randomized design such that
Assumptions 2 and 3 hold.
Let $\mathcal{I}_{ij} \subseteq  \mathcal{N}_i \backslash  \left\{ij\right\} $ be a subset of the sub-sample of cluster $i$, and let $\mathbf{g}(\cdot): \mathcal{A}(|\mathcal{I}_{ij}|) \rightarrow \mathbb{R}\mystrut^p$ such that Assumptions 5 and 6 hold.
Let $\mathcal{I}^{*}_{ij}$ be a second   set such that $\mathcal{I}^{*}_{ij} \cap \mathcal{I}_{ij}\neq \emptyset$, and let $\mathbf{g}^{*}(\cdot): \mathcal{A}(|\mathcal{I}^{*}_{ij}|) \rightarrow \mathbb{R}\mystrut^d$ be a second unconfounded exposure mapping function for $Y_{ij}(\mathbf{a})$ (Assumption 6 holds), and $\mathbf{G}^*_{ij}=\mathbf{g}^*(\mathbf{A}_{\mathcal{I}^*_{ij}})$.
 Then, the following equality holds:
$$
\begin{aligned}
\E[Y_{ij} \mid A_{ij}
=\rma, \mathbf{G}^*_{ij}=\mathbf{g}]=
\sum_{ \mathbf{a}_{\mathcal{I}_{ij} }
}
\,\,\,\sum_{\mathbf{a}_{\mathcal{I}^{*}_{ij} \setminus \mathcal{I}_{ij} }
}
&\E\bigg[Y_{ij} \bigg(A_{ij}=\rma, \g(\mathbf{A}_{\mathcal{I}_{ij}}=\mathbf{a}_{\mathcal{I}_{ij} } )\bigg)\bigg] \\
&\times P(A_{\mathcal{I}_{ij}}=\mathbf{a}_{\mathcal{I}_{ij} }, A_{\mathcal{I}^{*}_{ij} \setminus \mathcal{I}_{ij}}=\mathbf{a}_{\mathcal{I}^{*}_{ij} \setminus \mathcal{I}_{ij} }\mid A_{ij}=\rma,  \mathbf{G}^*_{ij}=\mathbf{g})\\
\end{aligned}
$$
\end{proposition}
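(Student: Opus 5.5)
The plan mirrors the proof of Proposition~2, with the roles of the sets rearranged. The conditioning event $\{A_{ij}=\rma,\mathbf{G}^*_{ij}=\mathbf{g}\}$ is measurable with respect to $(A_{ij},\mathbf{A}_{\mathcal{I}^*_{ij}})$. So I will partition this event according to the full vector $(\mathbf{A}_{\mathcal{I}_{ij}},\mathbf{A}_{\mathcal{I}^*_{ij}\setminus\mathcal{I}_{ij}})$, which jointly determines $\mathbf{A}_{\mathcal{I}^*_{ij}\cup\mathcal{I}_{ij}}$.

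\textbf{Step 1 (total expectation).} Apply the law of total expectation:
\begin{align*}
\E[Y_{ij}\mid A_{ij}=\rma,\mathbf{G}^*_{ij}=\mathbf{g}]
=\sum_{\mathbf{a}_{\mathcal{I}_{ij}}}\sum_{\mathbf{a}_{\mathcal{I}^*_{ij}\setminus\mathcal{I}_{ij}}}
&\E\big[Y_{ij}\mid A_{ij}=\rma,\mathbf{G}^*_{ij}=\mathbf{g},\mathbf{A}_{\mathcal{I}_{ij}}=\mathbf{a}_{\mathcal{I}_{ij}},\mathbf{A}_{\mathcal{I}^*_{ij}\setminus\mathcal{I}_{ij}}=\mathbf{a}_{\mathcal{I}^*_{ij}\setminus\mathcal{I}_{ij}}\big]\\
&\times P\big(\mathbf{A}_{\mathcal{I}_{ij}}=\mathbf{a}_{\mathcal{I}_{ij}},\mathbf{A}_{\mathcal{I}^*_{ij}\setminus\mathcal{I}_{ij}}=\mathbf{a}_{\mathcal{I}^*_{ij}\setminus\mathcal{I}_{ij}}\mid A_{ij}=\rma,\mathbf{G}^*_{ij}=\mathbf{g}\big).
\end{align*}
The conditional probability vanishes automatically for configurations with $\mathbf{g}^*(\mathbf{a}_{\mathcal{I}^*_{ij}})\neq\mathbf{g}$. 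This is why the statement can sum over all configurations without restricting to those compatible with $\mathbf{g}$. Terms with zero probability are dropped, so the inner conditional expectations are well defined by positivity (Assumption~\ref{assump:positivity}).

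\textbf{Step 2 (consistency and the true exposure mapping).} Under Assumption~\ref{assump:GeneralizedEM} with the true $\mathcal{I}_{ij}$ and $\mathbf{g}(\cdot)$, together with consistency (Assumption~\ref{assump:consistency}), the observed outcome on each conditioning event equals $Y_{ij}(\rma,\mathbf{g}(\mathbf{a}_{\mathcal{I}_{ij}}))$. In particular, it does not depend on $\mathbf{a}_{\mathcal{I}^*_{ij}\setminus\mathcal{I}_{ij}}$, since those units lie outside the true interference set. So the inner conditional expectation becomes $\E[Y_{ij}(\rma,\mathbf{g}(\mathbf{a}_{\mathcal{I}_{ij}}))\mid\cdot]$.

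\textbf{Step 3 (drop the conditioning).} Remove the conditioning using treatment unconfoundedness (Assumption~\ref{assump:randomization}). The conditioning event is a function of $\mathbf{A}$ alone, and $Y_{ij}(\mathbf{a})\indep\mathbf{A}$, so $\E[Y_{ij}(\rma,\mathbf{g}(\mathbf{a}_{\mathcal{I}_{ij}}))\mid\cdot]=\E[Y_{ij}(\rma,\mathbf{g}(\mathbf{a}_{\mathcal{I}_{ij}}))]$. Substituting this back yields the displayed formula.

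\textbf{Main obstacle.} The delicate step is justifying Step 3 cleanly. The conditioning event involves $\mathbf{G}^*_{ij}$ and $\mathbf{A}_{\mathcal{I}^*_{ij}\setminus\mathcal{I}_{ij}}$, which are outside the true interference structure. I would first note that all of these are deterministic functions of $\mathbf{A}$, so Assumption~\ref{assump:randomization} alone suffices. Assumption~\ref{assump:unc} for $\mathbf{g}^*$ then serves only as a redundant safeguard, and I would invoke it to mirror the proof of Proposition~2. The remaining care is bookkeeping. When $\mathcal{I}^*_{ij}\not\supseteq\mathcal{I}_{ij}$, the sum over $\mathbf{a}_{\mathcal{I}_{ij}}$ ranges over coordinates both inside and outside $\mathcal{I}^*_{ij}$, and I must check that the partition is exhaustive and disjoint. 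The nonempty-intersection hypothesis is not actually needed for the algebra.
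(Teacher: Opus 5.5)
Your proposal is correct and follows essentially the same route as the paper, which gives no separate proof of this appendix result but relies on the argument for Proposition~\ref{prop:incorrect_interference}: total expectation over the treatment configurations on $\mathcal{I}_{ij}\cup\mathcal{I}^{*}_{ij}$, then consistency together with Assumption~\ref{assump:GeneralizedEM} to replace $Y_{ij}$ by $Y_{ij}(\rma,\g(\mathbf{a}_{\mathcal{I}_{ij}}))$ on each cell, then unconfoundedness to drop the conditioning. Your remarks are also accurate: Assumption~\ref{assump:randomization} already suffices for the last step once the sets $\mathcal{I}_{ij},\mathcal{I}^{*}_{ij}$ are held fixed (as the displayed sum implicitly requires), and the nonempty-intersection hypothesis plays no role in the algebra.
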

For simplicity, let us consider the case of correct exposure mapping function, i.e., $\g^{*}(\cdot)=\g(\cdot)$, such as the proportion of treated units. Then, the identifying quantity is an average of the potential outcomes where the subset of $\mathcal{I}_{ij}$ overlapping with $\mathcal{I}^{*}_{ij}$ is determined by the value $\mathbf{g}$ that must hold in $\mathcal{I}^{*}_{ij}$. For instance, given $\mathbf{g}=1$, then the treatments in the subset $\mathcal{I}_{ij} \setminus \mathcal{I}^{*}_{ij}$ must be all ones,  and the distribution of the treatment in the complement set depends on the treatment assignment mechanism and its dependency.

\subsection{Proof of the unbiasedness of the Horvitz-Thompson estimator}
\label{app:ht_unbiasedness}
\begin{proof}
Recall the weight defined in Section 4.2:
$$w_{ij}^h(\alpha, \alpha_h) = \frac{P_{\alpha_{h}}\left(\mathbf{A}_{\mathcal{C}^{h}_{ij}}\right) P_{\alpha}\left(\mathbf{A}_{\mathcal{C}_{i\backslash(j, \mathcal{C}^{h}_{ij})}}\right)}{P_{\Delta}\left(\mathbf{A}_{ij,\mathcal{I}_{ij}}\right)}$$

We now show that the Horvitz-Thompson estimator is unbiased for $\bar{Y}^{h}\left(\rma, \alpha, \alpha_{h}\right)$:

{\allowdisplaybreaks
\begin{align*}
&E_{G_0,\mathbf{A}} \left[ \widehat{Y}^{h}_{\text{HT}}\left(\rma, \alpha, \alpha_{h}\right) \right]\\
&= E_{G_0,\mathbf{A}} \left[ \frac{1}{I} \sum_{i=1}^{I}\frac{1}{N^{h}_{i}} \sum_{j \in \mathcal{J}^{h}_{i}} w_{ij}^h(\alpha, \alpha_h)\mathbbm{1}\left(A_{ij}=\rma\right) Y_{ij} \right] \\
&= \frac{1}{I} \sum_{i=1}^{I} E_{G_0} \left[ \frac{1}{N^{h}_{i}} \sum_{j \in \mathcal{J}^{h}_{i}} E_{\mathbf{A}} \left[ w_{ij}^h(\alpha, \alpha_h) \mathbbm{1}\left(A_{ij}=\rma\right) Y_{ij} \right] \right]
\quad \text{(law of iterated expectations)}\\
&= \frac{1}{I} \sum_{i=1}^{I} E_{G_0} \left[ \frac{1}{N^{h}_{i}} \sum_{j \in \mathcal{J}^{h}_{i}} E_{\mathbf{A}} \left[ w_{ij}^h(\alpha, \alpha_h)\mathbbm{1}\left(A_{ij}=\rma\right) Y_{ij}\left(\mathbf{A}_{ij,\mathcal{I}_{ij}}\right) \right] \right]
\quad \text{(Assumption 3)}
\end{align*}

For the inner expectation over $\mathbf{A}$, consider unit $j$ in cluster $i$ where $j \in \mathcal{J}^{h}_{i}$:

\begin{align*}
&E_{\mathbf{A}} \left[ w_{ij}^h(\alpha, \alpha_h)\mathbbm{1}\left(A_{ij}=\rma\right) Y_{ij}\left(\mathbf{A}_{ij,\mathcal{I}_{ij}}\right) \right]\\
&= \sum_{\mathbf{a}_{ij,\mathcal{I}_{ij}}} \frac{P_{\alpha_{h}}\left(\mathbf{a}_{\mathcal{C}^{h}_{ij}}\right) P_{\alpha}\left(\mathbf{a}_{\mathcal{C}_{i\backslash(j, \mathcal{C}^{h}_{ij})}}\right)}{P_{\Delta}\left(\mathbf{a}_{ij,\mathcal{I}_{ij}}\right)} \mathbbm{1}\left(a_{ij}=\rma\right) Y_{ij}\left(\mathbf{a}_{ij,\mathcal{I}_{ij}}\right) P_{\Delta}\left(\mathbf{A}_{ij,\mathcal{I}_{ij}}=\mathbf{a}_{ij,\mathcal{I}_{ij}}\right)\\
&= \sum_{\mathbf{a}_{ \mathcal{C}^h_{ij}}\in \mathcal{A}_{ \mathcal{C}^h_{ij}}} \sum_{\mathbf{a}_{ \mathcal{C}_{i\backslash(j, \mathcal{C}^{h}_{ij})}}\in \mathcal{A}_{ \mathcal{C}_{i\backslash(j, \mathcal{C}^{h}_{ij})}}} P_{\alpha_{h}}\left(\mathbf{a}_{\mathcal{C}^{h}_{ij}}\right) P_{\alpha}\left(\mathbf{a}_{\mathcal{C}_{i\backslash(j, \mathcal{C}^{h}_{ij})}}\right) Y_{ij}\left(\rma, \mathbf{a}_{\mathcal{C}^{h}_{ij}}, \mathbf{a}_{\mathcal{C}_{i\backslash(j, \mathcal{C}^{h}_{ij})}}\right)\\
&= \overline{Y}_{ij}^{h}\left(\rma, \alpha, \alpha_{h}\right)
\end{align*}

Finally, we have:
\begin{align*}
&E_{G_0,\mathbf{A}} \left[ \widehat{Y}^{h}_{\text{HT}}\left(\rma, \alpha, \alpha_{h}\right) \right]\\
&= \frac{1}{I} \sum_{i=1}^{I} E_{G_0} \left[ \frac{1}{N^{h}_{i}} \sum_{j \in \mathcal{J}^{h}_{i}} \overline{Y}_{ij}^{h}\left(\rma, \alpha, \alpha_{h}\right) \right]\\
&= \frac{1}{I} \sum_{i=1}^{I} E_{G_0} \left[ \overline{Y}_i^{h}\left(\rma, \alpha, \alpha_{h}\right) \right]\\
&= E_{G_0} \left[ \overline{Y}_i^{h}\left(\rma, \alpha, \alpha_{h}\right) \right]
\quad \text{(Assumption 5, clusters are i.i.d.)}\\
&= \bar{Y}^{h}\left(\rma, \alpha, \alpha_{h}\right)
\end{align*}
}

\end{proof}

\subsection{Asymptotic properties of the Horvitz-Thompson estimator}
\label{app:ht_variance}
 Let $\boldsymbol{\theta}=(\theta_1, \theta_2)=(\bar{Y}^{h}(\rma,\alpha_h,\alpha), \bar{Y}^{h}(\rma,\alpha'_h,\alpha))$ be the vector of two population average potential outcomes with different $\alpha_h$. Let $\mathbf{O}_i = \{\mathbf{Y}_i, \mathbf{A}_i, \mathcal{C}^h_{ij}\}_{j=1}^{n_i}$ denote the observed data for cluster $i$. We can then construct estimating equations $\sum^{I}_i\boldsymbol{\psi}_{\text{HT}, i}(\mathbf{O}_i; \boldsymbol{\theta})=0$ as follows:

$$
\boldsymbol{\psi}_{\text{HT}, i}(\mathbf{O}_i; \boldsymbol{\theta})=\left(\begin{array}{c}
\psi_{\text{HT}, i,\alpha_h}(\mathbf{O}_i; \boldsymbol{\theta}) \\
\psi_{\text{HT}, i,\alpha'_h}(\mathbf{O}_i; \boldsymbol{\theta})
\end{array}\right)
$$
where
$$
\psi_{\text{HT}, i,\alpha_h}(\mathbf{O}_i; \boldsymbol{\theta}) = \widehat{Y}^{h}_{\text{HT},i}(\alpha_h, \alpha, \rma) - \theta_1,\quad  \psi_{\text{HT}, i,\alpha'_h}(\mathbf{O}_i; \boldsymbol{\theta}) = \widehat{Y}^{h}_{\text{HT},i}(\alpha'_h, \alpha, \rma) - \theta_2
$$

with
$$
\widehat{Y}^{h}_{\text{HT},i}(\alpha_h, \alpha, \rma) = \frac{1}{N^{h}_{i}} \sum_{j \in \mathcal{J}^{h}_{i}} w_{ij}^h(\alpha, \alpha_h)  \mathbbm{1}\left(A_{ij}=\rma\right) Y_{ij}
$$

The solution to $E_{G_o}[ \boldsymbol{\psi}_{\text{HT}, i}(\mathbf{O}_i; \boldsymbol{\theta})]= 0$
is
$$\boldsymbol{\theta} = \left(\begin{array}{c}
E_{G_o}[\frac{1}{N^{h}_{i}} \sum_{j \in \mathcal{J}^{h}_{i}} w_{ij}^h(\alpha, \alpha_h)  \mathbbm{1}\left(A_{ij}=\rma\right) Y_{ij}] \\
E_{G_o}[\frac{1}{N^{h}_{i}} \sum_{j \in \mathcal{J}^{h}_{i}} w_{ij}^h(\alpha, \alpha^{\prime}_h) \mathbbm{1}\left(A_{ij}=\rma\right) Y_{ij}]
\end{array}\right) = \left(\begin{array}{c}
\bar{Y}^{h}\left(\rma, \alpha, \alpha_{h}\right) \\
\bar{Y}^{h}\left(\rma, \alpha, \alpha^{\prime}_{h}\right)
\end{array}\right)$$

Notice that the solution to the first equation $\sum_{i=1}^{I}\psi_{\text{HT}, i,\alpha_h}(\mathbf{O}_i; \boldsymbol{\theta})=0$ is our Horvitz-Thompson estimator given treatment $\rma$ and hypothetical parameter $\alpha$ and $\alpha_h$:
$$
\hat{\theta}_1 = \widehat{Y}^{h}_{\text{HT}}(\rma, \alpha_{h}, \alpha)=\frac{1}{I} \sum_{i=1}^{I}\frac{1}{N^{h}_{i}} \sum_{j \in \mathcal{J}^{h}_{i}} w_{ij}^h(\alpha, \alpha_h) \mathbbm{1}\left(A_{ij}=\rma\right) Y_{ij}
$$

Similarly, the solution to the second equation is $\hat{\theta}_2 = \widehat{Y}^{h}_{\text{HT}}(\alpha'_{h}, \alpha, \rma)$.

Let $\hat{\boldsymbol{\theta}}=(\hat{\theta}_1,\hat{\theta}_2)$.
Since $\boldsymbol{\psi}_{\text{HT}, i}(\mathbf{O}_i, \boldsymbol{\theta})$ is monotone in $\boldsymbol{\theta}$, both $\sum^{I}_i \boldsymbol{\psi}_{\text{HT}, i}(\mathbf{O}_i, \boldsymbol{\theta})$ and $\int \boldsymbol{\psi}_{\text{HT}, i}(\mathbf{O}_i; \boldsymbol{\theta})$ are monotone in $\boldsymbol{\theta}$, which implies uniqueness of the roots and establishes $
\hat{\boldsymbol{\theta}} \xrightarrow{p} \boldsymbol{\theta}
$.

In our setting, we have $I$ independent, non-overlapping clusters. Therefore, under suitable regularity conditions and due to the unbiased estimating equations, as $I \rightarrow \infty$, we obtain the consistent and asymptotically normal sandwich-type estimator of the variance below.

From Theorem 5.4.1 of \citet{van2000asymptotic}, using Slutsky’s Theorem and Delta method when $I \rightarrow \infty$, we have that
$$
\sqrt{I}(\hat{\boldsymbol{\theta}}-\boldsymbol{\theta}) \xrightarrow{d} N(0, \bm{\Sigma}_{\text{HT}})
$$
where
$$
\boldsymbol{\Sigma}_{\text{HT}}=A_{\text{HT}}\left(\boldsymbol{\theta} \right)^{-1} B_{\text{HT}}\left(\boldsymbol{\theta} \right)\left[A_{\text{HT}}\left(\boldsymbol{\theta}\right)^{-1}\right]^T
$$
$$
A_{\text{HT}}(\boldsymbol{\theta}) = -E\left[\frac{\partial}{\partial \boldsymbol{\theta}^T} \boldsymbol{\psi}_{\text{HT}, i}(\mathbf{O}_i; \boldsymbol{\theta})\right] = \left(\begin{array}{cc}
1 & 0 \\
0 & 1
\end{array}\right)
$$
$$
B_{\text{HT}}(\boldsymbol{\theta}) = E[\boldsymbol{\psi}_{\text{HT}, i}(\mathbf{O}_i; \boldsymbol{\theta}) \boldsymbol{\psi}_{\text{HT}, i}(\mathbf{O}_i; \boldsymbol{\theta})^T]
$$

Since $A_{\text{HT}}(\boldsymbol{\theta})$ is an identity matrix, $\bm{\Sigma}_{\text{HT}} = B_{\text{HT}}(\boldsymbol{\theta})$. We can estimate it by taking the mean across different clusters, using the empirical average:
$$
\widehat{\boldsymbol{\Sigma}}_{\text{HT}}  = \widehat{B}_{\text{HT}} = \frac{1}{I} \sum_{i=1}^I \boldsymbol{\psi}_{\text{HT}, i}(\mathbf{O}_i; \hat{\boldsymbol{\theta}}) \boldsymbol{\psi}_{\text{HT}, i}(\mathbf{O}_i; \hat{\boldsymbol{\theta}})^T
$$

 The variance of the spillover effect $\widehat{\text{SE}}^{h}_{\text{HT}} = \hat{\theta}_1 - \hat{\theta}_2$ can then be derived as:
$$
\begin{aligned}
\widehat{\text{Var}}(\widehat{\text{SE}}^{h}_{\text{HT}}) &= \widehat{\text{Var}}(\hat{\theta}_1) + \widehat{\text{Var}}(\hat{\theta}_2) - 2\widehat{\text{Cov}}(\hat{\theta}_1, \hat{\theta}_2) \\
&= \frac{1}{I} \left[ \widehat{B}_{\text{HT},11} + \widehat{B}_{\text{HT},22} - 2\widehat{B}_{\text{HT},12} \right]
\end{aligned}
$$

where

$$
\widehat{B}_{\text{HT},11} = \frac{1}{I} \sum_{i=1}^I \left[\widehat{Y}^{h}_{\text{HT},i}(\alpha_h, \alpha, \rma)- \hat{\theta}_1\right]^2
$$

$$
\widehat{B}_{\text{HT},22} = \frac{1}{I} \sum_{i=1}^I \left[\widehat{Y}^{h}_{\text{HT},i}(\alpha^\prime_h, \alpha, \rma)- \hat{\theta}_2\right]^2
$$
$$
\widehat{B}_{\text{HT},12} = \frac{1}{I} \sum_{i=1}^I \left[\widehat{Y}^{h}_{\text{HT},i}(\alpha_h, \alpha, \rma) - \hat{\theta}_1\right]\left[\widehat{Y}^{h}_{\text{HT},i}(\alpha^\prime_h, \alpha, \rma) - \hat{\theta}_2\right]
$$

\subsection{Asymptotic properties  of the Hajek estimator}
\label{app:hajek_variance}
Similar to the proof for the Horvitz-Thompson estimator, let $\boldsymbol{\theta}=(\theta_1, \theta_2)=(\bar{Y}^{h}(\rma,\alpha,\alpha_h), \bar{Y}^{h}(\rma,\alpha,\alpha'_h))$ be the vector of two population average potential outcomes with different $\alpha_h$. Let $\mathbf{O}_i = \{\mathbf{Y}_i, \mathbf{A}_i, \mathcal{C}^h_{ij}\}_{j=1}^{n_i}$ denote the observed data for cluster $i$. We can then construct estimating equations $\sum^{I}_i\boldsymbol{\psi}_{\text{Hajek}, i}(\mathbf{O}_i; \boldsymbol{\theta})=0$ as follows:
$$
\boldsymbol{\psi}_{\text{Hajek}, i}(\mathbf{O}_i; \boldsymbol{\theta})=\left(\begin{array}{c}
\psi_{\text{Hajek}, i,\alpha_h}(\mathbf{O}_i; \boldsymbol{\theta}) \\
\psi_{\text{Hajek}, i,\alpha'_h}(\mathbf{O}_i; \boldsymbol{\theta})
\end{array}\right)
$$

where
$$
\psi_{\text{Hajek}, i,\alpha_h}(\mathbf{O}_i; \boldsymbol{\theta}) = \widehat{W}^{h}_{i}(\alpha_h,\alpha,\rma) - \widehat{N}^{h}_{i}(\alpha_h, \alpha,\rma)\theta_1
$$
$$
\psi_{\text{Hajek}, i,\alpha'_h}(\mathbf{O}_i; \boldsymbol{\theta}) = \widehat{W}^{h}_{i }(\alpha'_h,\alpha, \rma) - \widehat{N}^{h}_{i}(\alpha'_h,\alpha, \rma)\theta_2
$$

with
$$
\widehat{W}^{h}_{i }(\alpha_h,\alpha, \rma) = \sum_{j \in \mathcal{J}^{h}_{i}} \frac{P_{\alpha_{h}}\left(\mathbf{A}_{\mathcal{C}^{h}_{ij}}\right) P_{\alpha}\left(\mathbf{A}_{\mathcal{C}_{i\backslash(j, \mathcal{C}^{h}_{ij})}}\right)}{P_{\Delta}\left(\mathbf{A}_{ij,\mathcal{I}_{ij}}\right)} \mathbbm{1}\left(A_{ij}=\rma\right) Y_{ij}
$$
$$
\widehat{N}^{h}_{i}(\alpha_h,\alpha, \rma) = \sum_{j \in \mathcal{J}^{h}_{i}} \frac{P_{\alpha_{h}}\left(\mathbf{A}_{\mathcal{C}^{h}_{ij}}\right) P_{\alpha}\left(\mathbf{A}_{\mathcal{C}_{i\backslash(j, \mathcal{C}^{h}_{ij})}}\right)}{P_{\Delta}\left(\mathbf{A}_{ij,\mathcal{I}_{ij}}\right)} \mathbbm{1}\left(A_{ij}=\rma\right)
$$

Since $E_{G_0}[\widehat{N}^{h}_{i }(\alpha_h,\alpha, \rma)] = N^h_i$, true number of units with $h$-neighbors in cluster $i$, then the solution to $\int \boldsymbol{\psi}_{\text{Hajek}, i}(\mathbf{O}_i; \boldsymbol{\theta}) dG_0= 0$ is:
$$
\boldsymbol{\theta} = \left(\begin{array}{c}
\frac{E_{G_0}[\widehat{W}^{h}_{i }(\alpha_h,\alpha, \rma)]}{E_{G_0}[\widehat{N}^{h}_{i }(\alpha_h,\alpha, \rma)]} \\
\frac{E_{G_0}[\widehat{W}^{h}_{i }(\alpha'_h,\alpha, \rma)]}{E_{G_0}[\widehat{N}^{h}_{i }(\alpha'_h,\alpha, \rma)]}
\end{array}\right) = \left(\begin{array}{c}
\frac{E_{G_0}[\sum_{j \in \mathcal{J}^{h}_{i}} w_{ij}^h(\alpha, \alpha_h)\mathbbm{1}\left(A_{ij}=\rma\right) Y_{ij}]}{N^h_i} \\
\frac{E_{G_0}[\sum_{j \in \mathcal{J}^{h}_{i}} w_{ij}^h(\alpha, \alpha^{\prime}_h)\mathbbm{1}\left(A_{ij}=\rma\right) Y_{ij}]}{N^h_i}
\end{array}\right) =\left(\begin{array}{c}
\bar{Y}^{h}\left(\rma, \alpha, \alpha_{h}\right) \\
\bar{Y}^{h}\left(\rma, \alpha, \alpha^{\prime}_{h}\right)
\end{array}\right)
$$

Notice that the solution to the first equation $\sum_{i=1}^{I}\psi_{\text{Hajek},i,\alpha_h}(\mathbf{O}_i; \boldsymbol{\theta})=0$ is our Hajek estimator given treatment $\rma$ and hypothetical parameters $\alpha$ and $\alpha_h$:
$$
\hat{\theta}_1 = \widehat{Y}^{h}_{\text{Hajek}}(\rma, \alpha, \alpha_{h})=\frac{\sum_{i=1}^{I} \widehat{W}^{h}_{i}(\alpha_h,\alpha, \rma)}{\sum_{i=1}^{I} \widehat{N}^{h}_{i}(\alpha_h,\alpha, \rma)}
$$

Similarly, the solution to the second equation $\sum_{i=1}^{I}\psi_{Hajek, i,\alpha'_h}(\mathbf{O}_i; \boldsymbol{\theta})=0$ is the Hajek estimator given treatment $\rma$ and hypothetical parameters $\alpha$ and $\alpha'_h$: $\hat{\theta}_2 = \widehat{Y}^{h}_{\text{Hajek}}(\rma, \alpha, \alpha'_{h})$. This limit $\boldsymbol{\theta}$ corresponds to the individual-weighted average potential outcome, which is the estimand defined in \citet{liu2016inverse}. As discussed in the Appendix of \citet{papadogeorgou2019causal}, it coincides with the cluster-weighted estimand $\bar{Y}^h$ when the cluster size $N_i^h$ is uncorrelated with the cluster-level average outcome.

Let $\hat{\boldsymbol{\theta}}=(\hat{\theta}_1
,\hat{\theta}_2)$. Under suitable regularity conditions, by Theorem 5.4.2 of \citet{van2000asymptotic}, $
\hat{\boldsymbol{\theta}} \xrightarrow{p} \boldsymbol{\theta}
$.  From Theorem 5.4.1 of \citet{van2000asymptotic}, using Slutsky's Theorem and the Delta method as $I \rightarrow \infty$, we have:
$$
\sqrt{I}(\hat{\boldsymbol{\theta}}-\boldsymbol{\theta}) \xrightarrow{d} N(0, \bm{\Sigma}_{\text{Hajek}})
$$

where $\boldsymbol{\Sigma}_{\text{Hajek}}=A_{\text{Hajek}}(\boldsymbol{\theta})^{-1} B_{\text{Hajek}}(\boldsymbol{\theta})[A_{\text{Hajek}}(\boldsymbol{\theta})^{-1}]^T$, with:

$$
A_{\text{Hajek}}(\boldsymbol{\theta}) = -E\left[\frac{\partial}{\partial \boldsymbol{\theta}^T} \boldsymbol{\psi}_{\text{Hajek}, i}(\mathbf{O}_i; \boldsymbol{\theta})\right] = \left(\begin{array}{cc}
E[\widehat{N}^{h}_{i}(\alpha_h,\alpha, \rma)] & 0 \\
0 & E[\widehat{N}^{h}_{i}(\alpha'_h,\alpha, \rma)]
\end{array}\right)
$$

$$
B_{\text{Hajek}}(\boldsymbol{\theta}) = E[\boldsymbol{\psi}_{\text{Hajek}, i}(\mathbf{O}_i; \boldsymbol{\theta}) \boldsymbol{\psi}_{\text{Hajek}, i}(\mathbf{O}_i; \boldsymbol{\theta})^T]
$$

We can estimate $\bm{\Sigma}_{\text{Hajek}}$ using the empirical counterparts $\widehat{A}_{\text{Hajek}}(\hat{\boldsymbol{\theta}})$ and $B_I(\hat{\boldsymbol{\theta}})$:

$$
\widehat{A}_{\text{Hajek}} = \left(\begin{array}{cc}
\widehat{A}_{\text{Hajek},11} & 0 \\
0 & \widehat{A}_{\text{Hajek},22}
\end{array}\right)
$$

where:
$$
\widehat{A}_{\text{Hajek},11}= \frac{1}{I}\sum_{i=1}^{I} \widehat{N}^{h}_{i}(\alpha_h,\alpha, \rma), \quad \widehat{A}_{\text{Hajek},22} = \frac{1}{I}\sum_{i=1}^{I} \widehat{N}^{h}_{i}(\alpha'_h,\alpha, \rma)
$$

$$
\widehat{B}_{\text{Hajek}} = \frac{1}{I} \sum_{i=1}^I \boldsymbol{\psi}_{\text{Hajek}, i}(\mathbf{O}_i; \hat{\boldsymbol{\theta}}) \boldsymbol{\psi}_{\text{Hajek}, i}(\mathbf{O}_i; \hat{\boldsymbol{\theta}})^T = \left(\begin{array}{cc}
\widehat{B}_{\text{Hajek},11} & \widehat{B}_{\text{Hajek},12} \\
\widehat{B}_{\text{Hajek},21} & \widehat{B}_{\text{Hajek},22}
\end{array}\right)
$$

where:
$$
\widehat{B}_{\text{Hajek},11} = \frac{1}{I} \sum_{i=1}^{I} \left[\widehat{W}^{h}_{i}(\alpha_h,\alpha, \rma) - \widehat{N}^{h}_{i}(\alpha_h,\alpha, \rma)\hat{\theta}_1\right]^2
$$
$$
\widehat{B}_{\text{Hajek},22} = \frac{1}{I} \sum_{i=1}^{I} \left[\widehat{W}^{h}_{i}(\alpha'_h,\alpha, \rma) - \widehat{N}^{h}_{i}(\alpha'_h,\alpha, \rma)\hat{\theta}_2\right]^2
$$
$$
\widehat{B}_{\text{Hajek},12} = \widehat{B}_{\text{Hajek},21} = \frac{1}{I} \sum_{i=1}^{I} \left[\widehat{W}^{h}_{i}(\alpha_h,\alpha, \rma) - \widehat{N}^{h}_{i}(\alpha_h,\alpha, \rma)\hat{\theta}_1\right]\left[\widehat{W}^{h}_{i}(\alpha'_h,\alpha, \rma) - \widehat{N}^{h}_{i}(\alpha'_h,\alpha, \rma)\hat{\theta}_2\right]
$$

The asymptotic covariance matrix is estimated by $\boldsymbol{\Sigma}_{\text{Hajek}} = \widehat{A}_{\text{Hajek}}^{-1} \widehat{B}_{\text{Hajek}} \widehat{A}^{-T}_{\text{Hajek}}$. Since $\widehat{{A}}_{\text{Hajek}}$ is diagonal, its inverse is $\widehat{\mathbf{A}}_{\text{Hajek}}^{-1} = \text{diag}(1/\widehat{A}_{Hajek,11}, 1/\widehat{A}_{Hajek,22})$. Expanding the sandwich estimator:
$$
\widehat{\boldsymbol{\Sigma}}_{\text{Hajek}} =
\begin{pmatrix}
\frac{\widehat{B}_{\text{Hajek},11}}{\widehat{A}_{\text{Hajek},11}^2} & \frac{\widehat{B}_{Hajek,12}}{\widehat{A}_{Hajek,11}\widehat{A}_{Hajek,22}} \\
\frac{\widehat{B}_{Hajek,21}}{\widehat{A}_{Hajek,22}\widehat{A}_{Hajek,11}} & \frac{\widehat{B}_{Hajek,22}}{\widehat{A}_{Hajek,22}^2}
\end{pmatrix}
$$
The spillover effect is the difference $\widehat{\text{SE}}^{h} = \hat{\theta}_1 - \hat{\theta}_2$. By the Delta method, its variance is the sum of the marginal variances minus twice the covariance:
$$
\widehat{\text{Var}}(\widehat{\text{SE}}^{h}_{\text{Hajek}}) = \widehat{\boldsymbol{\Sigma}}_{\text{Hajek},11} + \widehat{\Sigma}_{Hajek,22} - 2\widehat{\Sigma}_{Hajek,12} = \frac{1}{I} \left[ \frac{\widehat{B}_{Hajek,11}}{\widehat{A}_{Hajek,11}^{2}} + \frac{\widehat{B}_{Hajek,22}}{\widehat{A}_{Hajek,22}^{2}} - 2 \frac{\widehat{B}_{Hajek,12}}{\widehat{A}_{Hajek,11}\widehat{A}_{Hajek,22}} \right]
$$

\subsection{Asymptotic properties of the weighted regression estimator }
\label{app:wls_variance}
Similarly to the previous proof, we adopt an M-estimation approach to derive the asymptotic variance of the weighted regression estimator. Let $\boldsymbol{\gamma}^{\alpha} = (\gamma^{\alpha}_0, \gamma^{\alpha}_1, \gamma^{\alpha}_2, \gamma^{\alpha}_3)^T$ be the true coefficients in the marginal structural model:
$$E[Y_{ij}(\rma,\alpha,\alpha_h)] = \gamma^{\alpha}_0 + \gamma^{\alpha}_1 \rma + \gamma^{\alpha}_2 \alpha_h + \gamma^{\alpha}_3 \rma \cdot \alpha_h$$
Let $\mathbf{O}_i = \{\mathbf{Y}_i, \mathbf{A}_i, \mathbf{S}_i, \mathcal{C}^h_{ij}\}_{j=1}^{n_i}$ denote the observed data for cluster $i$, where $\mathbf{S}_i = (S_{ijk})_{j=1,\ldots,n_i; k=1,\ldots,K}$ contains the pseudo-covariates from the data expansion. Let $C_{ijk} = (1, A_{ij}, S_{ijk}, A_{ij} \cdot S_{ijk})^T$ where $S_{ijk}$ represents the $k$-th value of $\alpha_h$. Let $w^{h, \alpha}_{ijk}$ correspond to the weight defined in Section 4.2 for the $k$-th replicate. The weighted least squares estimator solves:
$$\hat{\boldsymbol{\gamma}}^{\alpha} = \arg \min_{\boldsymbol{\gamma}^{\alpha}} \sum_{i=1}^{I} \sum_{k=1}^{K} \sum_{j=1}^{n_i} (Y_{ij} - C_{ijk}^T \boldsymbol{\gamma}^{\alpha})^2 w^{h, \alpha}_{ijk}$$
Taking the derivative and setting it to zero yields the estimating equation $$\sum_{i=1}^{I} \boldsymbol{\psi}_{WLS, i}(\mathbf{O}_i; \boldsymbol{\gamma}^{\alpha}) = 0$$, where:
$$\boldsymbol{\psi}_{WLS, i}(\mathbf{O}_i; \boldsymbol{\gamma}^{\alpha}) = \sum_{k=1}^{K} \sum_{j=1}^{n_i} (Y_{ij} - C_{ijk}^T \boldsymbol{\gamma}^{\alpha}) C_{ijk} w^{h, \alpha}_{ijk}$$

The solution to $\int \boldsymbol{\psi}_{WLS, i}(\mathbf{O}_i; \boldsymbol{\gamma}^{\alpha}) dG_0 = 0$ is the true parameter $\boldsymbol{\gamma}^{\alpha}$, since:
$$E_{G_0}[\boldsymbol{\psi}_{WLS, i}(\mathbf{O}_i; \boldsymbol{\gamma}^{\alpha})] = E_{G_0}\left[\sum_{k=1}^{K} \sum_{j=1}^{n_i} (Y_{ij} - C_{ijk}^T \boldsymbol{\gamma}^{\alpha}) C_{ijk} w^{h, \alpha}_{ijk}\right] = 0$$
when $\boldsymbol{\gamma}^{\alpha}$ satisfies the marginal structural model.

Similar to before, under suitable regularity conditions, by Theorem 5.4.2 of \citet{van2000asymptotic}, $\hat{\boldsymbol{\gamma}}^{\alpha} \xrightarrow{p} \boldsymbol{\gamma}^{\alpha}$. From Theorem 5.4.1 of \citet{van2000asymptotic}, using Slutsky's Theorem and the Delta method as $I \rightarrow \infty$, we have:
$$\sqrt{I}(\hat{\boldsymbol{\gamma}}^{\alpha} - \boldsymbol{\gamma}^{\alpha}) \xrightarrow{d} N(0, \bm{\Sigma}_{\text{WLS}})$$

where $\bm{\Sigma}_{\text{WLS}} = A_{\text{WLS}}(\boldsymbol{\gamma}^{\alpha})^{-1} B_{\text{WLS}}(\boldsymbol{\gamma}^{\alpha})[A_{\text{WLS}}(\boldsymbol{\gamma}^{\alpha})^{-1}]^T$, with:
$$A_{\text{WLS}}(\boldsymbol{\gamma}^{\alpha}) = -E\left[\frac{\partial}{\partial (\boldsymbol{\gamma}^{\alpha})^T} \boldsymbol{\psi}_{WLS, i}(\mathbf{O}_i; \boldsymbol{\gamma}^{\alpha})\right] = E\left[\sum_{k=1}^{K} \sum_{j=1}^{n_i} C_{ijk} C_{ijk}^T w^{h, \alpha}_{ijk}\right]$$
$$B_{\text{WLS}}(\boldsymbol{\gamma}^{\alpha}) = E[\boldsymbol{\psi}_{WLS, i}(\mathbf{O}_i; \boldsymbol{\gamma}^{\alpha}) \boldsymbol{\psi}_{WLS, i}(\mathbf{O}_i; \boldsymbol{\gamma}^{\alpha})^T]$$

We can estimate $\bm{\Sigma}_{\text{WLS}}$ using the empirical counterparts $A_{\text{WLS}}(\hat{\boldsymbol{\gamma}}^{\alpha})$ and $B_{\text{WLS}}(\hat{\boldsymbol{\gamma}}^{\alpha})$:
$$A_{\text{WLS}}(\hat{\boldsymbol{\gamma}}^{\alpha}) = \frac{1}{I}\sum_{i=1}^{I} \sum_{k=1}^{K} \sum_{j=1}^{n_i} C_{ijk} C_{ijk}^T w^{h, \alpha}_{ijk}$$
$$
\begin{aligned}
B_{\text{WLS}}(\hat{\boldsymbol{\gamma}}^{\alpha}) &= \frac{1}{I} \sum_{i=1}^{I} \boldsymbol{\psi}_{WLS, i}(\mathbf{O}_i; \hat{\boldsymbol{\gamma}}^{\alpha}) \boldsymbol{\psi}_{WLS, i}(\mathbf{O}_i; \hat{\boldsymbol{\gamma}}^{\alpha})^T\\
&=\frac{1}{I} \sum_{i=1}^{I} \left[\sum_{k=1}^{K} \sum_{j=1}^{n_i} (Y_{ij} - C_{ijk}^T \hat{\boldsymbol{\gamma}}^{\alpha}) C_{ijk} w^{h, \alpha}_{ijk}\right]\left[\sum_{k=1}^{K} \sum_{j=1}^{n_i} (Y_{ij} - C_{ijk}^T \hat{\boldsymbol{\gamma}}^{\alpha}) C_{ijk} w^{h, \alpha}_{ijk}\right]^T
\end{aligned}
$$

Therefore:
$$\hat{\boldsymbol{\Sigma}}_{\text{WLS}} = \frac{1}{I} A_{\text{WLS}}(\hat{\boldsymbol{\gamma}}^{\alpha})^{-1} B_{\text{WLS}}(\hat{\boldsymbol{\gamma}}^{\alpha}) [A_{\text{WLS}}(\hat{\boldsymbol{\gamma}}^{\alpha})^{-1}]^T$$

For the estimated $h$-order spillover effect: $
\widehat{\text{SE}}_{\text{WLS}}(\rma, \alpha, \alpha_h, \alpha_h^{\prime},\hat{\boldsymbol{\gamma}}^{\alpha})= \hat{\gamma}_2 (\alpha_h -\alpha_h^{\prime}) + \hat{\gamma}_3 \rma (\alpha_h -\alpha_h^{\prime}) $
We can get its estimated variance as
$$
\begin{aligned}
\widehat{Var}(\widehat{\text{SE}}_{\text{WLS}}(a, \alpha, \alpha_h, \alpha_h^{\prime},\hat{\boldsymbol{\gamma}}^{\alpha}))&= (\alpha_h-\alpha^{'}_h)^2 \left[ \widehat{Var}(\hat{\gamma^{\alpha}_2} )+  \rma^2 \widehat{Var}(\hat{\gamma^{\alpha}_3} )+  2\rma  \widehat{Cov}(\hat{\gamma^{\alpha}_2}, \hat{\gamma^{\alpha}_3}) \right]\\
&= \frac{1}{I}(\alpha_h - \alpha'_h)^2 \left[\widehat{\boldsymbol{\Sigma}}_{WLS, 22} + \rma^2 \widehat{\boldsymbol{\Sigma}}_{WLS,33} + 2\rma \widehat{\boldsymbol{\Sigma}}_{WLS,23}\right]
\end{aligned}
$$
where $\widehat{\boldsymbol{\Sigma}}_{\mathrm{WLS},kl}$ denotes the $(k,l)$th element of $\widehat{\boldsymbol{\Sigma}}_{\mathrm{WLS}}$.

\section{Supplemental Materials for Simulation Study}
\setcounter{equation}{0}
\setcounter{figure}{0}
\setcounter{table}{0}
\label{app:relative_bias}
In this section, we provide the complete set of results from the Monte-Carlo simulation study described in Section 5. The results are organized as follows: Appendix \ref{app:bias} presents all figures for the relative bias of the estimators. Appendix \ref{app:variance} contains all figures related to the variance of the estimators, including the validation of analytical variance, comparisons across estimators, and the study of network density effects. Appendix \ref{app:correlation} provides supplementary analysis on the empirical correlation between exposure mappings.

\subsection{Bias of Spillover Effect Estimators} \label{app:bias}
This section contains the full results for the bias of all the estimators across the four scenarios, where the networks are generated under a regular graph model with a fixed number of connections equal to  $n_i\times p=10\times0.2$. We vary the second-stage assignment probabilities $(\delta_1, \delta_2)$, as well as the parameters $\alpha$ and $\alpha_h$ defining the spillover effects that we estimate. Each  figure, corresponding to one scenario, is further broken down by the order of the spillover effect ($h \in 1,2$) and the individual's treatment status ($a \in 0,1$).

\subsubsection{Scenario 1: No interference}

As shown in Figure~\ref{fig:scenario1_bias}, all estimators exhibit near-zero normalized bias under Scenario 1 (no interference).

\begin{figure*}[!htbp]
    \centering
    \begin{subfigure}{0.5\linewidth}
        \centering
        \includegraphics[height=5.5cm]{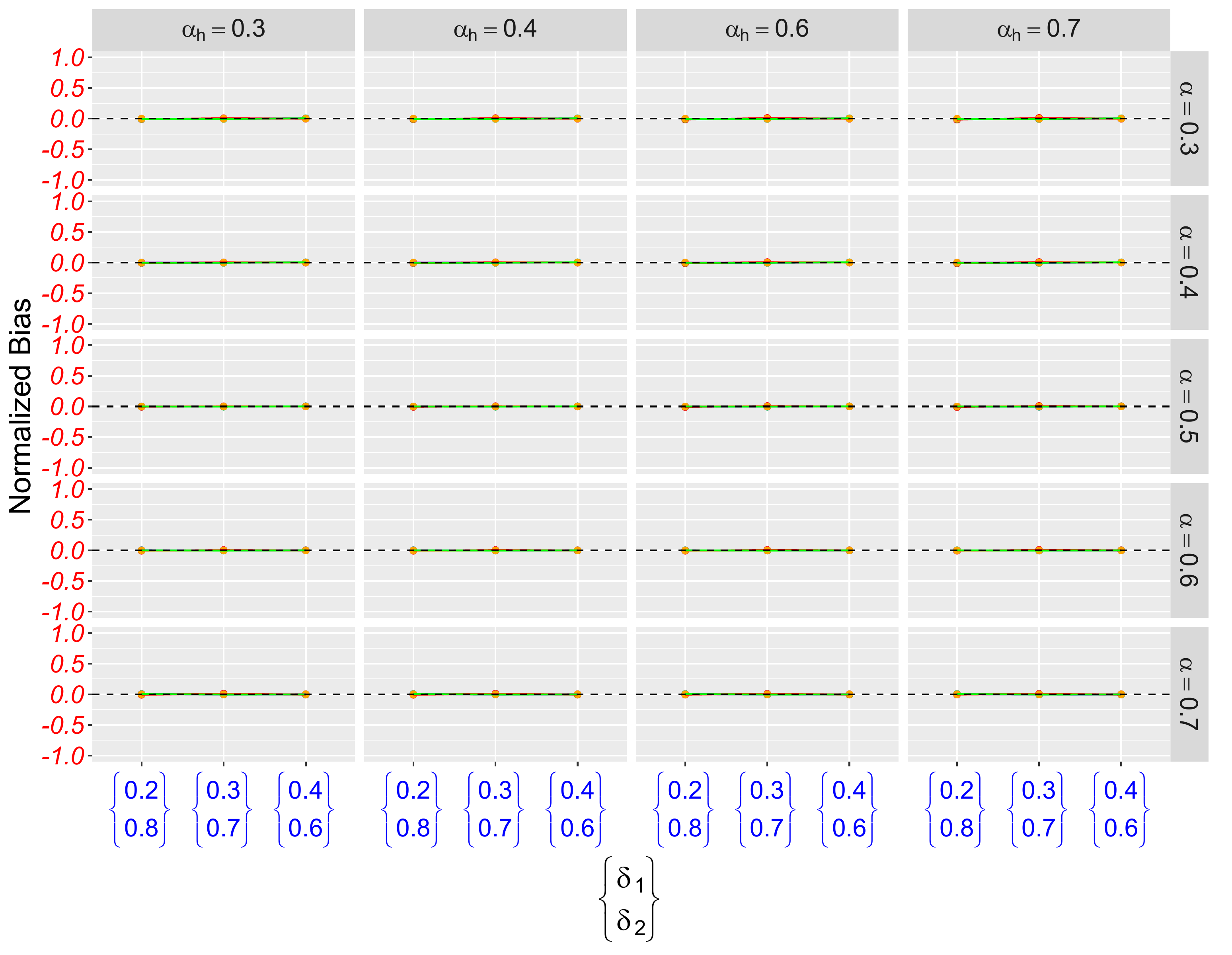}
        \caption{$\rma=0, h=1$}
        \label{fig:scen1_a0h1}
    \end{subfigure}
    \begin{subfigure}{0.45\linewidth}
        \centering
        \includegraphics[height=5.5cm]{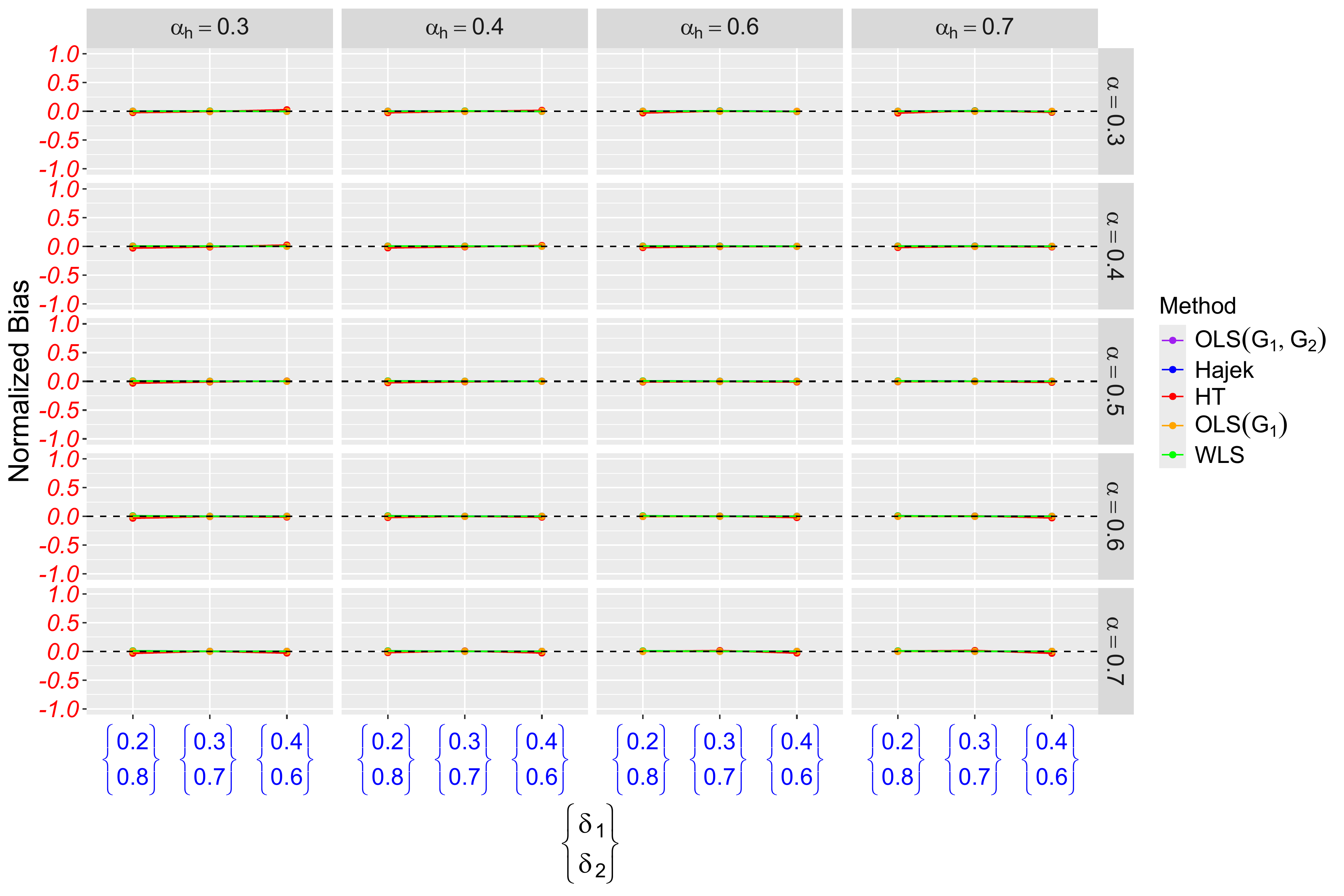}
        \caption{$\rma=1, h=1$}
        \label{fig:scen1_a1h1}
    \end{subfigure}

    \vspace{0.5cm}

    \begin{subfigure}{0.5\linewidth}
        \centering
        \includegraphics[height=5.5cm]{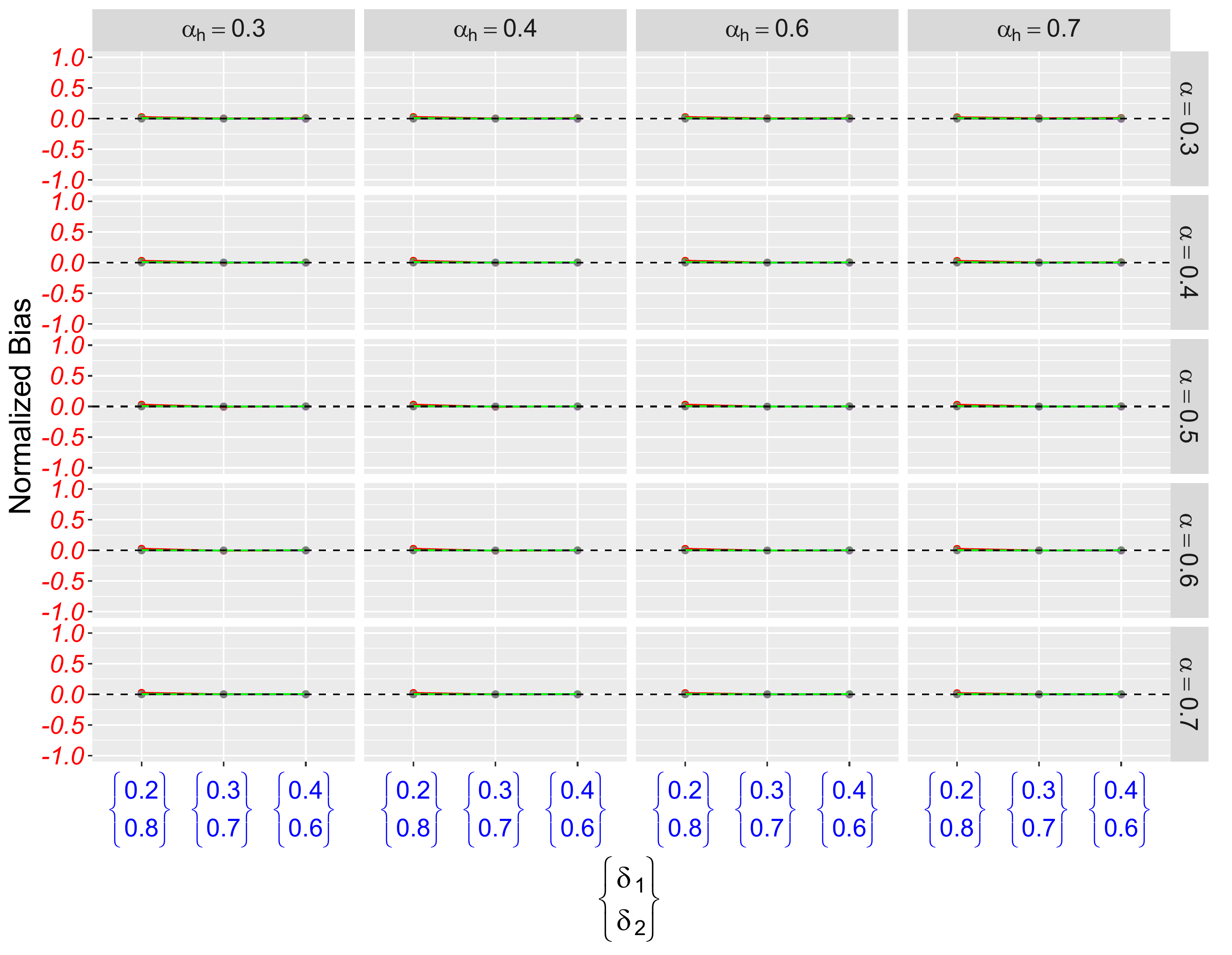}
        \caption{$\rma=0, h=2$}
        \label{fig:scen1_a0h2}
    \end{subfigure}
    \begin{subfigure}{0.45\linewidth}
        \centering
        \includegraphics[height=5.5cm]{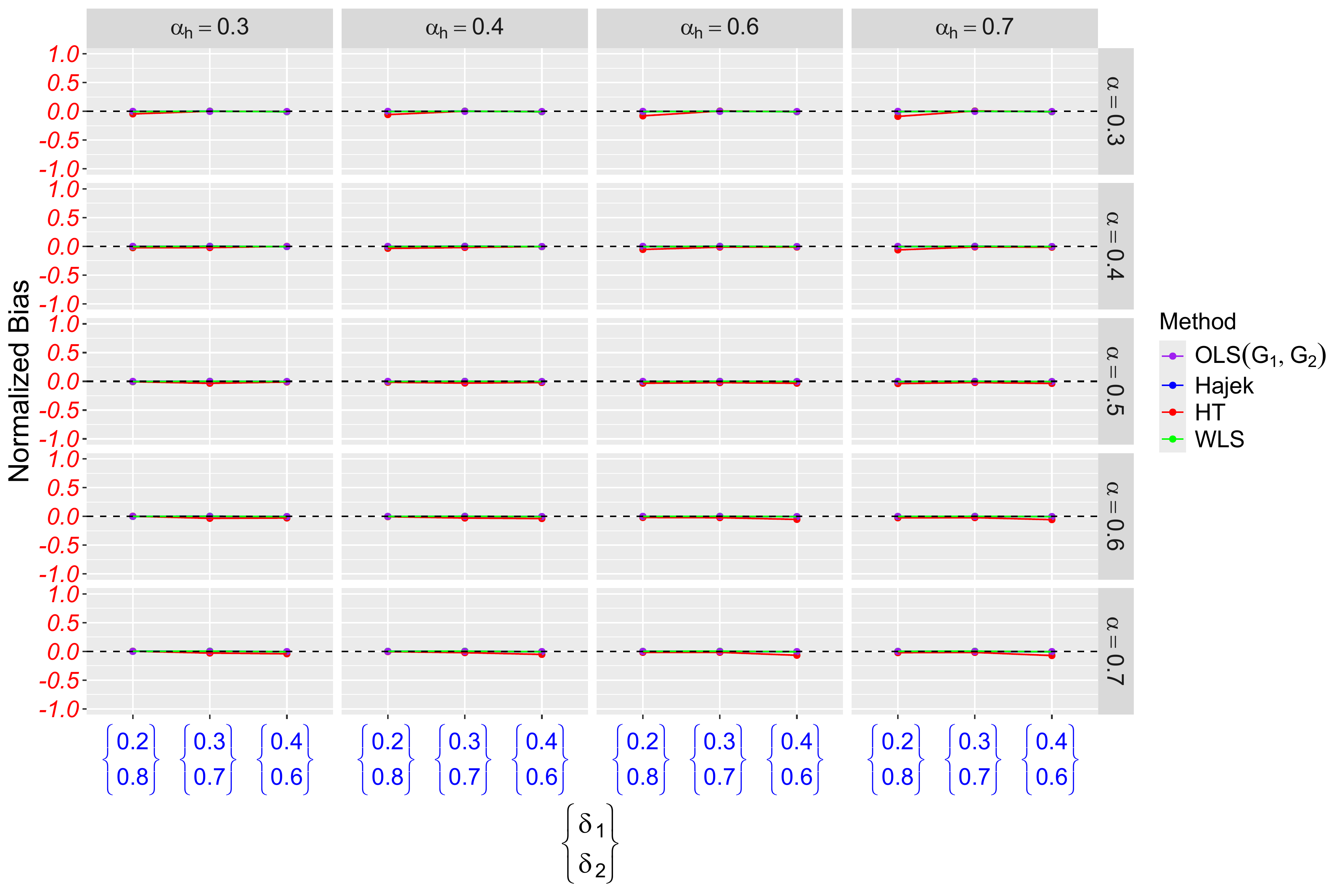}
        \caption{$\rma=1, h=2$}
        \label{fig:scen1_a1h2}
    \end{subfigure}

    \caption{Scenario 1 (No Interference). Normalized bias of the HT (red), Hajek (blue), and WLS (green) estimators. We also include the $\text{OLS}(G_1, G_2)$ estimator (purple) in all panels, and the $\text{OLS}(G_1)$ estimator (orange) specifically for the first-order spillover effects ($h=1$). Results are shown for the spillover effects $\text{SE}^{h}\left(\alpha_{h}, 0.5;\rma, \alpha\right)$, with $\alpha_h\in\{0.3, 0.4, 0.6, 0.7\}$ and $\alpha\in\{0.3, 0.4, 0.5, 0.6, 0.7\}$, under Scenario 1, a two-stage assignment with $\{\delta_1, \delta_2 \}\in\{(0.2, 0.8), (0.3, 0.7), (0.4, 0.6)\}$, and regular networks with $p=0.2$.}
    \label{fig:scenario1_bias}
\end{figure*}

\subsubsection{Scenario 2: First-order neighborhood interference}
As shown in Figure~\ref{fig:scenario2_bias}, most estimators exhibit near-zero normalized bias under Scenario 2 (first-order interference).
For $h=2$, the Hajek and WLS estimators display a small bias, which is likely driven by finite-sample effects due to the limited number of clusters ($I=100$). As further discussed in Scenario 3 (see Figure~\ref{fig:scenario3_bias_I500}), this bias vanishes when increasing the number of clusters.

\begin{figure*}[!htbp]
    \centering
    \begin{subfigure}{0.5\linewidth}
        \includegraphics[height=5.5cm]{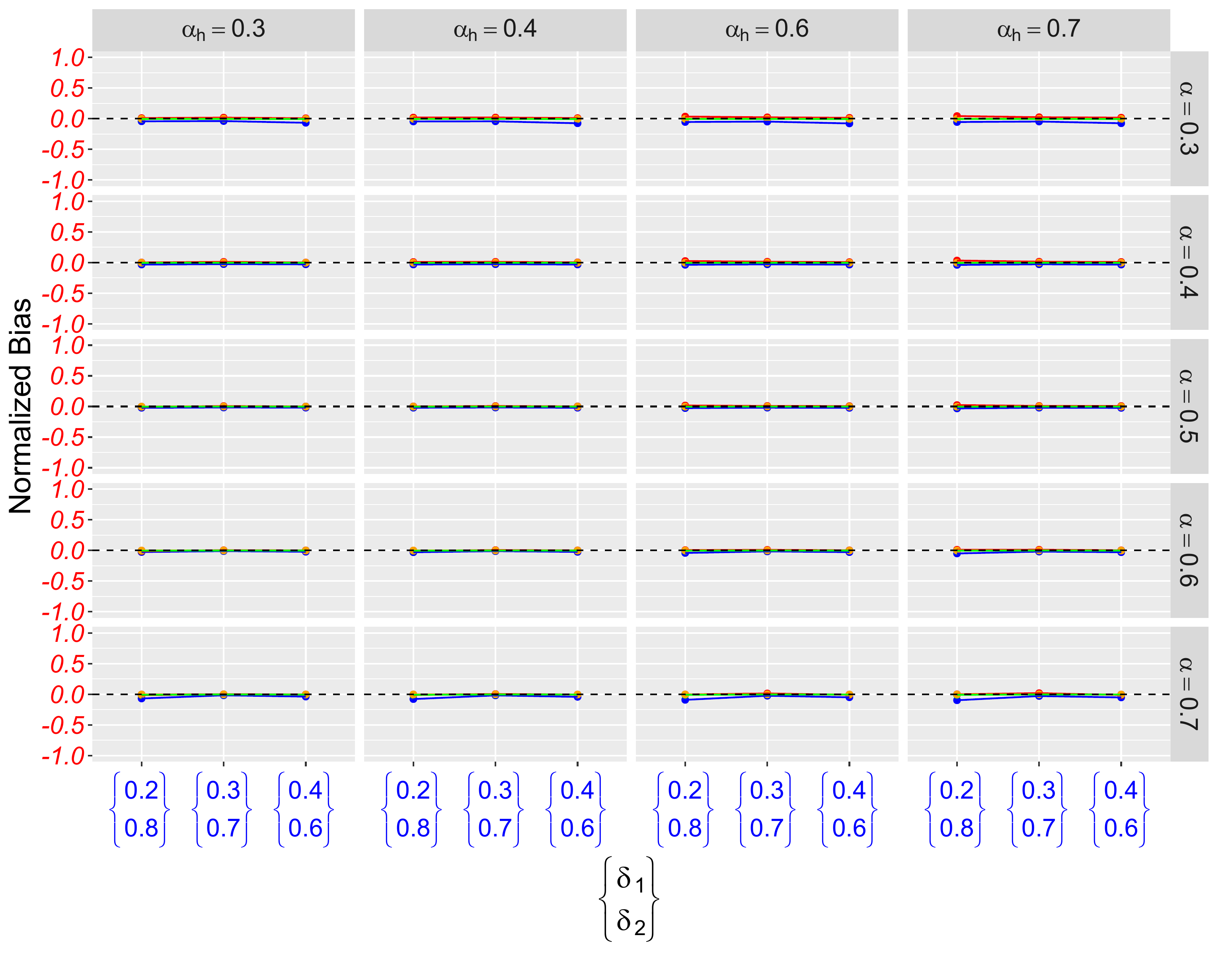}
        \caption{$\rma=0, h=1$}
        \label{fig:scen2_a0h1}
    \end{subfigure}
    \begin{subfigure}{0.45\linewidth}
        \includegraphics[height=5.5cm]{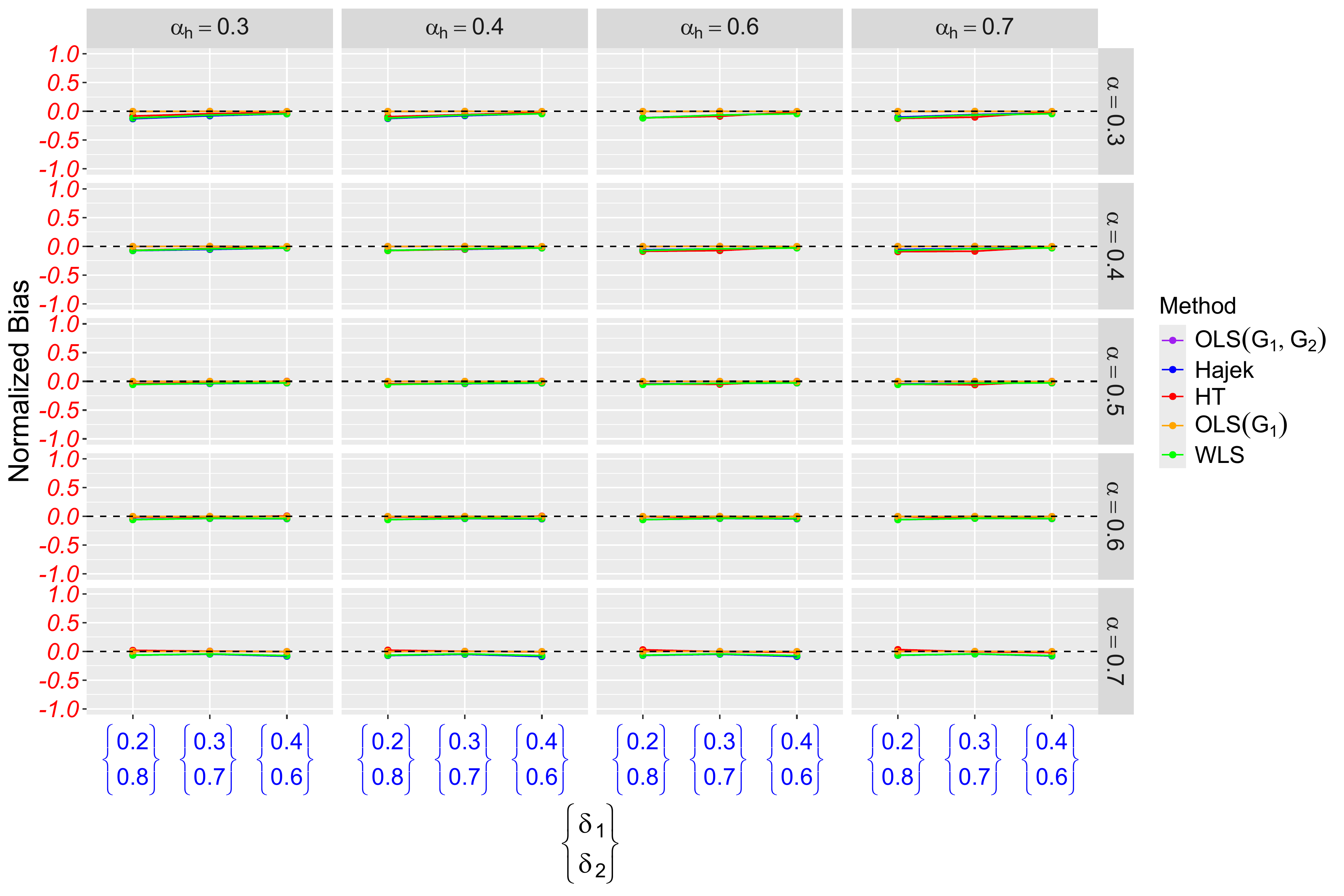}
        \caption{$\rma=1, h=1$}
        \label{fig:scen2_a1h1}
    \end{subfigure}

    \vspace{0.5cm}

    \begin{subfigure}{0.5\linewidth}
        \includegraphics[height=5.5cm]{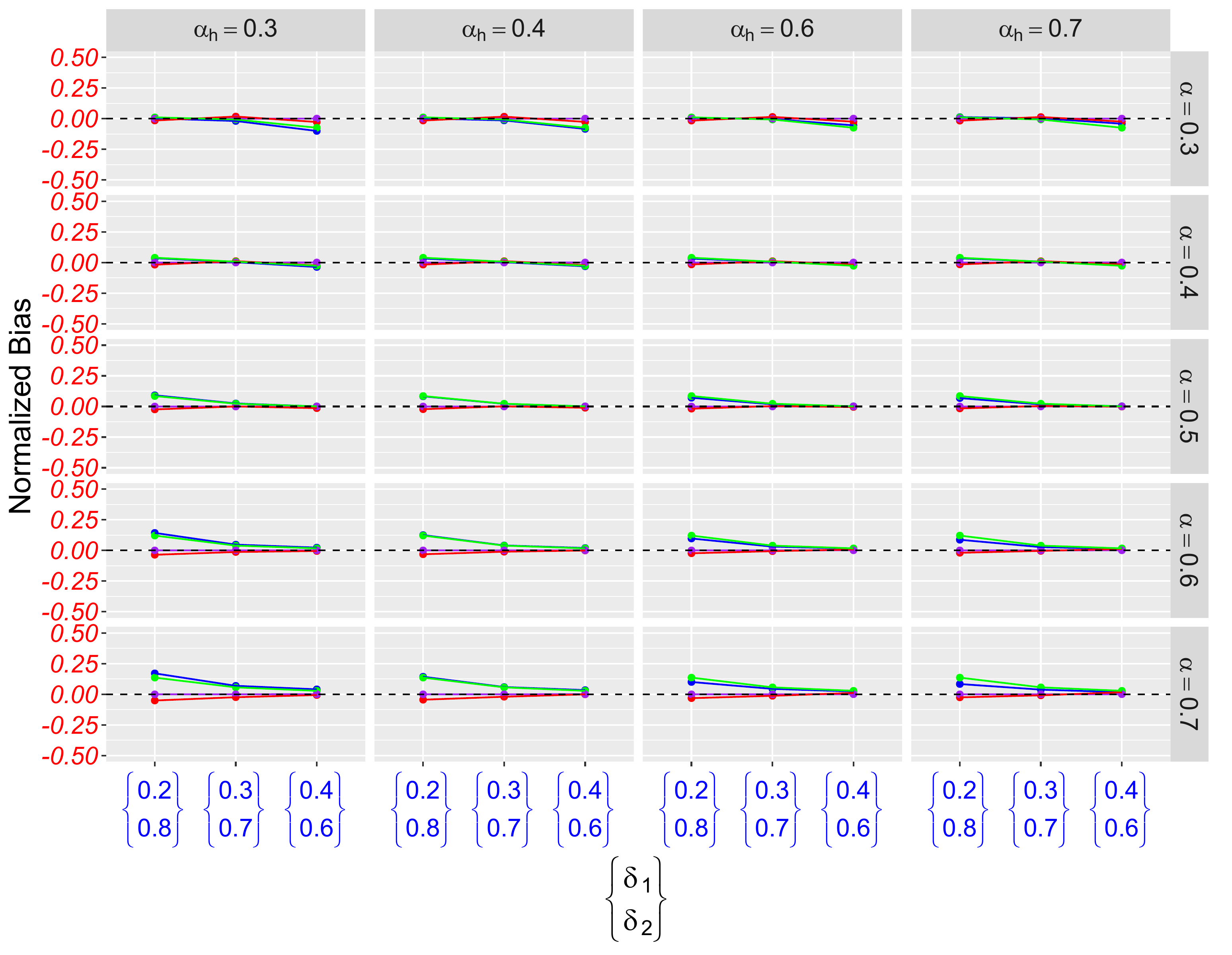}
        \caption{$\rma=0, h=2$}
        \label{fig:scen2_a0h2}
    \end{subfigure}
    \begin{subfigure}{0.45\linewidth}
        \includegraphics[height=5.5cm]{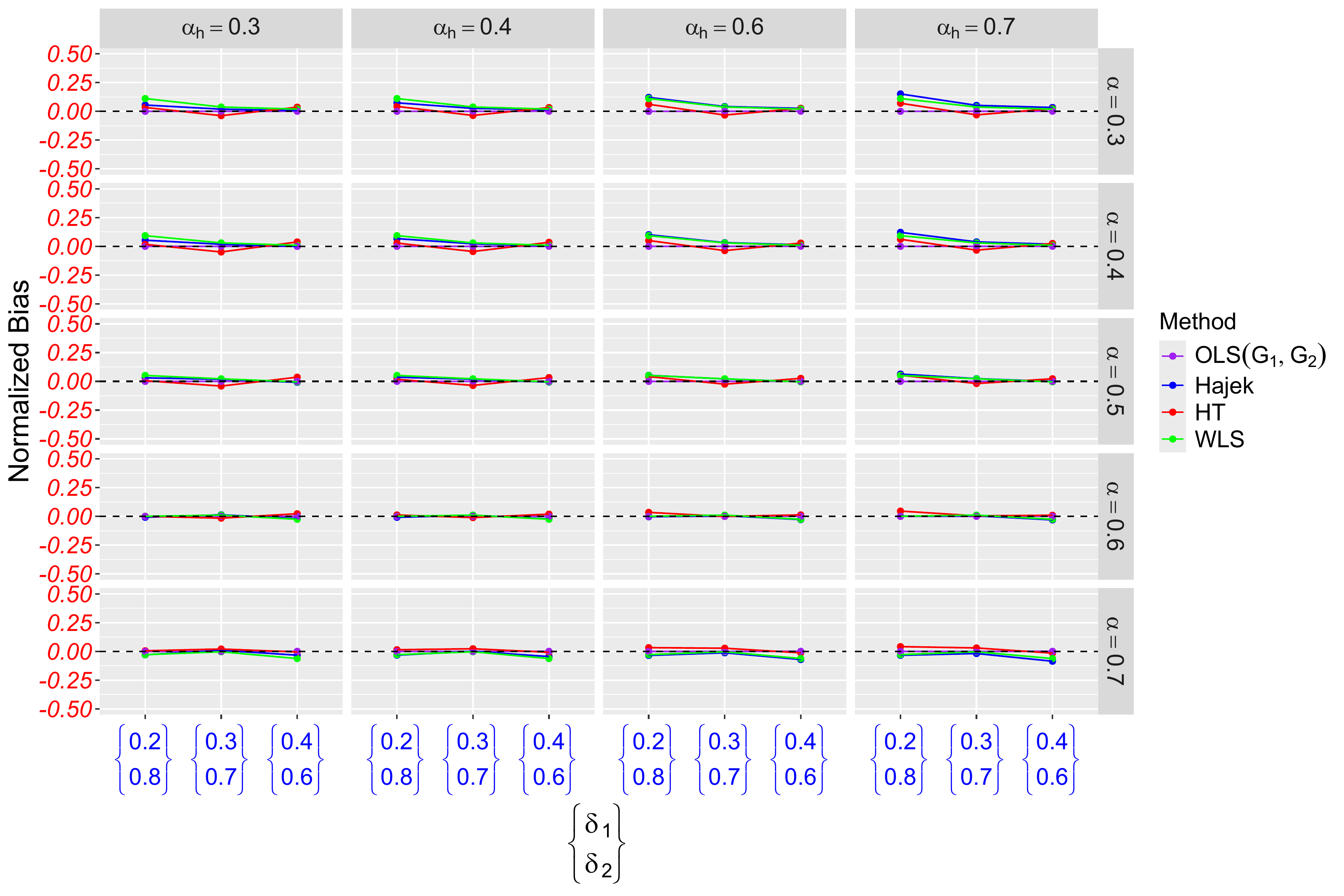}
        \caption{$\rma=1, h=2$}
        \label{fig:scen2_a1h2}
    \end{subfigure}
    \caption{Scenario 2 (First-order neighborhood interference). Normalized bias of the HT (red), Hajek (blue), and WLS (green) estimators. We also include the $\text{OLS}(G_1, G_2)$ estimator (purple) in all panels, and the $\text{OLS}(G_1)$ estimator (orange) specifically for the first-order spillover effects ($h=1$). Results are shown for the spillover effects $\text{SE}^{h}\left(\alpha_{h}, 0.5;\rma, \alpha\right)$, with $\alpha_h\in\{0.3, 0.4, 0.6, 0.7\}$ and $\alpha\in\{0.3, 0.4, 0.5, 0.6, 0.7\}$, under Scenario 2, a two-stage assignment with $\{\delta_1, \delta_2 \}\in\{(0.2, 0.8), (0.3, 0.7), (0.4, 0.6)\}$, and regular networks with $p=0.2$.}
    \label{fig:scenario2_bias}
\end{figure*}
\subsubsection{Scenario 3: Second-order neighborhood interference}

For estimating first-order spillover effects ($h=1$), we employ two OLS estimators: $\text{OLS}(G_1, G_2)$, which regresses the outcome on $A_{ij}$, $G_{ij,1}$, and $G_{ij,2}$, and $\text{OLS}(G_1)$, which regresses the outcome on $A_{ij}$ and $G_{ij,1}$ only. We include $\text{OLS}(G_1)$ to illustrate the bias arising from an incorrect interference set as described in Corollary 1. For estimating second-order spillover effects ($h=2$), we only employ the $\text{OLS}(G_1, G_2)$ estimator. Results are shown in Figure~\ref{fig:scenario3_bias}.

\begin{figure*}[!htbp]
    \centering
    \begin{subfigure}{0.5\linewidth}
        \includegraphics[height=5.5cm]{1.bias_p=0.2/left/left_scenario3_with_naive_Bias_a0h1_p=0.2.png}
        \caption{$\rma=0, h=1$}
        \label{fig:scen3_a0h1}
    \end{subfigure}
    \begin{subfigure}{0.45\linewidth}
        \includegraphics[height=5.5cm]{1.bias_p=0.2/right/scenario3_with_naive_Bias_a1h1_p=0.2.png}
        \caption{$\rma=1, h=1$}
        \label{fig:scen3_a1h1}
    \end{subfigure}

    \vspace{0.5cm}

    \begin{subfigure}{0.5\linewidth}
        \includegraphics[height=5.5cm]{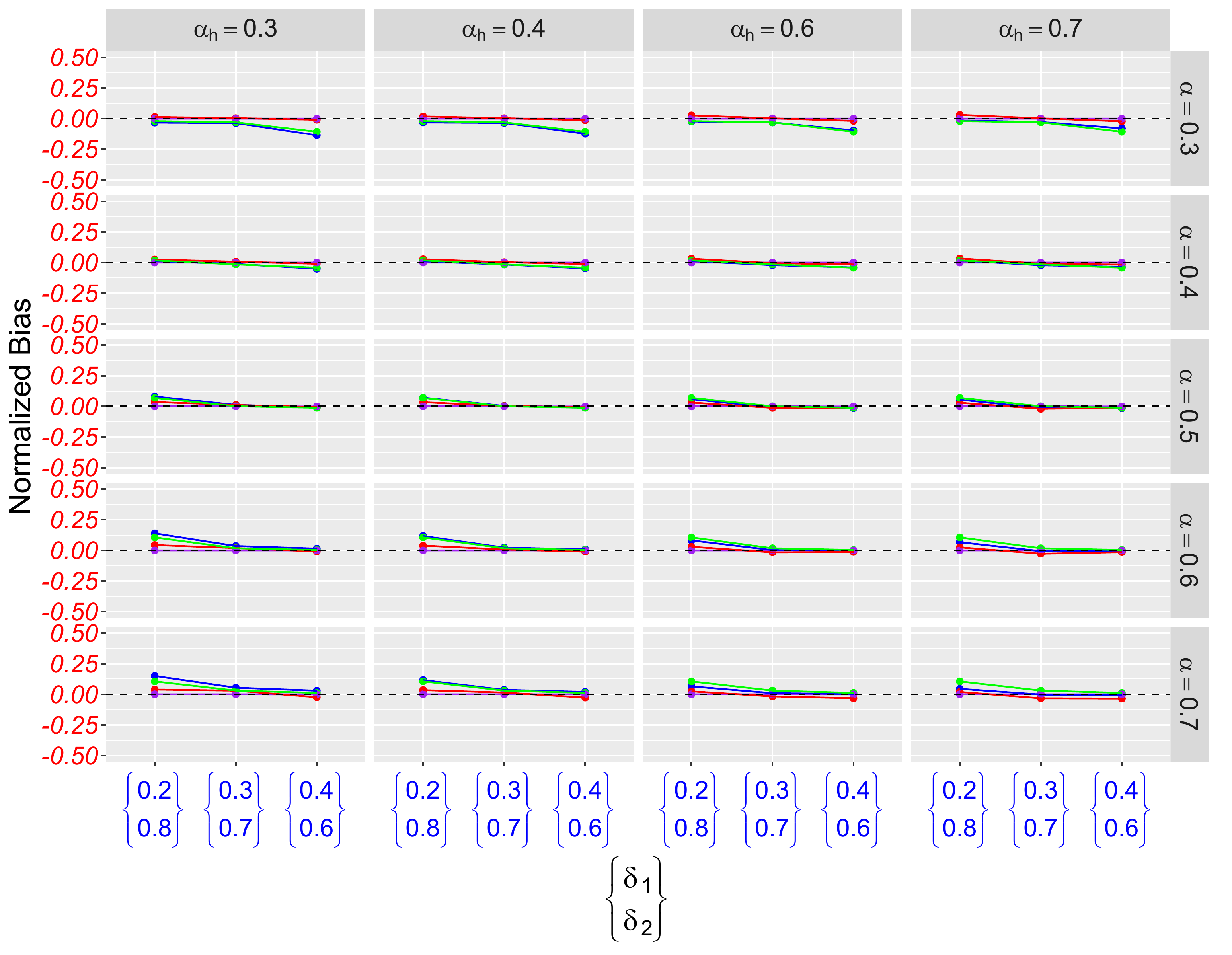}
        \caption{$\rma=0, h=2$}
        \label{fig:scen3_a0h2}
    \end{subfigure}
    \begin{subfigure}{0.45\linewidth}
        \includegraphics[height=5.5cm]{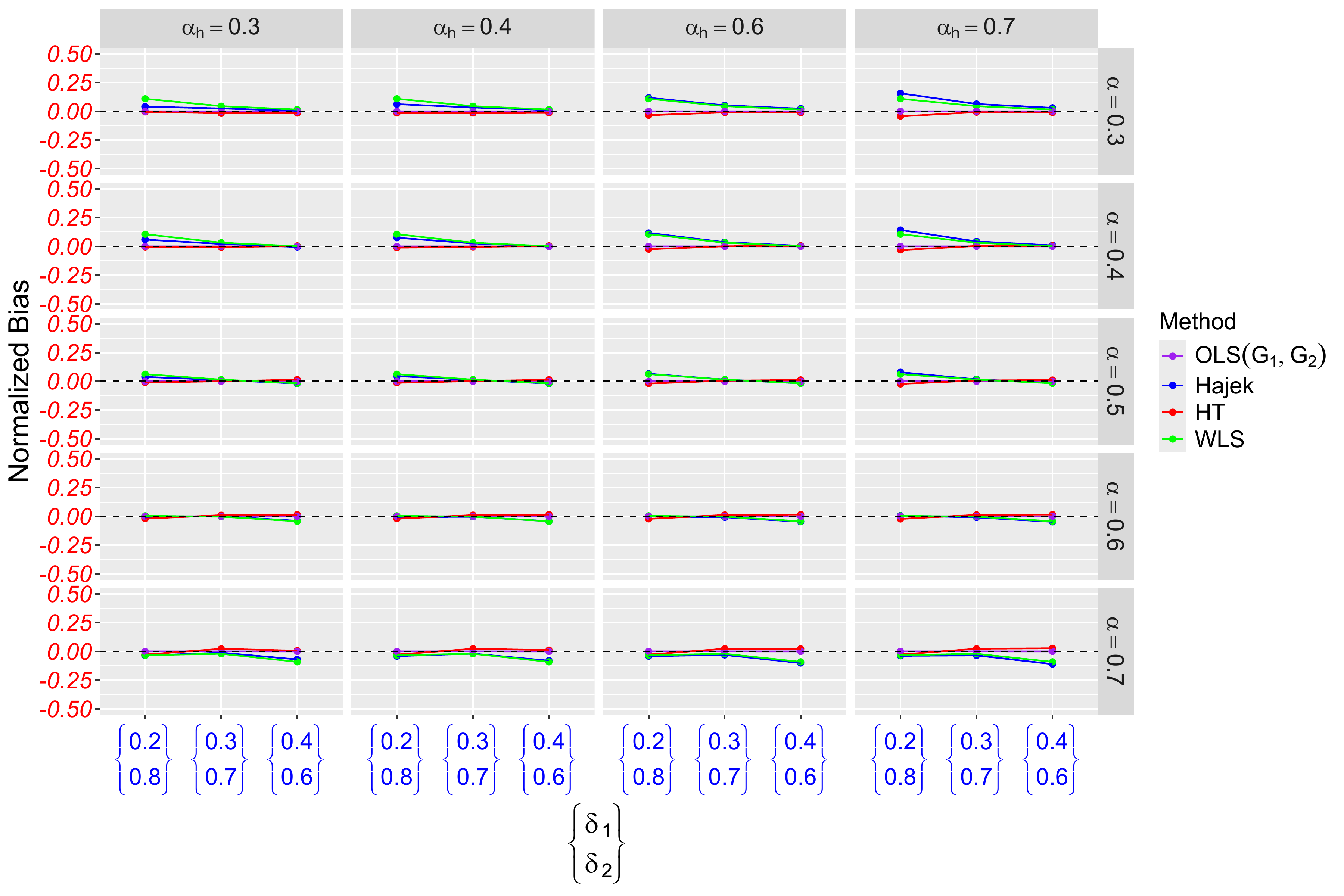}
        \caption{$\rma=1, h=2$}
        \label{fig:scen3_a1h2}
    \end{subfigure}
\caption{Scenario 3 (Second-order neighborhood interference). Normalized bias of the HT (red), Hajek (blue), and WLS (green) estimators. We also include the $\text{OLS}(G_1, G_2)$ estimator (purple) in all panels, and the $\text{OLS}(G_1)$ estimator (orange), $\text{OLS}(G^*_1, G_2)$ (deep blue) specifically for the first-order spillover effects ($h=1$), where $G^*_{ij,1}=\mathbbm{1}(G_{ij,1}\geq 0.5)$. Results are shown for the spillover effects $\text{SE}^{h}\left(\alpha_{h}, 0.5;\rma, \alpha\right)$, with $\alpha_h\in\{0.3, 0.4, 0.6, 0.7\}$ and $\alpha\in\{0.3, 0.4, 0.5, 0.6, 0.7\}$, under Scenario 3, a two-stage assignment with $\{\delta_1, \delta_2 \}\in\{(0.2, 0.8), (0.3, 0.7), (0.4, 0.6)\}$, and regular networks with $p=0.2$.}
    \label{fig:scenario3_bias}
\end{figure*}

Figure~\ref{fig:scenario3_bias} presents the results. For the first-order effects ($h=1$), the misspecified $\text{OLS}(G_1)$ estimator exhibits significant bias. Notably, this bias decreases as the first-stage assignment probabilities $(\delta_1,\delta_2)$ become closer, moving from $\{0.2, 0.8\}$ to $\{0.4, 0.6\}$. This bias stems from the empirical correlation between $G_{ij,1}$ and $G_{ij,2}$ induced by the two-stage design, which leads to omitted variable bias when $G_{ij,2}$ is excluded; a detailed analysis of this correlation mechanism is provided in Appendix \ref{app:correlation}.
Additionally, the estimator $\text{OLS}(G^*_1, G_2)$, which uses the correct interference set but a misspecified binary exposure mapping, exhibits significant bias, as described in Corollary 2. Conversely, for second-order effects ($h=2$), the correctly specified $\text{OLS}(G_1, G_2)$ and our proposed estimators are generally unbiased.

It is important to note that the Hajek and WLS estimators exhibit a small bias for the second-order spillover effects ($h=2$) in Scenarios 2 and 3 when the number of clusters is $I=100$. This is expected, as these estimators are consistent but not unbiased in finite samples. To verify this, we conducted a supplementary simulation with a larger number of clusters ($I=500$) over 500 Monte Carlo replications for scenario 3.  Figure \ref{fig:scenario3_bias_I500} shows the results for this supplementary simulation. As anticipated, the bias in the Hajek and WLS estimators vanishes, empirically confirming their consistency. However, the Horvitz-Thompson estimator shows bias in the larger sample size because of the presence of extreme weights and the high variance.

\begin{figure*}[!htbp]
    \centering
    \begin{subfigure}{0.5\linewidth}
        \includegraphics[height=5.5cm]{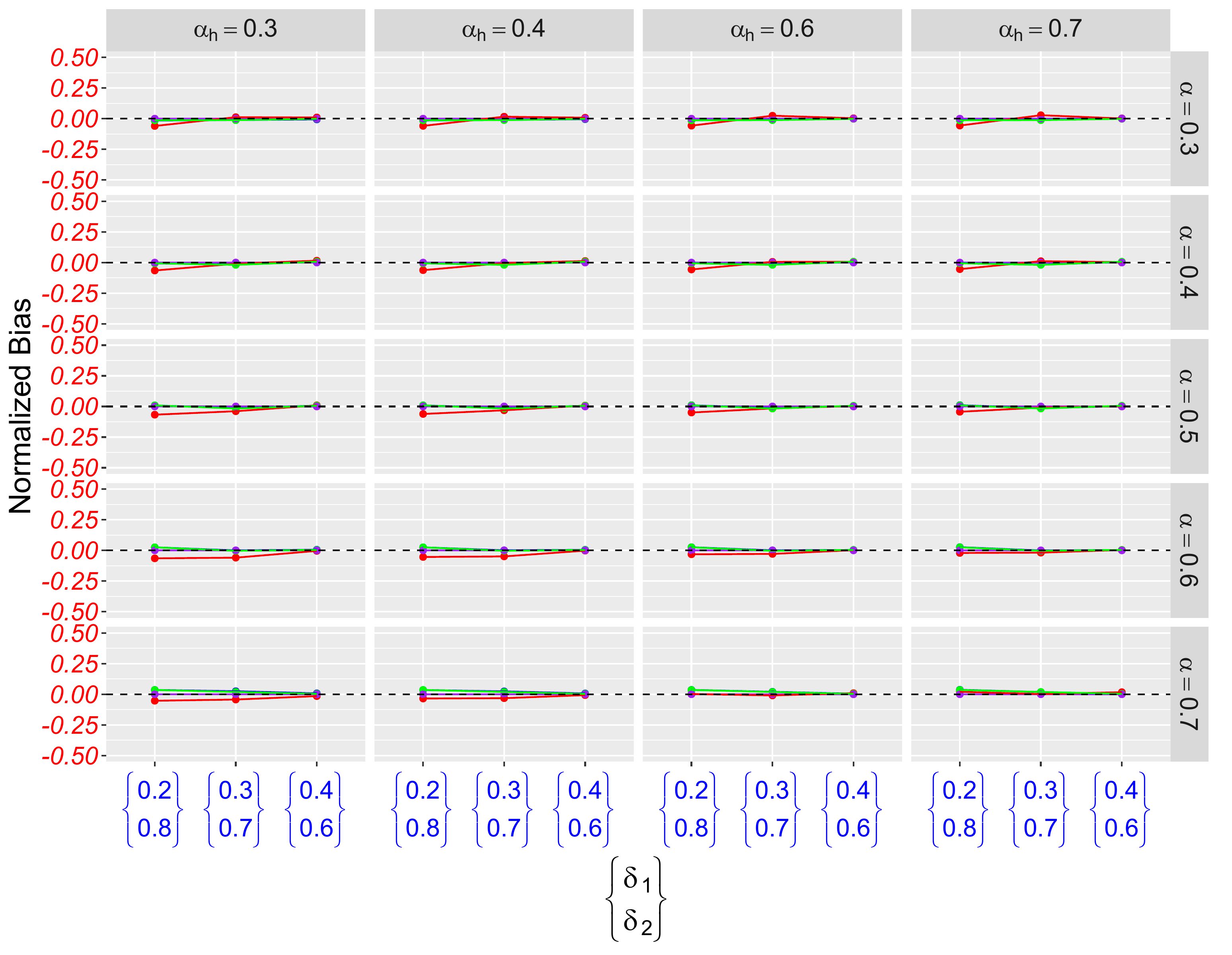}
        \caption{$\rma=0, h=2$}
        \label{fig:scen3_I500_a0h2}
    \end{subfigure}
    \begin{subfigure}{0.45\linewidth}
              \includegraphics[height=5.5cm]{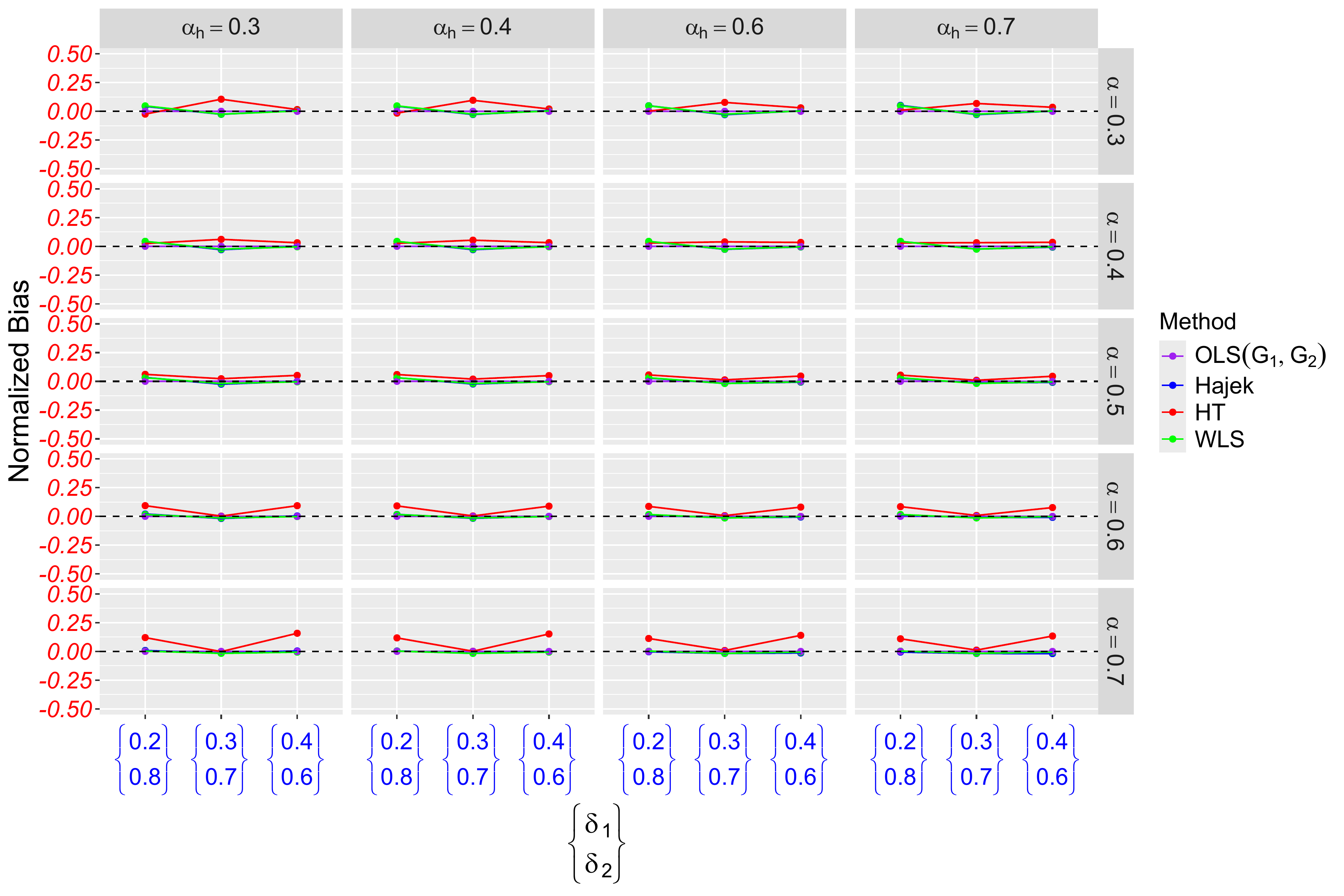}
        \caption{$\rma=1, h=2$}
        \label{fig:scen3_I500_a1h2}
    \end{subfigure}

    \caption{Consistency Check for $h=2$ (Scenario 3 with $I=500$). Normalized bias of the HT (red), Hajek (blue), and WLS (green) estimators for spillover effects $\text{SE}^{h}\left(\alpha_{h}, 0.5;\rma, \alpha\right)$, under Scenario 3 with a larger sample size ($I=500$ clusters, 500 simulations) and regular networks ($p=0.2$). Compared to the $I=100$ case, the small finite-sample bias in Hajek and WLS estimators for $h=2$ vanishes, confirming their consistency. The HT estimator may still exhibit deviations due to extreme weights.}
    \label{fig:scenario3_bias_I500}
\end{figure*}

\subsubsection{Scenario 4: Second-order neighborhood interference with interaction}
\label{app:scenario4_detail}
In Scenario 4, the outcome depends on the interaction between the proportions of treated first- and second-order neighbors ($G_{ij,1} G_{ij,2}$). When estimating first-order spillover effects ($h=1$), we employ two OLS estimators: $\text{OLS}(G_1, G_2)$, which regresses $Y_{ij}$ on $A_{ij}$, $G_{ij,1}$, and $G_{ij,2}$ (omitting the interaction term), and $\text{OLS}(G_1)$, which regresses on $A_{ij}$ and $G_{ij,1}$ only. For second-order effects ($h=2$), we only use $\text{OLS}(G_1, G_2)$. Note that $\text{OLS}(G_1, G_2)$ is misspecified in this scenario due to the missing interaction term $G_{ij,1}G_{ij,2}$, corresponding to the bias from Corollary 2.

Given the outcome model
   $$
    Y_{ij} \sim N\left(2+5 A_{ij}+10 G_{ij}^{1}+3 G_{ij}^{2} + 5\cdot G_{ij}^{1}\cdot  G_{ij}^{2}, \sigma^{2}=1\right)
    $$
To compute the theoretical value of first-order spillover effects,
we need to consider the probability that $\mathcal{N}^2_{ij}$ is empty  (i.e., the unit has no distance-2 neighbors), and, in turn, $G^2_{ij}$ is not defined. Let
$$
    \pi_2 = \Pr\bigl(\mathcal{N}^2_{ij}=\emptyset \bigr),
$$
be the probability that a randomly chosen unit has no second neighbors. Then,
the true first-order spillover effects are given by
$$
    \text{SE}^{1}(\alpha_h,\alpha_h'; \rma,\alpha)
    = \bigl[ 10(\alpha_h - \alpha_h') + 5 \alpha (\alpha_h - \alpha_h') \bigr] (1 - \pi_2)
    = (\alpha_h - \alpha_h') ( 10 + 5\alpha ) (1 - \pi_2).
    \label{eq:true-h1-effect}
$$
We generate the networks within clusters using 2-regular graphs, where every unit has exactly two direct neighbors. In this graph with $n_i=10$ labeled vertices (i.e., a disjoint union of cycles), a unit has no  second-order neighbors if and only if it lies on a 3-cycle. Thus,
$$\pi_2=\operatorname{Pr} \text{(vertex is in a 3-cycle)}$$
Under the uniform distribution over labeled simple 2-regular graphs, the possible cycle decompositions are $\{10\}, \{7+3\}, \{6+4\}, \{5+5\}$, and $\{4+3+3\}$, where $\{m_k\}$ denotes the partition with $m_k$ cycles of length $k$. The number of graphs with each decomposition is
$$
N\left(\left\{m_k\right\}\right)=\frac{10!}{\prod_k(2 k)^{m_k} m_{k}!},
$$
yielding $N_{\{10\}}=181{,}440$, $N_{\{7+3\}}=43{,}200$, $N_{\{6+4\}}=37{,}800$, $N_{\{5+5\}}=18{,}144$, $N_{\{4+3+3\}}=6{,}300$, and $N_{\text{tot}}=286{,}884$. In the cycle decompositions $\{7+3\}$ and $\{4+3+3\}$, a uniformly chosen vertex lies in a 3-cycle with probabilities $3/10$ and $6/10$, respectively; otherwise the probability is 0. Therefore:
$$
\pi_2=\frac{N_{7+3} \cdot(3 / 10)+N_{4+3+3} \cdot(6 / 10)}{N_{\text {tot }}}=\frac{16,740}{286,884} \approx 0.05835 .
$$
However, our simulations use the \texttt{sample\_k\_regular} function from the \textbf{igraph} package in R to generate networks. This function draws regular graphs using a configuration-model-with-rejection scheme and does not produce a perfectly uniform distribution over simple graphs for small $n$. Therefore, to determine the ``true value" of the first-order spillover effects for simulations that use \texttt{sample\_k\_regular}, we estimate $\pi_2$ via Monte Carlo by generating 100000 graphs using \texttt{sample\_k\_regular} and computing the fraction of vertices in 3-cycles. For $n_i=10$, we obtain
$\hat{\pi}_2=0.104$.

Figure~\ref{fig:scenario4_bias} shows that while the proposed estimators (HT, Hajek, WLS) remain unbiased under interaction-based interference, the misspecified OLS estimators ($\text{OLS}(G_1),\text{OLS}(G_1,G_2)$) exhibit significant bias due to the omission of the interaction term and/or second-order neighbors.

\begin{figure*}[!htbp]
    \centering
    \begin{subfigure}{0.5\linewidth}
        \includegraphics[height=5.5cm]{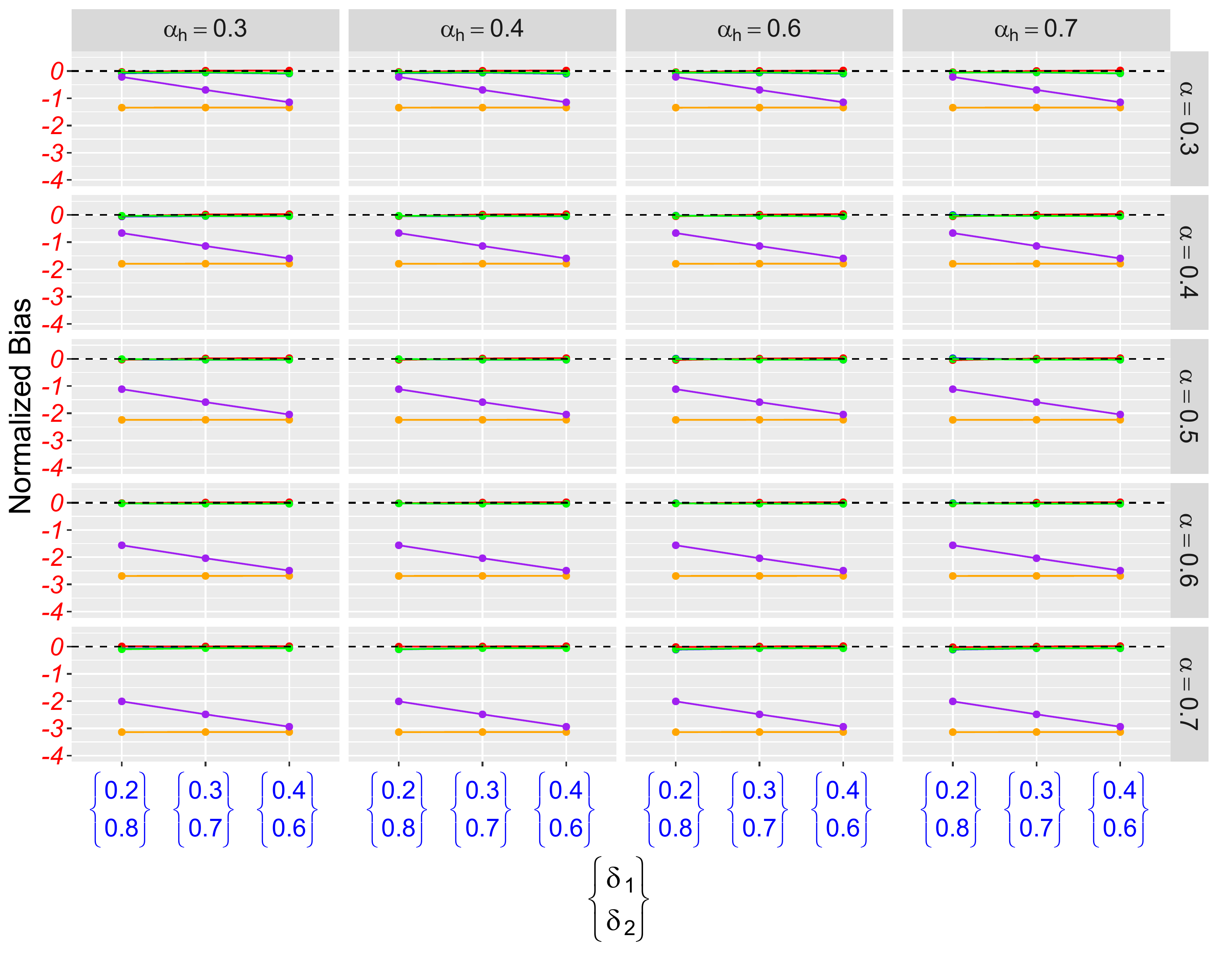}
        \caption{$\rma=0, h=1$}
        \label{fig:scen4_a0h1}
    \end{subfigure}
    \begin{subfigure}{0.45\linewidth}
        \includegraphics[height=5.5cm]{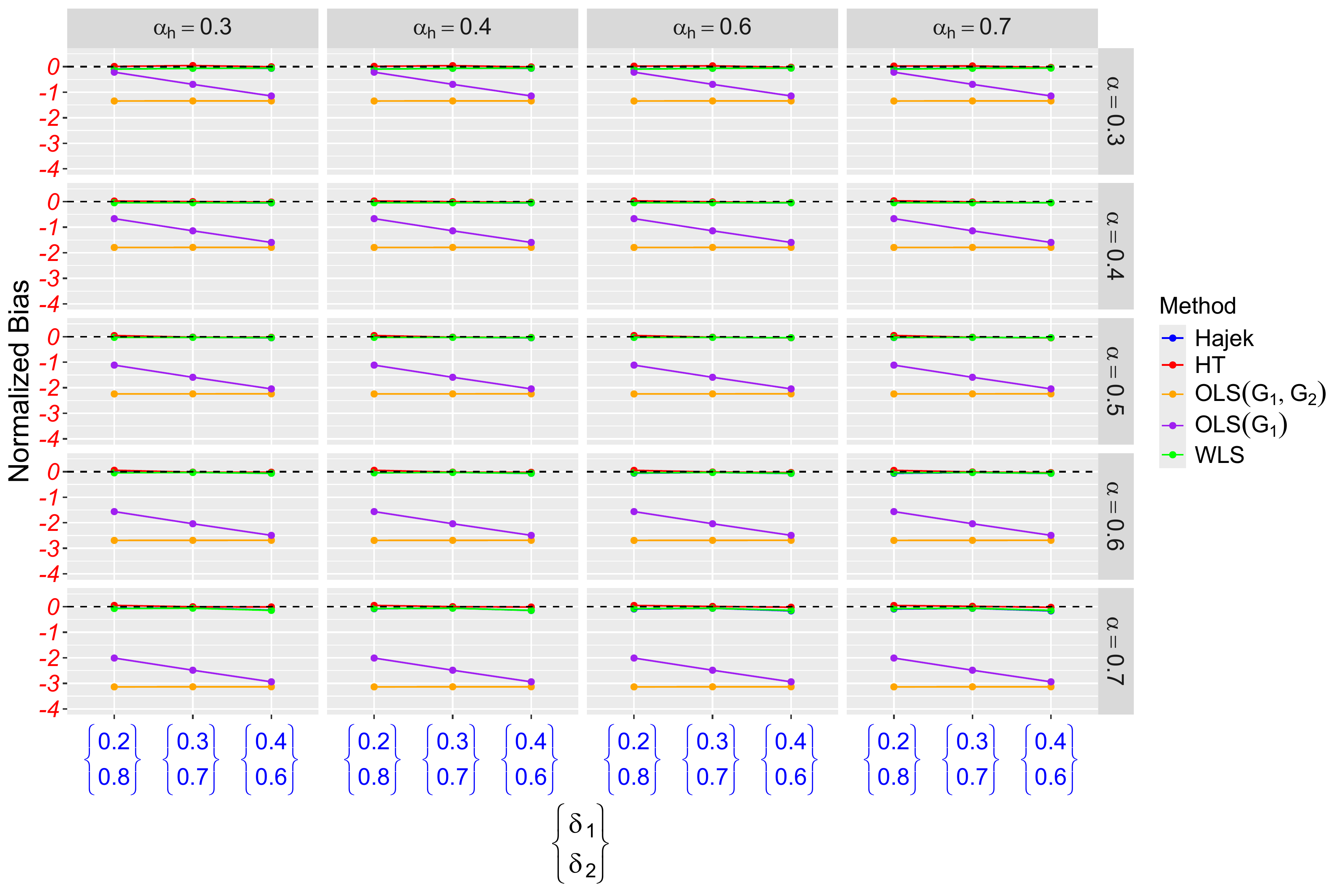}
        \caption{$\rma=1, h=1$}
        \label{fig:scen4_a1h1}
    \end{subfigure}

    \vspace{0.5cm}

    \begin{subfigure}{0.5\linewidth}
        \includegraphics[height=5.5cm]{1.bias_p=0.2/left/left_scenario4_with_naive_Bias_a0h2_p=0.2.png}
        \caption{$\rma=0, h=2$}
        \label{fig:scen4_a0h2}
    \end{subfigure}
    \begin{subfigure}{0.45\linewidth}
        \includegraphics[height=5.5cm]{1.bias_p=0.2/right/scenario4_with_naive_Bias_a1h2_p=0.2.png}
        \caption{$\rma=1, h=2$}
        \label{fig:scen4_a1h2}
    \end{subfigure}
    \caption{Scenario 4 (Second-order neighborhood interference with interaction). Normalized bias of the HT (red), Hajek (blue), and WLS (green) estimators. We also include the misspecified OLS estimators: $\text{OLS}(G_1, G_2)$ (purple) is included in all panels (omitting the interaction term), while $\text{OLS}(G_1)$ (orange) is included only for $h=1$ (omitting both the interaction term and second-order neighbors). Results are shown for the spillover effects $\text{SE}^{h}\left(\alpha_{h}, 0.5;\rma, \alpha\right)$, with $\alpha_h\in\{0.3, 0.4, 0.6, 0.7\}$ and $\alpha\in\{0.3, 0.4, 0.5, 0.6, 0.7\}$, under Scenario 4, a two-stage assignment with $\{\delta_1, \delta_2 \}\in\{(0.2, 0.8), (0.3, 0.7), (0.4, 0.6)\}$, and regular networks with $p=0.2$.}
    \label{fig:scenario4_bias}
\end{figure*}

\subsection{Variance of Spillover Effect Estimators}\label{app:variance}

This section presents the full results for the variance analysis.

\subsubsection{Validation of analytical variance estimators}
\label{app:variance_validation}
Here we show the comparison between the analytical variance, the Monte Carlo variance, and the one obtained through bootstrap, for the Horvitz-Thompson, Hajek, and WLS estimators, for Scenario 3 under a regular graph model with link probability $p=0.2$. The bootstrap variance was averaged over 50 simulation runs, with each run based on 500 bootstrap replications, while the Monte Carlo and analytical variance was computed from 5,000 simulation replications.

Figures~\ref{fig:var_HT}--\ref{fig:var_WLS} show that the analytical variance estimates derived via M-estimation (red) closely match the empirical Monte Carlo variances (green) across all estimators and parameter settings, validating our theoretical derivations. We note that the bootstrap variance occasionally exhibits slight deviations, which is expected given that the bootstrap is computed using a limited number of replications due to computational constraints.

\begin{figure*}[!htbp]
    \centering
    \begin{subfigure}{0.5\linewidth}
        \includegraphics[height=5.5cm]{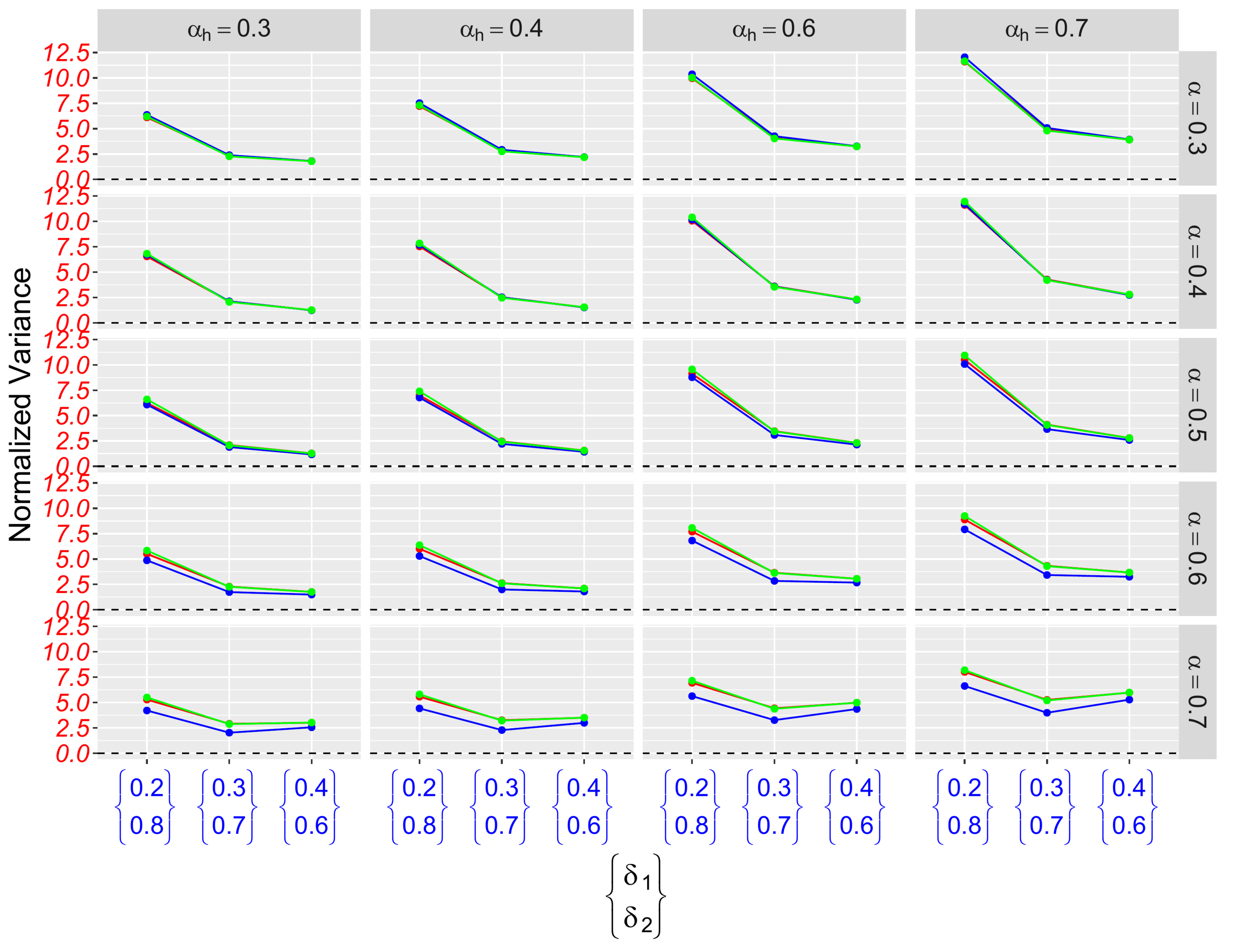}
        \caption{$\rma=0, h=1$}
    \end{subfigure}
    \begin{subfigure}{0.45\linewidth}
        \includegraphics[height=5.5cm]{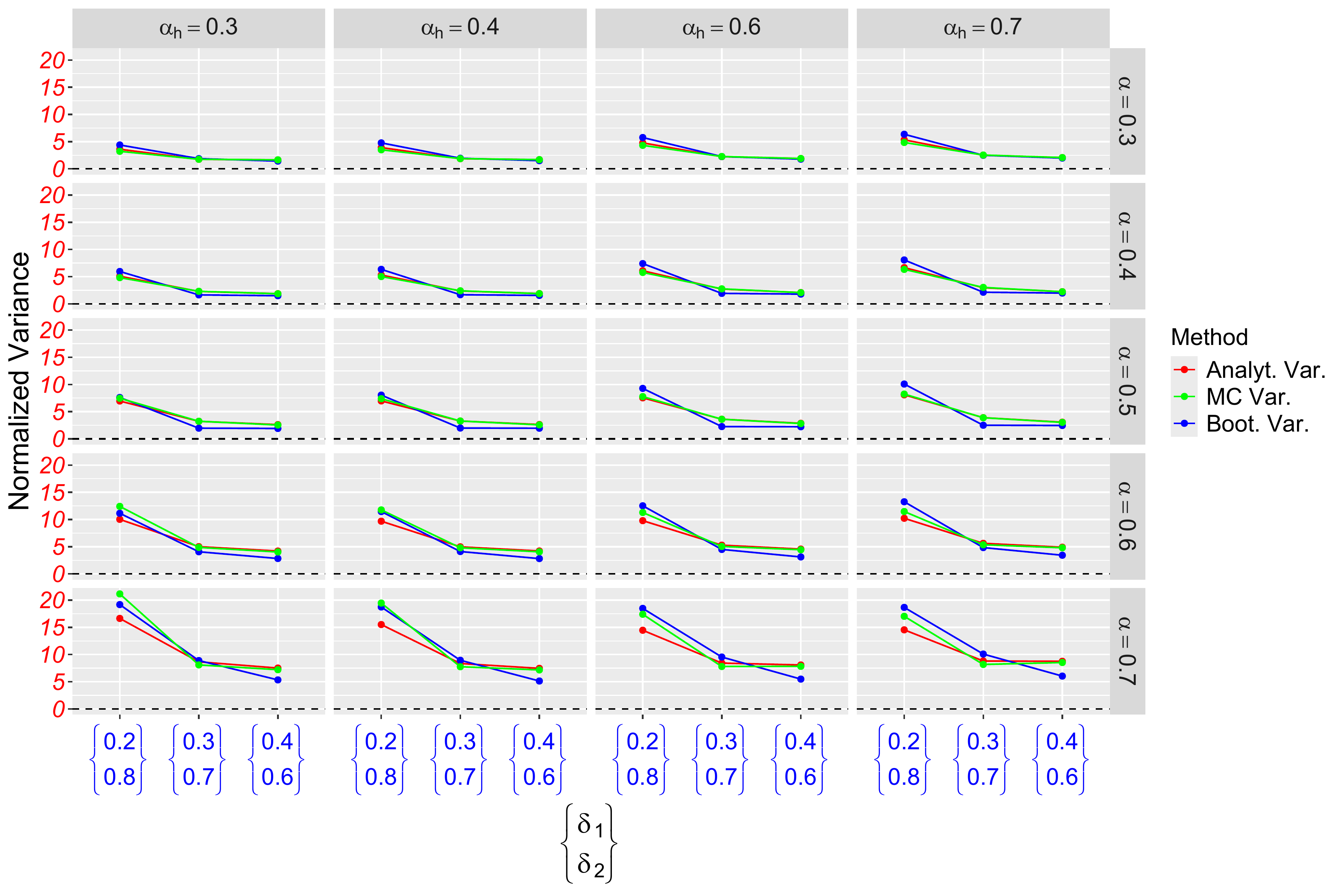}
        \caption{$\rma=0, h=2$}
    \end{subfigure}

    \vspace{0.5cm}

    \begin{subfigure}{0.5\linewidth}
        \includegraphics[height=5.5cm]{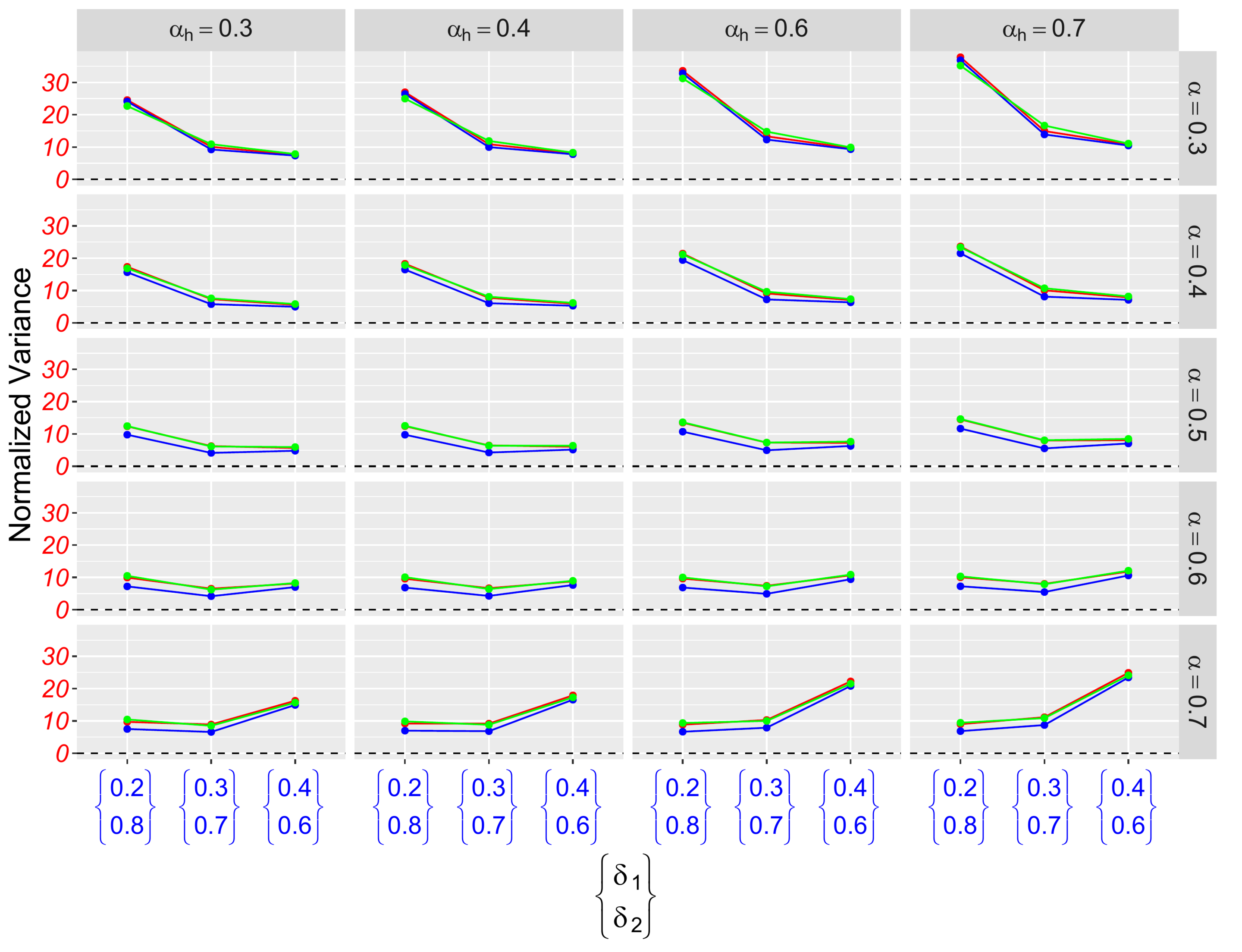}
        \caption{$\rma=1, h=1$}
    \end{subfigure}
    \begin{subfigure}{0.45\linewidth}
        \includegraphics[height=5.5cm]{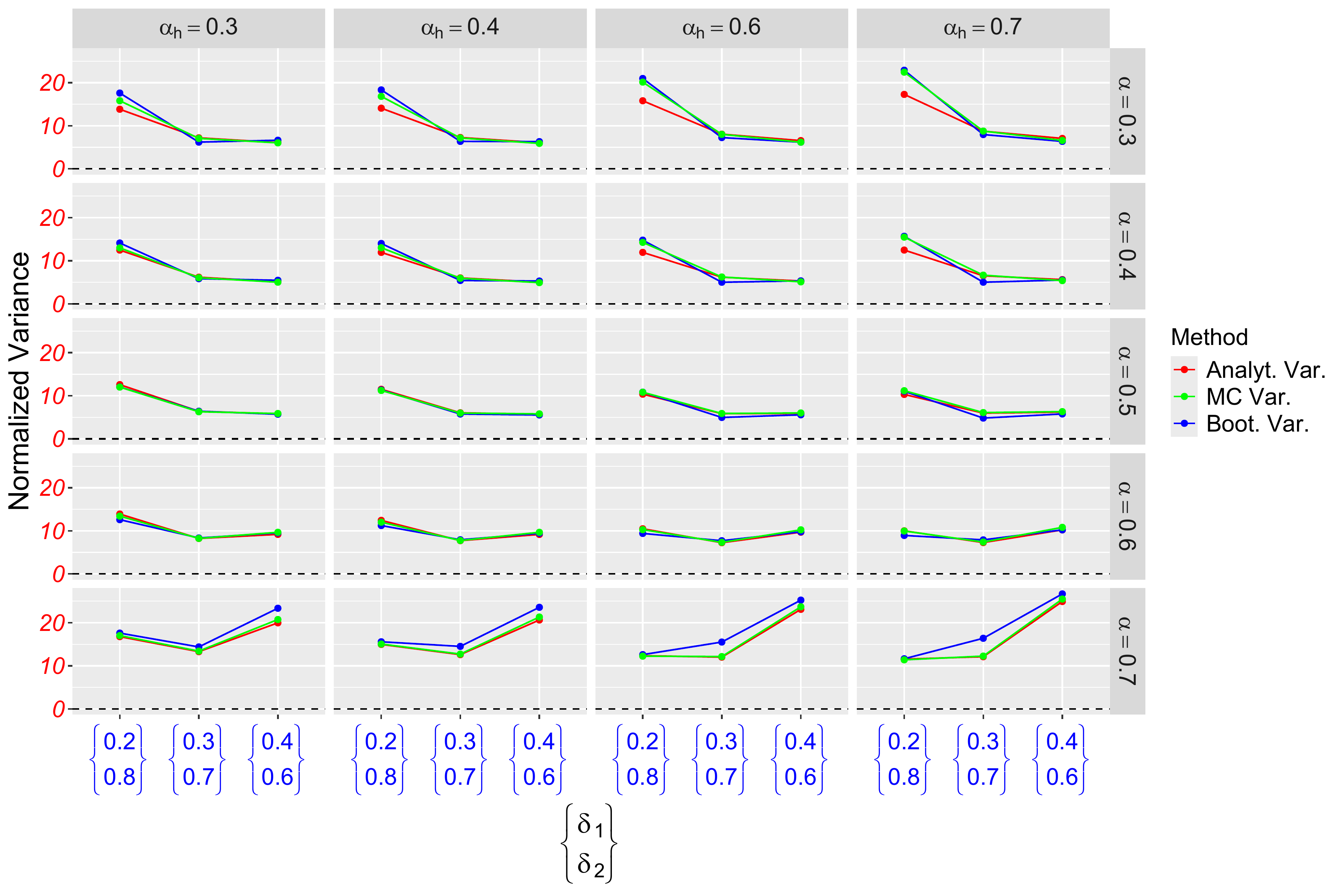}
        \caption{$\rma=1, h=2$}
    \end{subfigure}
    \caption{Validation of the Horvitz-Thompson variance estimator. Normalized variance estimated using analytical M-estimation (red), bootstrap (blue), and Monte Carlo simulations (green) for the spillover effects $\text{SE}^{h}\left(\alpha_{h}, 0.5;\rma, \alpha\right)$, with $\alpha_h\in\{0.3, 0.4, 0.5, 0.6, 0.7\}$ and $\alpha\in\{0.3, 0.4, 0.5, 0.6, 0.7\}$, under Scenario 3, a two-stage assignment with $\{\delta_1, \delta_2 \}\in\{(0.2, 0.8), (0.3, 0.7), (0.4, 0.6)\}$, and regular networks with $p=0.2$.}
    \label{fig:var_HT}
\end{figure*}

\begin{figure*}[!htbp]
    \centering
    \begin{subfigure}{0.5\linewidth}
        \includegraphics[height=5.5cm]{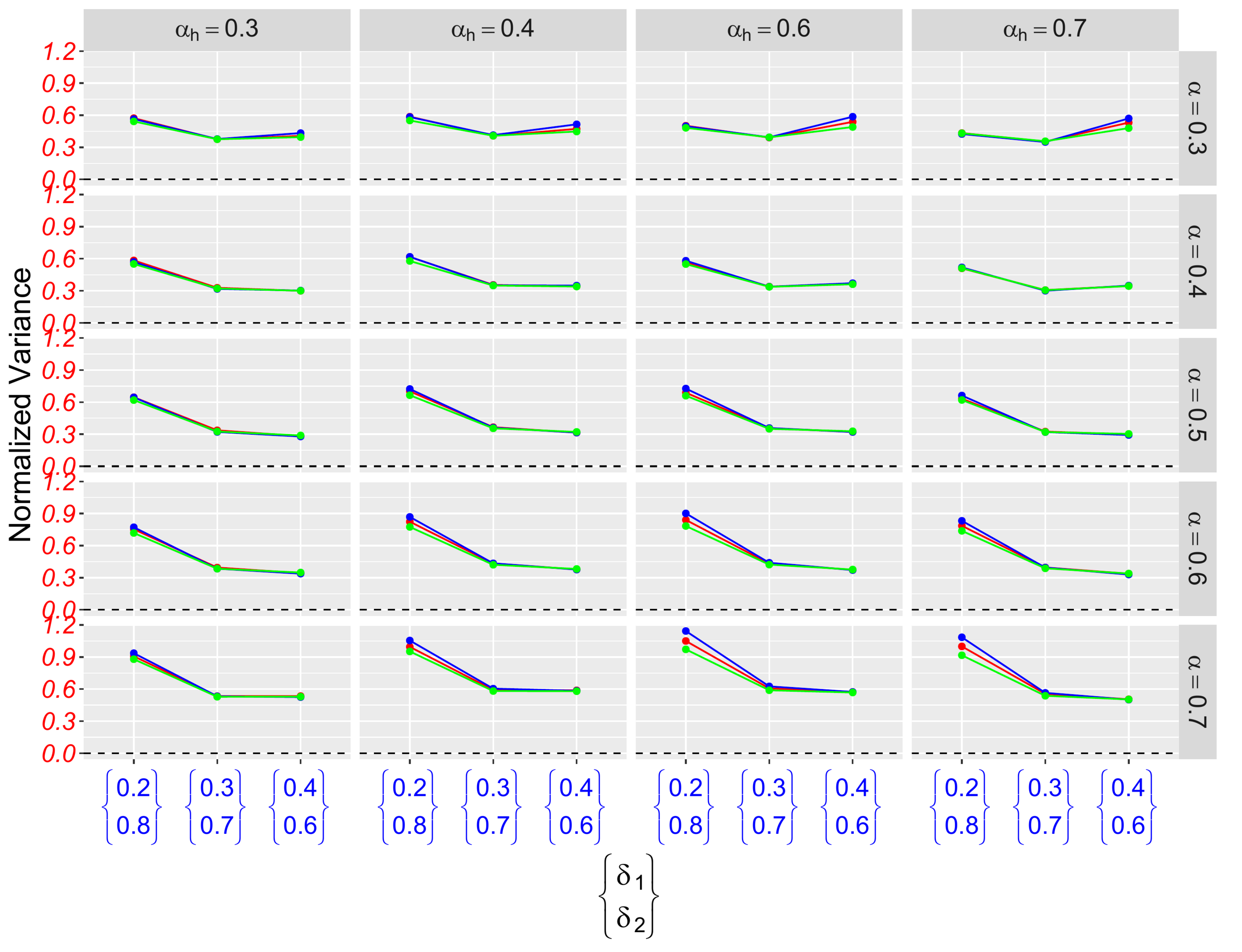}
        \caption{$\rma=0, h=1$}
    \end{subfigure}
    \begin{subfigure}{0.45\linewidth}
        \includegraphics[height=5.5cm]{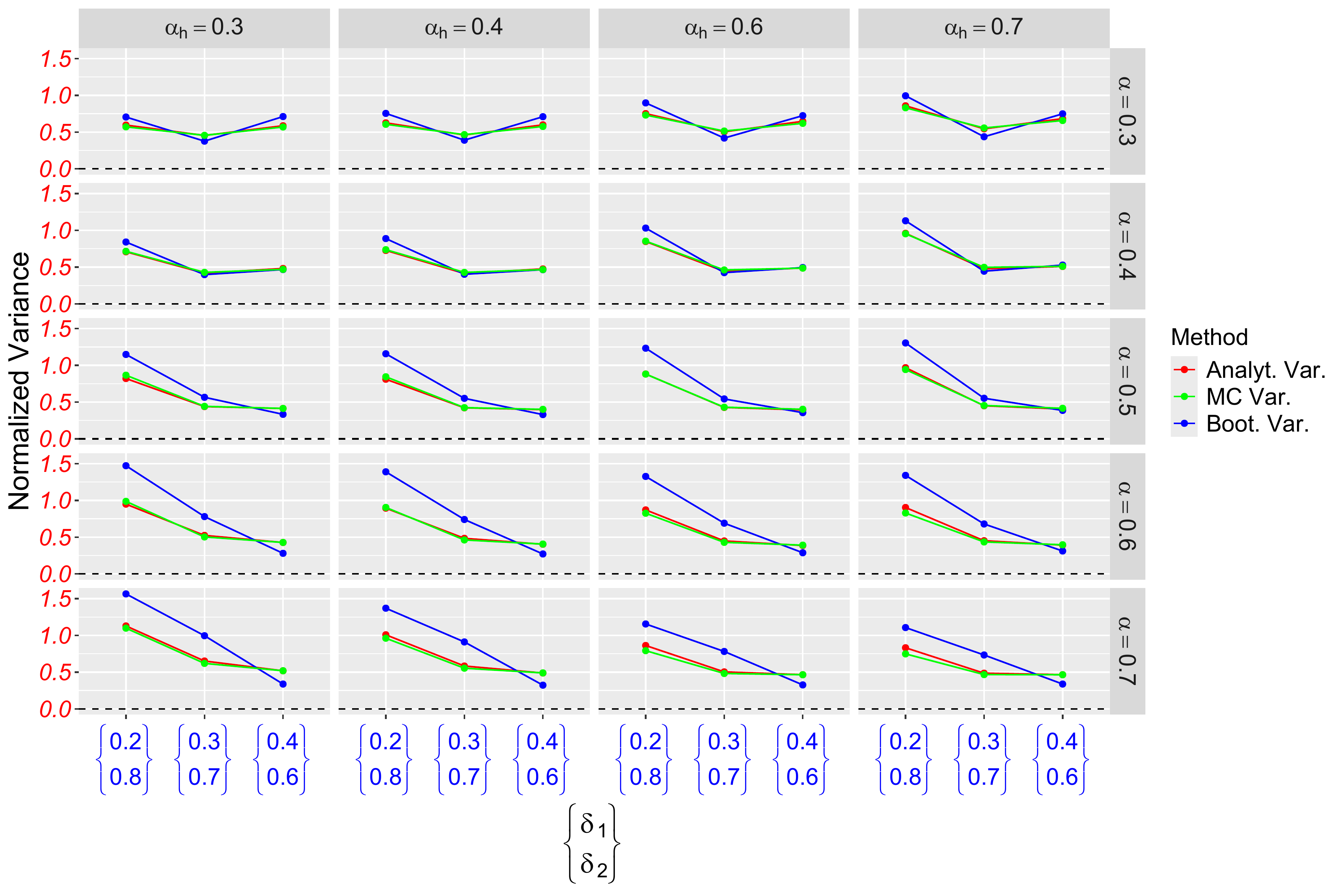}
        \caption{$\rma=0, h=2$}
    \end{subfigure}

    \vspace{0.5cm}

    \begin{subfigure}{0.5\linewidth}
        \includegraphics[height=5.5cm]{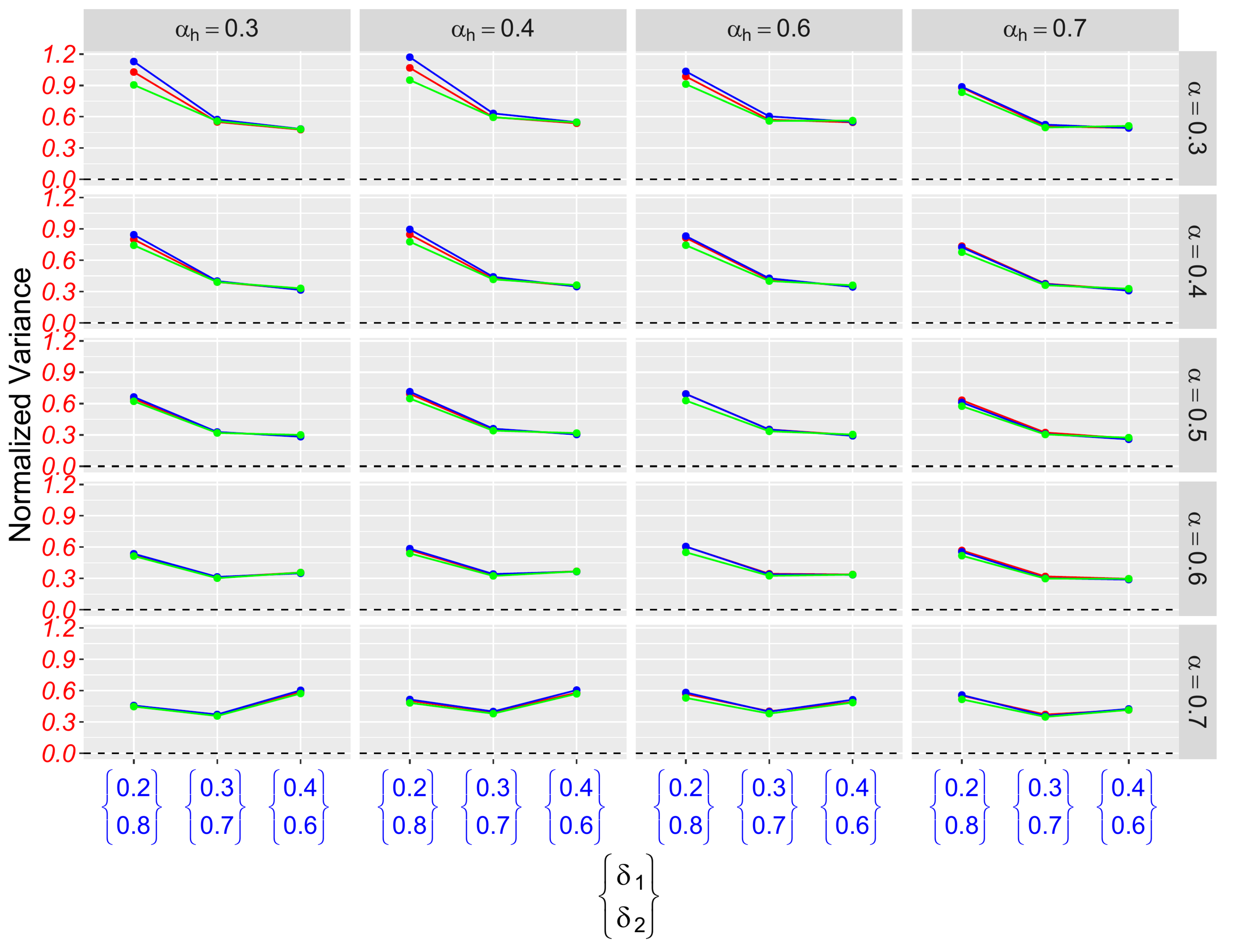}
        \caption{$\rma=1, h=1$}
    \end{subfigure}
    \begin{subfigure}{0.45\linewidth}
        \includegraphics[height=5.5cm]{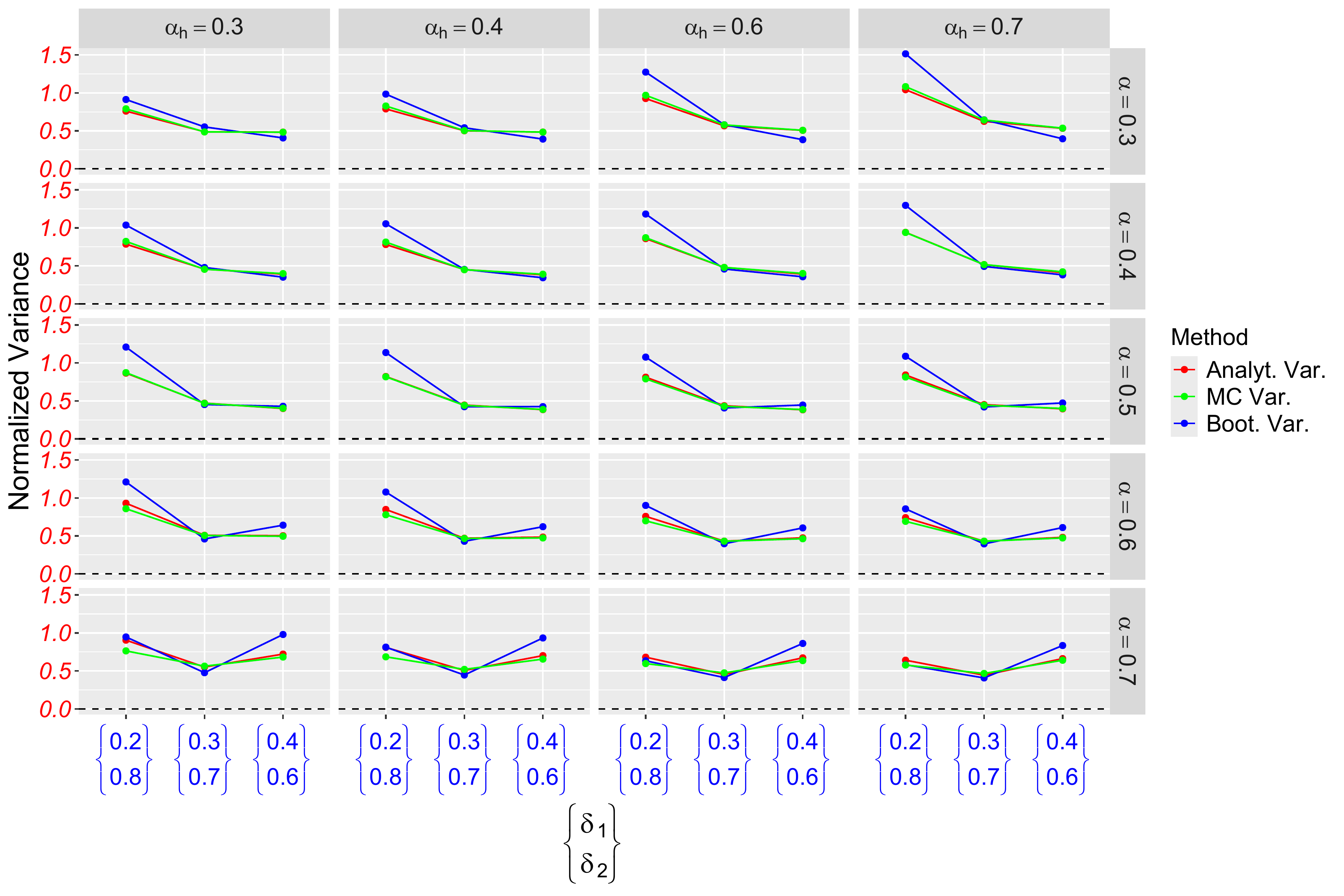}
        \caption{$\rma=1, h=2$}
    \end{subfigure}
    \caption{Validation of the Hajek variance estimator. Normalized variance estimated using analytical M-estimation (red), bootstrap (blue), and Monte Carlo simulations (green) for the spillover effects $\text{SE}^{h}\left(\alpha_{h}, 0.5;\rma, \alpha\right)$, with $\alpha_h\in\{0.3, 0.4, 0.6, 0.7\}$ and $\alpha\in\{0.3, 0.4, 0.5, 0.6, 0.7\}$, under Scenario 3, a two-stage assignment with $\{\delta_1, \delta_2 \}\in\{(0.2, 0.8), (0.3, 0.7), (0.4, 0.6)\}$, and regular networks with $p=0.2$.}
    \label{fig:var_Hajek}
\end{figure*}

\begin{figure*}[t]
    \centering
    \includegraphics[width=0.9\linewidth]{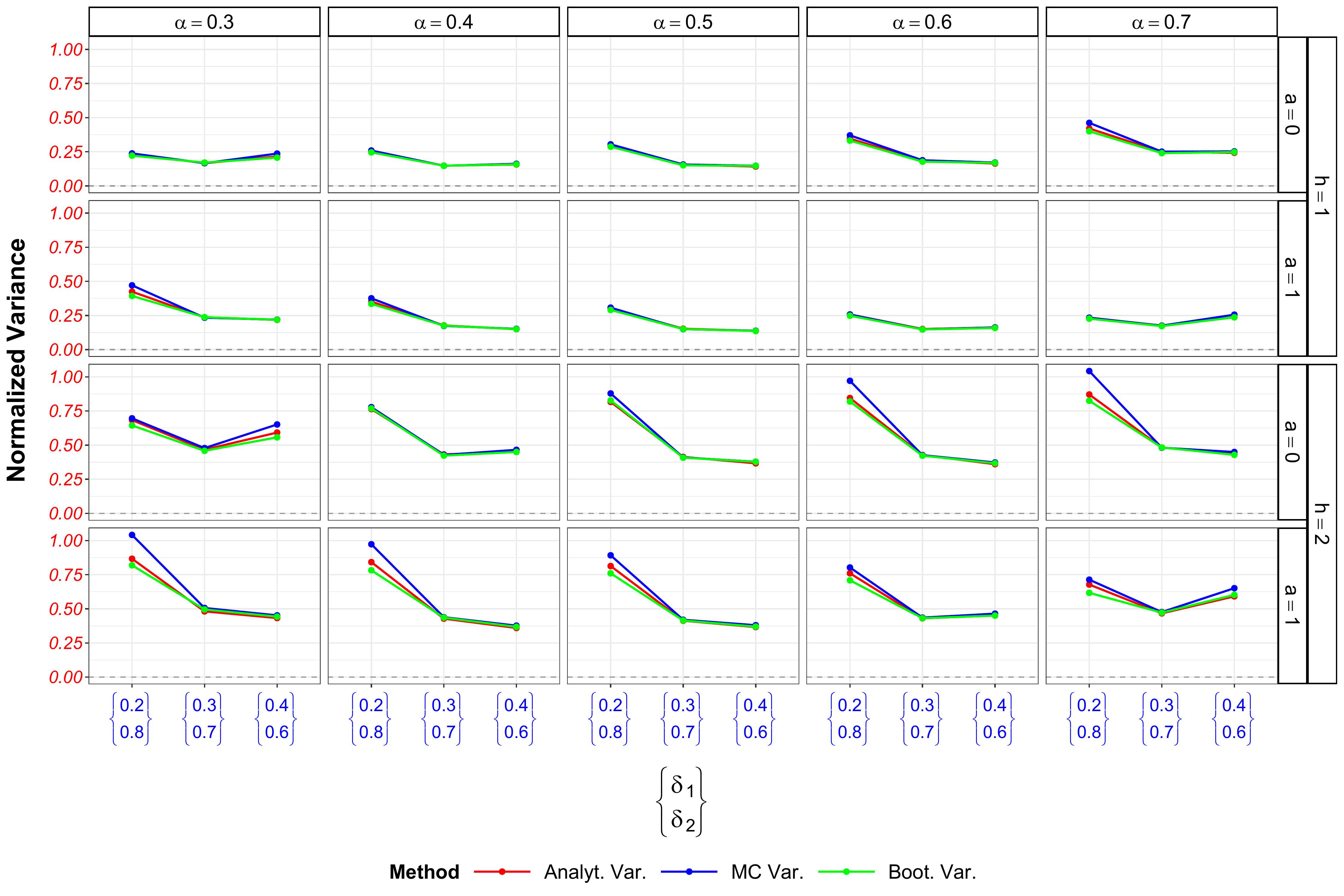}

    \caption{Validation of the WLS variance estimator. Normalized variance estimated using analytical M-estimation (red), Monte Carlo simulations (blue), and bootstrap (green) for the spillover effects $\text{SE}^{h}\left(\alpha_{h}, 0.5;\rma, \alpha\right)$, with $\alpha\in\{0.3, 0.4, 0.5, 0.6, 0.7\}$, under Scenario 3, a two-stage assignment with $\{\delta_1, \delta_2 \}\in\{(0.2, 0.8), (0.3, 0.7), (0.4, 0.6)\}$, and regular networks with $p=0.2$. The rows of the grid correspond to combinations of the treatment assignment $\rma \in \{0, 1\}$ and the network distance $h \in \{1, 2\}$.}
    \label{fig:var_WLS}
\end{figure*}

\subsubsection{Comparison of variance across different estimators}
\label{app:variance_comp}
Here we include all comparison plots for the analytical variance of the Horvitz-Thompson, Hajek, and WLS estimators, for Scenario 3 under a regular graph model with link  probability $p=0.2$

Figure~\ref{fig:ana_variance_with_ht} shows that the Horvitz-Thompson estimator (red) exhibits significantly higher variance than all other methods. This is expected, as Horvitz-Thompson estimator is highly sensitive to extreme weights, particularly when the hypothetical probabilities $(\alpha, \alpha_h)$ diverge from the design probabilities $(\delta_1, \delta_2)$.

Figure~\ref{fig:ana_variance_without_ht} excludes the Horvitz-Thompson estimator to compare the more efficient estimators. We observe that the WLS estimator (green) generally yields lower variance than the Hajek estimator (blue), because the WLS estimator exploits the parametric structure of the marginal structural model. As expected, the OLS estimators (orange, purple and deep blue) show the lowest variance due to their parametric nature, though this comes at the cost of bias when the model is misspecified (as seen in Section~\ref{app:bias}).

\begin{figure*}[!htbp]
    \centering
    \begin{subfigure}{0.5\linewidth}
      \includegraphics[height=5.5cm]{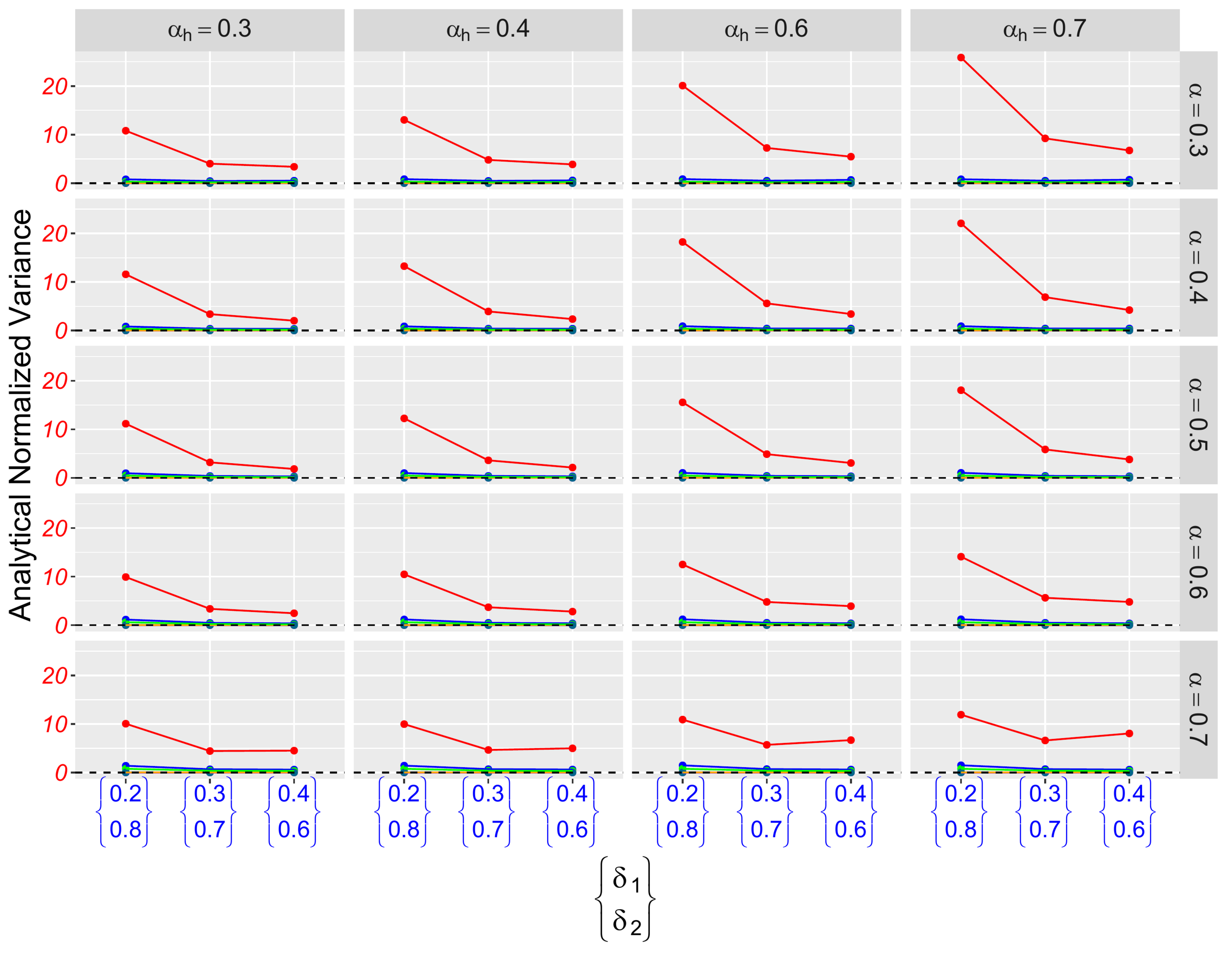}
      \caption{$\rma=0, h=1$}
      \label{fig:ana_wht_a0h1}
    \end{subfigure}
    \begin{subfigure}{0.45\linewidth}
      \includegraphics[height=5.5cm]{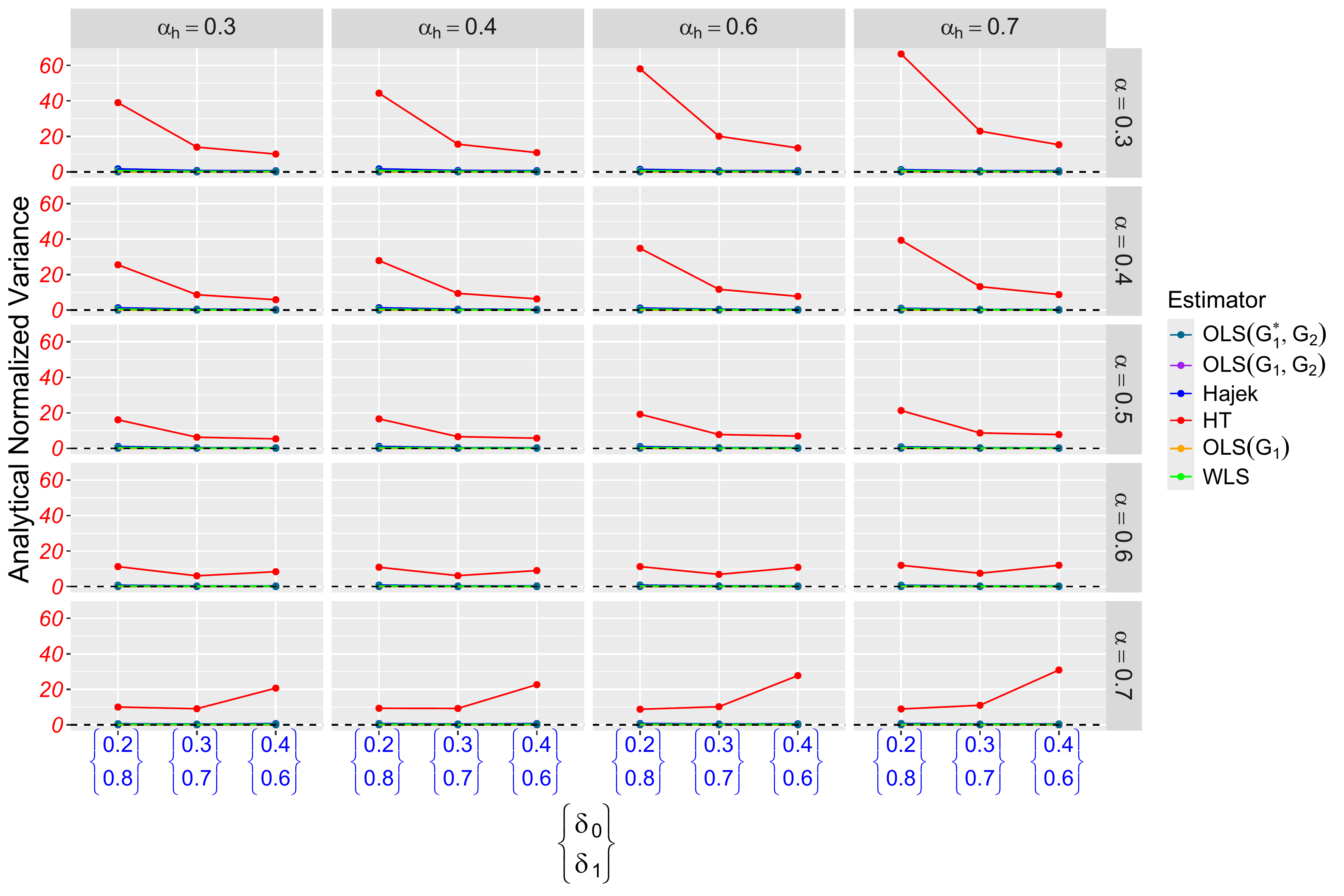}
      \caption{$\rma=1, h=1$}
      \label{fig:ana_wht_a1h1}
    \end{subfigure}

    \vspace{0.5cm}
    \begin{subfigure}{0.5\linewidth}
      \includegraphics[height=5.5cm]{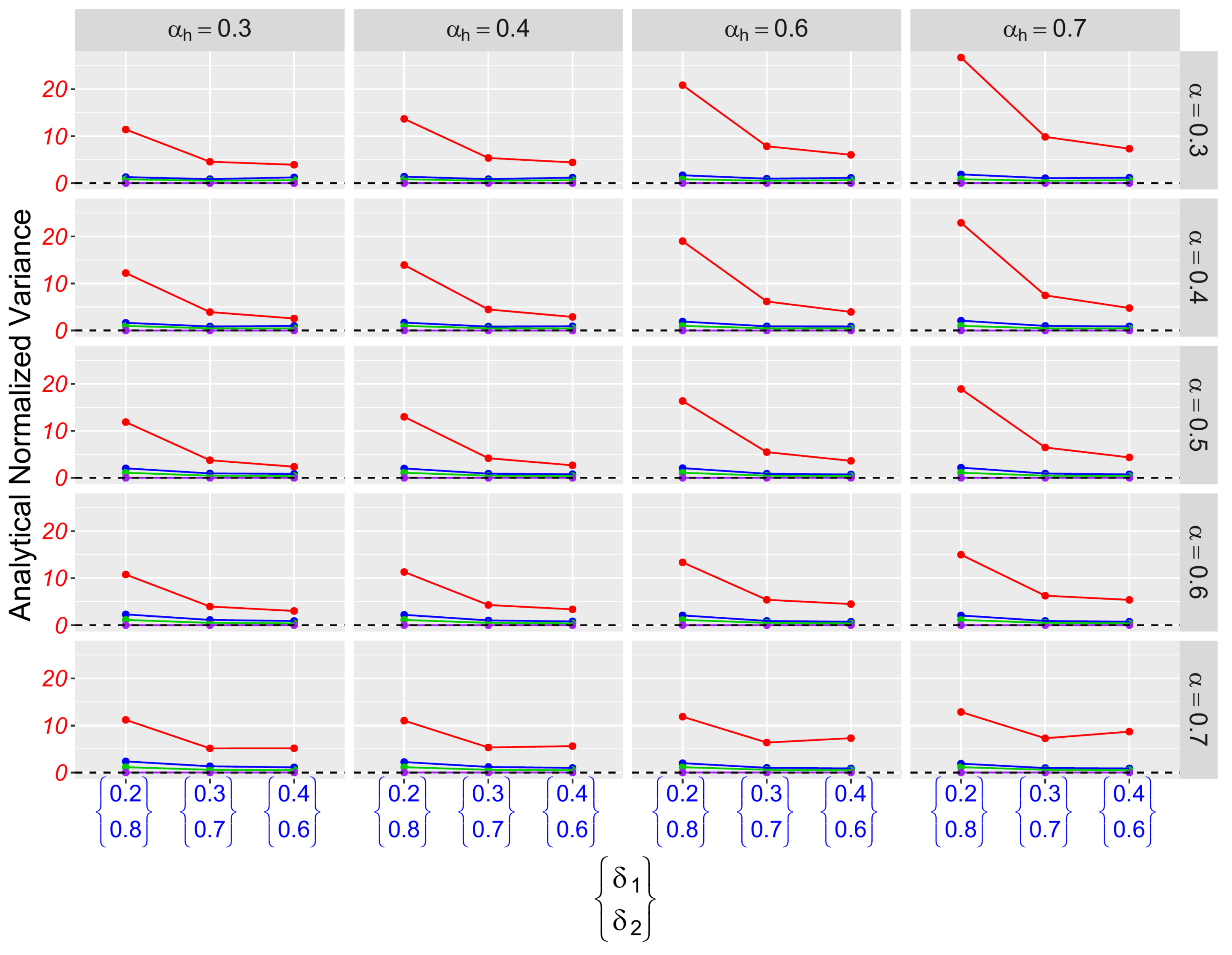}
      \caption{$\rma=0, h=2$}
      \label{fig:ana_wht_a0h2}
    \end{subfigure}
    \begin{subfigure}{0.45\linewidth}
      \includegraphics[height=5.5cm]{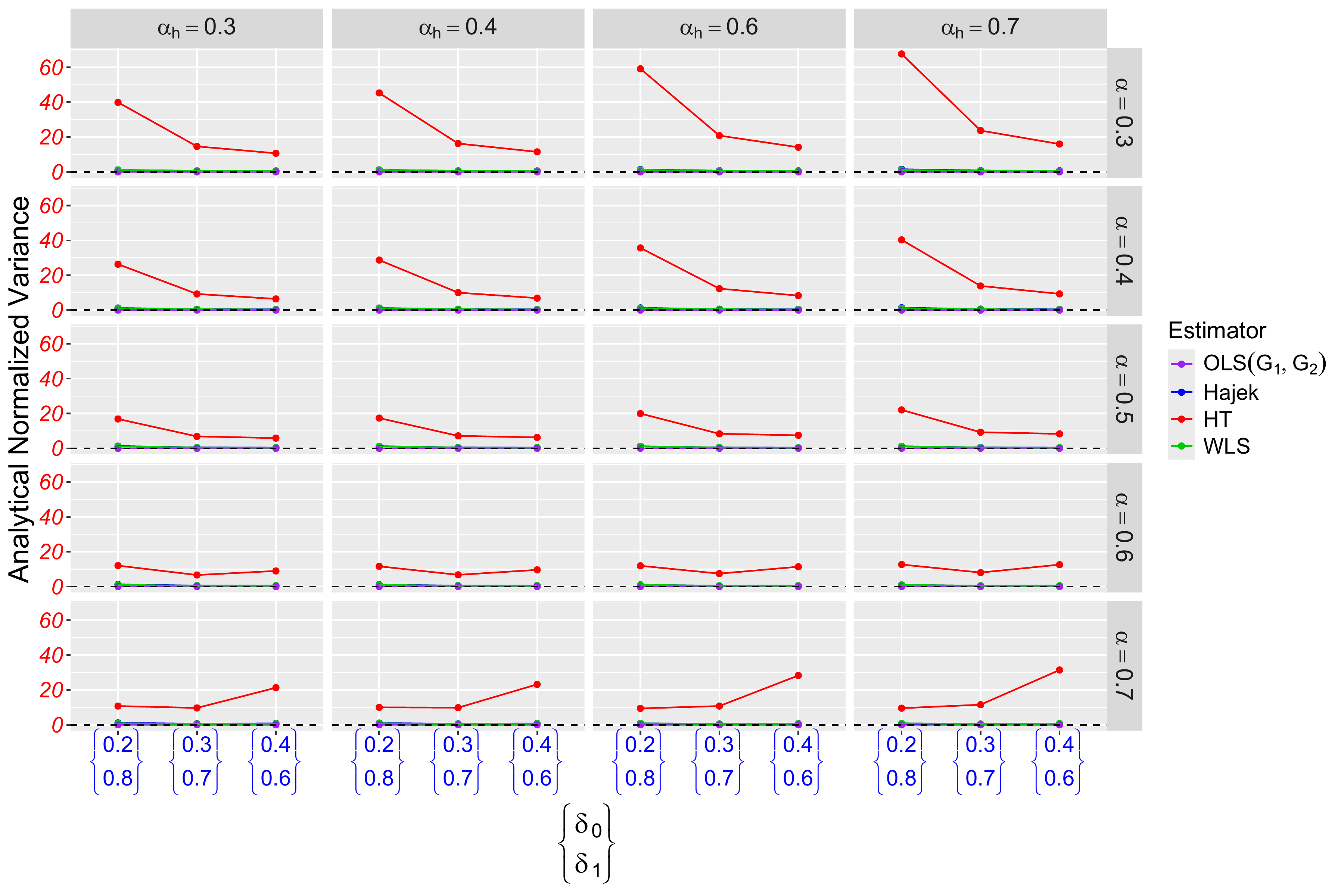}
      \caption{$\rma=1, h=2$}
      \label{fig:ana_wht_a1h2}
    \end{subfigure}
    \caption{Comparison of analytical normalized variance. Normalized variance of the HT (red), Hajek (blue), and WLS (green) estimators
    for the spillover effects $\text{SE}^{h}\left(\alpha_{h}, 0.5;\rma, \alpha\right)$, with $h=1, 2$, $\alpha_h\in\{0.3, 0.4, 0.6, 0.7\}$ and $\alpha\in\{0.3, 0.4, 0.5, 0.6, 0.7\}$, under Scenario 3, a two-stage assignment with $\{\delta_1, \delta_2 \}\in\{(0.2, 0.8), (0.3, 0.7), (0.4, 0.6)\}$, and regular networks with $p=0.2$.
    We also include the variance of the $\text{OLS}(G_1, G_2)$ estimator (purple) in all panels, and the $\text{OLS}(G_1)$ estimator (orange), $\text{OLS}(G^*_1, G_2)$ estimator (deep blue) specifically for $h=1$.
    }

    \label{fig:ana_variance_with_ht}
\end{figure*}

\begin{figure*}[!htbp]
    \centering
    \begin{subfigure}{0.5\linewidth}
      \includegraphics[height=5.5cm]{3._Study_of_different_Variance/left/left_Ana_a0h1_regular_Without_HT_Var_p=0.2.png}
      \caption{$\rma=0, h=1$}
      \label{fig:ana_woht_a0h1}
    \end{subfigure}
    \begin{subfigure}{0.45\linewidth}
      \includegraphics[height=5.5cm]{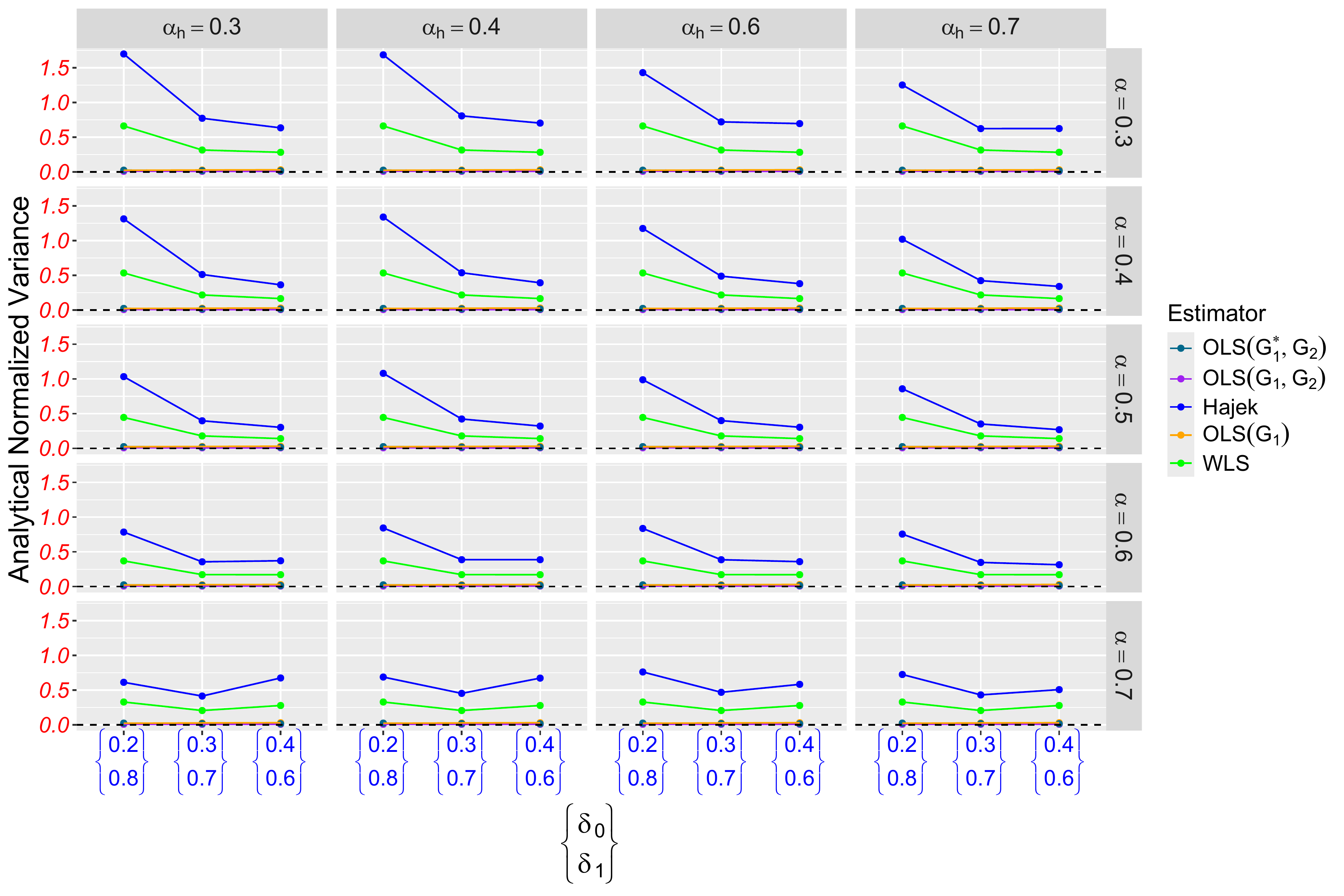}
      \caption{$\rma=1, h=1$}
      \label{fig:ana_woht_a1h1}
    \end{subfigure}

    \vspace{0.5cm}

    \begin{subfigure}{0.5\linewidth}
      \includegraphics[height=5.5cm]{3._Study_of_different_Variance/left/left_Ana_a0h2_regular_Without_HT_Var_p=0.2.png}
      \caption{$\rma=0, h=2$}
      \label{fig:ana_woht_a0h2}
    \end{subfigure}
    \begin{subfigure}{0.45\linewidth}
      \includegraphics[height=5.5cm]{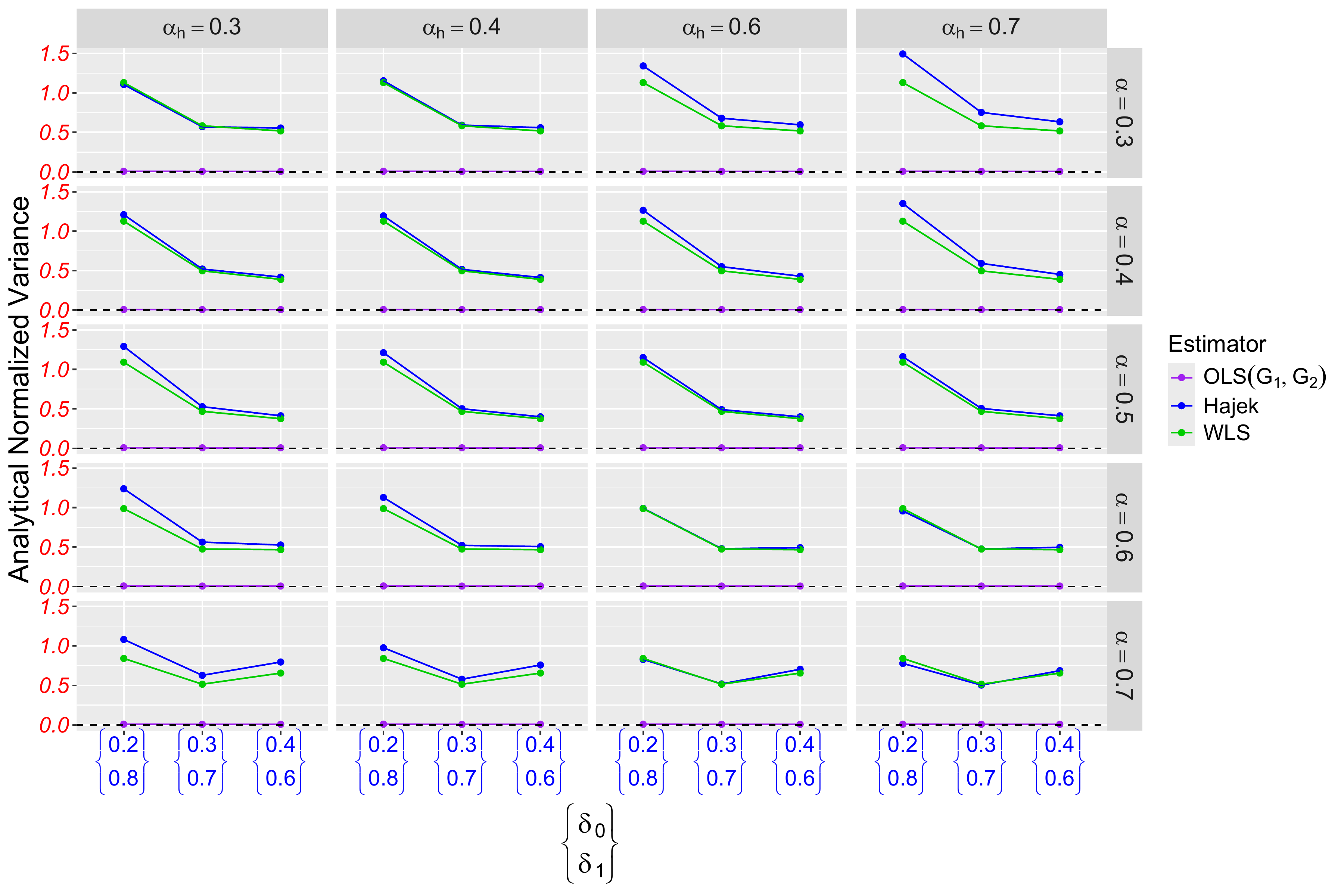}
      \caption{$\rma=1, h=2$}
      \label{fig:ana_woht_a1h2}
    \end{subfigure}
    \caption{Comparison of analytical variance. Normalized variance of the Hajek (blue) and WLS (green) estimators for the spillover effects $\text{SE}^{h}\left(\alpha_{h}, 0.5;\rma, \alpha\right)$, with $h=1, 2$, $\alpha_h\in\{0.3, 0.4, 0.6, 0.7\}$ and $\alpha\in\{0.3, 0.4, 0.5, 0.6, 0.7\}$, under Scenario 3, a two-stage assignment with $\{\delta_1, \delta_2 \}\in\{(0.2, 0.8), (0.3, 0.7), (0.4, 0.6)\}$, and regular networks with $p=0.2$. We also include the variance of the $\text{OLS}(G_1, G_2)$ estimator (purple) in all panels, and the $\text{OLS}(G_1)$ estimator (orange),$\text{OLS}(G^*_1, G_2)$ estimator (deep blue) specifically for $h=1$. This figure excludes the Horvitz-Thompson estimator to highlight the differences between Hajek and WLS.}
    \label{fig:ana_variance_without_ht}
\end{figure*}

\subsubsection{Study of Hajek estimator variance across network densities}
Here in figure~\ref{fig:hajek_var_p_a0}, we study how network density $p$ affects the variance of the Hajek estimator under Scenario~3, for both Erd\H{o}s--R\'enyi and regular graph models.

Overall, the variance patterns differ by the order of interference ($h$), reflecting how the size of the corresponding $h$-neighborhoods changes as $p$ increases.

For first-order effects ( $h=1$ ), the variance increases with $p$. Higher density leads to a larger number of first-order neighbors $\left(\left|\mathcal{N}_{i j}^1\right|\right)$, leading to more extreme propensity scores and higher variance.

For second-order effects ( $h=2$ ), the variance peaks around $p \approx 0.3-0.4$. Initially, increasing $p$ improves network connectivity, increasing the number of second-order neighbors $\left(\left|\mathcal{N}_{i j}^2\right|\right)$. However, as $p$ increases further in small clusters ( $n_i=10$ ), the network becomes saturated. Many units that were previously at distance 2 become directly connected, thereby shrinking the size of $\left(\left|\mathcal{N}_{i j}^2\right|\right)$ and reducing the variance.

Furthermore, we observe that these trends are consistent across network topologies, with Erdős-Rényi and regular graphs displaying very similar patterns.

\begin{figure*}[!htbp]

    \centering
    \begin{subfigure}{0.5\linewidth}
      \includegraphics[height=5cm]{4.Variance_of_different_p/left/left_ER_hajek_std_anavar_a0h1_all_p.png}
      \caption{Erd\H{o}s--R\'enyi, $\rma=0, h=1$}
      \label{fig:hajek_p_ER_a0h1}
    \end{subfigure}
    \begin{subfigure}{0.45\linewidth}
      \includegraphics[height=5cm]{4.Variance_of_different_p/Regular_hajek_std_anavar_a0h1_all_p.png}
      \caption{Regular, $\rma=0, h=1$}
      \label{fig:hajek_p_Reg_a0h1}
    \end{subfigure}

    \vspace{0.5cm}

    \begin{subfigure}{0.5\linewidth}
      \includegraphics[height=5cm]{4.Variance_of_different_p/left/left_ER_hajek_std_anavar_a0h2_all_p.png}
      \caption{Erd\H{o}s--R\'enyi, $\rma=0, h=2$}
      \label{fig:hajek_p_ER_a0h2}
    \end{subfigure}
    \begin{subfigure}{0.45\linewidth}
      \includegraphics[height=5cm]{4.Variance_of_different_p/Regular_hajek_std_anavar_a0h2_all_p.png}
      \caption{Regular, $\rma=0, h=2$}
      \label{fig:hajek_p_Reg_a0h2}
    \end{subfigure}
    \caption{Study of Hajek estimator variance across network densities. Analytical normalized variance of the Hajek estimator for the spillover effects $\text{SE}^{h}\left(\alpha_{h}, 0.5;\rma, \alpha\right)$ across different values of $p \in \{0.2, \dots, 0.8\}$ (represented by different colors), with $\alpha_h\in\{0.3, 0.4, 0.6, 0.7\}$ and $\alpha\in\{0.3, 0.4, 0.5, 0.6, 0.7\}$, under Scenario 3, a two-stage assignment with $\{\delta_1, \delta_2 \}\in\{(0.2, 0.8), (0.3, 0.7), (0.4, 0.6)\}$. The rows correspond to $h=1$ (top) and $h=2$ (bottom) with $\rma=0$. Left column: Erd\H{o}s--R\'enyi graphs; Right column: Regular graphs.}
    \label{fig:hajek_var_p_a0}
\end{figure*}

\subsection{Empirical Correlation Between Exposure Mappings} \label{app:correlation}
This section provides an explanation for the correlation seen between the exposure mappings $G^{1}_{ij}$ and $G^{2}_{ij}$ induced by the two-stage experiment. We report the empirical correlation between $G^{1}_{ij}$ and $G^{2}_{ij}$ across different first-stage probability settings ( $\delta_1, \delta_2$ ) to explain the bias of the $\text{OLS}(G_1)$ estimator when estimating the first-order spillover effect in Scenario 3 (see Section 5).

\begin{figure}[H]

    \centering
        \includegraphics[height=6cm]{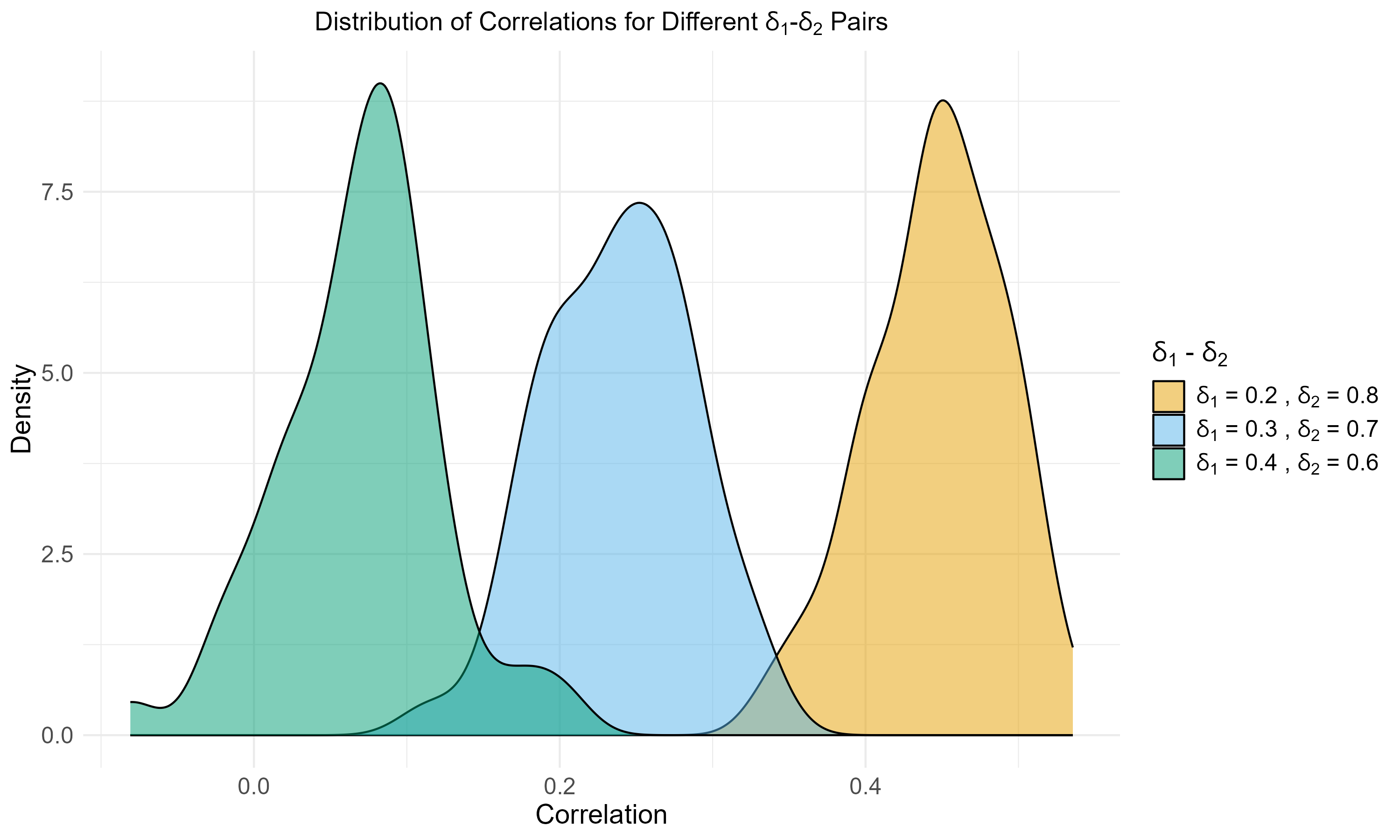}
         \caption{Correlation distributions for different \protect$\delta_1$ and \protect$\delta_2$}
    \label{correlation_Delta}
\end{figure}
As shown in Figure~\ref{correlation_Delta}, the correlation between $G^1_{ij}$ and $G^2_{ij}$ is in fact induced by the first-stage treatment randomization.
Within any cluster assigned to a fixed treatment probability, individual treatment assignments are independent. Therefore, conditional on the first-stage assignment, the proportion of treated first-order neighbors, $G^1_{ij}$,  and treated second-order neighbors, $G^2_{ij}$, are uncorrelated. However, when data from all clusters are pooled for analysis, a strong marginal correlation emerges. Clusters assigned to the low probability ( $\delta_1$ ) will systematically exhibit lower average values for both $G^1_{ij}$ and $G^2_{ij}$. Conversely, clusters assigned the high probability ($\delta_2$) will exhibit higher average values for both measures. Pooling these two distinct groups creates a positive association. This phenomenon is a classic example of Simpson's Paradox.

The magnitude of this correlation depends directly on the separation between $\delta_1$ and $\delta_2$. We illustrate this by comparing three scenarios: a high-separation case ( $\delta_1, \delta_2=0.2,0.8$ ), a medium-separation case ($\delta_1, \delta_2=0.4,0.6$ ), and a low-separation case ($\delta_1, \delta_2=0.47,0.53$ ). Note that for these illustrative figures, we set the cluster size to $n_i=100$ to clearly visualize the distributional separation and the resulting trends.

\begin{figure*}[!htbp]
    \centering
    \begin{subfigure}{\linewidth}
        \centering
        \includegraphics[height=6cm]{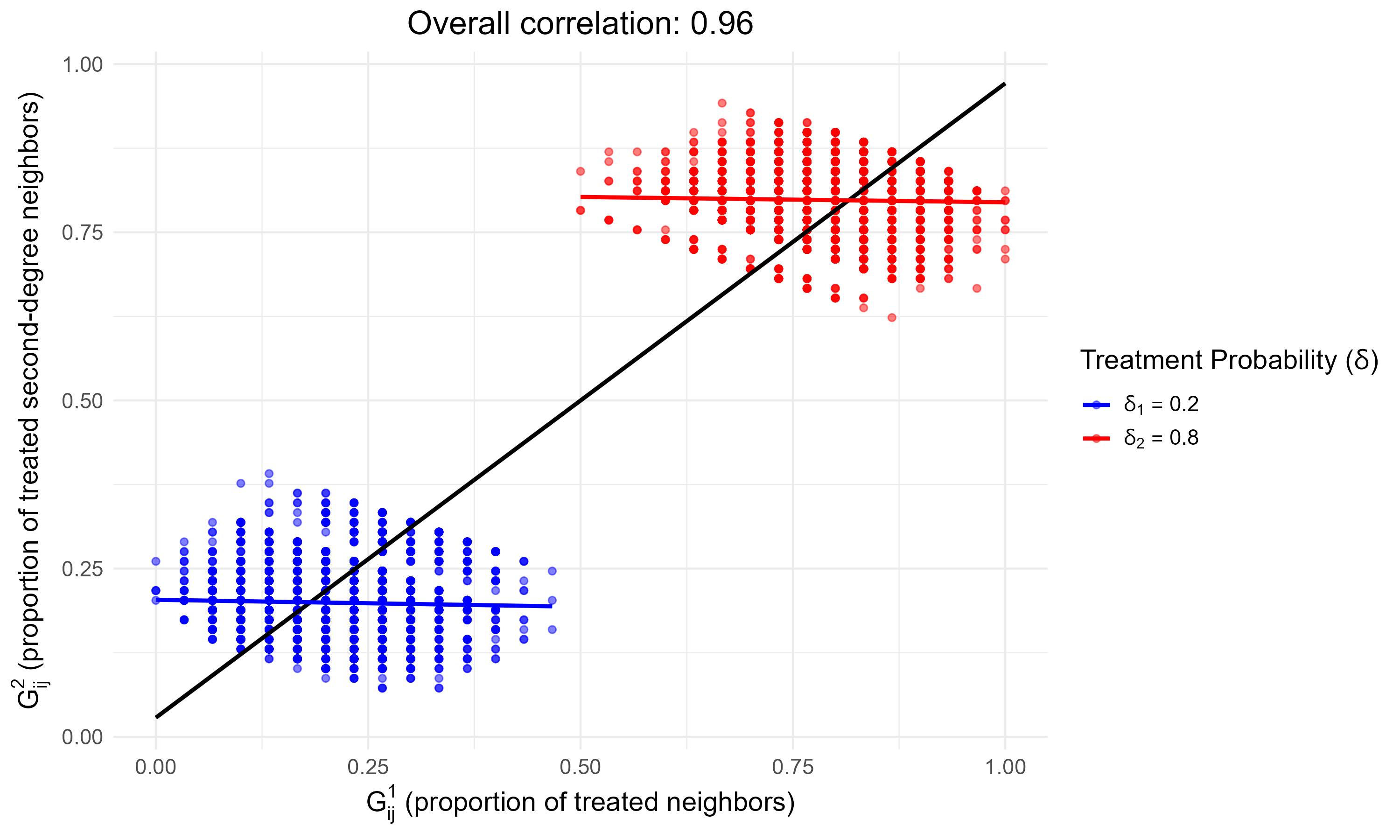}
        \caption{high-separation case: $\{\delta_1, \delta_2\} = \{0.2, 0.8\}$}
        \label{fig:simpson_02}
    \end{subfigure}
    \vspace{0.3cm}

    \begin{subfigure}{\linewidth}
         \centering
         \includegraphics[height=6cm]{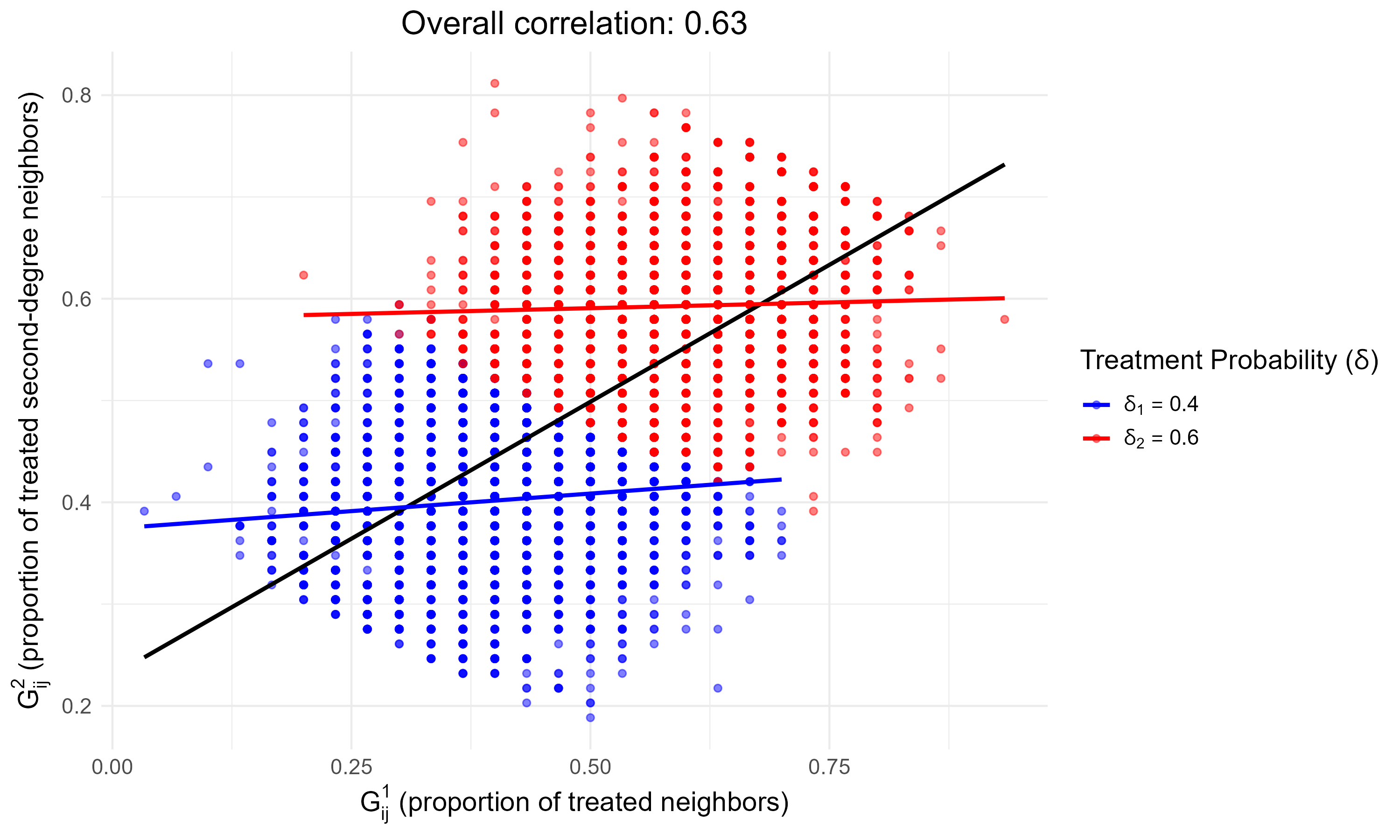}
         \caption{medium-separation case: $\{\delta_1, \delta_2\} = \{0.4, 0.6\}$}
         \label{fig:simpson_04}
    \end{subfigure}
    \vspace{0.3cm}

    \begin{subfigure}{\linewidth}
        \centering
        \includegraphics[height=6cm]{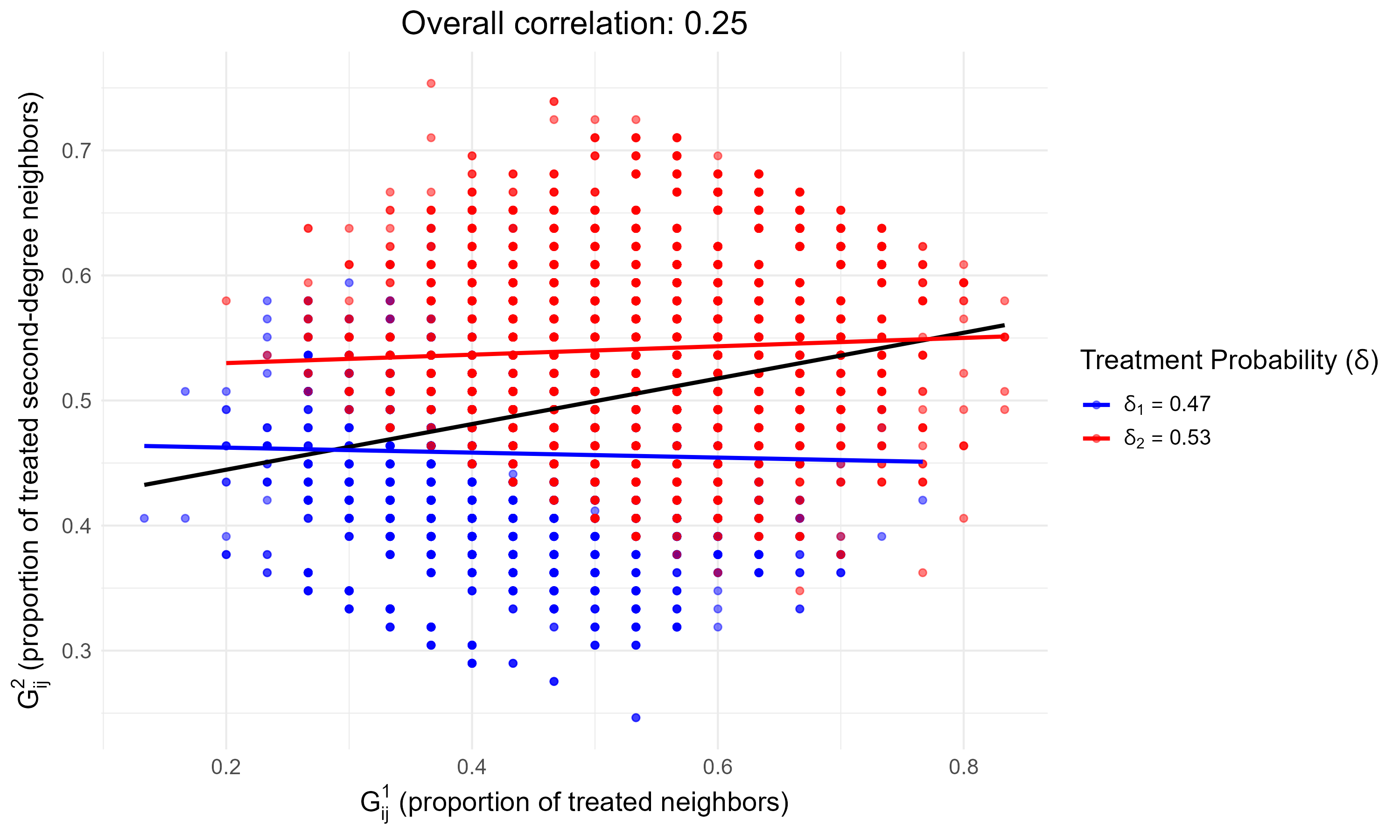}
        \caption{low-separation case: $\{\delta_1, \delta_2\} = \{0.47, 0.53\}$}
        \label{fig:simpson_047}
    \end{subfigure}

    \caption{Empirical correlation between exposures $G_{ij,1}$ and $G_{ij,2}$ under different first-stage probability separation settings ($\delta_1, \delta_2$). The plots illustrate how the separation between $\delta_1$ and $\delta_2$ induces a marginal correlation (Simpson's Paradox), which diminishes as the probabilities converge.}
    \label{fig:simpson_paradox}
\end{figure*}

As shown in Figure \ref{fig:simpson_paradox}, the impact of Simpson's Paradox weakens as the first-stage probabilities converge. Figure \ref{fig:simpson_02} shows the high-separation case ($\{\delta_1, \delta_2\}=\{0.2,0.8\}$), where distant data clouds induce a strong positive correlation. In contrast, Figures \ref{fig:simpson_04} and \ref{fig:simpson_047} show that as the probabilities approach each other (from $\{0.4,0.6\}$ to $\{0.47,0.53\}$), the distinct groups merge, substantially diminishing the correlation.

Consequently, the bias in the naive $\text{OLS}(G_1)$ estimator in Scenario 3, caused by omitting $G_{ij,2}$, is most severe in the setting of Figure \ref{fig:simpson_02} and is mitigated as the first-stage probabilities approach each other.

\section{Supplementary Materials for Empirical Application}
\label{app:empirical_supp}
\setcounter{equation}{0}
\setcounter{figure}{0}
\setcounter{table}{0}

\subsection{Household-level Assignment and Probability Calculations}
\label{app:household_formulas}

In the empirical application, treatment is assigned at the household level. Let $\mathcal{H}_i$ denote the set of households in cluster $i$. For each unit $j$ in cluster $i$, let $h(ij) \in \mathcal{H}_i$ denote the household to which unit $ij$ belongs. Therefore, we have $A_{ij} = A_{ik}$ if $h(ij) = h(ik)$. Let $\tilde{A}_{u}$ denote the binary treatment indicator for household $u \in \mathcal{H}_i$.
Specifically, for any unit $i k$ in cluster $i$, its individual treatment status $A_{i k}$ is determined by its household $h(i k)$:
$$
A_{i k}=\tilde{A}_{h(i k)}
$$
To adapt our estimators to this setting, we define the relevant sets at the household level. For a unit $ij$, let $\mathcal{I}_{ij}$ be its individual-level interference set. We define $\mathcal{U}_{ij}$ as the set of unique households that contain at least one member of $\mathcal{I}_{ij}$:
$$
\mathcal{U}_{ij} = \left\{ u \in \mathcal{H}_i : \exists k \text{ s.t. } ik \in \mathcal{I}_{ij} \text{ and } h(ik) = u \right\}.
$$
Similarly, we partition this set of households based on the $h$-order neighborhood. Let $\mathcal{U}^{h}_{ij} \subseteq \mathcal{U}_{ij}$ be the set of households containing at least one unit from the $h$-order neighborhood $\mathcal{C}^{h}_{ij}$, and let $\mathcal{U}^{rest}_{ij} = \mathcal{U}_{ij} \setminus \mathcal{U}^{h}_{ij}$ be the set of remaining households in the interference set.

Under the household-level Bernoulli approximation discussed in Section 2.2, treatment is assigned at the household level, then the probability of observing a specific treatment vector is non-zero only if all individuals in the same household share the same treatment. Any treatment vector $\mathbf{a}$ that assigns different values to members of the same household has zero probability  by design. Therefore, the probabilities are calculated based on the number of treated households.
Conditional on the first-stage cluster assignment $S_i = \delta_m$, the probability of observing the treatment vector for the interference set $\mathbf{A}_{\mathcal{I}_{ij}}$ (which corresponds to household treatments $\tilde{\mathbf{A}}_{\mathcal{U}_{ij}}$) is:
$$
P_{\Delta}\left(\mathbf{A}_{\mathcal{I}_{ij}} = \mathbf{a}_{\mathcal{I}_{ij}} \mid S_i = \delta_m \right) = \prod_{u \in \mathcal{U}_{ij} \cup \{h(ij)\}} \delta_m^{\tilde{a}_{u}} (1 - \delta_m)^{1 - \tilde{a}_{u}}.
$$
Note that this probability considers the union of the unit's own household and households in its interference set.
Therefore, the denominator of the HT weights $w_{ij}^h(\alpha, \alpha_h)$ (defined in Section 4.2.1) is
$$
P_{\Delta}\left(\mathbf{A}_{\mathcal{I}_{i j}}=\mathbf{a}_{\mathcal{I}_{i j}}\right)=\sum_{m=1}^M P\left(S_i=\delta_m\right) P\left(\mathbf{A}_{\mathcal{I}_{i j}}=\mathbf{a} \mid S_i=\delta_m\right)
$$

The numerator of the weight $w_{ij}^h(\alpha, \alpha_h)$ is defined by two hypothetical Bernoulli trials at the household level: one for households in the $h$-order neighborhood ($\mathcal{U}^{h}_{ij}$) with probability $\alpha_h$, and one for the remaining households ($\mathcal{U}^{rest}_{ij}$) with probability $\alpha$. The corresponding probabilities are:
$$
\begin{aligned}
P_{\alpha_h}\left(\mathbf{A}_{\mathcal{C}^{h}_{ij}} = \mathbf{a}_{\mathcal{C}^{h}_{ij}}\right) &= \prod_{u \in \mathcal{U}^{h}_{ij}} \alpha_h^{\tilde{a}_{u}} (1 - \alpha_h)^{1 - \tilde{a}_{u}}, \\
P_{\alpha}\left(\mathbf{A}_{\mathcal{C}_{i\backslash(j, \mathcal{C}^{h}_{ij})}} = \mathbf{a}_{\mathcal{C}_{i\backslash(j, \mathcal{C}^{h}_{ij})}}\right) &= \prod_{u \in \mathcal{U}^{rest}_{ij}} \alpha^{\tilde{a}_{u}} (1 - \alpha)^{1 - \tilde{a}_{u}}.
\end{aligned}
$$
These household-level probabilities are then substituted into the weight formulas for the Horvitz-Thompson, Hajek, and WLS estimators.

\subsection{Spillover Effects using the Horvitz-Thompson estimator}
Figures~\ref{fig:ht_a0h1}--\ref{fig:ht_a1h2} present spillover effect estimates using the Horvitz-Thompson estimator under the finite distance neighbor interference set. Compared to the Hajek estimator results reported in the main text, the Horvitz-Thompson estimator produces wider confidence intervals, consistent with our simulation findings regarding the lower efficiency of the Horvitz-Thompson estimator under a large cluster size.

\begin{figure*}[!htbp]
    \centering
    \begin{subfigure}{0.42\linewidth}
    \hspace*{-0.5cm}
        \includegraphics[height=5.5cm]{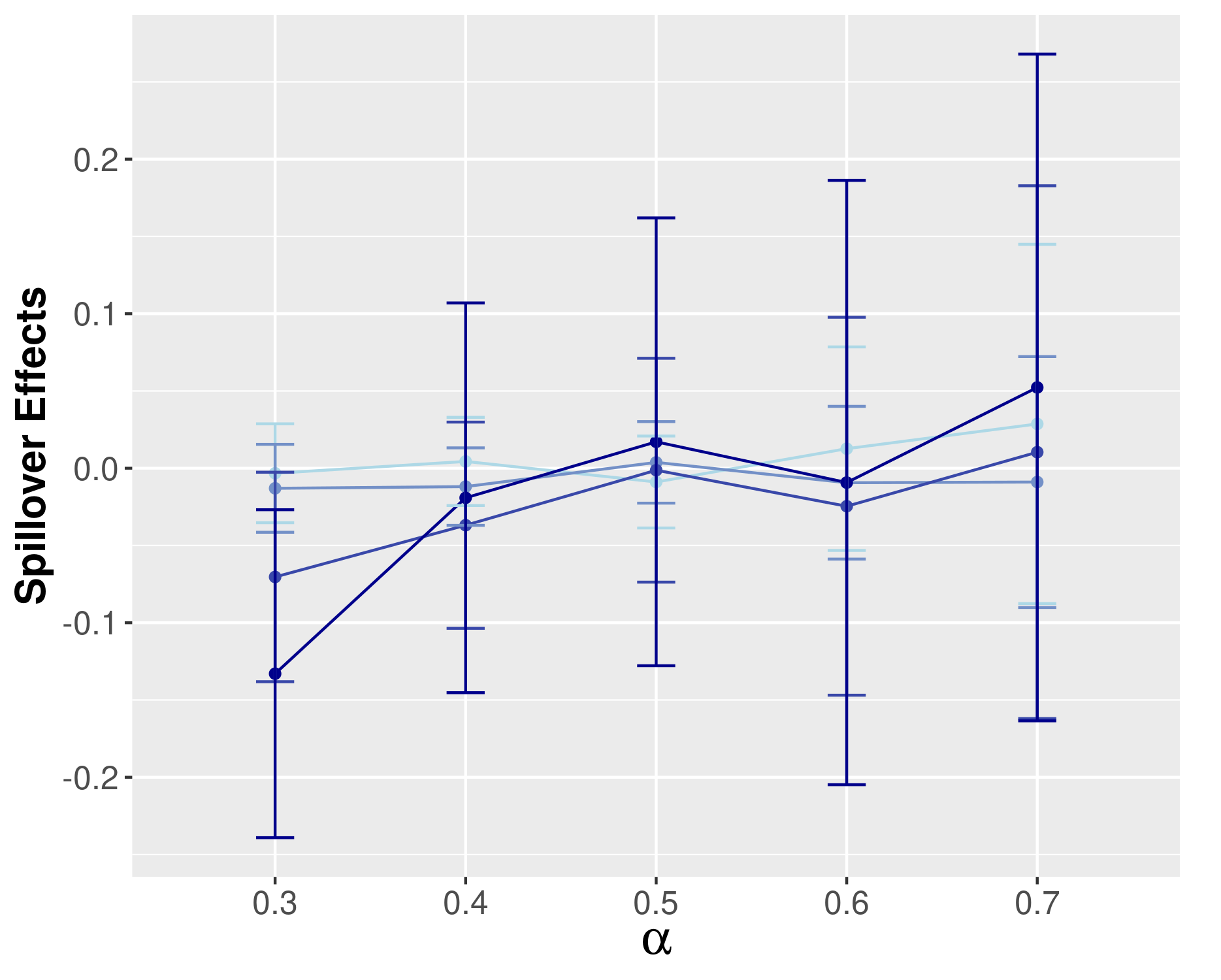}
        \caption{$\rma=0, h=1$}
        \label{fig:ht_a0h1}
    \end{subfigure}
    \hfill
    \begin{subfigure}{0.52\linewidth}
        \includegraphics[height=5.5cm]{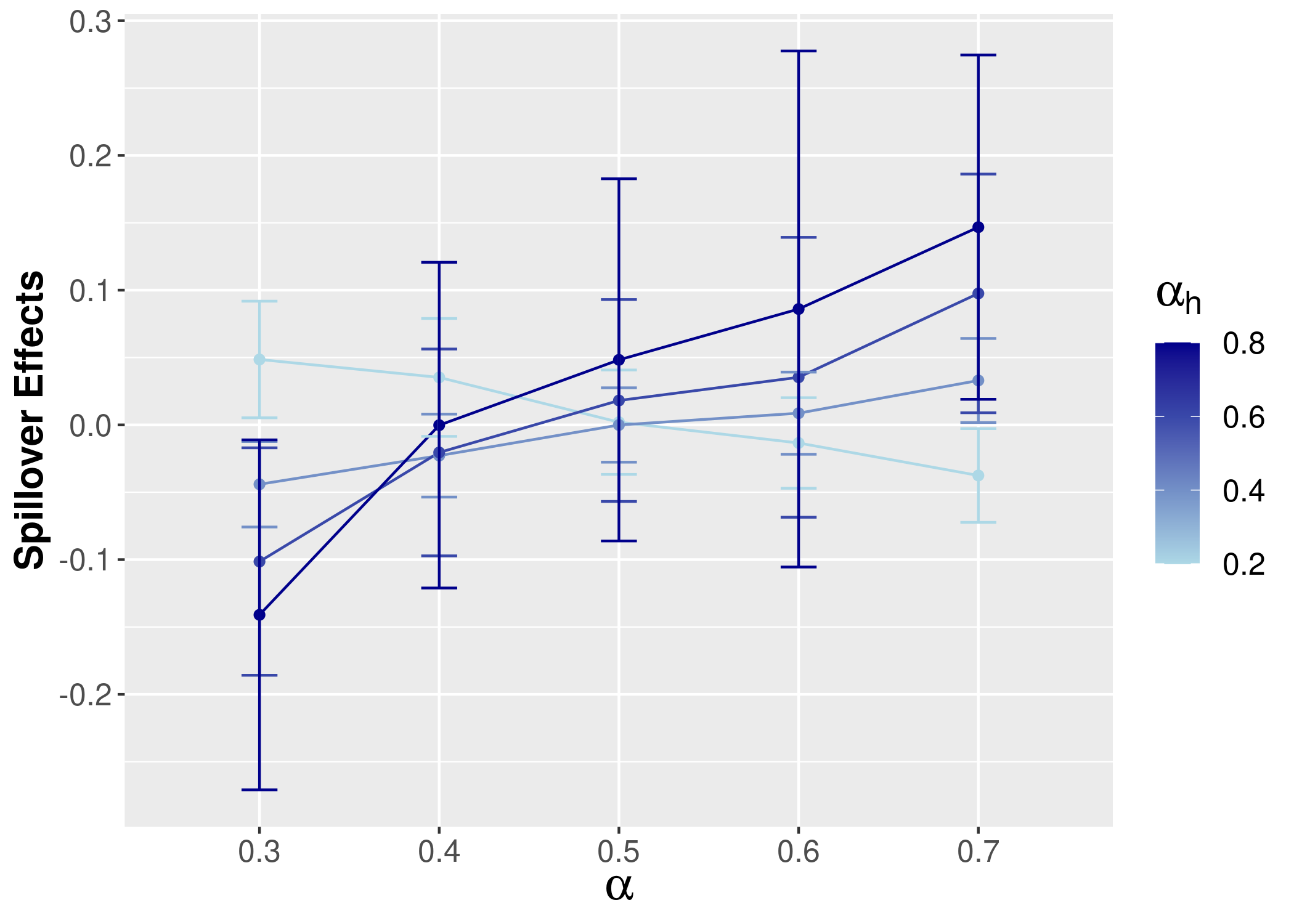}
        \caption{$\rma=1, h=1$}
        \label{fig:ht_a1h1}
    \end{subfigure}

    \vspace{0.5cm}

    \begin{subfigure}{0.42\linewidth}
    \hspace*{-0.5cm}
        \includegraphics[height=5.5cm]{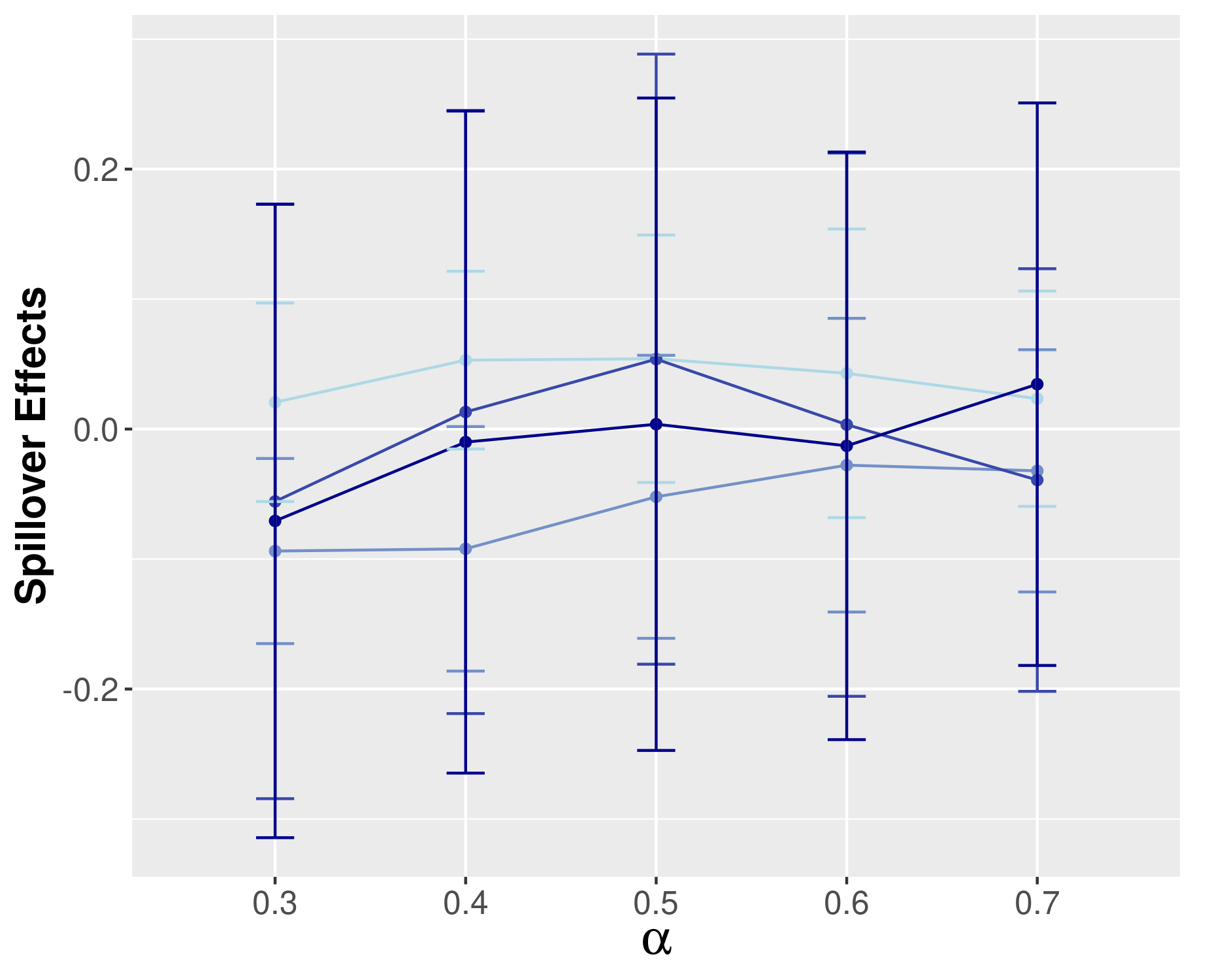}
        \caption{$\rma=0, h=2$}
        \label{fig:ht_a0h2}
    \end{subfigure}
    \hfill
    \begin{subfigure}{0.52\linewidth}
        \includegraphics[height=5.5cm]{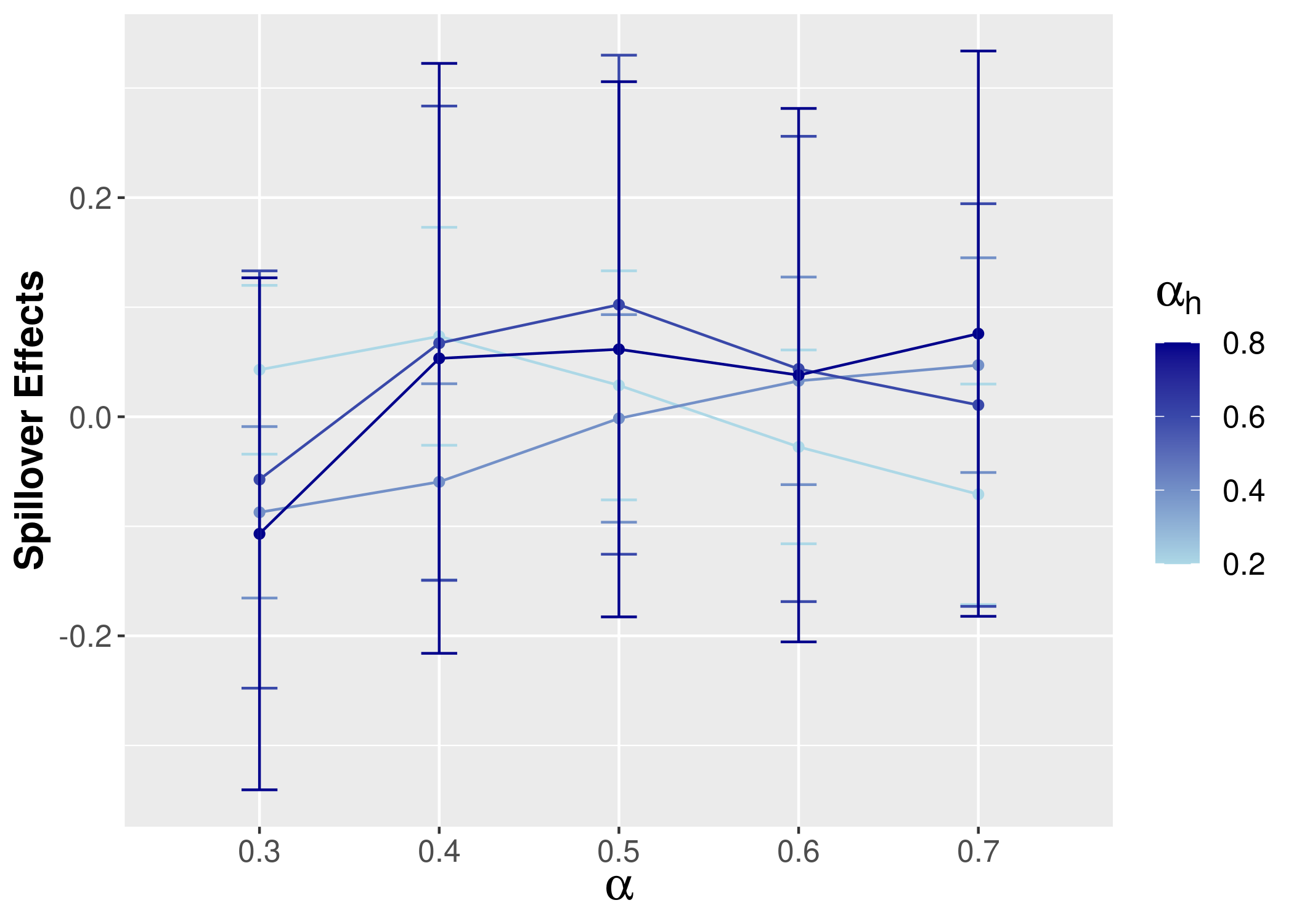}
        \caption{$\rma=1, h=2$}
        \label{fig:ht_a1h2}
    \end{subfigure}

    \caption{Horvitz-Thompson estimates of first- and second-order spillover effects. The top row displays first-order effects ($h=1$) and the bottom row displays second-order effects ($h=2$). Point estimates and 95\% confidence intervals are reported for the untreated (left) and for the treated (right), and for different values of $\alpha$ (x-axis) and $\alpha_h$ (colors), assuming the interference set as the subset of individuals reachable through a finite path. }
    \label{fig:ht_spillover_combined}
\end{figure*}

\subsection{Results Under Different Interference Set Definitions}
\label{app:interference}

To examine the robustness of our findings to the specification of the interference set,  we report results under two additional interference set definitions.

\subsubsection{Distance-\texorpdfstring{$3$}{3} Neighbor Interference Set}

First, we consider the $k^{\text{th}}$-order neighborhood interference assumption (Assumption 4.C) with $k=3$.
That is, for each unit $ij$ we set
$$
\mathcal{I}_{ij} = \{ik \in \mathcal{N}_i \setminus \{ij\}: d_{ij,ik}\le 3\}
= \bigcup_{h=1}^{3}\mathcal{N}^{h}_{ij}.
$$

Figures~\ref{fig:hajek_d3_a0h1}--\ref{fig:wr_d3_a0h2} present spillover effect estimates using the Hajek and WLS estimators under this distance-$3$ interference set, yielding similar patterns to the main analysis using the finite-path interference set (Figure 5--8). This is expected because, in this empirical network, the distance-$3$ neighborhood already contains the majority of reachable individuals for most units, so restricting interference to distance $3$ makes little practical difference.

\begin{figure*}[!htbp]
\centering

\begin{subfigure}{0.42\linewidth}
\hspace*{-0.5cm}
\centering
\includegraphics[height=5.5cm]{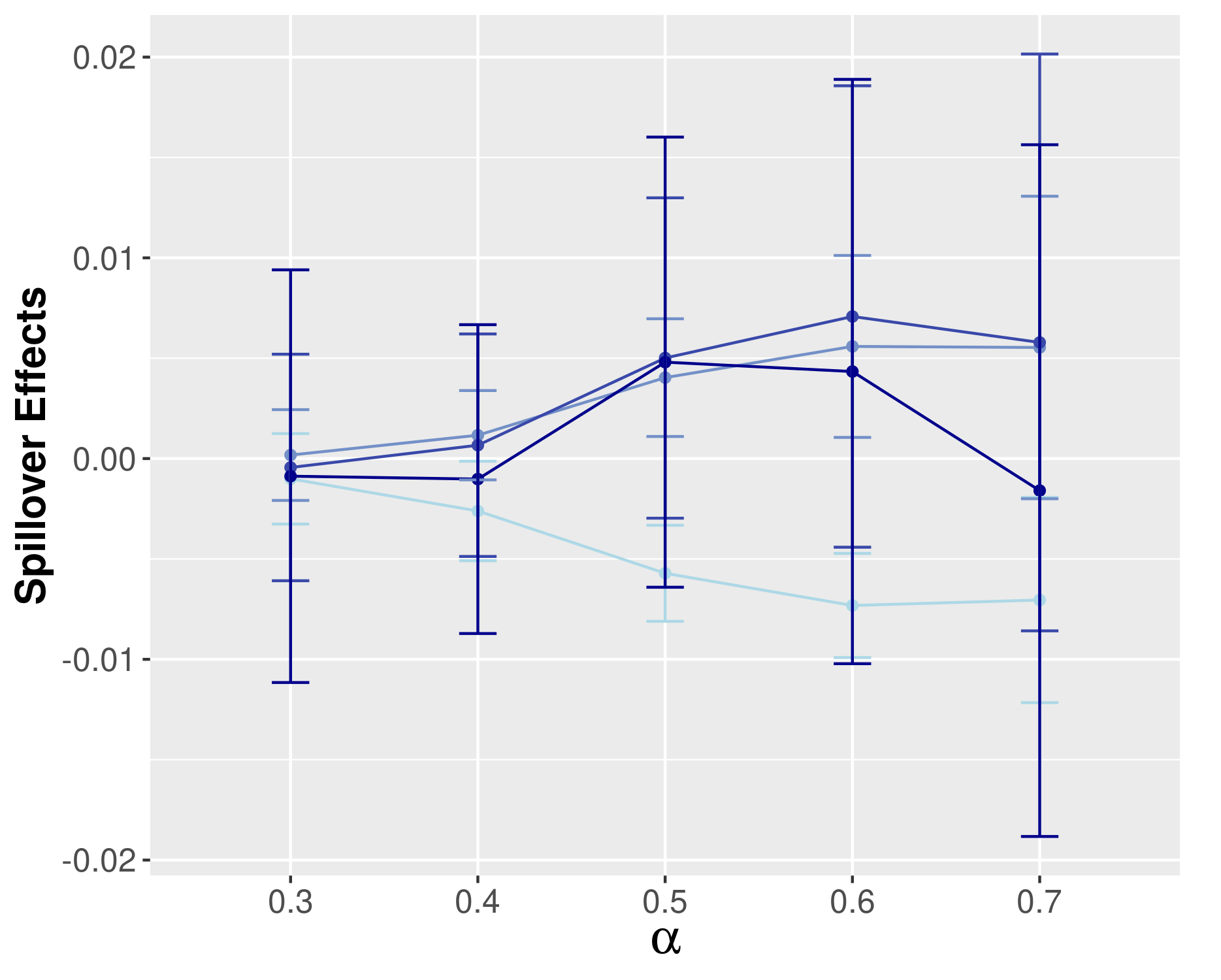}
\caption{$\rma=0,\ h=1$}
\label{fig:hajek_d3_a0h1}
\end{subfigure}
    \hfill
\begin{subfigure}{0.52\linewidth}
\centering
\includegraphics[height=5.5cm]{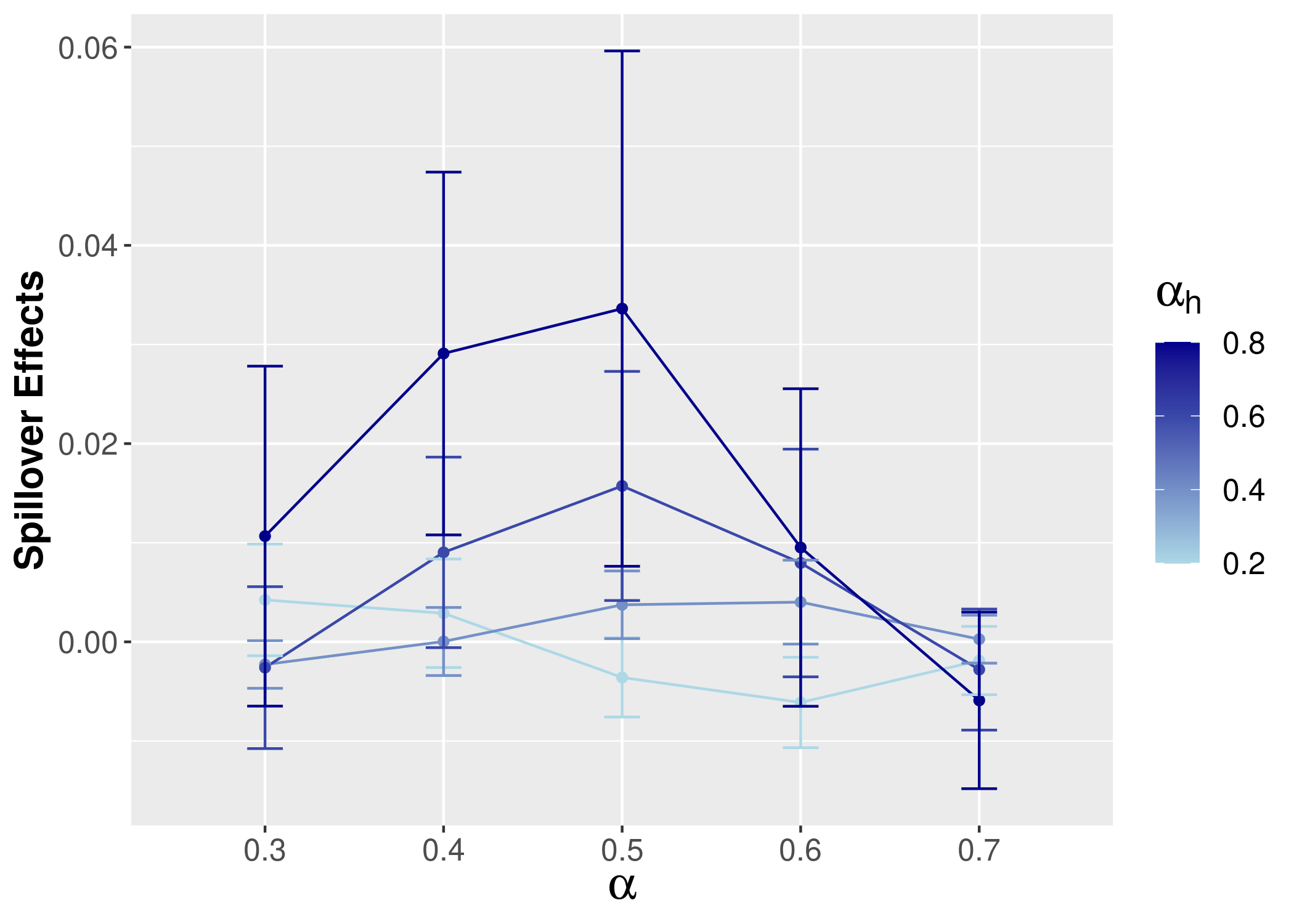}
\caption{$\rma=1,\ h=1$}
\label{fig:hajek_d3_a1h1}
\end{subfigure}

\vspace{0.4cm}

\begin{subfigure}{0.42\linewidth}
\hspace*{-0.5cm}
\centering
\includegraphics[height=5.5cm]{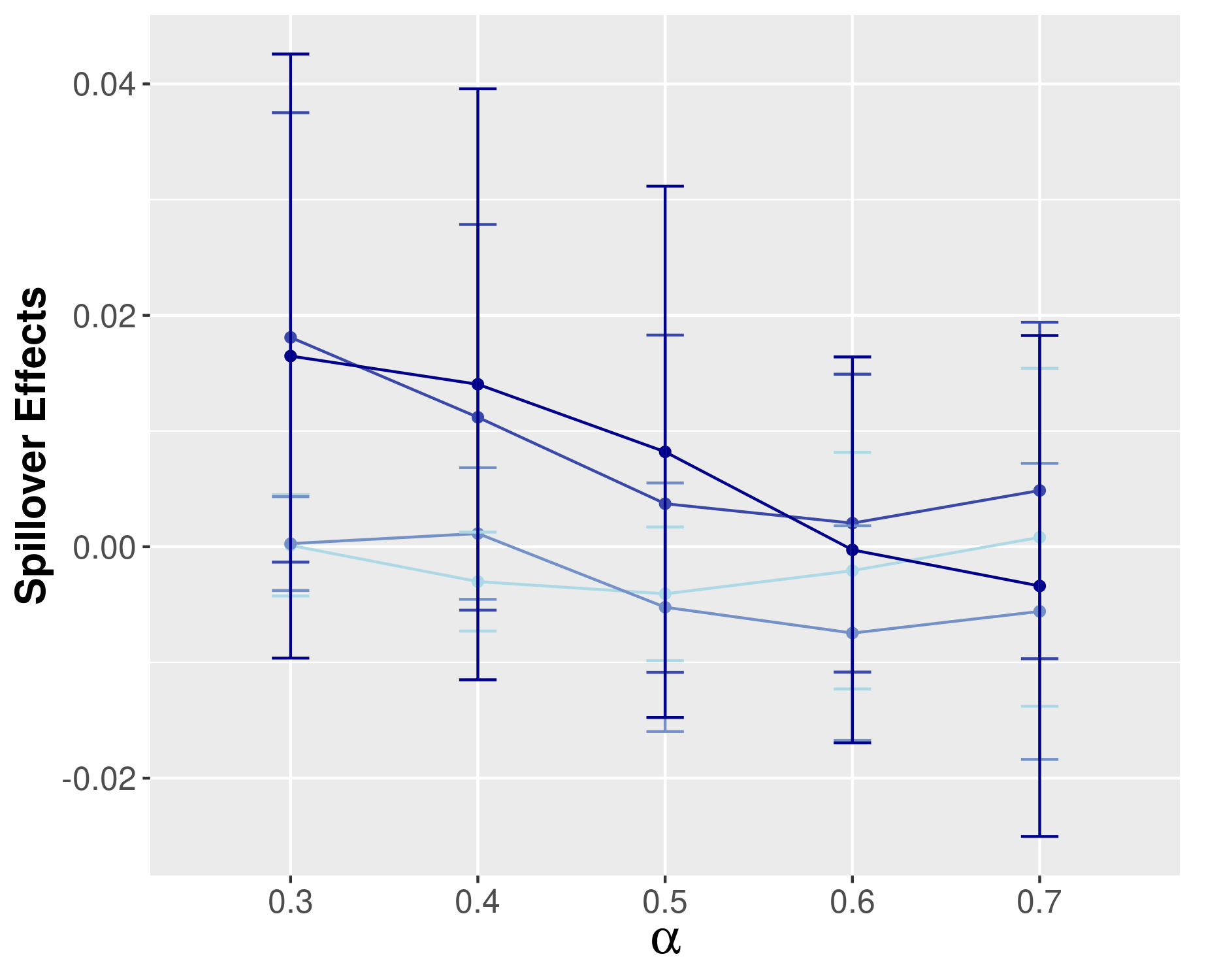}
\caption{$\rma=0,\ h=2$}
\label{fig:hajek_d3_a0h2}
\end{subfigure}
    \hfill
\begin{subfigure}{0.52\linewidth}
\centering
\includegraphics[height=5.5cm]{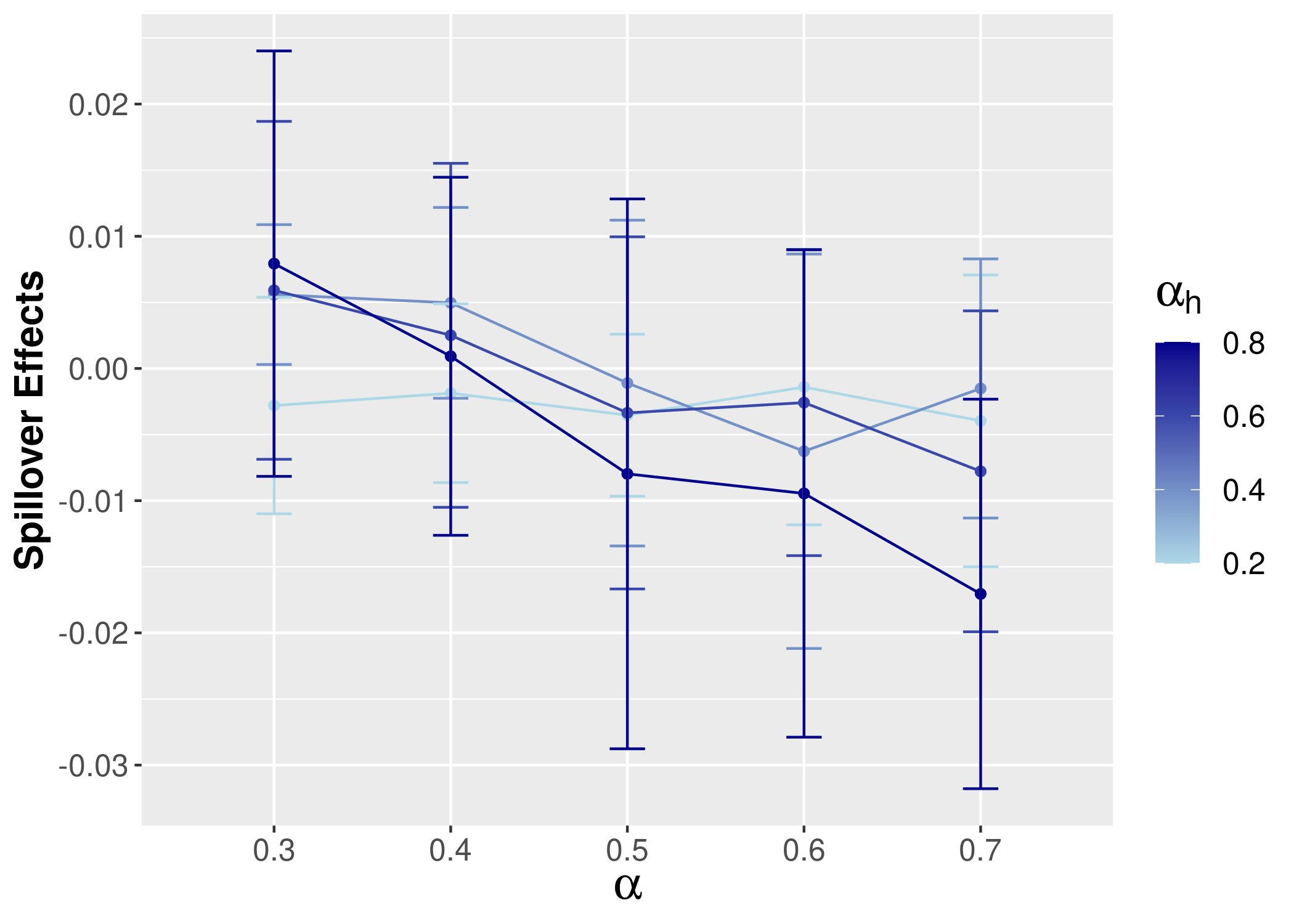}
\caption{$\rma=1,\ h=2$}
\label{fig:hajek_d3_a1h2}
\end{subfigure}

\caption{Spillover effect estimates using the Hajek estimator under the distance-3 interference set. Point estimates and 95\% confidence intervals are reported as a function of $\alpha$ (x-axis) and $\alpha_h$ (colors).}
\label{fig:d3_hajek}
\end{figure*}

\begin{figure*}[!htbp]
\centering

\begin{subfigure}{0.42\linewidth}
\centering
\hspace*{-0.5cm}
\includegraphics[height=5.5cm]{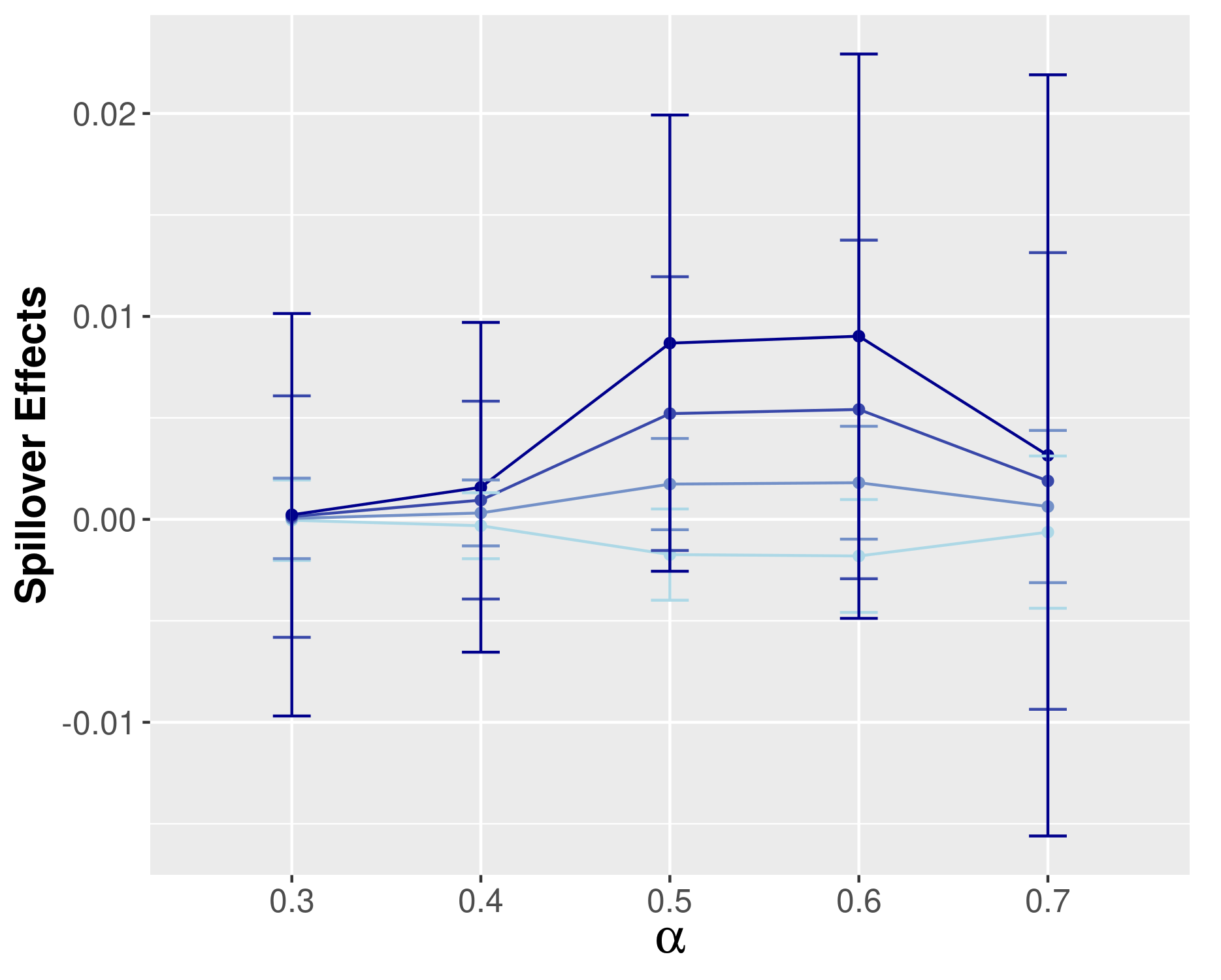}
\caption{$\rma=0$, $h=1$}
         \label{fig:wr_d3_a0h1}
\end{subfigure}
    \hfill
\begin{subfigure}{0.52\linewidth}
\centering
\includegraphics[height=5.5cm]{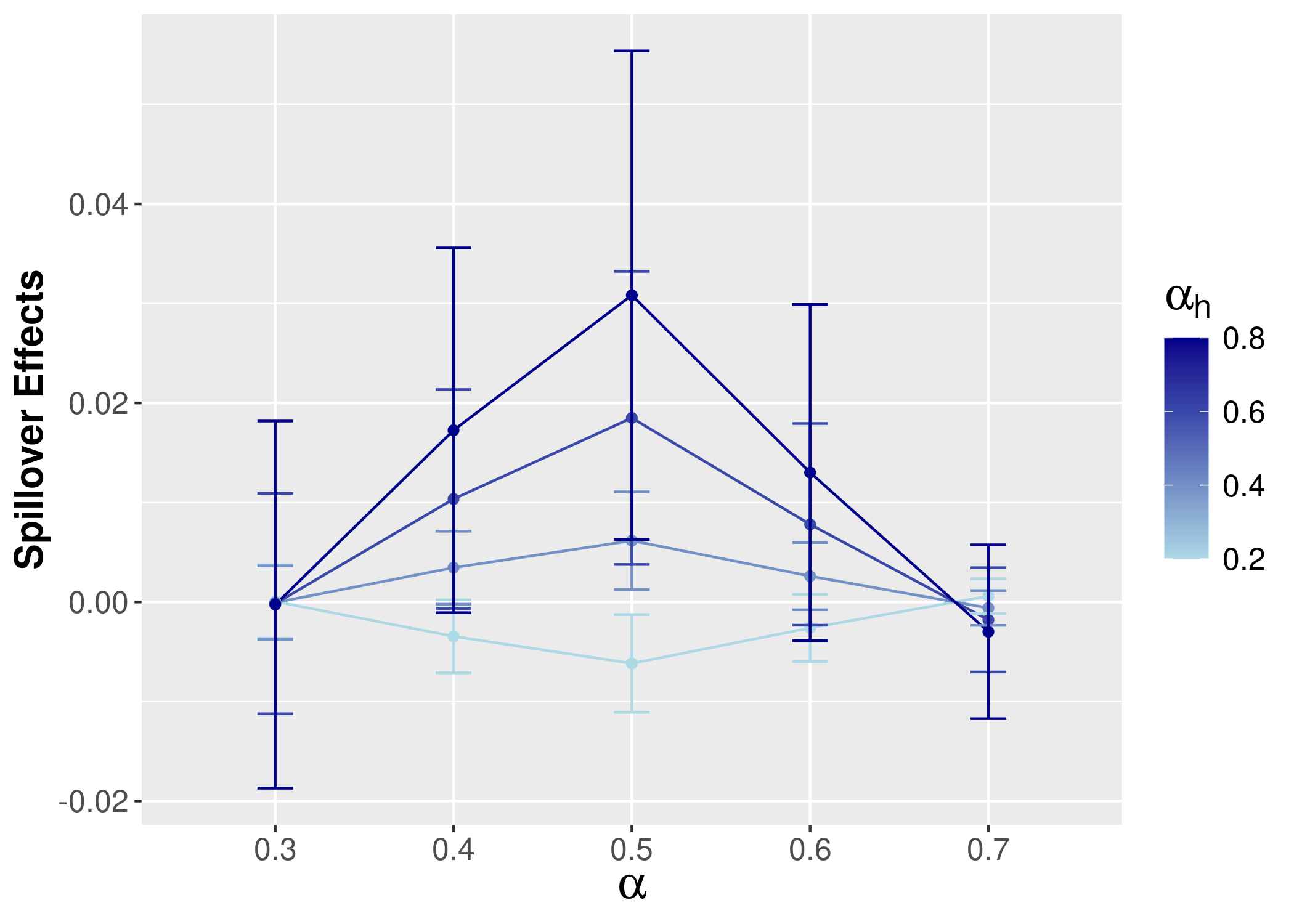}
\caption{$\rma=1$, $h=1$}
       \label{fig:wr_d3_a1h1}
\end{subfigure}

\vspace{0.4cm}

\begin{subfigure}{0.42\linewidth}
\centering
\hspace*{-0.5cm}
\includegraphics[height=5.5cm]{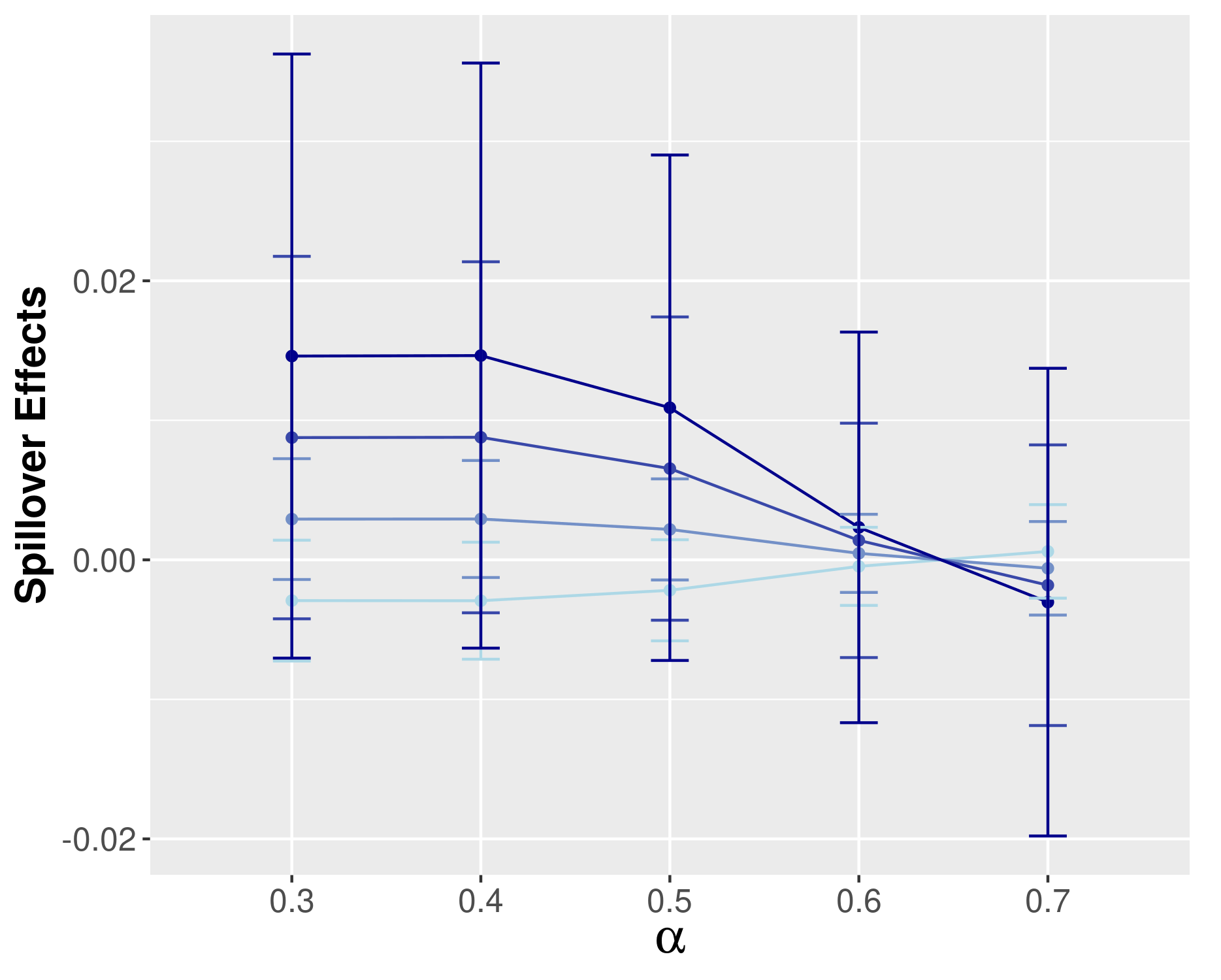}
\caption{$\rma=0$, $h=2$}
         \label{fig:wr_d3_a0h2}
\end{subfigure}
    \hfill
\begin{subfigure}{0.52\linewidth}
\centering
\includegraphics[height=5.5cm]{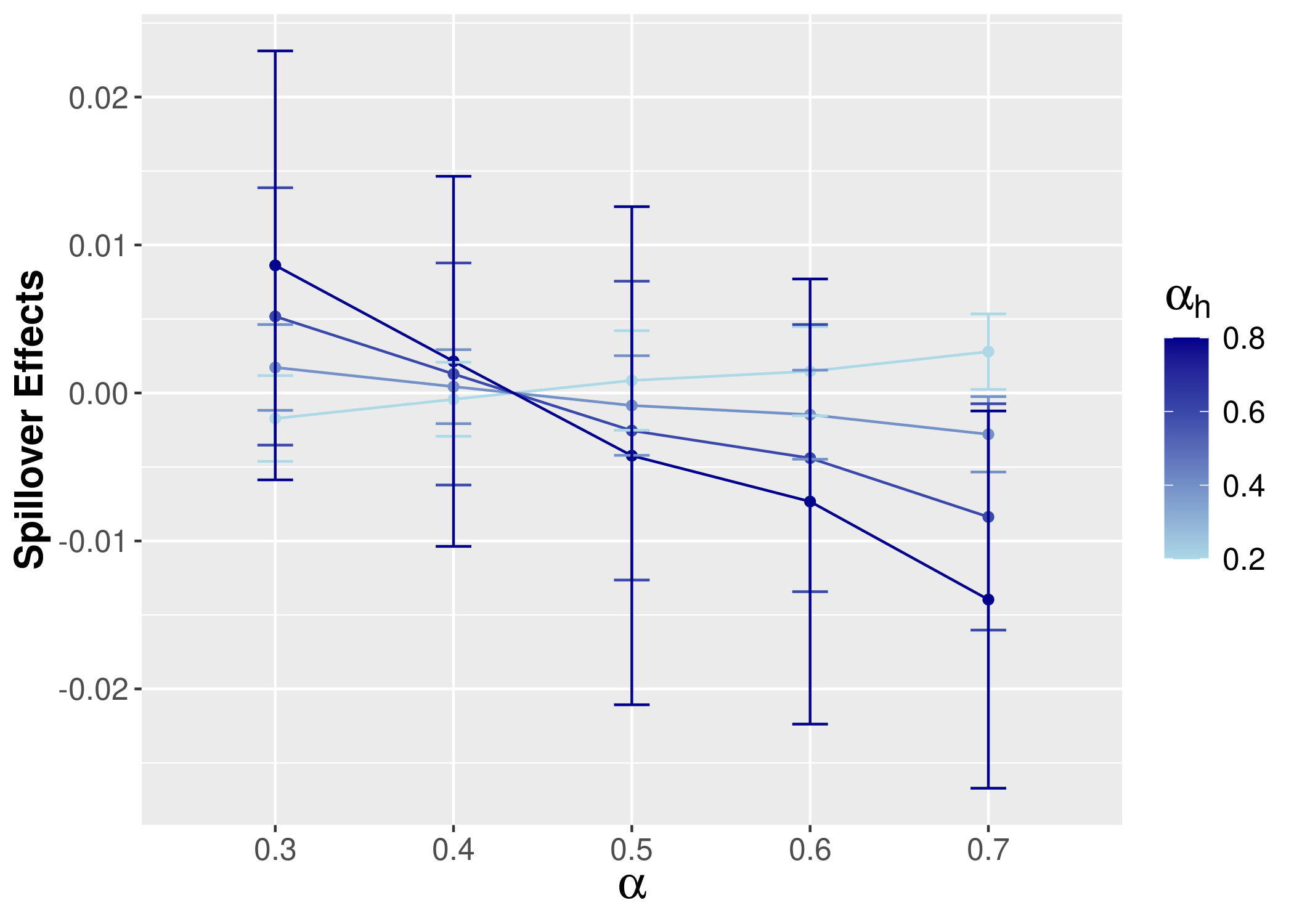}
\caption{$\rma=1$, $h=2$}
       \label{fig:wr_d3_a1h2}
\end{subfigure}

\caption{Spillover effect estimates using the WLS estimator under the distance-3 interference set. Point estimates and 95\% confidence intervals are reported as a function of $\alpha$ (x-axis) and $\alpha_h$ (colors).}
\label{fig:d3_WLS}
\end{figure*}

\subsubsection{Community Detection Algorithm-Based Interference Set}
We also consider an alternative interference set definition based on community detection. Specifically, within each cluster $i$, we apply Newman’s leading-eigenvector community detection algorithm \citep{newman2006finding} to partition the network into communities. Let $\mathcal{C}_i(ij)$ denote the community (within cluster $i$) to which unit $ij$ belongs. We exclude communities with fewer than five members, and define
$$
\mathcal{I}_{ij}
= \mathcal{C}_i(ij)\setminus \{ij\}.
$$
Figures~\ref{fig:comm_hajek}--\ref{fig:comm_wls} report the corresponding spillover effect estimates under this community-based interference set definition. We observe that the Hajek and WLS estimators yield highly consistent patterns.

For first-order spillover effects ($h=1$), the results are consistent with the main analysis based on finite paths (Figure 5 and 7): treated individuals ($\rma=1$) exhibit the inverted U-shaped trend, peaking at medium levels of $\alpha$, while untreated individuals ($\rma=0$) show negligible effects with slightly increasing trend. However, both effect sizes are notably reduced.

For second-order spillover effects ($h=2$), while the estimates for untreated individuals ($\rma=0$) remain similarly negligible, the trend for treated individuals shifts from a decreasing trend in the main analysis (Figure 6 and 8) to an increasing trend in the community-based analysis (Figure~\ref{fig:comm_hajek}, \ref{fig:comm_wls}); however, most point estimates remain statistically insignificant.

The differences in trend are expected because the community detection changes the network structure and interference sets. First, the algorithm excludes communities with fewer than five members, thereby removing a subset of the data included in the main analysis and yielding a truncated version of the original study population. Second, by partitioning the network into distinct communities, the algorithm eliminates a significant number of first- and second-order neighbors, specifically those forming "bridges" between communities, which can lead to the observed changes in the estimated patterns.

\begin{figure*}[!htbp]
\centering

\begin{subfigure}{0.42\linewidth}
\hspace*{-0.5cm}
\centering
\includegraphics[height=5.5cm]{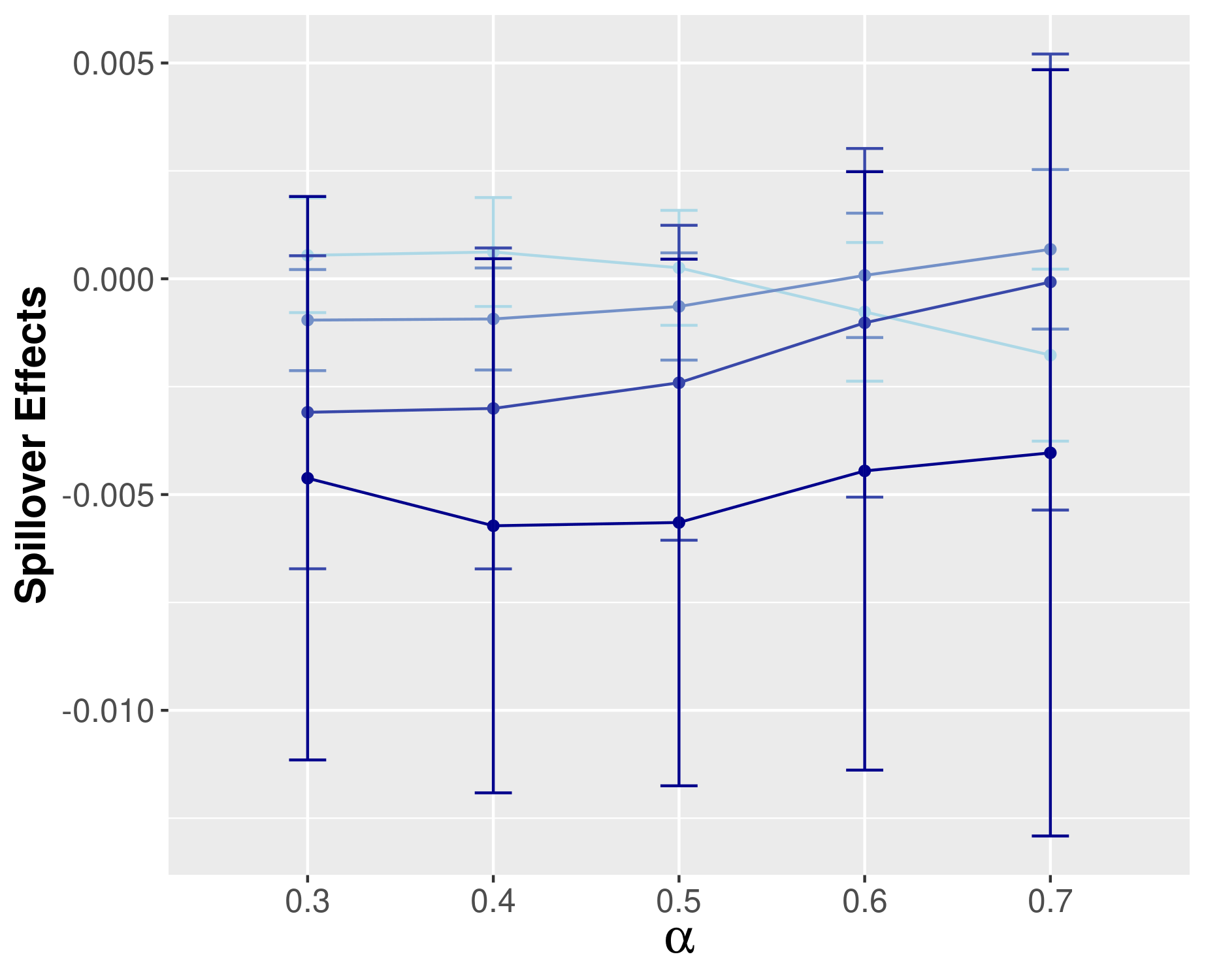}
\caption{$\rma=0,\ h=1$}
\label{fig:comm_hajek_a0h1}
\end{subfigure}
    \hfill
\begin{subfigure}{0.52\linewidth}
\centering
\includegraphics[height=5.5cm]{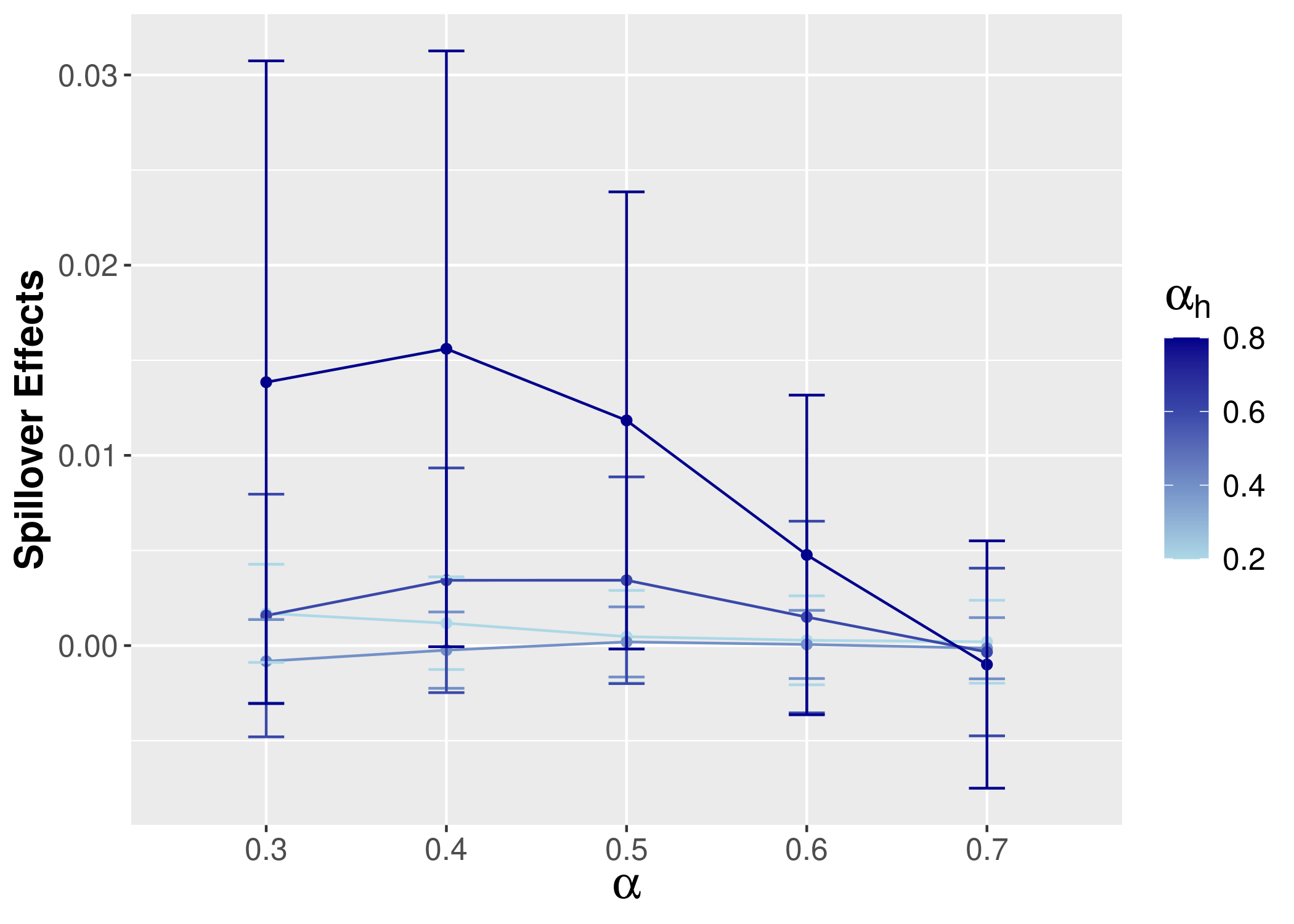}
\caption{$\rma=1,\ h=1$}
\label{fig:comm_hajek_a1h1}
\end{subfigure}

\vspace{0.4cm}

\begin{subfigure}{0.42\linewidth}
\hspace*{-0.5cm}
\centering
\includegraphics[height=5.5cm]{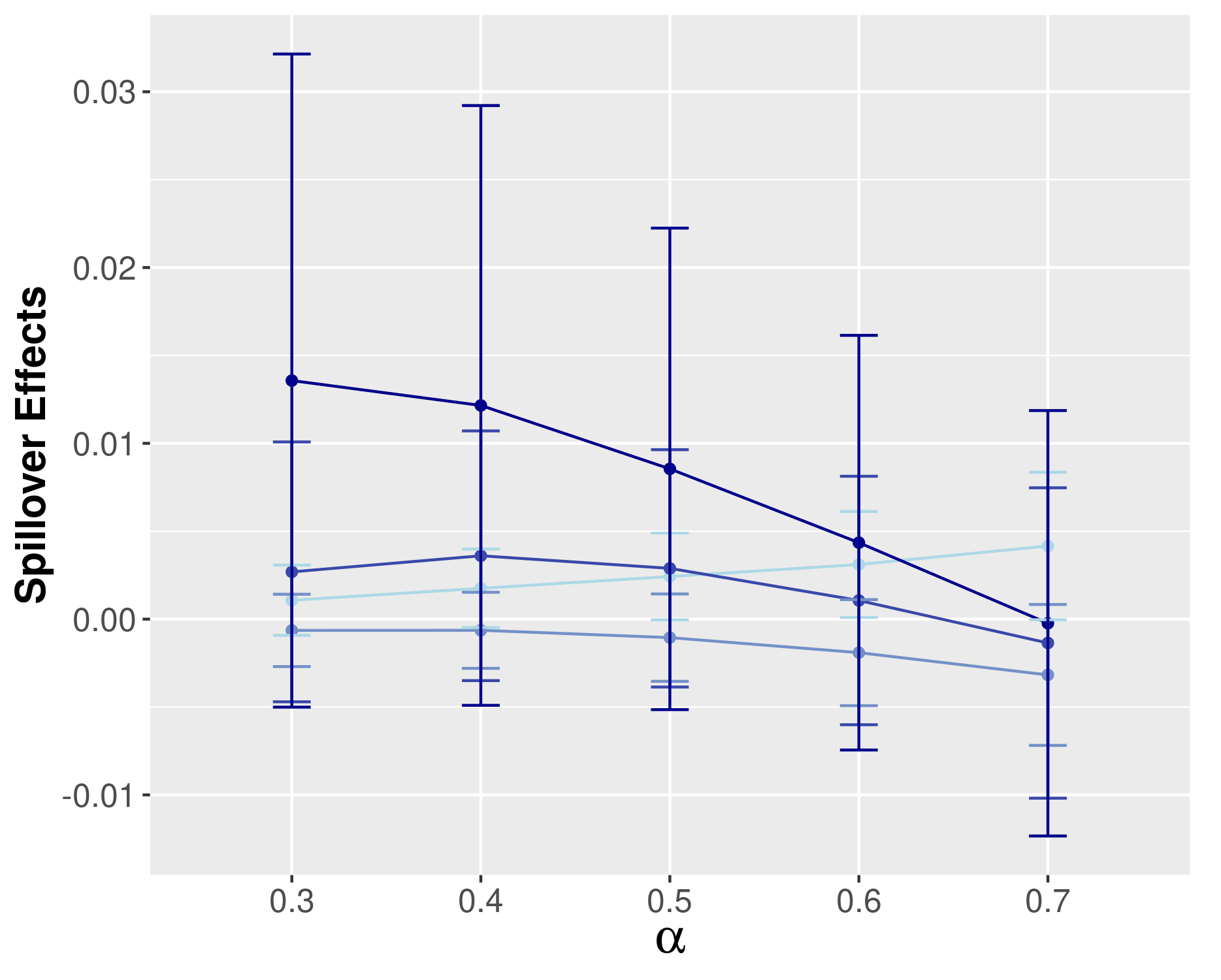}
\caption{$\rma=0,\ h=2$}
\label{fig:comm_hajek_a0h2}
\end{subfigure}
    \hfill
\begin{subfigure}{0.52\linewidth}
\centering
\includegraphics[height=5.5cm]{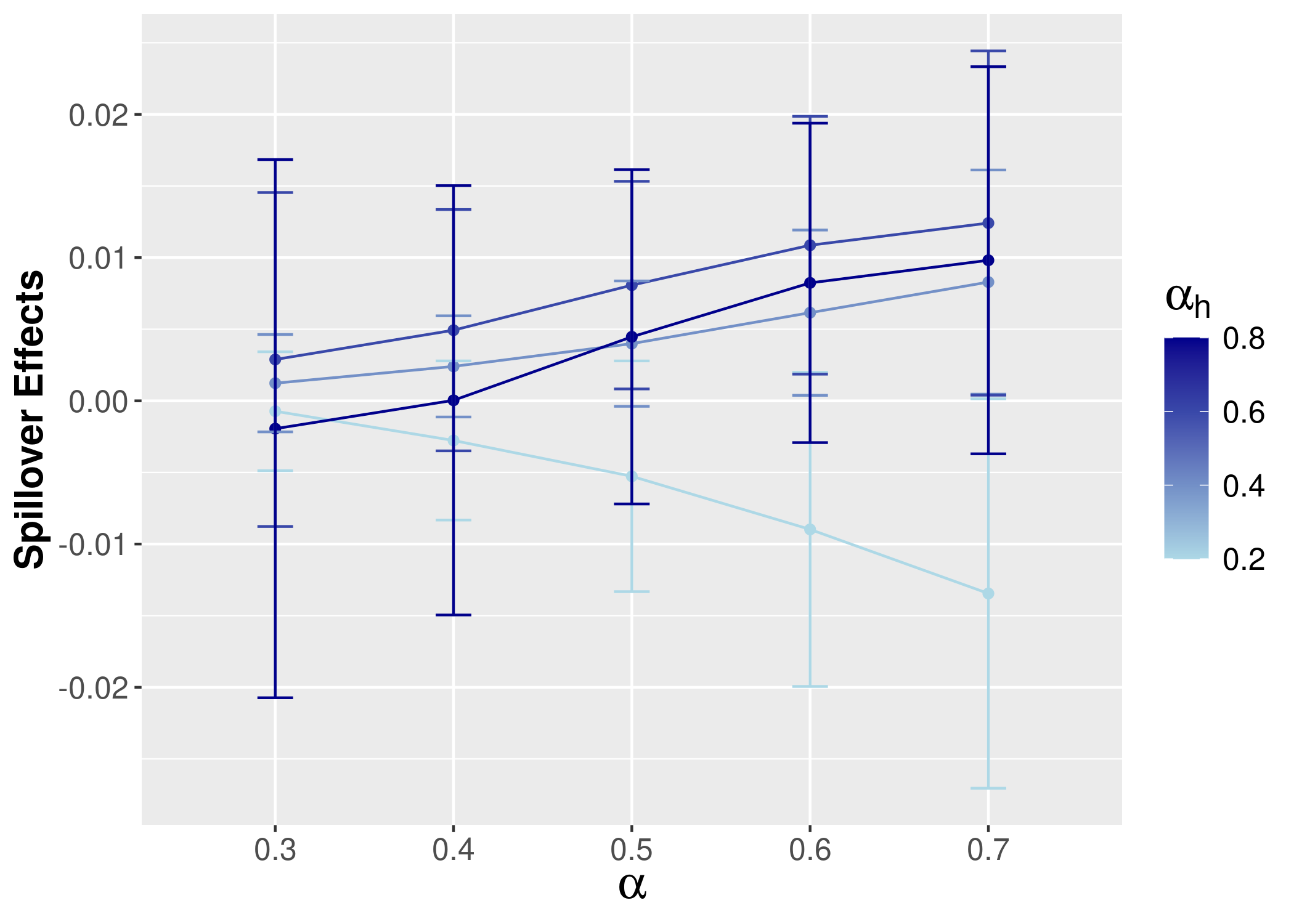}
\caption{$\rma=1,\ h=2$}
\label{fig:comm_hajek_a1h2}
\end{subfigure}

\caption{Spillover effect estimates using the Hajek estimator under the community-based interference set. Point estimates and 95\% confidence intervals are reported as a function of $\alpha$ (x-axis) and $\alpha_h$ (colors).}
\label{fig:comm_hajek}
\end{figure*}

\begin{figure*}[!htbp]
\centering

\begin{subfigure}{0.42\linewidth}
\hspace*{-0.5cm}
\centering
\includegraphics[height=5.5cm]{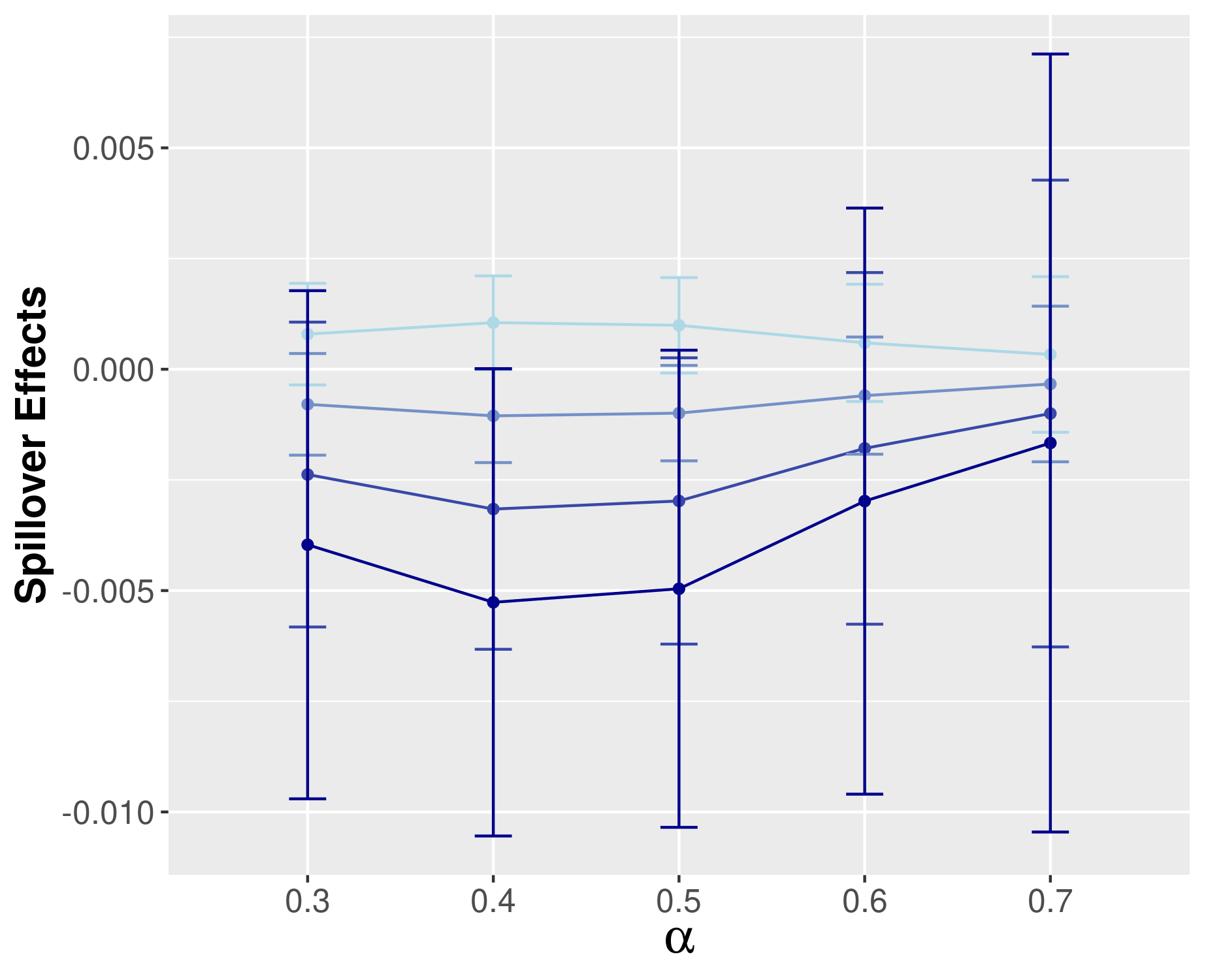}
\caption{$\rma=0$, $h=1$}
         \label{fig:comm_wr_a0h1}
\end{subfigure}
    \hfill
\begin{subfigure}{0.52\linewidth}
\centering
\includegraphics[height=5.5cm]{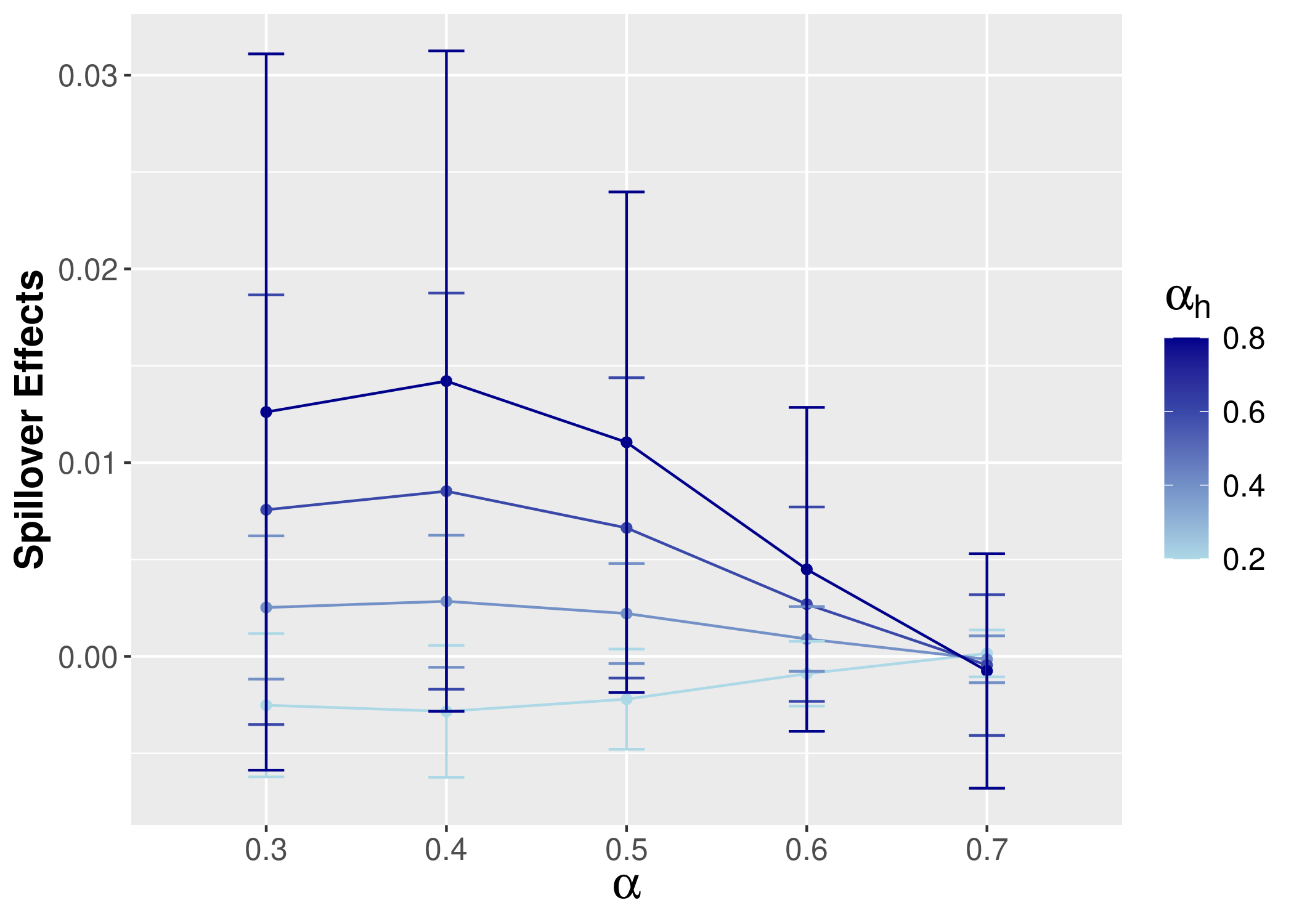}
\caption{$\rma=1$, $h=1$}
       \label{fig:comm_wr_a1h1}
\end{subfigure}

\vspace{0.4cm}

\begin{subfigure}{0.42\linewidth}
\hspace*{-0.5cm}
\centering
\includegraphics[height=5.5cm]{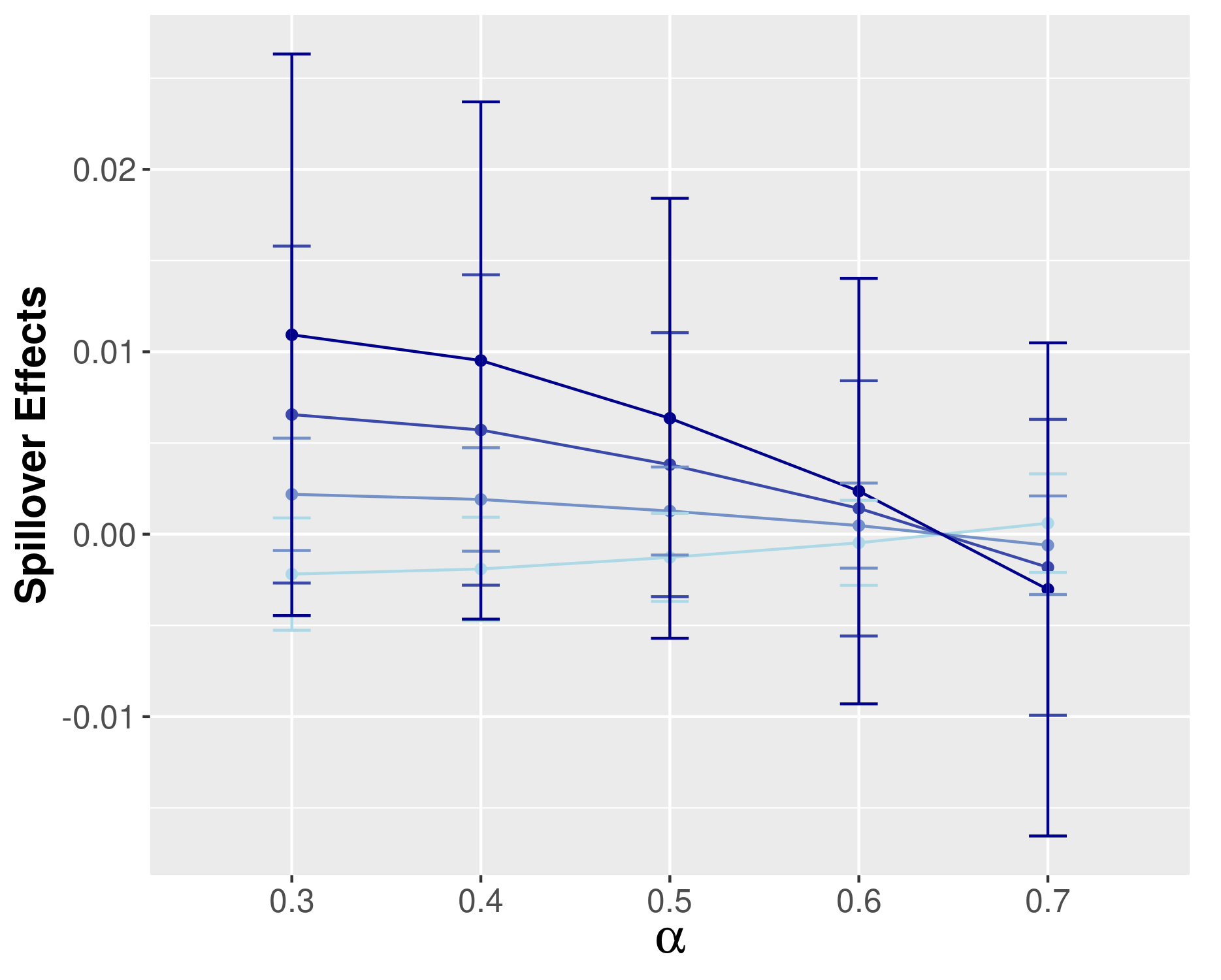}
\caption{$\rma=0$, $h=2$}
         \label{fig:comm_wr_a0h2}
\end{subfigure}
    \hfill
\begin{subfigure}{0.52\linewidth}
\centering
\includegraphics[height=5.5cm]{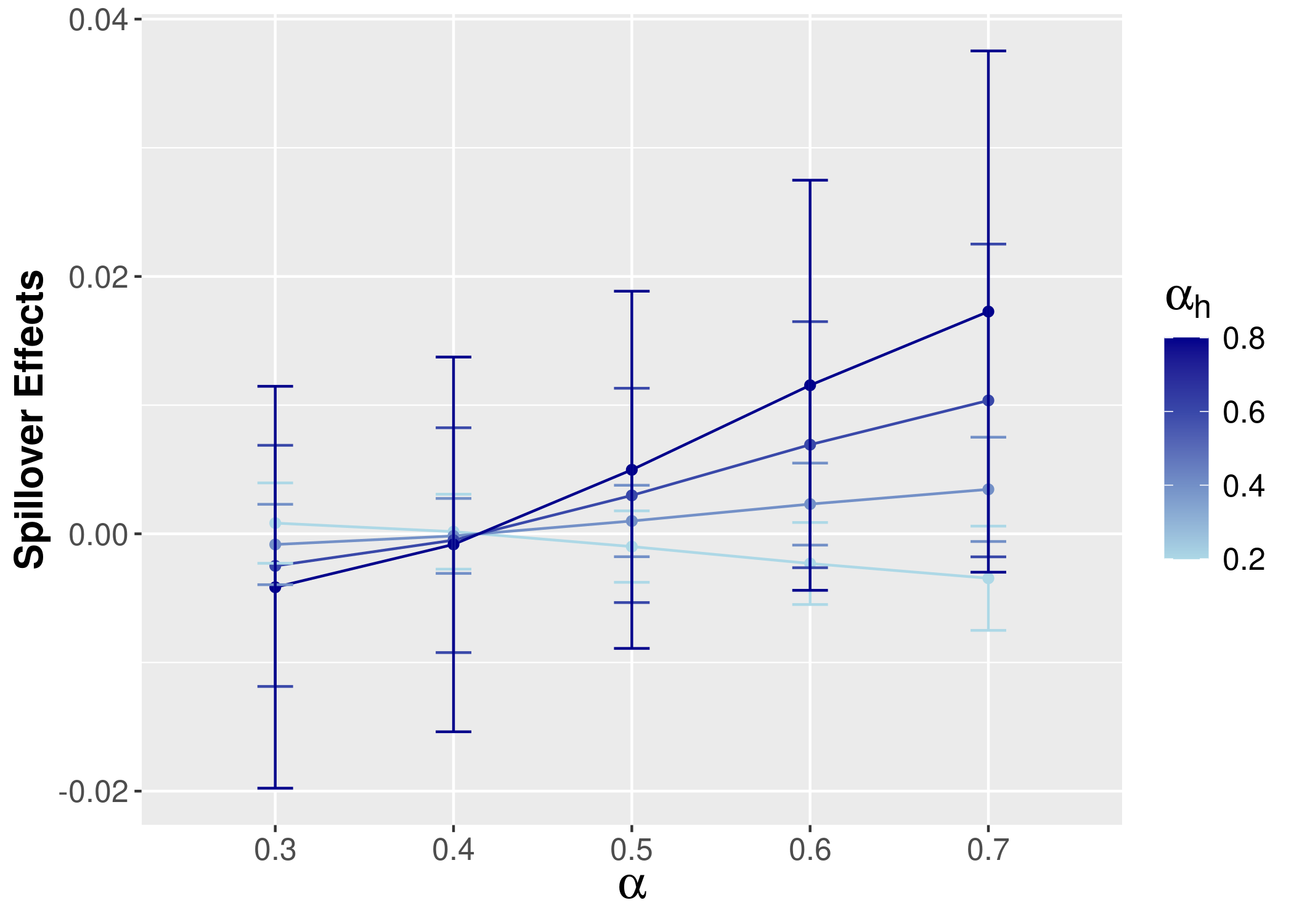}
\caption{$\rma=1$, $h=2$}
       \label{fig:comm_wr_a1h2}
\end{subfigure}

\caption{Spillover effect estimates using the WLS estimator under  the community-based interference set. Point estimates and 95\% confidence intervals are reported as a function of $\alpha$ (x-axis) and $\alpha_h$ (colors).}
\label{fig:comm_wls}
\end{figure*}

\bibliographystyle{plainnat}
\bibliography{citation}

\end{document}